\newcommand\mathcall[1]{\text{\usefont{U}{BOONDOX-cal}{m}{n}#1}}
\definecolor{colorred}{HTML}{B00000}
\definecolor{colorgreen}{HTML}{258300}
\definecolor{colorblue}{HTML}{2e32fa}
\newcommand\MyAutoefPhrasecolorGroup[1]{%
  \color@begingroup\color{MyCurrentcolor}#1\endgroup
}%
\def\HyRef@testreftype#1.#2\\{%
 \colorlet{MyCurrentcolor}{.}%
 \ltx@IfUndefined{#1autorefname}{%
   \ltx@IfUndefined{#1name}{%
     \HyRef@StripStar#1\\*\\\@nil{#1}%
     \ltx@IfUndefined{\HyRef@name autorefname}{%
       \ltx@IfUndefined{\HyRef@name name}{%
         \def\HyRef@currentHtag{}%
         \Hy@Warning{No autoref name for `#1'}%
       }{%
         \edef\HyRef@currentHtag{%
           \noexpand\MyAutoefPhrasecolorGroup{%
             \expandafter\noexpand\csname\HyRef@name name\endcsname
           }%
           \noexpand~%
         }%
       }%
     }{%
       \edef\HyRef@currentHtag{%
         \noexpand\MyAutoefPhrasecolorGroup{%
           \expandafter\noexpand
           \csname\HyRef@name autorefname\endcsname
         }%
         \noexpand~%
       }%
     }%
   }{%
     \edef\HyRef@currentHtag{%
       \noexpand\MyAutoefPhrasecolorGroup{%
         \expandafter\noexpand\csname#1name\endcsname
       }%
       \noexpand~%
     }%
   }%
 }{%
   \edef\HyRef@currentHtag{%
     \noexpand\MyAutoefPhrasecolorGroup{%
       \expandafter\noexpand\csname#1autorefname\endcsname
     }%
     \noexpand~%
   }%
 }%
}%
\numberwithin{equation}{section}
\newcommand{\nlb}{{\ensuremath{\textnormal{b}}}}
\newcommand{\nlc}{{\ensuremath{\textnormal{c}}}}
\newcommand{\nld}{{\ensuremath{\textnormal{d}}}}
\newcommand{\nlC}{{\ensuremath{\textnormal{C}}}}
\newcommand{\rmc}{{\ensuremath{\mathrm{c}}}}
\newcommand{\rmd}{{\ensuremath{\mathrm{d}}}}
\newcommand{\rme}{{\ensuremath{\mathrm{e}}}}
\newcommand{\rms}{{\ensuremath{\mathrm{s}}}}
\newcommand{\rmv}{{\ensuremath{\mathrm{v}}}}
\newcommand{\rmC}{{\ensuremath{\mathrm{C}}}}
\newcommand{\rmD}{{\ensuremath{\mathrm{D}}}}
\newcommand{\rmG}{{\ensuremath{\mathrm{G}}}}
\newcommand{\rmH}{{\ensuremath{\mathrm{H}}}}
\newcommand{\rmP}{{\ensuremath{\mathrm{P}}}}
\newcommand{\sfc}{{\ensuremath{\mathsf{c}}}}
\newcommand{\sfd}{{\ensuremath{\mathsf{d}}}}
\newcommand{\sfe}{{\ensuremath{\mathsf{e}}}}
\newcommand{\sfr}{{\ensuremath{\mathsf{r}}}}
\newcommand{\sfB}{{\ensuremath{\mathsf{B}}}}
\newcommand{\scrA}{{\ensuremath{\mathscr{A}}}}
\newcommand{\scrC}{{\ensuremath{\mathscr{C}}}}
\newcommand{\scrD}{{\ensuremath{\mathscr{D}}}}
\newcommand{\scrE}{{\ensuremath{\mathscr{E}}}}
\newcommand{\scrF}{{\ensuremath{\mathscr{F}}}}
\newcommand{\scrK}{{\ensuremath{\mathscr{K}}}}
\newcommand{\scrP}{{\ensuremath{\mathscr{P}}}}
\newcommand{\scrS}{{\ensuremath{\mathscr{S}}}}
\newcommand{\scrT}{{\ensuremath{\mathscr{T}}}}
\newcommand{\scrU}{{\ensuremath{\mathscr{U}}}}
\newcommand{\scrX}{{\ensuremath{\mathscr{X}}}}
\newcommand{\calL}{{\ensuremath{\mathcall{L}}}}
\newcommand{\bdalpha}{{\ensuremath{\boldsymbol{\alpha}}}}
\newcommand{\bdbeta}{{\ensuremath{\boldsymbol{\beta}}}}
\newcommand{\bdpi}{{\ensuremath{\boldsymbol{\pi}}}}
\newcommand{\bdsigma}{{\ensuremath{\boldsymbol{\sigma}}}}
\newcommand{\N}{\boldsymbol{\mathrm{N}}}						
\newcommand{\Q}{\boldsymbol{\mathrm{Q}}}						
\newcommand{\R}{\boldsymbol{\mathrm{R}}}						
\renewcommand{\d}{\,\mathrm{d}}				
\let\limsup\undefined
\let\liminf\undefined
\DeclareMathOperator*{\limsup}{limsup}		
\DeclareMathOperator*{\liminf}{liminf}		
\DeclareMathOperator{\supp}{spt}			
\DeclareMathOperator{\tr}{tr}				
\let\originalleft\left			
\let\originalright\right
\renewcommand{\left}{\mathopen{}\mathclose\bgroup\originalleft}
\renewcommand{\right}{\aftergroup\egroup\originalright}
\newcommand{\mapdef}[3][]{\ifthenelse{\isempty{#1}}{#2\quad\longmapsto\quad #3}{#1\colon\quad #2\quad\longmapsto\quad #3}}		
\newcommand{\der}[2][]{\ifthenelse{\isempty{#1}}{\frac{\nld}{\nld #2}}{\left.\frac{\nld}{\nld #2}\right\vert_{#1}}}				
\newcommand{\checknarg}{\@ifnextchar\bgroup{\gobblenarg}{}}
\newcommand{\gobblenarg}[1]{\@ifnextchar\bgroup{,\ \! #1\gobblenarg}{,\ \! #1}}
\theoremstyle{definition}
\newtheorem{bump}{Bump}[section]
\theoremstyle{plain}
\newtheorem{theorem}[bump]{Theorem}
\newtheorem{proposition}[bump]{Proposition}
\newtheorem{definition}[bump]{Definition}
\newtheorem{lemma}[bump]{Lemma}
\newtheorem{corollary}[bump]{Corollary}
\newtheorem{assumption}[bump]{Assumption}
\theoremstyle{remark}
\newtheorem{remark}[bump]{Remark}
\newtheorem{example}[bump]{Example}
\newtheoremstyle{cited}
{\topsep}		
{\topsep}		
{\itshape}		
{}				
{\bfseries}		
{\textbf{.}}	
{.5em}			
{\thmname{#1} \thmnumber{#2} \thmnote{\normalfont#3}}		
\theoremstyle{cited}			
\let\@fnsymbol\@arabic	 		
\def\nonumberfootnote{\xdef\@thefnmark{}\@footnotetext}			
\newcommand{\mms}{\mathit{M}}				
\newcommand{\met}{\sfd}						
\newcommand{\meas}{\mathfrak{m}}			
\newcommand{\Leb}{\calL}					
\newcommand{\vol}{\mathrm{vol}}				
\newcommand{\OptGeo}{\mathrm{OptGeo}}		
\newcommand{\OptTGeo}{\mathrm{OptTGeo}}
\newcommand{\Id}{\mathrm{Id}}				
\newcommand{\ac}{{\mathrm{ac}}}
\newcommand{\RCD}{\mathrm{RCD}}				
\newcommand{\CD}{\mathrm{CD}}				
\newcommand{\TCD}{\mathrm{TCD}}
\newcommand{\wTCD}{\mathrm{wTCD}}
\newcommand{\TMCP}{\mathrm{TMCP}}
\newcommand{\bounded}{\nlb}					
\newcommand{\comp}{\nlc}					
\newcommand{\loc}{\mathrm{loc}}				
\newcommand{\pr}{\mathrm{proj}}				
\newcommand{\Ric}{\mathrm{Ric}}				
\newcommand{\Cont}{\nlC}					
\newcommand{\Ell}{\mathit{L}}				
\newcommand{\Lip}{\mathrm{Lip}}				
\newcommand{\Geo}{\mathrm{Geo}}				
\newcommand{\cov}{\textnormal{\texttt{p}}}
\newcommand{\Dom}{\scrD}					
\DeclareMathOperator{\Ent}{Ent}				
\DeclareMathOperator{\Hess}{Hess}			
\newcommand{\eval}{\sfe}					
\newcommand{\Restr}{\mathrm{restr}}			
\newcommand{\push}{\sharp}					
\newcommand{\One}{1}				
\newcommand{\Len}{\mathrm{Len}}
\newcommand{\TGeo}{\mathrm{TGeo}}
\newcommand{\tsep}{\uptau}
\newcommand{\hleq}{\mathbin{\hat{\leq}}}
\newcommand{\hll}{\mathbin{\hat{\ll}}}
\newcommand{\Lef}{{\mathrm{left}}}
\newcommand{\Rig}{{\mathrm{right}}}
\newcommand{\Sp}{\mathrm{sp}}
\newcommand{\mix}{{\mathrm{mix}}}
\providecommand{\bysame}{\leavevmode\hbox to3em{\hrulefill}\thinspace}
\let\oldtocsection=\tocsection
\let\oldtocsubsection=\tocsubsection
\let\oldtocsubsubsection=\tocsubsubsection
\renewcommand{\tocsection}[2]{\hspace{0em}\oldtocsection{#1}{#2}}
\renewcommand{\tocsubsection}[2]{\hspace{1em}\oldtocsubsection{#1}{#2}}
\renewcommand{\tocsubsubsection}[2]{\hspace{2em}\oldtocsubsubsection{#1}{#2}}
\newcommand{\nocontentsline}[3]{}
\newcommand{\tocless}[2]{\bgroup\let\addcontentsline=\nocontentsline#1{#2}\egroup}
\newcommand{\mres}{\mathbin{\vrule height 1.6ex depth 0pt width
0.13ex\vrule height 0.13ex depth 0pt width 1.3ex}}
\begin{document}

\title[Rényi's entropy on Lorentzian spaces]{Rényi's entropy on Lorentzian spaces. Timelike curvature-dimension conditions}
\author{Mathias Braun}
\address{Fields Institute for Research in Mathematical Sciences, 222 College Street, Toronto, Ontario M5T 3J1,
Canada}
\email{braun@math.toronto.edu}
\dedicatory{Dedicated to Professor Karl-Theodor Sturm on the\\ occasion of his sixtieth birthday}
\date{\today}
\thanks{The author acknowledges funding by the Fields Institute for Research in Mathematical
Sciences. He is sincerely grateful to Robert McCann for help- and fruitful discussions. Moreover, he warmly thanks Shin-ichi Ohta for pointing out an error in an earlier version of \autoref{Th:Smothh}. This research is supported in part by Canada Research Chairs Program funds and a Natural Sciences and Engineering Research Council of Canada Grant
(2020-04162) held by Robert McCann. A major part of this work was carried out while the author was a Postdoctoral Fellow at the Department of Mathematics at the University of Toronto from February 2022 to June 2022 under the supervision of Robert McCann.}
\subjclass[2010]{49J52, 53C50, 58E10, 58Z05, 83C99.}
\keywords{Timelike geodesics; Timelike curvature-dimension condition; Rényi entropy; Strong
energy condition; Lorentzian pre-length spaces.}

\begin{abstract} For a Lorentzian space measured by $\meas$ in the sense of  Kunzinger, Sämann, Cavalletti, and Mondino, we introduce and study syn\-thetic notions of timelike lower Ricci curvature bounds by $K\in\boldsymbol{\mathrm{R}}$ and upper dimension bounds by $N\in[1,\infty)$, namely the time\-like curvature-dimension con\-ditions $\smash{\mathrm{TCD}_p(K,N)}$ and $\smash{\mathrm{TCD}_p^*(K,N)}$ in weak and strong forms,  where $p\in (0,1)$, and  the timelike mea\-sure-contraction properties $\smash{\mathrm{TMCP}(K,N)}$ and $\smash{\mathrm{TMCP}^*(K,N)}$. These are formulated by convexity properties of the Rényi entropy with respect to $\mathfrak{m}$ along $\smash{\ell_p}$-geodesics of pro\-bability measures. 

We show many features of these notions, including their  compatibility with the smooth setting, sharp geometric inequalities, stability, equivalence of the named weak and strong versions, local-to-global properties, and uniqueness of chronological $\smash{\ell_p}$-optimal couplings and chronological $\smash{\ell_p}$-geodesics. We also prove the equivalence of $\smash{\mathrm{TCD}_p^*(K,N)}$ and $\smash{\mathrm{TMCP}^*(K,N)}$ to their respective entropic counterparts in the sense of Cavalletti and Mondino.

Some of these results are obtained under timelike $p$-es\-sential nonbranching, a concept which is a priori weaker than timelike nonbranching.
\end{abstract}

\maketitle
\thispagestyle{empty}

\tableofcontents

\addtocontents{toc}{\protect\setcounter{tocdepth}{1}}

\section{Introduction}\label{Ch:Intro}

Throughout this paper, let $(\mms,\met)$ be a nonempty, proper --- hence complete and separable --- metric space. All topological and measure-theoretic properties will be understood with respect to the topology induced by $\met$ and its induced Borel $\sigma$-algebra. 
Furthermore, let $\meas$ be a given Radon measure on $\mms$ with full support, symbolically $\supp\meas=\mms$.

\subsection*{Background} Mathematical general relativity has recently developed a promising novel research direction.  McCann \cite{mccann2020} pioneered lower boundedness of the Ricci tensor $\Ric$ of a smooth Lorentzian spacetime $\mms$ in  timelike directions to be equivalent to convexity properties of the Boltzmann entropy $\Ent_\meas$, for $\meas$ a possibly weighted Lorentzian volume measure, along ``chronological geodesics'' on the space of probability measures $\scrP(\mms)$ in the entropic sense \cite{erbar2015}. This includes the \emph{strong energy condition} of Hawking and Penrose \cite{hawking1966, hawking1970,penrose1965} which, for solutions to the Einstein equation with vanishing cosmological constant, reduces to timelike nonnegativity of $\Ric$. Mondino and Suhr \cite{mondinosuhr2018} pro\-vided a similar description of the Einstein equations, i.e.~timelike lower and upper Ricci curvature bounds, through synthetic optimal transport means. In a dif\-ferent direction, Kunzinger and Sämann \cite{kunzinger2018} set up the theory of Lorentzian pre-length, length, and geodesic spaces. These are singular analogues of smooth Lorentzian spacetimes, comparable to metric spaces generalizing Riemannian manifolds. (See \cite{sormani2016} for a related approach, and \cite{kste} for its nonsmooth extension.)  These discoveries let Caval\-letti and Mondino \cite{cavalletti2020}  introduce 
\begin{itemize}
\item the entropic timelike curvature-dimension condition $\smash{\TCD_p^e(K,N)}$, and 
\item the entropic timelike measure-contraction property $\smash{\TMCP^e(K,N)}$,
\end{itemize}
for given parameters $p\in (0,1)$, $K\in \R$, and $N\in [1,\infty)$. Both conditions concretize the meaning of a Lorentzian space measured by $\meas$ to admit  timelike Ricci curvature bounded from below by $K$ and dimension bounded from above by $N$. As indicated above, the inherent for\-mu\-lations are phrased by convexity properties of $\Ent_\meas$ along \emph{$\ell_p$-geodesics}, thus make sense a priori in more general frameworks than smooth Lorentzian spacetimes. Here, $\smash{\ell_p}$ is the so-called \emph{$p$-Lorentz--Wasserstein distance} on $\scrP(\mms)$, see \eqref{Eq:LORWAS},  first introduced in \cite{eckstein2017} and studied further in \cite{cavalletti2020,kellsuhr2020,mccann2020,mondinosuhr2018,
suhr2018}. Recently \cite{braun2022}, we have furthermore introduced an ``infinite-dimensional'' Lorentzian analogue akin to \cite{sturm2006a}.

This line of thought to synthetize Ricci curvature bounds is not new. In metric geometry, results for Riemannian manifolds \cite{cordero2001,otto2000,vonrenesse2005} analogous to \cite{mccann2020,mondinosuhr2018} lead Sturm \cite{sturm2006a,sturm2006b}, independently Lott and Villani \cite{lott2009}, to paraphrase lower Ricci curvature and upper dimension bounds for metric measure spaces by convexity of certain entropy functionals along geodesics in $\scrP(\mms)$ with respect to the so-called $2$-Wasserstein distance $W_2$. This was the starting point of an intense activity on so-called $\CD$ or $\RCD$ spaces, vivid to date. Among others, it produced a nonsmooth splitting theorem \cite{gigli2013}, structure theoretical insights \cite{brue2020,mondino2019}, a first and second order Sobolev calculus \cite{ambrosio2014a,ambrosio2014b,ambrosio2015,gigli2015,
gigli2018, savare2014}, and results about rigidity \cite{erbar2020,ketterer2015a, ketterer2015b} and stability \cite{ambrosio2017,giglimondino2015,suzuki2019}.

Synthetic timelike Ricci curvature bounds for Lorentzian spaces  through optimal transport as initiated by the mentioned  works  \cite{cavalletti2020,mccann2020,mondinosuhr2018} might lead to an equally flourishing theory. There are many smooth  results, both classical and recent, that might (or, occasionally, already do) admit generalizations to such spaces,  notably singularity theorems à la Hawking and Penrose \cite{alexander2019, cavalletti2020, graf2020, hawking1973, kunzinger2015, lu2021, woolgar2016} and à la Gannon and Lee \cite{gannon1975, gannon1976, lee1976, schinnerl2021}, as well as splitting theorems \cite{beem1985, eschenburg1988, galloway1989, lu2021b}. See the recent survey article \cite{steinbauer}. Already from the conceptual viewpoint, \cite{cavalletti2020} sheds a novel light to recent efforts to understand Lorentzian spacetimes with low regularity metrics, where curvature bounds are phrased distributionally thus far \cite{geroch1987,graf2020,steinbauer2009}; the setting of \cite{cavalletti2020,kunzinger2018} allows not only the potential metric tensor, but also the base space to be irregular. This extension is desirable since general relativity generally \emph{predicts} the presence of singularities under extreme physical conditions, e.g.~in black holes, causal Fermion systems \cite{finster2016, finster2018}, or causal sets \cite{bombelli1987}, which might not be approximable by smooth structures (even if so, they can be irregular). See also \cite{kunzinger2018,martin2006} and the introduction of \cite{cavalletti2020}. Moreover, by what we learned from  \cite{mondinosuhr2018} such a theory is a further step to understand the Einstein equations \cite{klainerman2015,rendall2002} in more singular, yet physically relevant settings \cite{burtscher2014, griffiths2015, lichnerowicz1955, mars1993, penrose1972, vickers1990, vickers2000}, which also relates to the \emph{censorship conjectures} \cite{christodoulou2009,dafermos2014}. For recently introduced synthetic notions of timelike \emph{upper} Ricci curvature bounds, see \cite{cavalletti2022,mondinosuhr2018}. 

Information-theoretic approaches to spacetime curvature, towards which we will take a new direction, are also based on  physical evidences and connections of gravity and entropies already discovered up to half a decade ago \cite{bardeen, bekenstein, jacobson, verlinde}.

\subsection*{Objective} The goal of this paper is to introduce and study timelike cur\-vature-dimension conditions and timelike measure-contraction properties for measured Lorentzian spaces by optimal transport, different from \cite{cavalletti2020}: instead of the Boltz\-mann entropy, we  use the \emph{Rényi entropy}, cf.~\eqref{Eq:RN} below. Our treatise constitutes the Lo\-rentzian counterpart of the works \cite{bacher2010,sturm2006b} for metric measure spaces, as is  \cite{cavalletti2020} relative to \cite{erbar2015}. Besides timelike geometric inequalities and stability properties, it will e.g.~allow for a pathwise interpretation and a local-to-global property, and it will lift to the universal cover of a measured Lorentzian space. Moreover, we extend and refine some results from \cite{cavalletti2020}. Even in the smooth case, our work seems the first to study the Rényi entropy within general relativity.

\subsection*{Setting} Before making all above indicated objects, notions, and results precise, we outline our framework. See \autoref{Sub:Lorentzian nonsmooth} and \autoref{Sec:OT Lorentzian} for details. 

Let $(\mms,\met,\ll,\leq,\tsep)$ be a Lorentzian pre-length space in the sense of \cite{kunzinger2018}. Here, $\ll$ and $\leq$ are generalized notions of \emph{chronology} and \emph{causality} between points in $\mms$. Moreover, $\tsep\colon \mms^2\to[0,\infty]$ is the so-called \emph{time separation function}. Roughly speaking, $\tsep(x,y)$ measures the maximal proper-time a point $x\in\mms$ in spacetime needs to travel to $y\in\mms$. Compared to $\CD$ spaces, $\tsep$ takes over the role of the metric in the formulation of timelike curvature-dimension conditions and time\-like measure-contraction properties, yet this analogy should be handled carefully. For instance, $\tsep$ satisfies  the \emph{reverse} triangle inequality \eqref{Eq:Reverse tau}, it is symmetric only in pathological cases, and $\tsep(x,y)=0$ if $y$ is not in the chronological future of $x$. 

Although some definitions and results in this article can be extended to more general settings, to streamline the exposition we will assume $(\mms,\met,\ll,\leq,\tsep)$ to be causally closed, $\scrK$-globally hyperbolic, regular, and geodesic, see \autoref{Ass:ASS}. Inter alia, this hypothesis ensures that $\tsep$ is finite and continuous. When endowed with a reference measure $\meas$ as initially fixed, for terminological cleanness, in this introduction we call $(\mms,\met,\meas,\ll,\leq,\tsep)$ \emph{measured Lorentzian space}.

For $p\in(0,1]$, define the total transport cost $\smash{\ell_p\colon\scrP(\mms)^2 \to [0,\infty]\cup\{-\infty\}}$ by
\begin{align}\label{Eq:LORWAS}
\ell_p(\mu,\nu) := \sup \Vert \tsep\Vert_{\Ell^p(\mms^2,\pi)}.
\end{align}
The supremum is taken over all \emph{causal} couplings $\pi\in\Pi(\mu,\nu)$, i.e.~$x\leq y$ for $\pi$-a.e. $(x,y)\in\mms^2$, subject to the usual convention $\sup \emptyset := -\infty$; if such a $\pi$ obeys $x\ll y$ for $\pi$-a.e.~$(x,y)\in\mms^2$, it is called \emph{chronological}. A causal coupling attaining this supremum is called \emph{$\smash{\ell_p}$-optimal}.
 Similarly as for $\tsep$, we interpret the quantity $\smash{\ell_p}$ as measuring the maximal proper-time a spacetime mass distribution $\mu$ needs to be moved to a mass distribution $\nu$ in a future-directed, causal manner. Moreover, $\smash{\ell_p}$ satisfies the reverse triangle inequality \eqref{Eq:Reverse triangle lp}.

The convexity properties making up our main definitions are understood along \emph{timelike proper-time parametrized $\smash{\ell_p}$-geodesics} in $\scrP(\mms)$, axiomatized as follows (see \autoref{Sub:GEO}, \autoref{Sub:Geodesics}, and \autoref{App:B}). A timelike geodesic on $\mms$ is a curve $\eta\in \Cont([0,1]; \mms)$ which is $\met$-Lipschitz continuous, obeys $\eta_s \ll \eta_t$ for every $s,t\in[0,1]$ with $s<t$, and maximizes the $\tsep$-length \eqref{Eq:Length tau def}, symbolically $\eta\in\TGeo(\mms)$. As for maximizing geodesics in metric geometry \cite{burago2001}, every $\eta\in\TGeo(\mms)$ has a proper-time reparametrization $\gamma\in\Cont([0,1];\mms)$, i.e.
\begin{align}\label{Eq:tau s t equ}
\tsep(\gamma_s,\gamma_t) = (t-s)\,\tsep(\gamma_0,\gamma_1)>0
\end{align}
for every $s,t\in[0,1]$ with $s<t$ \cite{kunzinger2018}. The inherent reparametrization procedure in\-duces a map $\sfr\colon \TGeo(\mms) \to \Cont([0,1];\mms)$ which is continuous, cf.~\autoref{Le:Cty reparametrization}. Given any $\mu_0,\mu_1\in\scrP(\mms)$, let $\smash{\OptTGeo_{\ell_p}(\mu_0,\mu_1)}$ be the set of all $\bdpi\in\scrP(\TGeo(\mms))$ such that $\smash{\pi := (\eval_0,\eval_1)_\push\bdpi}$ is a chronological $\smash{\ell_p}$-optimal coupling of $\mu_0$ and $\mu_1$, where $\eval_t\colon \Cont([0,1];\mms)\to \mms$ is the usual evaluation map $\eval_t(\gamma):= \gamma_t$ at $t\in[0,1]$. Our main \autoref{Ass:ASS} below  ensures that every $x,y\in\mms$ with $x\ll y$ can be connected by an element of $\TGeo(\mms)$. Set 
\begin{align*}
\OptTGeo_{\ell_p}^\tsep(\mu_0,\mu_1) := \sfr_\push\OptTGeo_{\ell_p}(\mu_0,\mu_1),
\end{align*}
and observe that all measures belonging to this set are concentrated on
\begin{align*}
\TGeo^\tsep(\mms) := \sfr(\TGeo(\mms))
\end{align*}
whose elements, in turn, obey \eqref{Eq:tau s t equ} for every $s,t\in[0,1]$ with $s< t$. 

\begin{definition}\label{Def:Termin} A weakly continuous curve $(\mu_t)_{t\in[0,1]}$ in $\scrP(\mms)$ is called
\begin{enumerate}[label=\textnormal{\alph*.}]
\item \emph{timelike $\smash{\ell_p}$-geodesic} if there exists a measure $\smash{\bdpi\in\OptTGeo_{\ell_p}(\mu_0,\mu_1)}$ such that $\mu_t = (\eval_t)_\push\bdpi$ for every $t\in[0,1]$, and
\item \emph{timelike proper-time parametrized $\smash{\ell_p}$-geodesic} provided there is a measure $\bdpi\in\smash{\OptTGeo_{\ell_p}^\tsep(\mu_0,\mu_1)}$ such that $\mu_t= (\eval_t)_\push\bdpi$ for every $t\in[0,1]$.
\end{enumerate}
\end{definition}

Our definition of an $\smash{\ell_p}$-geodesic is different from \cite{cavalletti2020,mccann2020}. Indeed, every timelike proper-time parametrized $\smash{\ell_p}$-geodesic $(\mu_t)_{t\in[0,1]}$ satisfies
\begin{align}\label{Eq:FBHAHB}
\ell_p(\mu_s,\mu_t) = (t-s)\,\ell_p(\mu_0,\mu_1)>0
\end{align}
for every $s,t\in[0,1]$ with $s<t$ and is thus an $\smash{\ell_p}$-geodesic as in \cite{cavalletti2020,mccann2020}. On the other hand, timelike $\smash{\ell_p}$-geodesics in our sense are implicit maximizers of some length functional, cf.~\autoref{Pr:Max}. Thus, \autoref{Def:Termin} reflects how geodesics are set up at the base space terminologically and conceptually \cite{kunzinger2018}. In general, it is more restrictive, as \cite{cavalletti2020,mccann2020} do not even assume weak con\-tinuity of the considered paths in $\scrP(\mms)$. (It is remarkable that this regularity is implied solely by \eqref{Eq:FBHAHB} in all relevant smooth cases, cf.~\autoref{Re:Compat}.) Our approach allows $\smash{\OptTGeo_{\ell_p}^\tsep(\mu_0,\mu_1)}$ to admit good compactness properties, cf.~\autoref{Le:Villani lemma for geodesic}. In\-deed, proper-time reparametrizations of timelike geodesics in $\mms$ are \emph{not} Lipschitz continuous \cite{kunzinger2018}, which makes it difficult to find compact subsets of $\Cont([0,1];\mms)$ satisfying \eqref{Eq:tau s t equ}. On the other hand, by \autoref{Ass:ASS} --- more precisely, a property termed \emph{non-total imprisonment} ---  all elements of $\TGeo(\mms)$ passing through a fixed compact set, e.g.~a causal diamond, have uniformly bounded Lipschitz constants. This entails excellent compactness properties of $\TGeo(\mms)$ and $\smash{\OptTGeo_{\ell_p}(\mu_0,\mu_1)}$ which transfer to $\smash{\TGeo^\tsep(\mms)}$ and $\smash{\OptTGeo_{\ell_p}^\tsep(\mu_0,\mu_1)}$ by continuity of $\sfr$ and $\smash{\sfr_\push}$, respectively.

Lastly, the basic object which will be hypothesized to have some convexity along the named curves in $\scrP(\mms)$ is the $N$-Rényi entropy
\begin{align}\label{Eq:RN}
\scrS_N(\mu) := -\int_\mms \rho^{-1/N}\d\mu = -\int_\mms \rho^{1-1/N}\d\meas,
\end{align}
where $\mu = \rho\,\meas + \mu_\perp$ is the Lebesgue decomposition of $\mu\in\scrP(\mms)$ with respect to $\meas$. Its properties are summarized in \autoref{Subsub:Renyi}. It only relates to $\Ent_\meas$ in the limit case $N\to\infty$, see \eqref{Eq:Ent SN limit}.

\subsection*{Contributions} 
Now we describe the content of our work in more detail.

$\blacktriangleright$ Given $p\in (0,1)$, in \autoref{Def:TCD*} and \autoref{Def:TCD} we introduce the sub\-se\-quent synthetic notions of the timelike Ricci curvature of $(\mms,\met,\meas,\ll,\leq,\tsep)$ being bounded from below by $K\in \R$, and its dimension being bounded from above by $N\in [1,\infty)$, respectively:
\begin{itemize}
\item the reduced timelike curvature-dimension condition $\smash{\TCD_p^*(K,N)}$,
\item the weak reduced timelike curvature-dimension condition $\smash{\wTCD_p^*(K,N)}$,
\item the timelike curvature-dimension condition $\smash{\TCD_p(K,N)}$, and
\item the weak timelike curvature-dimension condition $\smash{\wTCD_p(K,N)}$.
\end{itemize} 
These are Lorentzian versions of \cite{bacher2010,sturm2006b}  and are all formulated by convexity properties of the Rényi entropy along timelike proper-time parametrized $\smash{\ell_p}$-geodesics $(\mu_t)_{t\in[0,1]}$ between pairs $\smash{(\mu_0,\mu_1)\in\scrP(\mms)^2}$ of compactly supported, $\meas$-absolutely continuous mass distributions, symbolically $\smash{\mu_0,\mu_1\in\scrP_\comp^\ac(\mms,\meas)}$. The inherent  inequalities de\-fining $\smash{\TCD_p^*(K,N)}$ and $\smash{\wTCD_p^*(K,N)}$ use the distortion coefficients $\smash{\sigma_{K,N}^{(r)}}$, while $\smash{\TCD_p(K,N)}$ and $\smash{\wTCD_p(K,N)}$ employ the quantities $\smash{\tau_{K,N}^{(r)}}$; we refer to \autoref{Def:Dist coeff} below. The strong versions require the relevant estimate for Rényi's entropy to hold for all \emph{time\-like $p$-dualizable} endpoint pairs $(\mu_0,\mu_1)$, meaning that there exists an appropriate chronological $\smash{\ell_p}$-optimal coupling $\pi\in\Pi(\mu_0,\mu_1)$. The weak versions require $(\mu_0,\mu_1)$ to be \emph{strongly timelike $p$-dualizable}, which essentially asks \emph{all} $\smash{\ell_p}$-optimal couplings of $\mu_0$ and $\mu_1$ to be chronological. The latter notions have been introduced in \cite{cavalletti2020}, see \autoref{Def:TL DUAL} for their  precise definition.

In \autoref{Pr:TCD and TCD*}, we establish the general relations
\begin{align}\label{RKB}
\begin{split}
\TCD_p(K,N) &\ \Longrightarrow \ \TCD_p^*(K,N),\\
\wTCD_p(K,N) &\ \Longrightarrow \ \wTCD_p^*(K,N),\\
\TCD_p^*(K,N) &\ \Longrightarrow \ \TCD_p(K^*,N),\\
\wTCD_p^*(K,N) &\ \Longrightarrow \ \wTCD_p(K^*,N)
\end{split}
\end{align}
between these conditions, where $K^* := K(N-1)/N$, and assuming $K>0$ in the latter two cases.

$\blacktriangleright$ Our first major result is the compatibility of the above notions with timelike lower Ricci curvature bounds in the smooth framework, cf.~\autoref{Th:Equiv}. We also cover the weighted case which has recently lead to singularity and splitting theorems for so-called Bakry--Émery spacetimes \cite{woolgar2016}.


\begin{theorem}\label{Th:Smothh} Let $p\in(0,1)$, $K\in\R$, and $N\in[n,\infty)$, where $\smash{n\in\N_{\geq 2}}$. For the Lo\-rentz\-ian space induced by a smooth, $n$-dimensional Lorentzian spacetime $\mms$ and measured by the Lorentzian volume measure, any of $\smash{\TCD_p^*(K,N)}$ and $\smash{\wTCD_p^*(K,N)}$ holds if and only if the Ricci curvature of $\mms$ is bounded from below  by $K$ in all timelike directions.
\end{theorem}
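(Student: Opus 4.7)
The plan is to use the implications \eqref{RKB} to reduce the four-way equivalence to two one-sided claims: (i) that the timelike lower Ricci bound $\Ric \geq K$ implies the $\tau$-based strong condition $\TCD_p(K,N)$, and (ii) that the $\sigma$-based weak condition $\wTCD_p^*(K,N)$ implies $\Ric \geq K$ in timelike directions. Since $\TCD_p(K,N)$ implies each of the remaining three conditions via \eqref{RKB} together with the trivial strong-to-weak passage on admissible endpoint pairs, and each of the three in turn implies $\wTCD_p^*(K,N)$, the full equivalence follows at once.

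For direction (i), I would appeal to the smooth Lorentzian optimal transport theory of McCann and of Mondino--Suhr. Given a timelike $p$-dualizable pair $(\mu_0,\mu_1) \in \scrP_\comp^\ac(\mms,\meas)^2$, the $\ell_p$-optimal coupling is unique and induced by a Brenier-type map $F_1$ coming from a $c$-concave Kantorovich potential with cost $c = -\tsep^p/p$; the displacement interpolation $\mu_t := (F_t)_\push \mu_0$ along proper-time parametrized maximizing timelike geodesics yields, by uniqueness, a timelike proper-time parametrized $\ell_p$-geodesic in the sense of \autoref{Def:Termin}, and its densities satisfy the Monge--Amp\`ere type identity $\rho_t(F_t(x))\, J_t(x) = \rho_0(x)$. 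The hypothesis $\Ric \geq K$ in timelike directions, combined with $n \leq N$ and the Lorentzian Jacobi/Raychaudhuri equation along each geodesic of $\tsep$-length $\theta := \tsep(F_0(x),F_1(x))$, yields the sharp concavity estimate
\begin{align*}
J_t(x)^{1/N} \;\geq\; \tau_{K,N}^{(1-t)}(\theta)\, J_0(x)^{1/N} + \tau_{K,N}^{(t)}(\theta)\, J_1(x)^{1/N}.
\end{align*}
Integrating this pointwise bound against $\mu_0$ after a change of variables reproduces precisely the $\tau_{K,N}$-convexity of $\scrS_N$ demanded by $\TCD_p(K,N)$.

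For direction (ii), I would localize in the familiar way: fix $x \in \mms$ and a future-directed timelike $v \in T_x\mms$, and for small $\epsilon > 0$ form endpoint measures $\mu_0^\epsilon$ and $\mu_1^\epsilon$ by restricting and normalizing $\meas$ on small causal diamonds centered at $x$ and $\exp_x(v)$, respectively. These pairs are strongly timelike $p$-dualizable for $\epsilon$ sufficiently small, and the $\wTCD_p^*(K,N)$ inequality applied at the midpoint $t = 1/2$, after dividing by $\epsilon^2$ and letting $\epsilon \to 0$, collapses onto a second-order Taylor inequality on the Jacobian of $\exp_x$ along the geodesic generated by $v$; this is known via the usual Lorentzian Raychaudhuri analysis to be exactly $\Ric(v,v) \geq K\, g(v,v)$, independently of the value of $N$ entering at higher order. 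The main obstacle will be the bridging step in (i): identifying our timelike proper-time parametrized $\ell_p$-geodesics (weakly continuous paths in $\scrP(\mms)$ lifted from $\TGeo^\tsep(\mms)$) with the smooth displacement interpolation, so that the Jacobian comparison applies. This rests on uniqueness of the $\ell_p$-optimal coupling and of the lifted plan $\bdpi \in \OptTGeo_{\ell_p}^\tsep(\mu_0,\mu_1)$, both of which are smooth consequences of Brenier/McCann theory for timelike $p$-dualizable absolutely continuous data; once this identification is secured, the two directions reduce to the classical Jacobian comparison at the heart of the distortion coefficients $\tau_{K,N}^{(r)}$ and $\sigma_{K,N}^{(r)}$.
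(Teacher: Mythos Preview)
Your overall plan is sound, and direction (i) is essentially what the paper does in \autoref{Pr:From 1 to 4}: compute the Jacobian along the unique displacement interpolation and derive the $\tau_{K,N}^{(r)}$-estimate via a Riccati/Jacobi-field argument (splitting off the tangential direction to upgrade $\sigma$ to $\tau$). The bridging step you flag is handled in \autoref{Re:Compat}.

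The main departure is in direction (ii). The paper does \emph{not} localize directly from $\wTCD_p^*(K,N)$; instead it invokes \autoref{Th:Equivalence TCD* and TCDe} --- available because smooth spacetimes are timelike nonbranching, hence timelike $p$-essentially nonbranching by \autoref{Re:reversed} --- to pass to the entropic condition $\TCD_p^e(K,N)$, and then cites McCann's necessity result \cite[Thm.~8.5]{mccann2020}. Your direct localization via shrinking diamonds is the standard alternative and should work, though extracting the second-order term from the $\sigma$-based R\'enyi inequality (rather than from Boltzmann convexity as in McCann) would need to be spelled out. Your route has the merit of bypassing the machinery of \autoref{Sub:Equiv TCDs}; the paper's route is a one-line citation once that machinery is in place.

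One caveat on (i): the Jacobian argument in \autoref{Pr:From 1 to 4} is run only under the extra hypothesis $\supp\mu_0\times\supp\mu_1\subset\mms_\ll^2$ with bounded densities, since $p$-separation (and hence McCann's Brenier theorem) is guaranteed there by \cite[Lem.~4.4]{mccann2020}. The paper then upgrades to the full class of timelike $p$-dualizable pairs by a partition argument as in \autoref{Pr:ii to iii}. Your statement that Brenier theory applies directly to an arbitrary timelike $p$-dualizable pair glosses over this; you should either justify $p$-separation in that generality or insert the same extension step.
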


In particular, for $K=0$ this gives a new characterization of the strong energy condition à la Hawking and Penrose, paralleling \cite{mccann2020}, and a new description of one inequality in the vacuum Einstein equations with zero  cosmological constant \cite{mondinosuhr2018}.

An earlier version of \autoref{Th:Smothh} claimed that one can include $\smash{\TCD_p(K,N)}$ and $\smash{\wTCD_p(K,N)}$ in this chain of equivalences. As thankfully pointed out to the author by Shin-ichi Ohta, the standard proof of $\CD(K,N)$ for smooth Riemannian manifolds \cite{ohta,sturm2006b} by decomposing the transport into a concave tangential and a $(K,N-1)$-concave orthogonal part relies on the symmetry of the transport potential, which is the gradient of an appropriate function (see also corresponding comments in \cite[p.~383]{villani2009}). This property is specific to the $2$-Wasserstein distance. In the Lorentzian setting, the  potential is not symmetric for any $p\in (0,1)$. By using a similar, yet slightly different approach, we have been able to close this gap in \cite{braunohta} (and to extend this result to general weighted Finsler spacetimes).


 
$\blacktriangleright$ The conditions $\smash{\wTCD_p(K,N)}$ and $\smash{\wTCD_p^*(K,N)}$ are shown to imply timelike geometric estimates à la
\begin{itemize}
\item Brunn--Minkowski, see \autoref{Pr:Brunn-Minkowski} and \autoref{Propos},
\item Bonnet--Myers, see \autoref{Cor:Bonnet-Myers} and \autoref{Cor:Reduced BM}, as well as
\item Bishop--Gromov, see \autoref{Th:BG} and \autoref{Th:Reduced BG}.
\end{itemize}
Here the advantage of $\smash{\wTCD_p(K,N)}$ is that it yields these estimates in \emph{sharp} form. Indeed, for a given  globally hyperbolic, $n$-dimensional Lipschitz spacetime $\mms$,  $\smash{n\in\N_{\geq 2}}$ --- our running example of a Lorentzian space of low regularity satisfying \autoref{Ass:ASS} below, cf.~\autoref{Ex:Low reg spt} ---, the timelike Bishop--Gromov inequality obtained from $\smash{\wTCD_p(K,N)}$ yields  
\begin{align*}
n = \dim^\tsep\mms \leq N
\end{align*}
for the so-called geometric dimension $\smash{\dim^\tsep\mms}$  introduced in \cite{mccann2021}, see \autoref{Cor:HD}. As inferred in \autoref{Re:HD2}, the same reasoning starting from $\wTCD_p^*(K,N)$ only gives $\smash{n=\dim^\tsep\mms \leq N+1}$, cf.~\eqref{RKB}. Corresponding geometric inequalities have been also obtained from the weak entropic timelike curvature-dimension condition $\smash{\wTCD_p^e(K,N)}$  \cite{cavalletti2020}, but do not yield sharp versions either \emph{a priori} (unless timelike nonbranching is assumed \cite[Sec.~5.3]{cavalletti2020}, which we avoid).

$\blacktriangleright$ Another feature of our timelike curvature-dimension conditions is their weak stability according to a natural Lorentzian notion of convergence introduced in \cite{cavalletti2020} following \cite{giglimondino2015}, concretized in \autoref{Th:Stability TCD}. These are analogues of corresponding stability results for their entropic versions \cite[Thm.~3.12, Thm.~3.14]{cavalletti2020}.

\begin{theorem}\label{Th:Asddd} Let $(\mms_k,\met_k,\meas_k,\ll_k,\leq_k,\tsep_k)_{k\in\N}$ be a sequence of measured Lorentz\-ian spaces obeying $\smash{\TCD_p^*(K,N)}$ for fixed $p\in (0,1)$, $K\in\R$, and $N\in[1,\infty)$ which converges to a measured Lorentzian  space $(\mms_\infty,\met_\infty,\meas_\infty,\ll_\infty,\leq_\infty,\tsep_\infty)$ in the sense of \autoref{Def:Convergence}.  Then the latter satisfies $\smash{\wTCD_p^*(K,N)}$. 

The same conclusion holds true when  replacing $\smash{\TCD_p^*(K,N)}$ and $\smash{\wTCD_p^*(K,N)}$ by $\smash{\TCD_p(K,N)}$ and $\smash{\wTCD_p(K,N)}$, respectively.
\end{theorem}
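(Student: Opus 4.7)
The plan is to mimic the standard stability argument for curvature-dimension conditions from the metric measure setting, adapting it to the Lorentzian framework. The three key ingredients will be: weak compactness of timelike proper-time parametrized $\ell_p$-geodesics along the approximating sequence, stability of chronological $\ell_p$-optimal couplings in the limit, and weak lower semicontinuity of the Rényi entropy $\scrS_N$ under the convergence of \autoref{Def:Convergence}.

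Starting from a strongly timelike $p$-dualizable pair $(\mu_0^\infty,\mu_1^\infty)\in\scrP_\comp^\ac(\mms_\infty,\meas_\infty)^2$, I would first construct approximants $(\mu_0^k,\mu_1^k)\in\scrP_\comp^\ac(\mms_k,\meas_k)^2$ with uniformly bounded densities and supports contained in a fixed (transported) compact causal diamond, which converge weakly to $(\mu_0^\infty,\mu_1^\infty)$ along the convergence. The stability of $\tsep$ under the convergence should ensure that $\ell_p(\mu_0^k,\mu_1^k)\to \ell_p(\mu_0^\infty,\mu_1^\infty)>0$ and that $(\mu_0^k,\mu_1^k)$ is timelike $p$-dualizable in $\mms_k$ for all sufficiently large $k$. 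Applying $\TCD_p^*(K,N)$ on $\mms_k$ then produces $\bdpi^k\in\OptTGeo_{\ell_p}^\tsep(\mu_0^k,\mu_1^k)$ such that $\mu_t^k := (\eval_t)_\push\bdpi^k$ satisfies the Rényi convexity bound with coefficients $\sigma_{K,N}^{(\cdot)}(\tsep(\cdot,\cdot))$.

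By the compactness properties of $\TGeo^\tsep$ encapsulated in \autoref{Le:Villani lemma for geodesic}---whose proof relies on non-total imprisonment to force uniform Lipschitz bounds on elements of $\TGeo(\mms_k)$ passing through the fixed diamond---a subsequence of $\bdpi^k$ converges weakly to some $\bdpi^\infty$ concentrated on $\TGeo^\tsep(\mms_\infty)$, since \eqref{Eq:tau s t equ} is preserved in the limit. The endpoint marginals of $\bdpi^\infty$ are $(\mu_0^\infty,\mu_1^\infty)$, and $(\eval_0,\eval_1)_\push\bdpi^\infty$ is an $\ell_p$-optimal coupling by upper semicontinuity of $\ell_p$ combined with $\ell_p(\mu_0^k,\mu_1^k) \to \ell_p(\mu_0^\infty,\mu_1^\infty)$. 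Strong timelike $p$-dualizability of the limit pair then forces this coupling to be chronological, so $\bdpi^\infty \in\OptTGeo_{\ell_p}^\tsep(\mu_0^\infty,\mu_1^\infty)$, giving a candidate timelike proper-time parametrized $\ell_p$-geodesic in $\mms_\infty$.

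Finally, I would pass the convexity inequality to the limit: the right-hand side, being a continuous bounded function of $\tsep$ integrated against the convergent endpoint couplings, converges by the weak convergence of those couplings together with stability of $\tsep$; the left-hand side requires weak lower semicontinuity of $\scrS_N$ under the convergence, which should follow from the compatibility of the reference measures $\meas_k$ built into \autoref{Def:Convergence}, as is by now standard in $\CD$ stability arguments. The same proof, replacing $\sigma_{K,N}^{(r)}$ by $\tau_{K,N}^{(r)}$, handles $\TCD_p(K,N)\Rightarrow \wTCD_p(K,N)$. The main obstacle will be the careful selection of approximants that remain timelike $p$-dualizable in each $\mms_k$ while approximating the strongly timelike $p$-dualizable limit pair; positioning the supports of $\mu_i^k$ strictly inside the interior of a fixed chronological diamond, combined with continuity of $\tsep$ and the pre-compactness of the approximating causal structures, should resolve this and simultaneously guarantee survival of chronological optimality under the limit passage.
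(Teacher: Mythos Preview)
Your outline correctly identifies the overall architecture---approximate, apply $\TCD$ on $\scrX_k$, extract a limit plan via \autoref{Le:Villani lemma for geodesic}, and use joint lower semicontinuity of $\scrS_{N'}$ for the left-hand side---and this matches the paper's strategy. However, there is a genuine gap in your treatment of the right-hand side. You write that it is ``a continuous bounded function of $\tsep$ integrated against the convergent endpoint couplings,'' but the integrand is $\sigma_{K,N'}^{(1-t)}(\tsep(x^0,x^1))\,\rho_{k,0}(x^0)^{-1/N'}$, where $\rho_{k,0}$ is the density of $\mu_{k,0}$ \emph{with respect to the varying reference measure $\meas_k$}. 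These densities are neither continuous nor fixed along the sequence, so weak convergence of the couplings $\pi_k$ alone gives you nothing. This is precisely the heart of the Sturm-type stability proof, and the paper spends the bulk of its argument (Steps 5--7) on it: one transports $\mu_{\infty,0}$ to $\scrP^\ac(\mms,\meas_k)$ via the disintegration $\mathfrak{p}^k$ of a $W_2$-optimal coupling $\mathfrak{q}_k$ of $\meas_k$ and $\meas_\infty$, then uses Jensen's inequality together with uniform continuity of $\tau_{K,N'}^{(r)}\circ\tsep$ on compacta to compare the $k$-level integrals to integrals against a coupling $\alpha_k$ of perturbations of $\mu_{\infty,0},\mu_{\infty,1}$, controlling errors by $W_2(\meas_k,\meas_\infty)$ and the $\Ell^\infty$-norms of $\rho_{\infty,0},\rho_{\infty,1}$; finally \autoref{Le:Const perturb} handles the limsup.

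A second, related gap is the construction of the approximants themselves. Saying you will ``position the supports strictly inside a chronological diamond'' does not produce measures in $\scrP_\comp^\ac(\mms_k,\meas_k)$---you need absolute continuity with respect to $\meas_k$, not $\meas_\infty$, and you need the resulting pair to be timelike $p$-dualizable. The paper achieves this (Step 3) by pushing a chronological $\ell_p$-optimal $\pi_\infty$ through \autoref{Le:CM lemma} and the $\mathfrak{p}^k$-maps, then restricting to the chronological part of the resulting coupling and taking a \emph{further} $\ell_p$-optimal coupling of its marginals, whose chronological restriction finally yields the desired timelike $p$-dualizable pair. The tracking of all normalization constants and mass defects $\delta_k,\varepsilon_k,\zeta_k$ from these successive restrictions is what feeds into the error analysis for the right-hand side. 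Without this machinery, neither the construction of the recovery sequence nor the convergence of the right-hand side goes through.
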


Hence, limits according to \autoref{Def:Convergence}  of sequences of smooth Lo\-rentz\-ian spacetimes satisfying the hypothesis of \autoref{Th:Smothh} are $\smash{\wTCD_p(K,N)}$.

\autoref{Th:Asddd} is where the a priori distinction between weak and strong versions of our timelike curvature-dimension conditions, as in its entropic counterpart from \cite{cavalletti2020}, comes into play. There are essentially two technical reasons which arise from the proof of \autoref{Th:Stability TCD}. First, given a strongly timelike $p$-dualizable pair in $\smash{\scrP_\comp^\ac(\mms_\infty,\meas_\infty)^2}$, the members $\smash{\mu_{k,0},\mu_{k,1}\in \scrP_\comp^\ac(\mms_k,\meas_k)}$, $k\in\N$, of correspon\-ding recovery sequences between which $\smash{\TCD_p^*(K,N)}$ applies can only be constructed to be timelike $p$-dua\-li\-zable in general. Second, in the passage as $k\to\infty$, our timelike curvature-dimension condition requires the involved limit objects (namely, timelike proper-time parametrized $\smash{\ell_p}$-geodesics and $\smash{\ell_p}$-optimal couplings), obtained by a compactness argument, to be chronological. However, unlike causality, chronology of either objects is not stable under weak limits in general, unless we know that \emph{every} $\smash{\ell_p}$-optimal coupling of the limit marginals is chronological.

$\blacktriangleright$ In \cite{cavalletti2020}, the most powerful consequences of $\smash{\TCD_p^e(K,N)}$ ---  e.g.~uniqueness of $\smash{\ell_p}$-optimal couplings and $\smash{\ell_p}$-geodesics, localization, and singularity theorems --- were obtained under the assumption of \emph{timelike nonbranching}. Roughly speaking, the latter means that elements of $\smash{\TGeo^\tsep(\mms)}$ do not admit forward or backward branching at intermediate times. However, this property might fail e.g.~for space\-times with regularity less than $\smash{\Cont^{1,1}}$ \cite{cavalletti2020}, even if a $\TCD$ condition holds \cite{garcia}. Hence, it may be lost in limit cases, e.g.~in the framework of \autoref{Th:Asddd}.

We introduce the following notion called \emph{timelike $p$-essential non\-branching}. See \autoref{Def:Essentially nonbranching} for the precise definition.

\begin{definition}\label{Def:pENB definition} Given $p\in (0,1]$, we term the space $(\mms,\met,\meas,\ll,\leq,\tsep)$ \emph{timelike $p$-essentially nonbranching} if every element of $\smash{\OptTGeo_{\ell_p}^\tsep(\scrP_\comp^\ac(\mms,\meas)^2)}$ is concentrated on a timelike nonbranching subset of $\smash{\TGeo^\tsep(\mms)}$.
\end{definition}

As stated in \autoref{Re:reversed}, timelike nonbranching implies timelike $p$-essentially nonbranching for every $p\in (0,1]$.

\autoref{Def:pENB definition} is based on the notion of \emph{essential nonbranching} metric measure spaces  \cite{rajala2014}. In \cite{rajala2014}, every $\RCD$ space was proven to be  essentially nonbranching. Since the $\RCD$ condition is stable with respect to measured Gromov--Hausdorff (mGH) convergence  \cite{giglimondino2015, sturm2006a}, this implicitly gives a stability result for the essential  nonbranching condition. For this reason, we also believe \autoref{Def:pENB definition} to be easier to deal with in  convergence questions, provided  a suitable Lorentzian version of infinitesimal Hilbertianity \cite{ambrosio2014a,ambrosio2014b,gigli2015} is found in the future.

Here, we obtain a local-to-global property for the $\smash{\TCD_p^*(K,N)}$ condition under timelike $p$-essential nonbranching, cf.~\autoref{Th:Local to global}. This constitutes a Lorentzian variant of \cite{bacher2010} and is a feature of the reduced timelike curvature-dimension condition; indeed, as detailed in \autoref{Eq:Blurt}, such a property seems subtle for $\smash{\TCD_p(K,N)}$. Furthermore, after \cite[Thm.~3.14]{erbar2015} we expect such a local-to-global property for $\smash{\TCD_p^e(K,N)}$ to hold \emph{a priori} only in a strong form.

$\blacktriangleright$ Another key result is the equivalence proof of all reduced timelike curvature-dimension conditions with all entropic ones from \cite{cavalletti2020} under timelike $p$-essential nonbranching, see \autoref{Th:Equivalence TCD* and TCDe}. Remarkably, there seems no clear way of directly passing from the reduced to the entropic $\TCD$ condition, or vice versa.

\begin{theorem}\label{Zums} Assume $(\mms,\met,\meas,\ll,\leq,\tsep)$ to obey \autoref{Ass:ASS} and to be timelike $p$-essentially nonbranching for some $p\in (0,1)$. Then the following statements are equivalent.
\begin{enumerate}[label=\textnormal{\textcolor{black}{(}\roman*\textcolor{black}{)}}]
\item The condition $\smash{\TCD_p^*(K,N)}$ holds.
\item The condition $\smash{\wTCD_p^*(K,N)}$ holds.
\item The condition $\smash{\TCD_p^e(K,N)}$ holds.
\item The condition $\smash{\wTCD_p^e(K,N)}$ holds.
\item\label{joh} For every timelike $p$-dualizable pair $(\mu_0,\mu_1) = (\rho_0\,\meas,\rho_1\,\meas)\in\scrP^\ac(\mms,\meas)^2$, there exists some $\smash{\bdpi\in\OptTGeo_{\ell_p}^\tsep(\mu_0,\mu_1)}$ such that for every $t\in[0,1]$, the measure $(\eval_t)_\push\bdpi$ is $\meas$-absolutely continuous, and its density $\rho_t$ with respect to $\meas$ obeys, for every $N'\geq N$, the inequality
\begin{align*}
\rho_t(\gamma_t)^{-1/N'}\geq \sigma_{K,N'}^{(1-t)}(\tsep(\gamma_0,\gamma_1))\,\rho_0(\gamma_0)^{-1/N'} + \sigma_{K,N'}^{(t)}(\tsep(\gamma_0,\gamma_1))\,\rho_1(\gamma_1)^{-1/N'}
\end{align*}
for $\bdpi$-a.e.~$\gamma\in\TGeo^\tsep(\mms)$.
\end{enumerate}
\end{theorem}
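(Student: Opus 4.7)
The strategy is to establish the cycle $(v) \Rightarrow (i) \Rightarrow (ii) \Rightarrow (v)$ together with the chain $(v) \Rightarrow (iii) \Rightarrow (iv) \Rightarrow (v)$, thereby closing the loop. The implications $(i) \Rightarrow (ii)$ and $(iii) \Rightarrow (iv)$ are tautological, as strongly timelike $p$-dualizable pairs are timelike $p$-dualizable. For $(v) \Rightarrow (i)$ I would integrate the pointwise inequality on densities against the witnessing lift $\bdpi$ and apply Jensen's inequality to the concave function $r \mapsto -r^{1-1/N'}$; since the same $\bdpi$ realizes the proper-time $\ell_p$-geodesic structure, this directly yields the $\scrS_{N'}$-convexity required by $\TCD_p^*(K,N)$ along the interpolants $\mu_t = (\eval_t)_\push\bdpi$.

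The core of the argument is $(ii) \Rightarrow (v)$. Given a timelike $p$-dualizable pair $(\mu_0,\mu_1) \in \scrP^\ac(\mms,\meas)^2$, fix a lift $\bdpi \in \OptTGeo_{\ell_p}^\tsep(\mu_0,\mu_1)$ and argue by contradiction, following the localization strategy of Cavalletti and Mondino in the metric-measure setting. If the pointwise $\sigma_{K,N'}$-inequality on $\rho_t$ failed on a Borel set $B \subseteq \TGeo^\tsep(\mms)$ with $\bdpi(B)>0$, I would normalize $\bdpi\mres B$ to a probability $\bdpi_B$ whose endpoint marginals $(\tilde\mu_0,\tilde\mu_1)$ are still $\meas$-absolutely continuous. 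The restricted endpoint coupling remains chronologically $\ell_p$-optimal between its marginals, and timelike $p$-essential nonbranching guarantees that the interpolants $(\eval_t)_\push\bdpi_B$ stay $\meas$-absolutely continuous with densities obtained by rescaling $\rho_t$ on $\eval_t(B)$; in particular the pair $(\tilde\mu_0,\tilde\mu_1)$ becomes strongly timelike $p$-dualizable. Applying $\wTCD_p^*(K,N)$ to this smaller pair with $\bdpi_B$ as witness then contradicts the assumed violation on $B$, since an integrated Rényi inequality on a sufficiently small localization forces the integrand bound pointwise.

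For the entropic side, $(v) \Rightarrow (iii)$ is obtained by exploiting that the pointwise inequality in $(v)$ holds for all $N'\geq N$; passing to the limit $N' \to \infty$ via \eqref{Eq:Ent SN limit} converts the $\sigma_{K,N'}$-bound on densities into a pointwise $K$-convexity estimate for $-\log \rho_t$ with an $N$-dimensional correction, which after integration and a final Jensen step produces $\TCD_p^e(K,N)$. The converse $(iv) \Rightarrow (v)$ is handled by the same localization device used for $(ii) \Rightarrow (v)$, now applied to the entropic functionals: timelike $p$-essential nonbranching reduces the global $U_{N'}$-convexity inequality to a pointwise functional inequality along $\bdpi$-a.e.~geodesic for every $N' \geq N$, and comparing $U_{N'}$ with the powers $\rho_t^{1-1/N'}$ then extracts the required $\sigma_{K,N'}$-bound.

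The main obstacle will be the localization step in $(ii) \Rightarrow (v)$ and $(iv) \Rightarrow (v)$. In the Lorentzian setting, restricting $\bdpi$ to a Borel subset $B$ must preserve both strong timelike $p$-dualizability of the endpoint marginals and the validity of the proper-time identity \eqref{Eq:tau s t equ} for the interpolants, neither of which is automatic from causality alone. Timelike $p$-essential nonbranching is precisely what rules out pathological gluings of geodesic branches that would otherwise destroy these structures, but verifying its consequences requires a careful measurable selection of $B$ — typically by first reducing to causal sub-diamonds of $\mms$ — combined with the continuity of the reparametrization map $\sfr$ from \autoref{Le:Cty reparametrization}, in order to propagate the pointwise conclusion back to the original pair $(\mu_0,\mu_1)$.
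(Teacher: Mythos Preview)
Your overall cycle $(v)\Rightarrow(i)\Rightarrow(ii)\Rightarrow(v)$ and $(v)\Rightarrow(iii)\Rightarrow(iv)\Rightarrow(v)$ matches the paper, and $(v)\Rightarrow(i)$ by integration is correct. However, two of your steps contain genuine gaps.

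For $(v)\Rightarrow(iii)$, your plan to send $N'\to\infty$ via \eqref{Eq:Ent SN limit} is wrong: that limit recovers the Boltzmann entropy and would only yield a $\TCD(K,\infty)$-type statement, losing the finite dimension $N$ entirely. The condition $\TCD_p^e(K,N)$ involves $\sigma_{K,N}^{(\cdot)}$ acting on $\scrU_N$, not a limiting object. The paper instead takes the $N$-inequality in $(v)$, applies $\log$ to both sides, rewrites the right-hand side as $\rmG_t\big(-\tfrac{1}{N}\log\rho_0(\gamma_0),-\tfrac{1}{N}\log\rho_1(\gamma_1),\tfrac{K}{N}\tsep^2(\gamma_0,\gamma_1)\big)$ using \eqref{Eq:Distortion coeff property} and \eqref{Eq:GtHt}, integrates against $\bdpi$, and applies Jensen's inequality to the jointly convex function $\rmG_t$. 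Exponentiating then gives the $\scrU_N$-inequality of $\TCD_p^e(K,N)$ with the correct $N$.

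For $(ii)\Rightarrow(v)$ and $(iv)\Rightarrow(v)$, your contradiction sketch hides the decisive point. You write ``applying $\wTCD_p^*(K,N)$ to this smaller pair with $\bdpi_B$ as witness'', but $\wTCD_p^*$ only asserts the \emph{existence} of some witnessing geodesic and coupling for $(\tilde\mu_0,\tilde\mu_1)$; nothing forces these to be the restricted objects $\bdpi_B$ and $(\eval_0,\eval_1)_\push\bdpi_B$. Likewise, the claim that the interpolant densities of $\bdpi_B$ are rescalings of $\rho_t$ on $\eval_t(B)$ presupposes exactly this identification. Timelike $p$-essential nonbranching alone does not deliver it. The paper's route is to first observe that $\wTCD_p^*(K,N)$ implies $\TMCP_p^*(K,N)$ (\autoref{Pr:TMCP to TCD}), and that under essential nonbranching plus $\TMCP_p^*$ the chronological $\ell_p$-optimal coupling and the timelike $\ell_p$-optimal geodesic plan between $\scrP_\comp^\ac$-marginals are \emph{unique} (\autoref{Th:Uniqueness couplings}, \autoref{Th:Uniqueness geodesics}); see \autoref{Re:Uniq}. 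With uniqueness in hand, one partitions $\supp\mu_0\times\supp\mu_1$ over a $\cap$-stable generator of the Borel $\sigma$-field, applies $\wTCD_p^*$ to each cell, and uses uniqueness to force the cell-wise witnessing plans to be the restrictions of the single global $\bdpi$; \autoref{Le:Mutually singular} then gives mutual singularity of the intermediate slices, so the integrated inequality localizes to every cell. Letting the partition refine yields the pointwise bound. Your proposal needs this uniqueness input explicitly; without it the localization does not close.
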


Apart from the smooth case \cite{mccann2020,mondinosuhr2018}, also the equivalence of $\smash{\TCD_p^e(K,N)}$ and $\smash{\wTCD_p^e(K,N)}$ is new, even in the more restrictive  timelike nonbranching case. In fact, as indicated we establish both inherent equivalences of $\smash{\TCD_p^*(K,N)}$ and $\smash{\wTCD_p^*(K,N)}$ as well as $\smash{\TCD_p^e(K,N)}$ and $\smash{\wTCD_p^e(K,N)}$ only through their respective one-to-one correspondence with \ref{joh}. The latter point constitutes a \emph{pathwise} form of either time\-like curvature-dimension condition. 
Also, \ref{joh} provides density bounds for interpolating timelike proper-time parametrized $\smash{\ell_p}$-geodesics which, under $\smash{\wTCD_p^e(K,N)}$, were obtained in \cite{braun2022} using variational methods. (In \cite{braun2022}, we have not assumed timelike essential nonbranching, though.) Compare this with \autoref{Sub:Good TCD} and \autoref{Sub:Good} below, where we employ similar methods to construct so-called ``good geodesics''.

An analogous pathwise form of $\smash{\TCD_p(K,N)}$ and $\smash{\wTCD_p(K,N)}$ --- however, with no direct link to \cite{cavalletti2020} --- is established in \autoref{Th:Equiv TCD with geo}.

In addition, in \cite[Thm.~5.6]{cavalletti2020} a synthetic Hawking singularity theorem has been obtained under more general assumptions than $\smash{\TCD_p^e(K,N)}$. As a byproduct of \autoref{Zums} and \eqref{RKB}, in the genuine timelike nonbranching case this directly transfers to our setting; in the statement \cite[Thm.~5.6]{cavalletti2020}, one simply has to replace the entropic versions from \cite{cavalletti2020} by any of ours.

$\blacktriangleright$ Lastly, we introduce the synthetic notions of
\begin{itemize}
\item the reduced timelike measure-contraction property $\smash{\TMCP^*(K,N)}$, and
\item the timelike measure-contraction property $\smash{\TMCP(K,N)}$.
\end{itemize}
These are versions of the respective timelike curvature-dimension conditions where convexity of the Rényi entropy is not required along the chronological $\smash{\ell_p}$-optimal transport between two $\meas$-absolutely continuous measures, but rather from an $\meas$-absolutely continuous distribution to a Dirac mass in its chronological future. For metric measure spaces, the corresponding notion was introduced in \cite{ohta2007, sturm2006b}; the entropic Lorentzian version called $\smash{\TMCP^e(K,N)}$ is due to \cite{cavalletti2020}. We note that both properties do not depend on the transport exponent $p\in (0,1)$, cf.~\autoref{Re:Indep transp exp}.

In \autoref{Pr:TMCP to TMCP*} and  \autoref{Pr:TMCP to TCD} we prove the following correspondences, where $K^* := K(N-1)/N$ is defined as for \eqref{RKB}:
\begin{align*}
\TMCP(K,N) &\ \Longrightarrow \ \TMCP^*(K,N),\\
\TMCP^*(K,N) &\ \Longrightarrow \ \TMCP(K^*,N),\\
\wTCD_p^*(K,N) &\ \Longrightarrow\ \TMCP^*(K,N),\\
\wTCD_p(K,N) &\ \Longrightarrow\ \TMCP(K,N),
\end{align*}
again assuming $K>0$ in the latter two cases.  
Furthermore, we extend the above mentioned geometric inequalities, cf.~\autoref{Re:Geom inequ TMCP}, we prove the full stability of both properties under the notion of convergence from \autoref{Def:Convergence}, cf.~\autoref{Th:Stability TMCP}, and we show the equivalence of $\smash{\TMCP^*(K,N)}$ with $\smash{\TMCP^e(K,N)}$ under timelike essential nonbranching, cf.~\autoref{Th:Equivalence TMCP* and TMCPe}. We also give pathwise characterizations of $\smash{\TMCP^*(K,N)}$ and $\smash{\TMCP(K,N)}$, see \autoref{Th:Equivalence TMCP* and TMCPe} and \autoref{Th:Equivalence TMCP}. 

Lastly, in \autoref{Th:Uniqueness couplings} and \autoref{Th:Uniqueness geodesics} we establish  the uniqueness of chronological $\smash{\ell_p}$-optimal couplings as well as  timelike proper-time parametrized $\smash{\ell_p}$-optimal geodesics for initial measures in $\smash{\scrP_\comp^\ac(\mms,\meas)}$, again under timelike essential nonbranching. 
This extends the smooth outcomes from \cite{mccann2020}. Our reasoning follows \cite{cavalletti2020} (itself reliant on \cite{cavalletti2017}); in fact, we extend the corresponding uniqueness results obtained therein under $\smash{\TMCP^e(K,N)}$ from the timelike nonbranching to the timelike essentially nonbranching case, cf.~\autoref{Re:From TNB to TENB}. However, let us stress that even in the timelike nonbranching case, our uniqueness results do \emph{not} follow by combining \autoref{Zums} and  \cite{cavalletti2020}. Rather, \autoref{Th:Uniqueness couplings} and \autoref{Th:Uniqueness geodesics} are \emph{prerequisites} for the proof of \autoref{Zums}, cf.~\autoref{Re:Uniq}.

\subsection*{Organization} In \autoref{Ch:prel}, we collect all necessary background material about measured Lorentzian spaces and optimal transport theory on these. \autoref{Ch:TCD conditions} is devoted to a detailed investigation of our timelike curvature-dimension conditions. \autoref{Ch:TMCP} studies the timelike measure-contraction property. \autoref{App:Smooth} contains the smooth proof of \autoref{Th:Smothh}, and \autoref{App:B} collects some technical results evolving around \autoref{Def:Termin}.

\addtocontents{toc}{\protect\setcounter{tocdepth}{2}}

\section{Preliminaries}\label{Ch:prel}

\subsection{Volume distortion coefficients} We start by introducing the \emph{volume distortion coefficients} that  make up our main definitions, and by listing their basic  properties. We refer to \cite{bacher2010, erbar2015,sturm2006b} for details.

Given $\kappa\in\R$, $t\in[0,1]$, and $\vartheta\geq 0$,  we define the quantities
\begin{align}\label{Eq:DIS COF}
\begin{split}
\mathfrak{s}_\kappa(\vartheta) &:= \begin{cases}
\dfrac{\sin(\!\sqrt{\kappa}\,\vartheta)}{\sqrt{\kappa}} & \text{if }\kappa >0,\\
\vartheta & \text{if } \kappa =0,\\
\dfrac{\sinh(\!\sqrt{-\kappa}\,\vartheta)}{\sqrt{-\kappa}} & \text{otherwise},
\end{cases}\\
\sigma_\kappa^{(t)}(\vartheta) &:= \begin{cases} \infty\vspace*{0.15cm} & \textnormal{if }\kappa\,\vartheta^2 \geq \pi^2,\\
t & \text{if }\kappa\,\vartheta^2 = 0,\\
\dfrac{\mathfrak{s}_\kappa(t\,\vartheta)}{\mathfrak{s}_\kappa(\vartheta)} & \textnormal{otherwise},
\end{cases}
\end{split}
\end{align}

\begin{definition}\label{Def:Dist coeff} For $K\in\R$ and $N\in[1,\infty)$, slightly abusing notation we set
\begin{align*}
\sigma_{K,N}^{(t)}(\vartheta) &:= \sigma_{K/N}^{(t)}(\vartheta),\\
\tau_{K,N}^{(t)}(\vartheta) &:= t^{1/N}\,\sigma_{K,N-1}^{(t)}(\vartheta)^{1-1/N}.
\end{align*}
\end{definition}

Note that for every $t\in(0,1)$ and every $\vartheta > 0$, $\smash{\sigma_{K,N}^{(t)}(\vartheta)}$ is continuous in $(K,N)\in \R\times[1,\infty)$, nondecreasing in $K$, and nonincreasing in $N$ \cite[Rem.~2.2]{bacher2010}. Analogous claims apply to the quantity $\smash{\tau_{K,N}^{(t)}(\vartheta)}$ \cite[p.~138]{sturm2006b}. Furthermore, for every $t\in [0,1]$, every $\vartheta\geq 0$, and every $\kappa\in (-\infty,\pi^2/\vartheta^2)$,
\begin{align}\label{Eq:Distortion coeff property}
\sigma_\kappa^{(t)}(\vartheta) = \sigma_{\kappa\vartheta^2}^{(t)}(1).
\end{align}

\begin{remark}\label{Re:Lower bounds sigma} We recall the following elementary inequality from \cite[Rem.~2.3]{cavalletti2017}: for every $K\in\R$, every $N\in[1,\infty)$, every $t\in[0,1]$, and every $\vartheta\geq 0$,
\begin{align*}
\sigma_{K,N}^{(t)}(\vartheta) \geq t\,\rme^{-(1-t)\vartheta\sqrt{K^-/N}}.
\end{align*}
\end{remark}

Lastly, in view of \autoref{Th:Equivalence TCD* and TCDe} and \autoref{Th:Equivalence TMCP* and TMCPe}, given any $t\in[0,1]$ let us define  $\rmG_t\colon \R^2 \times (-\infty,\pi^2)\to (-\infty,\infty]$ and $\rmH_t\colon \R\times (-\infty,\pi^2)\to (-\infty,\infty]$ by
\begin{align}\label{Eq:GtHt}
\begin{split}
\rmG_t(x,y,\kappa) &:= \log\!\big[\sigma_\kappa^{(1-t)}(1)\,\rme^x + \sigma_\kappa^{(t)}(1)\,\rme^y\big]\\
\rmH_t(x,\kappa) &:= \log\!\big[\sigma_\kappa^{(1-t)}(1)\,\rme^x\big] = \log\sigma_\kappa^{(1-t)}(1) + x.
\end{split}
\end{align}
Then the functions $\rmG_t$ and $\rmH_t$ are jointly convex \cite[Lem.~2.11]{erbar2015}. 

\subsection{Nonsmooth Lorentzian spaces}\label{Sub:Lorentzian nonsmooth} We continue with a concise digression on the theory of nonsmooth Lorentzian (pre-length, length, and geodesic) spaces. We refer to \cite{cavalletti2020, kunzinger2018} for  details, proofs, and examples about the corresponding notions.

\subsubsection{Chronology and causality} Let us fix a preorder $\leq$  and a transitive relation $\ll$, contained in $\leq$, on $\mms$. The triple $(\mms,\ll,\leq)$ is called \emph{causal space} \cite[Def. 2.1]{kunzinger2018}. We say that $x,y\in\mms$ are \emph{timelike} or \emph{causally} related if $x\ll y$ or $x\leq y$, respectively. We write $x<y$ whenever $x\leq y$ yet $x\neq y$. Define the sets $\smash{\mms_\ll^2}$ and $\smash{\mms_\leq^2}$ to consist of all pairs $(x,y)\in\mms^2$ with $x\ll y$  and $x\leq y$, respectively.


\begin{definition} We term the triple $(\mms,\ll,\leq)$ 
\begin{enumerate}[label=\textnormal{\alph*.}]
\item \emph{causally closed} if the relation $\leq$ is closed, i.e.~$\smash{\mms_\leq^2}$ is closed in $\mms^2$, and
\item \emph{locally causally closed} if every point in $\mms$ has some open neighborhood $U\subset\mms$ such that $\smash{\bar{U}^2\cap\mms_\leq^2}$ is relatively  closed in $\smash{\bar{U}^2}$.
\end{enumerate}
\end{definition}

Given any  $A\subset\mms$, we  define  the \emph{chro\-no\-logical future} $I^+(A)\subset\mms$ and the \emph{causal future} $J^+(A)\subset\mms$ of $A$ \cite[Def.~2.3]{kunzinger2018} through
\begin{align*}
I^+(A) &:= \{y\in\mms : x\ll y\textnormal{ for some }x\in A\},\\
J^+(A) &:= \{y\in\mms : x\leq y\textnormal{ for some }x\in A\}.
\end{align*}
Analogously, we define the \emph{chronological past} $I^-(A)$ and the \emph{causal past} $J^-(A)$ of $A$. Employing a slight abuse of notation, given  $x\in\mms$ we write $\smash{I^\pm(x) := I^\pm(\{x\})}$ and $\smash{J^\pm(x) := J^\pm(\{x\})}$. For a Borel measure $\mu$ on $\mms$, we shall write $\smash{I^\pm(\mu) := I^\pm(\supp\mu)}$ and $\smash{J^\pm(\mu)} := \smash{J^\pm(\supp\mu)}$. For all these objects, set $I(A,B) := I^+(A) \cap I^-(B)$, and define $I(x,y)$, $I(\mu,\nu)$, $J(A,B)$, $J(x,y)$, and $J(\mu,\nu)$ in an analogous way. 

\subsubsection{Lorentzian pre-length spaces}  A function $\tsep\colon\mms^2 \to [0,\infty]$ is called \emph{time separation function} \cite[Def.~2.8]{kunzinger2018} if it is lower semicontinuous, and for every $x,y,z\in\mms$ it satisfies the properties
\begin{enumerate}[label=\textnormal{\alph*.}]
\item $\tsep(x,y) = 0$ if $x\not\leq y$,
\item $\tsep(x,y)>0$ if and only if $x\ll y$, and
\item if $x\leq y\leq z$ we have the \emph{reverse triangle inequality}
\begin{align}\label{Eq:Reverse tau}
\tsep(x,z) \geq \tsep(x,y) + \tsep(y,z).
\end{align}
\end{enumerate}
The existence of such a $\tsep$ implies that $\ll$ is an \emph{open} relation \cite[Prop.~2.13]{kunzinger2018}; in particular, the set $\smash{I^\pm(A)}$ is open for every $A\subset\mms$ \cite[Lem.~2.12]{kunzinger2018}.

\begin{definition}\label{Def:LLSSP} A \emph{Lorentzian pre-length space} is a quintuple $(\mms,\met,\ll,\leq,\tsep)$ which consists of a causal space $(\mms,\ll,\leq)$ endowed with a metric $\met$ and a time separation function $\tsep$ as introduced above.
\end{definition}

\subsubsection{Length of curves}\label{Sub:Length curves} Let $(\mms,\met,\ll,\leq,\tsep)$ be a given Lorentzian pre-length space. A \emph{path} designates a  map $\gamma\colon [a,b]\to\mms$, where $a,b\in\R$ with $a<b$; if $\gamma$ is continuous, we call $\gamma$ a \emph{curve}. Such a curve $\gamma$ is \emph{\textnormal{(}future-directed\textnormal{)} timelike} or \emph{\textnormal{(}future-directed\textnormal{)} causal} if it  parametrized by constant $\met$-speed --- in particular $\met$-Lipschitz continuous ---, and $\gamma_s \ll \gamma_t$ or $\gamma_s \leq \gamma_t$ for every $s,t\in[a,b]$ with $s< t$, respectively. 
(Evidently, causal and timelike curves can be defined without requiring Lipschitz continuity. Usually, though, we regard the latter as implicitly given, and we always mention explicitly  when Lipschitz continuity can be dropped.) It is  \emph{null} if it is causal and $\gamma_a \not\ll\gamma_b$. Similar notions evidently make sense for  \emph{past-directed} curves and their \emph{causal character} (that is, their property of being chronological, causal, or null). Unless explicitly stated otherwise, any curve of a specified causal character is assumed future-directed. 

The \emph{length} with respect to $\tsep$ of a path $\gamma\colon [a,b]\to \mms$ is 
\begin{align}\label{Eq:Length tau def}
\Len_\tsep(\gamma) := \inf\{\tsep(\gamma_{t_0},\gamma_{t_1}) + \dots + \tsep(\gamma_{t_{n-1}},\gamma_{t_n})\}, 
\end{align}
where the infimum is taken over all integers $n\in\N$ and all times $t_0,\dots,t_n \in[a,b]$ with $t_0=a$, $t_n=b$, and $t_i < t_{i+1}$ for every $i\in\{0,\dots,n-1\}$ \cite[Def.~2.24]{kunzinger2018}. It is additive \cite[Lem.~2.25]{kunzinger2018}, and reparametrizations do neither change causal characters \cite[Lem.~2.27]{kunzinger2018} nor the $\tsep$-length itself \cite[Lem.~2.28]{kunzinger2018}.

\subsubsection{Geodesics}\label{Sub:GEO} A causal curve $\eta\colon[0,1]\to\mms$ is called \emph{geodesic} (or \emph{maximal}) provided $\Len_\tsep(\eta) = \tsep(\eta_0,\eta_1)$ \cite[Def.~2.33]{kunzinger2018}. The space of all such curves is denoted $\Geo(\mms)$, and its subset of timelike curves is written $\TGeo(\mms)$.

\begin{definition}\label{Def:Geod base space} We call $(\mms,\met,\ll,\leq,\tsep)$ \emph{geodesic} if it is localizable in the sense of \autoref{Sub:Regula} below, and for every $x,y\in\mms$ with $x<y$, there exists a geodesic $\eta\in\Geo(\mms)$ starting at $x$ and ending at $y$.
\end{definition}

\begin{remark}\label{Re:Not mean} Note that a priori, geodesy does \emph{not} mean that points $x,y\in\mms$ with $x\ll y$ can be joined by a timelike geodesic.
\end{remark}

Every \emph{timelike} geodesic has a weak parametrization \cite[Def.~3.31]{kunzinger2018} by $\tsep$-arclength provided  $\tsep$ is continuous as well as $\tsep(x,x) = 0$ for every $x\in\mms$ \cite[Cor.~3.35]{kunzinger2018}. More precisely, given $\eta\in\TGeo(\mms)$ define $\psi_\eta\in \Cont([0,1];[0,1])$ by
\begin{align*}
\psi_\eta(t) := \frac{\Len_\tsep(\eta\big\vert_{[0,t]})}{\tsep(\eta_0,\eta_1)} = \frac{\tsep(\eta_0,\eta_t)}{\tsep(\eta_0,\eta_1)}.
\end{align*}
Consider the \emph{reparametrization map} $\sfr\colon \TGeo(\mms) \to \Cont([0,1];\mms)$ given by
\begin{align*}
(\sfr\circ\eta)_t := \gamma_{\psi_\eta^{-1}(t)};
\end{align*}
see also \autoref{App:B}. Then for every $\eta\in\TGeo(\mms)$, the curve $\gamma := \sfr\circ \eta$ obeys
\begin{align}\label{Eq:Proper time par}
\tsep(\gamma_s,\gamma_t) = (t-s)\,\tsep(\gamma_0,\gamma_1)>0
\end{align}
for every $s,t\in[0,1]$ with $s<t$. Note that elements of 
\begin{align*}
\TGeo^\tsep(\mms) := \sfr(\TGeo(\mms))
\end{align*}
might not be $\met$-Lipschitz continuous any more \cite[p.~424]{kunzinger2018}. On the other hand, as shown in \autoref{Le:Cty reparametrization}, $\sfr$ is con\-tinuous with respect to the uniform topology, which will be important for our axiomatization of $\smash{\ell_p}$-geodesics and the ``transference'' of compactness properties in $\TGeo(\mms)$ to $\smash{\TGeo^\tsep(\mms)}$, see \autoref{Sub:Geodesics} below. In order to not interrupt the exposition, we have deferred corresponding definitions and results to \autoref{App:B}.

\subsubsection{Regularity}\label{Sub:Regula} Geodesy does not exclude the occurrence of null segments of geodesics between points which are in timelike relation, hence does not guarantee the existence of an element of $\TGeo(\mms)$ joining these; cf.~\autoref{Re:Not mean}. Among other things, ruling out these issues is the motivation behind our hypothesis on $(\mms,\met,\ll,\leq,\tsep)$ to be regular, cf.~\autoref{Ass:ASS}.

The hypotheses of the results mentioned in this and the next subsection vary and are not streamlined. However, all statements will eventually hold under our mentioned standing, more restrictive \autoref{Ass:ASS} below.

The following notions of localizability --- akin to the smooth concept of convex neighborhoods \cite[p.~417]{kunzinger2018} --- and regularity \cite[Def.~3.16]{kunzinger2018} are rather  technical. They will not be worked with explicitly in the sequel, yet we incorporate them for completeness. More relevant are their consequences, summarized in \autoref{Re:Conse}.

\begin{definition}\label{Def:Local reg} We call $(\mms,\met,\ll,\leq,\tsep)$ \emph{localizable} if every $x\in\mms$ has an open neighborhood $\Omega_x\subset\mms$ with the following properties.
\begin{enumerate}[label=\textnormal{\alph*\textcolor{black}{.}}]
\item There is $c>0$ such that for every causal curve $\gamma\colon[0,1]\to \mms$ with image contained in $\Omega_x$, $\Len_\met(\gamma) \leq c$.
\item There exists a continuous function $\smash{\omega_x\colon\Omega_x^2\to [0,\infty)}$ such that the quintuple $\smash{(\Omega_x,\met\big\vert_{\Omega_x^2},\ll\!\big\vert_{\Omega_x^2},\leq\!\big\vert_{\Omega_x^2},\omega_x)}$ forms a Lorentzian pre-length space according to \autoref{Def:LLSSP}, and for every $y\in\Omega_x$, 
\begin{align*}
I^\pm(y)\cap \Omega_x \neq \emptyset.
\end{align*}
\item\label{EqSBD} For every $y,z\in\Omega_x$ there exists $\smash{\gamma^{y,z}\in\Geo(\Omega_x)}$ connecting $y$ to $z$ with
\begin{align*}
\Len_\tsep(\gamma^{y,z}) = \omega_x(y,z) \leq \tsep(y,z).
\end{align*}
\end{enumerate}

It is called \emph{regular} if it is localizable and
\begin{enumerate}[label=\textnormal{\alph*.}]
\setcounter{enumi}{3}
\item whenever $\smash{y,z\in\Omega_x}$ satisfy $y\ll z$ then some \textnormal{(}equivalently, every\textnormal{)} curve $\smash{\gamma^{y,z}}$ as in \ref{EqSBD} is timelike and has strictly greater $\tsep$-length than any causal curve from $y$ to $z$ with image contained in $\Omega_x$ which contains a null segment.
\end{enumerate}
\end{definition}

\begin{remark}\label{Re:Conse} Here we list the most important consequences of \autoref{Def:Local reg}.
\begin{itemize}
\item The different notions of the singular adaptations \cite[Def.~2.35, Lem.~2.38]{kunzinger2018} of the well-known strong causality condition \cite[Def.~14.11]{oneill1983}, compare with \autoref{Sec:GHYP} below, coincide under localizability \cite[Thm.~3.26]{kunzinger2018}.
\item Under localizability, strong causality, and local causal closedness, the length functional \eqref{Eq:Length tau def} is upper semicontinuous with respect to uniform convergence of causal, not necessarily Lipschitz continuous curves \cite[Prop.~3.17]{kunzinger2018}.
\item Under regularity, every element of $\Geo(\mms)$ has a causal character, i.e.~the following dichotomy holds \cite[Thm.~3.18]{kunzinger2018}. Every $\gamma\in\Geo(\mms)$  is either timelike or null; in particular,
\begin{align*}
\gamma\in\TGeo(\mms) \ \Longleftrightarrow \ \tsep(\gamma_0,\gamma_1)>0. 
\end{align*}
\item As a consequence of the last two items, geodesics will  good compactness properties under \autoref{Ass:ASS}, cf.~\autoref{Sub:Measura}.
\end{itemize}
\end{remark}

\subsubsection{Global hyperbolicity}\label{Sec:GHYP} Following \cite[Sec.~1.1]{cavalletti2020}, we term $(\mms,\met,\ll,\leq,\tsep)$  \emph{non-to\-tally imprisoning} if for every compact $C\subset \mms$ there exists a constant $c>0$ such that the $\met$-arclength of any causal curve contained in $C$ is no larger than $c$.

\begin{definition}\label{Def:GHYPPPPP} The space $(\mms,\met,\ll,\leq,\tsep)$ is 
\begin{enumerate}[label=\textnormal{\alph*.}]
\item  \emph{globally hyperbolic} if it is non-totally imprisoning and  the causal diamond $J(x,y)$ is compact for every $x,y\in\mms$, and
\item  \emph{$\scrK$-globally hyperbolic} if it is non-totally imprisoning and the causal diamond $J(C_0,C_1)$ is compact for all compact $C_0,C_1\subset\mms$.
\end{enumerate}
\end{definition}

\begin{remark}\label{Re:TRRE} Following \cite[Def.~3.22]{kunzinger2018}, we term the Lorentzian pre-length space $(\mms,\met,\ll,\leq,\tsep)$  \emph{Lorentzian length space} if it is locally causally closed, causally path-connected, localizable, and satisfies
\begin{align*}
\tsep(x,y) = \sup\{\Len_\tsep(\gamma) : \gamma\colon[0,1]\to\mms \textnormal{ causal with }\gamma_0 = x \textnormal{ and }\gamma_1=y\}\cup\{0\}
\end{align*}
for every $x,y\in\mms$. Here causal path-connectedness \cite[Def.~3.1]{kunzinger2018} means that every $x,y\in\mms$ with $x<y$ can be joined by a causal curve, and at least one such curve is timelike as soon as $x\ll y$.

By \autoref{Re:Conse}, a locally causally closed, regular Lorentzian geodesic space is a Lorentzian length space; indeed, any geodesic connecting two timelike related points must be timelike, which yields causal path-connectedness.  In particular, the causal ladder \cite[Thm.~3.26]{kunzinger2018} applies to the former, see below. Conversely, by the nonsmooth  Avez--Seifert  theorem \cite[Thm.~3.30]{kunzinger2018}, in the locally causally closed, globally hyperbolic case, Lorentzian length spaces are geodesic.
\end{remark}

We list some further properties of ($\scrK$-)globally hyperbolic Lorentzian pre-length spaces. If $(\mms,\met,\ll,\leq,\tsep)$ is locally causally closed, globally hyperbolic, and $I^\pm(x) \neq \emptyset$ for every $x\in \mms$ --- e.g.~when $(\mms,\met,\ll,\leq,\tsep)$ is localizable --- then $\scrK$-global hyperbolicity holds, cf.~\cite[Lem.~1.5]{cavalletti2020} and \cite[Thm.~3.7, Cor.~3.8]{min}. On the other hand, every locally causally closed, $\scrK$-globally hyperbolic Lorentzian geodesic space is in fact causally closed \cite[Lem.~1.6]{cavalletti2020}. By \cite[Def.~3.25, Thm.~3.26]{kunzinger2018}, global hyperbolicity implies the nonsmooth analogue of  strong causality  from  \autoref{Re:Conse}. By \autoref{Re:TRRE}, for locally causally closed, regular Lorentzian geodesic spaces,   $\tsep$ is finite and continuous \cite[Thm.~3.28]{kunzinger2018}. In \cite{burtscher2021}, a singular analogue of Geroch's characterization \cite{geroch1970} of global hyperbolicity via Cauchy time functions is proven.

The following is a class of examples fitting \autoref{Ass:ASS} below.

\begin{example}[Spacetimes of low regularity]\label{Ex:Low reg spt} Let $\mms$ be a smooth spacetime endowed with a Lipschitz  continuous, globally hyperbolic metric $\langle\cdot,\cdot\rangle$. Here, global hyperbolicity is a priori intended in the classical sense of \cite[p.~411]{oneill1983}, i.e.~strong causality plus compactness of causal diamonds between points. The Lipschitz condition ensures that $(\mms,\langle\cdot,\cdot\rangle)$ is causally plain \cite[Def.~1.16]{grant2012} by \cite[Cor.~1.17]{grant2012}. Hence, by \cite[Thm.~5.12]{kunzinger2018} it induces a canonical Lorentzian length space which is causally closed \cite[Prop.~3.3]{smnn}. Hence, by the already mentioned causal ladder \cite[Thm.~3.26]{kunzinger2018}, global hyperbolicity in the sense of \autoref{Def:GHYPPPPP} holds (in fact, is equivalent to its spacetime counterpart). Applying \cite[Prop.~3.4]{smnn} then  yields $\scrK$-global hyperbolicity; geodesy follows from \autoref{Re:TRRE}. Lastly, regularity follows from \cite[Thm.~1.1]{grafling}; this is another instance where $\langle\cdot,\cdot\rangle$ being $\smash{\Cont^{0,1}}$ matters.
\end{example}

\subsection{Optimal transport on Lorentzian spaces}\label{Sec:OT Lorentzian} Next, let us briefly review the   theory of optimal transport on the class of spaces introduced above \cite{cavalletti2020}. We refer to \cite{eckstein2017,kellsuhr2020, mccann2020, mondinosuhr2018, suhr2018} for prior developments in the smooth case.

\subsubsection{Basic probabilistic notation} Let $\scrP(\mms)$ denote the set of all Borel probability measures on $\mms$. Let $\scrP_\comp(\mms)$ and $\scrP^\ac(\mms,\meas)$ be its subsets consisting of all compactly supported and $\meas$-absolutely continuous elements, respectively; set $\smash{\scrP_\comp^\ac(\mms,\meas)} := \smash{\scrP_\comp(\mms)\cap\scrP^\ac(\mms,\meas)}$. Given any $\mu\in\scrP(\mms)$, by $\mu_\perp$ we mean the $\meas$-singular part in the corresponding Lebesgue decomposition of $\mu\in\scrP(\mms)$.

For a Borel map $F\colon \mms\to \mms'$ into a metric space $(\mms',\met')$, given any $\mu\in\scrP(\mms)$ the measure $F_\push\mu \in\scrP(\mms')$ designates the usual \emph{push-forward}  of $\mu$ under $F$, defined by the formula $F_\push\mu[B] := \smash{\mu\big[F^{-1}(B)\big]}$ for every Borel set $B\subset\mms'$.

Given  $\mu,\nu\in\scrP(\mms)$, let $\Pi(\mu,\nu)$ denote the set of all \emph{couplings} of $\mu$ and $\nu$, i.e.~all $\pi\in\scrP(\mms^2)$ with $\pi[\, \cdot\times\mms] = \mu$ and $\pi[\mms\times\cdot\, ] = \nu$.

With $\Cont([0,1];\mms)$ denoting the set of all curves $\gamma\colon [0,1]\to\mms$, endowed with the uniform topology, for $t\in [0,1]$ the so-called \emph{evaluation map} $\eval_t\colon \Cont([0,1];\mms) \to \mms$ is defined through $\eval_t(\gamma) := \gamma_t$.

\subsubsection{Weak convergence} For convenience, let us briefly recall basic notions and results about weak convergence of probability measures now.

A sequence $(\mu_n)_{n\in\N}$ in $\scrP(\mms)$ converges \emph{weakly} to $\mu\in\scrP(\mms)$ \cite[p.~7]{billingsley} if for every bounded continuous function $\varphi\colon \mms\to\R$, symbolically $\varphi\in \Cont_\bounded(\mms)$,
\begin{align*}
\lim_{n\to\infty}\int_\mms\varphi\d\mu_n = \int_\mms\varphi\d\mu.
\end{align*}
Recall that weak convergence is induced by a metric \cite[Rem.~5.1.1]{ambrosio2008}.

The following two classical results will be used frequently, e.g.~in the proofs of \autoref{Th:Stability TCD} and \autoref{Th:Stability TMCP}. \autoref{Th:EIN}, often called \emph{Portmanteau's theorem} \cite[Thm.~2.1]{billingsley}, relates weak convergence to ``setwise convergence'' of sequences of probability measures, whereas \autoref{Th:ZWE}, the \emph{Prokhorov theorem} \cite[Thm.~5.1, Thm.~5.2]{billingsley}, arms us with a very useful compactness criterion.

\begin{theorem}\label{Th:EIN} For every sequence $(\mu_n)_{n\in\N}$ in $\scrP(\mms)$ and every $\mu\in\scrP(\mms)$, the following are equivalent.
\begin{enumerate}[label=\textnormal{(\roman*)}]
\item \textnormal{\textbf{Weak convergence.}} The sequence $(\mu_n)_{n\in\N}$ converges weakly to $\mu$.
\item \textnormal{\textbf{Upper semicontinuity on closed sets.}} For every closed $C\subset\mms$, 
\begin{align*}
\limsup_{n\to\infty}\mu_n[C]\leq \mu[C].
\end{align*}
\item \textnormal{\textbf{Lower semicontinuity on open sets.}}For every open $U\subset\mms$, 
\begin{align*}
\mu[U]\leq\liminf_{n\to\infty}\mu_n[U].
\end{align*}
\item \textnormal{\textbf{Continuity sets.}} For every Borel set $E\subset\mms$ with $\mu[\partial E]=0$,
\begin{align*}
\lim_{n\to\infty}\mu_n[E]=\mu[E].
\end{align*}
\end{enumerate}
\end{theorem}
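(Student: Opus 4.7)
My plan is to run the standard cyclic chain of implications (i) $\Rightarrow$ (ii) $\Leftrightarrow$ (iii) $\Rightarrow$ (iv) $\Rightarrow$ (i), exploiting throughout that $(\mms,\met)$ is a metric space. The equivalence (ii) $\Leftrightarrow$ (iii) is immediate by complementation: for any closed $C\subset\mms$, the set $U := \mms\setminus C$ is open, and the identities $\mu_n[U] = 1-\mu_n[C]$ and $\mu[U] = 1-\mu[C]$ turn each inequality into the other.

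For (i) $\Rightarrow$ (ii), I would exploit the metric structure to approximate the indicator of a closed set $C\subset\mms$ from above by continuous bounded functions. Setting
\begin{align*}
\varphi_k(x) := \max\bigl\{1 - k\,\met(x,C),\,0\bigr\}\qquad\textnormal{with}\qquad\met(x,C) := \inf_{y\in C}\met(x,y),
\end{align*}
each $\varphi_k\in\Cont_\bounded(\mms)$ satisfies $\mathbbm{1}_C\leq \varphi_k\leq 1$, and since $C$ is closed, $\varphi_k\downarrow \mathbbm{1}_C$ pointwise. Weak convergence yields $\limsup_{n\to\infty}\mu_n[C]\leq \int_\mms\varphi_k\d\mu$ for every $k$, and dominated convergence lets me send $k\to\infty$. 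For (ii) combined with (iii) $\Rightarrow$ (iv), given a Borel $E$ with $\mu[\partial E]=0$ I have $\mu[\bar E] = \mu[E] = \mu[E^\circ]$, so chaining (ii) applied to the closed set $\bar E$ and (iii) applied to the open set $E^\circ$ squeezes both $\liminf_{n\to\infty}\mu_n[E]$ and $\limsup_{n\to\infty}\mu_n[E]$ to the common value $\mu[E]$.

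The delicate step is (iv) $\Rightarrow$ (i). Given $\varphi\in\Cont_\bounded(\mms)$, by splitting into positive and negative parts and rescaling I may assume $0\leq \varphi\leq M$, and apply the layer-cake formula
\begin{align*}
\int_\mms\varphi\d\mu_n = \int_0^M \mu_n\bigl[\{\varphi > t\}\bigr]\d t,
\end{align*}
and likewise for $\mu$. Continuity of $\varphi$ forces $\partial\{\varphi > t\}\subset\{\varphi = t\}$; since the level sets $\{\varphi = t\}$ for distinct $t\in(0,M)$ are pairwise disjoint, at most countably many carry positive $\mu$-mass. Hence for Lebesgue-a.e.~$t\in(0,M)$, the superlevel set $\{\varphi > t\}$ is a $\mu$-continuity set, so (iv) gives pointwise convergence $\mu_n[\{\varphi > t\}]\to\mu[\{\varphi > t\}]$ a.e.~in $t$. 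A final application of dominated convergence (with envelope $1$) on $[0,M]$ yields $\int_\mms\varphi\d\mu_n\to \int_\mms\varphi\d\mu$, which by definition is weak convergence.

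The only step requiring genuine care is this last passage through the layer-cake formula: the nontrivial observation is that uncountably many level sets of a bounded function cannot all carry positive $\mu$-mass, so the hypothesis of (iv) is available for almost every threshold $t$. The remaining implications are essentially formal manipulations of closures, interiors, and complements.
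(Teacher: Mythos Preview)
Your proof is correct and is the standard argument for Portmanteau's theorem. Note, however, that the paper does not supply its own proof of this statement: it is quoted as a classical result with a reference to \cite[Thm.~2.1]{billingsley}, so there is no in-paper argument to compare against. Your cyclic chain (i) $\Rightarrow$ (ii) $\Leftrightarrow$ (iii) $\Rightarrow$ (iv) $\Rightarrow$ (i) via metric approximation of indicators and the layer-cake representation is precisely the textbook approach.
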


\begin{theorem}\label{Th:ZWE} Assume $\met$ to be complete and separable. Then a set $\scrC\subset\scrP(\mms)$ is relatively compact with respect to the weak topology if and only if it is \emph{tight}, i.e.~for every $\varepsilon > 0$ there exists a compact set $C\subset\mms$ with
\begin{align*}
\sup\{\mu[C^\sfc] : \mu\in \scrC\}\leq \varepsilon.
\end{align*}
\end{theorem}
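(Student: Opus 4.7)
The plan is to prove the two implications of this classical Prokhorov-type characterization separately, with completeness and separability of $\met$ entering at distinct points.

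For the sufficient direction (tightness $\Rightarrow$ relative compactness), I would embed $\mms$ homeomorphically into a compact metric space $\hat\mms$ via the Urysohn metrization theorem, for instance $\hat\mms := [0,1]^{\N}$; call the embedding $\iota$. Given any sequence $(\mu_n)_{n\in\N}$ in $\scrC$, each push-forward $\iota_\push\mu_n$ lies in $\scrP(\hat\mms)$, which is a weak-$*$ closed convex subset of the unit ball of $\Cont(\hat\mms)^*$. By Banach--Alaoglu and the separability of $\Cont(\hat\mms)$ (which holds because $\hat\mms$ is a compact metric space), one extracts a subsequential weak-$*$ limit $\hat\mu\in\scrP(\hat\mms)$. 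For every tightness compact $K_\varepsilon\subset\mms$, its image $\iota(K_\varepsilon)$ is closed in $\hat\mms$; applying \autoref{Th:EIN} on $\hat\mms$ to the open set $\iota(K_\varepsilon)^\sfc$ yields $\hat\mu[\iota(K_\varepsilon)^\sfc]\leq \liminf_n \hat\mu_n[\iota(K_\varepsilon)^\sfc] \leq \varepsilon$, so $\hat\mu[\iota(\mms)] = 1$, and $\hat\mu$ descends to a probability measure $\mu\in\scrP(\mms)$ to which the subsequence converges weakly (continuous bounded test functions on $\mms$ are pulled back from continuous functions on $\hat\mms$ via $\iota$).

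For the necessary direction (relative compactness $\Rightarrow$ tightness), I would first invoke Ulam's lemma: every individual $\mu\in\scrP(\mms)$ is tight. Indeed, given $\varepsilon>0$ and a countable dense set $\{x_i\}_{i\in\N}\subset\mms$, pick $n_k\in\N$ with $\mu\big[\bigcup_{i\leq n_k}\bar{B}_{1/k}(x_i)\big]\geq 1-\varepsilon\,2^{-k}$, and take the intersection over $k\in\N$, which is closed and totally bounded, hence compact by completeness. To upgrade to uniformity across $\scrC$, fix $\varepsilon>0$ and $k\in\N$: by relative compactness of $\scrC$, finitely many weak-neighborhoods of measures $\mu_{j,k}\in\scrC$ cover $\bar\scrC$, and one shrinks each so that every $\nu$ inside charges less than $\varepsilon\,2^{-k-1}$ outside a slight enlargement $F_{j,k}$ of a tight compact for $\mu_{j,k}$ (achievable via \autoref{Th:EIN} and the individual tightness from Ulam). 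Then $F_k:=\bigcup_j F_{j,k}$ satisfies $\mu[F_k^\sfc]<\varepsilon\,2^{-k}$ uniformly in $\mu\in\scrC$, and $K:=\bigcap_k F_k$ is closed and totally bounded, hence compact, with $\mu[K^\sfc]<\varepsilon$ for every $\mu\in\scrC$.

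The main obstacle lies in the sufficient direction: the weak-$*$ limit $\hat\mu$ on the compactification $\hat\mms$ could a priori charge $\hat\mms\setminus\iota(\mms)$, corresponding to \emph{escape of mass to infinity} under the embedding, and tightness is precisely the device which rules this out. Completeness of $\met$ is pivotal in the converse direction, where it transforms closed totally bounded sets into genuinely compact sets; separability provides both the countable dense set underlying Ulam's argument and the compact metrizable target for the embedding $\iota$.
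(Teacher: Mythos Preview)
The paper does not prove this statement at all: it is stated as the classical Prokhorov theorem and simply cited from Billingsley \cite[Thm.~5.1, Thm.~5.2]{billingsley}. Your sketch is the standard textbook argument and is essentially correct. One small imprecision: in the sufficient direction you assert that bounded continuous functions on $\mms$ are ``pulled back from continuous functions on $\hat\mms$ via $\iota$'', but $\iota(\mms)$ need not be closed in $\hat\mms$, so Tietze does not apply directly; the usual fix is either to extend $f$ from the closed set $\iota(K_\varepsilon)$ and control the remainder by tightness, or to verify weak convergence on $\mms$ via the closed-set clause of \autoref{Th:EIN} together with the already established concentration $\hat\mu[\iota(\mms)]=1$.
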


\subsubsection{Chronological and causal couplings} Let $(\mms,\met,\ll,\leq,\tsep)$ be a Lorentzian pre-length space. We define the (possibly empty) set $\Pi_\ll(\mu,\nu)$ of all \emph{chronological couplings} of two probability measures $\mu,\nu\in\scrP(\mms)$ to consist of all $\pi\in\Pi(\mu,\nu)$ such that $\smash{\pi[\mms_\ll^2]=1}$. The set $\Pi_\leq(\mu,\nu)$ of all \emph{causal couplings} of $\mu$ and $\nu$ is defined analogously. Under causal closedness, clearly $\pi\in\Pi_\leq(\mu,\nu)$ if and only if $\pi\in\Pi(\mu,\nu)$ as well as $\smash{\supp\pi\subset\mms_\leq^2}$; an analogous statement  holds for the locally causally closed case if $\mu,\nu\in\scrP_\comp(\mms)$.

A chronological or causal coupling of $\mu,\nu\in\scrP(\mms)$  intuitively describes a way of transporting an infinitesimal mass portion $\rmd\mu(x)$ to an infinitesimal mass portion $\rmd\nu(y)$ as to guarantee $x\ll y$ or $x\leq y$, respectively.

We call $\mu,\nu\in\scrP(\mms)$ \emph{chronologically related} if $\Pi_\ll(\mu,\nu)\neq\emptyset$.

\subsubsection{The $\smash{\ell_p}$-optimal transport problem} Given an exponent $p\in (0,1]$ and following the expositions in  \cite{cavalletti2020,mccann2020} we will adopt the conventions
\begin{align*}
\sup\emptyset &:= -\infty,\\
(-\infty)^p &:= (-\infty)^{1/p} := -\infty,\\
\infty -\infty &:=-\infty. 
\end{align*}

Let the total transport cost function $\ell_p\colon\scrP(\mms)^2\to [0,\infty]\cup\{-\infty\}$ be given by
\begin{align}\label{Eq:Totalcost}
\begin{split}
\ell_p(\mu,\nu) &:=  \sup\{\Vert \tsep\Vert_{\Ell^p(\mms^2,\pi)} : \pi \in \Pi_\leq(\mu,\nu)\}\\
&\phantom{:}= \sup\{\Vert l\Vert_{\Ell^p(\mms^2,\pi)} : \pi\in\Pi(\mu,\nu)\},
\end{split}
\end{align}
where $l\colon \mms^2 \to [0,\infty]\cup\{-\infty\}$ is defined through
\begin{align*}
l^p(x,y)  := \begin{cases} \tsep^p(x,y) & \textnormal{if }x\leq y,\\
-\infty & \textnormal{otherwise}.
\end{cases}
\end{align*}

\begin{remark}
The sets of maximizers for both suprema defining $\smash{\ell_p(\mu,\nu)}$ coincide, including the case $\smash{\Pi_\leq(\mu,\nu)=\emptyset}$. One advantage of the  formulation via $l$ is that under (local) causal closedness and global hyperbolicity, $l^p$ is upper semicontinuous. In this case, customary optimal transport techniques \cite{ambrosiogigli2013, villani2009} are applicable to study the second problem, which in turn yields corresponding results for the first  \cite[Rem.~2.2]{cavalletti2020}. Note that the preimages $l^{-1}([0,\infty))$ and $l^{-1}((0,\infty))$ encode causality and chronology of points in $\mms^2$, respectively.
\end{remark}

A coupling $\pi\in\Pi(\mu,\nu)$ of $\mu,\nu\in\scrP(\mms)$ is called \emph{$\ell_p$-optimal} if $\pi\in\Pi_\leq(\mu,\nu)$ and 
\begin{align*}
\ell_p(\mu,\nu) = \Vert\tsep\Vert_{\Ell^p(\mms^2,\pi)} = \Vert l \Vert_{\Ell^p(\mms^2,\pi)}.
\end{align*}
For existence of such couplings, for our work it will suffice to keep the following in mind. If $(\mms,\met,\ll,\leq,\tsep)$ is locally causally closed and globally hyperbolic, and if $\mu,\nu\in\scrP_\comp(\mms)$ with $\Pi_\leq(\mu,\nu)\neq\emptyset$, existence of an $\smash{\ell_p}$-optimal coupling $\pi$ of $\mu$ and $\nu$ holds, and $\ell_p(\mu,\nu) <\infty$ \cite[Prop.~2.3]{cavalletti2020}.

A key property of $\smash{\ell_p}$ is the \emph{reverse triangle inequality} \cite[Prop. 2.5]{cavalletti2020} somewhat  reminiscent of \eqref{Eq:Reverse tau} and of $l$ satisfying the reverse triangle inequality for \emph{every}, i.e.~not necessarily causally related, $x,y,z\in\mms$: for \emph{every} $\mu,\nu,\sigma\in\scrP(\mms)$,
\begin{align}\label{Eq:Reverse triangle lp}
\ell_p(\mu,\sigma) \geq \ell_p(\mu,\nu) + \ell_p(\nu,\sigma).
\end{align}

\subsubsection{Timelike $p$-dualizability}\label{Sub:Timelike dual} The concept of so-called \textit{\textnormal{(}strong\textnormal{)} timelike $p$-dualiza\-bi\-lity}, where $p\in(0,1]$, of pairs $(\mu,\nu)\in\scrP(\mms)^2$ originates in \cite{cavalletti2020} and generalizes the notion of \emph{$p$-separation} introduced in \cite[Def.~4.1]{mccann2020}. Pairs satisfying this condition admit good duality properties \cite[Prop.~2.19, Prop.~2.21, Thm.~2.26]{cavalletti2020}, which leads to a characterization of $\smash{\ell_p}$-geodesics (see \autoref{Sub:Geodesics} below) in the smooth case \cite[Thm.~4.3, Thm.~5.8]{mccann2020}, cf.~\autoref{Th:Equiv}.

In view of the following \autoref{Def:TL DUAL}, cf.~\cite[Def.~2.18, Def.~2.27]{cavalletti2020}, we refer to \cite[Def.~2.6]{cavalletti2020} for the definition of \emph{cyclical monotonicity} of a set in $\smash{\mms_\leq^2}$ with respect to $l^p$, see also \cite[Def.~5.1]{villani2009}. It will not be relevant in our work.

Given $a,b\colon\mms\to\R$ we define  $a\oplus b\colon\mms^2\to\R$ by
\begin{align*}
(a\oplus b)(x,y) := a(x) + b(y).
\end{align*}

\begin{definition}\label{Def:TL DUAL} Given any $p\in (0,1]$, a pair $(\mu,\nu)\in\scrP(\mms)$ is called 
\begin{enumerate}[label=\textnormal{\alph*.}]
\item  \emph{timelike $p$-dua\-lizable} by $\pi\in \Pi_\ll(\mu,\nu)$ if  $\pi$ is $\smash{\ell_p}$-optimal, and there exist Borel functions $a,b\colon\mms\to\R$ such that $a\oplus b\in\Ell^1(\mms^2,\mu\otimes\nu)$ as well as $l^p\leq a\oplus b$ on $\supp\mu\times\supp\nu$,
\item  \emph{strongly timelike $p$-dualizable} by $\pi\in\Pi_\ll(\mu,\nu)$ if the pair $(\mu,\nu)$ is timelike $p$-dualizable by $\pi$, and there is an $l^p$-cyclically monotone Borel set $\Gamma\subset \smash{\mms_\ll^2\cap(\supp\mu_0\times\supp\mu_1)}$ such that every given  $\sigma\in \Pi_\leq(\mu_0,\mu_1)$ is $\ell_p$-optimal if and only if $\sigma[\Gamma]=1$ \textnormal{(}in particular, $\sigma$ is chronological\textnormal{)}, and
\item \emph{timelike $p$-dualizable} if $(\mu,\nu)$ is timelike $p$-dualizable by some coupling $\pi\in\Pi_\ll(\mu,\nu)$; analogously for strong timelike $p$-dualizability.
\end{enumerate}

Moreover, any $\pi$ as in the above items is called \emph{timelike $p$-dualizing}. 
\end{definition}

\begin{remark}\label{Re:Strong timelike} If $\mu,\nu\in\scrP_\comp(\mms)$ then the pair $(\mu,\nu)$ is timelike $p$-dualizable if and only if there is an $\ell_p$-optimal coupling $\smash{\pi\in\Pi_\leq(\mu,\nu)}$ which is concentrated on $\smash{\mms_\ll^2}$.

If $\mu,\nu\in\scrP_\comp(\mms)$ on a causally closed, globally hyperbolic Lorentzian geodesic space $(\mms,\met,\ll,\leq,\tsep)$ with $\smash{\supp\mu\times\supp\nu\subset\mms_\ll^2}$, then  $(\mu,\nu)$ is automatically strongly timelike $p$-dualizable, $p\in (0,1]$ \cite[Cor.~2.29]{cavalletti2020}, see also \cite[Lem.~4.4, Thm.~7.1]{mccann2020}. 
\end{remark}

\subsubsection{Geodesics revisited}\label{Sub:Geodesics} The definitions we consider need the notion of an \emph{$\smash{\ell_p}$-geodesic}, $p\in (0,1]$, to be made precise. We propose \autoref{Def:Lorentzian geodesic} as Lorentzian version of Wasserstein geodesics in metric spaces. Precise technical results will be deferred to \autoref{App:B}.

In this subsection, we assume $(\mms,\met,\ll,\leq,\tsep)$ to  be causally closed, $\scrK$-globally hyperbolic, regular, and geodesic, i.e.~\autoref{Ass:ASS} below holds. 

Recall the continuous reparametrization map $\smash{\sfr\colon \TGeo(\mms) \to \TGeo^\tsep(\mms)}$ and the   definition $\TGeo^\tsep(\mms) := \sfr(\TGeo(\mms))$ from \autoref{Sub:GEO}. Under the above hypotheses, $\Geo(\mms)$ is closed, and $\TGeo(\mms)$ is a Borel subset of $\Cont([0,1];\mms)$, see  \autoref{Le:Borel}. Hence, $\TGeo^\tsep(\mms)$ is a Suslin set and therefore universally measurable. 

Given any $\mu_0,\mu_1\in\scrP(\mms)$, we define
\begin{align*}
\OptGeo_{\ell_p}(\mu_0,\mu_1) &:= \{\bdpi\in\scrP(\Geo(\mms)) : (\eval_0,\eval_1)_\push\bdpi\in\Pi_\leq(\mu_0,\mu_1)\\
&\qquad\qquad \textnormal{ is }\ell_p\textnormal{-optimal}\},\\
\OptTGeo_{\ell_p}(\mu_0,\mu_1) &:= \{\bdpi\in\OptGeo_{\ell_p}(\mu_0,\mu_1) : \bdpi[\TGeo(\mms)]=1\}\\ 
&\phantom{:}=\{\bdpi\in\OptGeo_{\ell_p}(\mu_0,\mu_1) : (\eval_0,\eval_1)_\push\bdpi[\mms_\ll^2]=1\},\\
\OptTGeo_{\ell_p}^\tsep(\mu_0,\mu_1) &:= \sfr_\push\OptTGeo_{\ell_p}(\mu_0,\mu_1).
\end{align*}
Note that the second last identity precisely follows from regularity. Lastly, we say that a measure $\bdpi\in\scrP(\Cont([0,1];\mms))$ \emph{represents} a curve $(\mu_t)_{t\in[0,1]}$ in $\scrP(\mms)$ provided $\mu_t=(\eval_t)_\push\bdpi$ for every $t\in[0,1]$.

\begin{definition}\label{Def:Lorentzian geodesic} We term a weakly continuous path $(\mu_t)_{t\in[0,1]}$ in $\scrP(\mms)$
\begin{enumerate}[label=\textnormal{\alph*.}]
\item \emph{causal $\ell_p$-geodesic} if it is represented by some $\smash{\bdpi\in\OptGeo_{\ell_p}(\mu_0,\mu_1)}$,
\item \emph{timelike $\ell_p$-geodesic} if it is represented by some $\smash{\bdpi\in\OptTGeo_{\ell_p}(\mu_0,\mu_1)}$, and
\item\label{Spans} \emph{timelike proper-time parametrized $\ell_p$-geodesic} if it is represented by some $\smash{\bdpi\in\OptTGeo_{\ell_p}^\tsep(\mu_0,\mu_1)}$.
\end{enumerate}

A measure $\bdpi$ as in the last item will be called \emph{timelike $\smash{\ell_p}$-optimal geodesic plan}.
\end{definition}

\begin{remark}\label{Re:Lip} Every causal or timelike $\smash{\ell_p}$-geodesic connecting compactly supported $\mu_0$ and $\mu_1$ is Lipschitz continuous with respect to the $q$-Wasserstein metric $W_q$ induced by $(\mms,\met)$ for every $q\in[1,\infty]$.
\end{remark}

\begin{remark} The above hypotheses, in particular regularity,  will imply the actual existence of (timelike) $\smash{\ell_p}$-geodesics  and their proper-time parametrized counterparts between timelike $p$-dualizable pairs of measures, cf.~\autoref{Le:Villani lemma for geodesic}. 
\end{remark}

Note that $\bdpi[\TGeo^\tsep(\mms)]=1$ for every $\smash{\bdpi\in\OptTGeo_{\ell_p}^\tsep(\mu_0,\mu_1)}$, and therefore $\bdpi$-a.e.~$\gamma\in\Cont([0,1];\mms)$ obeys $\tsep(\gamma_s,\gamma_t) = (t-s)\,\tsep(\gamma_0,\gamma_1)>0$ for every $s,t\in [0,1]$ with $s<t$ by \eqref{Eq:Proper time par}. Since $\smash{(\sfr\circ\gamma)_r = \gamma_r}$ for every $\gamma\in\TGeo(\mms)$ and every $r\in\{0,1\}$ by definition of $\sfr$,  $(\eval_0,\eval_1)_\push\bdpi\in\Pi_\ll(\mu_0,\mu_1)$ is $\smash{\ell_p}$-optimal for $\smash{\bdpi\in \OptTGeo_{\ell_p}^\tsep(\mu_0,\mu_1)}$. A standard argument using the reverse triangle inequality \eqref{Eq:Reverse tau} thus yields
\begin{align*}
\ell_p(\mu_s,\mu_t) = (t-s)\,\ell_p(\mu_0,\mu_1) > 0
\end{align*}
for every timelike proper-time parametrized $\smash{\ell_p}$-geodesic $(\mu_t)_{t\in[0,1]}$ and every $s,t\in[0,1]$ with $s<t$. Hence every such curve $(\mu_t)_{t\in[0,1]}$ is an $\smash{\ell_p}$-geodesic in the sense of \cite[Def.~2.13]{cavalletti2020} and \cite[Def.~1.1]{mccann2020}, but the converse is unclear to us. 


\subsubsection{Rényi entropy}\label{Subsub:Renyi} From now on, we draw our attention to \emph{measured} Lorentzian pre-length spaces, defined as follows.

\begin{definition} We call a sextuple $(\mms,\met,\meas,\ll,\leq,\tsep)$ a \emph{measured Lorentzian pre-length space} if $(\mms,\met,\ll,\leq,\tsep)$ is a Lorentzian pre-length space according to \autoref{Def:LLSSP}, and $\meas$ is a fully supported Radon measure on $\mms$ as in \autoref{Ch:Intro}. Accordingly, \emph{measured Lorentzian length spaces} and \emph{measured Lorentzian geodesic spaces} are defined.
\end{definition}

All over the sequel, we use the convenient abbreviation
\begin{align}\label{Eq:X}
\scrX := (\mms,\met,\meas,\ll,\leq,\tsep).
\end{align}
All Lorentzian notions from \autoref{Sub:Lorentzian nonsmooth} phrased for $\scrX$ will be evidently understood with respect to the Lorentzian structure $(\mms,\met,\ll,\leq,\tsep)$.

Given any $N\in [1,\infty)$,  define the \emph{$N$-Rényi entropy} $\scrS_N\colon\scrP(\mms)\to [-\infty,0]$ with respect to $\meas$ through
\begin{align*}
\scrS_N(\mu) := -\int_\mms \rho^{-1/N}\d\mu = - \int_\mms \rho^{1-1/N}\d\meas
\end{align*}
subject to the decomposition $\mu = \rho\,\meas + \mu_\perp$. This functional will play a central role in our work. Observe  that $\scrS_1(\mu) = -\meas[\supp(\rho\,\meas)]$. More generally, for every $N\in[1,\infty)$, by Jensen's inequality we have $\smash{\scrS_N(\mu) \geq -\meas[\supp\mu]^{1/N}}$ for every $\mu\in\scrP_\comp(\mms)$. Moreover, $\scrS_N$ is jointly weakly lower semicontinuous in the following sense \cite[Thm.~B.33]{lott2009} provided (for notational convenience, cf.~\autoref{Sub:Geodesics}) $\mms$ is compact. Assume two sequences $(\mu_n)_{n\in\N}$ and $(\meas_n)_{n\in\N}$ in $\scrP(\mms)$ to converge weakly to $\mu_\infty\in\scrP(\mms)$ and $\meas_\infty\in\scrP(\mms)$, respectively. Then for the Rényi entropies $\smash{\scrS_N^n}$ with respect to $\meas_n$, $n\in\N_\infty$, we have
\begin{align*}
\scrS_N^\infty(\mu_\infty) \leq \liminf_{n\to\infty}\scrS_N^n(\mu_n).
\end{align*}

\subsubsection{Boltzmann entropy} Another functional on $\scrP(\mms)$ we occasionally consider, cf.~\autoref{Le:Stu}, \autoref{Sub:Equiv TCDs}, and \autoref{Sec:Equiv TMCP's}, and whose relativistic consequences have been studied deeply in \cite{cavalletti2020, mccann2020, mondinosuhr2018}, is the \emph{Boltzmann entropy} $\Ent_\meas\colon \scrP(\mms)\to [-\infty,\infty]$ with respect to $\meas$. It is defined by
\begin{align*}
\Ent_\meas(\mu) := \begin{cases}\displaystyle\int_\mms \rho\log\rho\d\meas & \textnormal{if }\mu = \rho\,\meas\in\scrP^\ac(\mms,\meas),\, (\rho\log\rho)^+\in\Ell^1(\mms,\meas),\\
\infty & \textnormal{otherwise}.
\end{cases}
\end{align*}
Let $\Dom(\Ent_\meas)$ denote the usual finiteness domain of $\Ent_\meas$. 

For every $\mu\in\scrP_\comp(\mms)$, we have $\Ent_\meas(\mu)\geq -\log\meas[\supp\mu]$. Moreover, as for the Rényi entropy $\Ent_\meas$ is weakly lower semicontinuous on $\scrP(C)$ for every closed $C\subset\mms$ with $\meas[C]<\infty$ \cite[Lem.~4.1]{sturm2006a}; in fact, for every $\mu\in\scrP_\comp(\mms)$,
\begin{align}\label{Eq:Ent SN limit}
\Ent_\meas(\mu) = \lim_{N\to\infty}\big[N - N\,\scrS_N(\mu)\big] = \sup\{N-N\,\scrS_N(\mu) : N\in [1,\infty)\}.
\end{align}

\subsection{Timelike essential nonbranching} Finally, we introduce the following Lo\-rentz\-ian version of essential  nonbranching for metric measure spaces \cite{rajala2014}, termed \emph{timelike essential nonbranching}. The ``curvewise nonbranching'' part is adapted from \cite[Def.~1.10]{cavalletti2020}.

Given $s,t\in [0,1]$ with $s<t$, let $\smash{\Restr_s^t\colon \Cont([0,1];\mms)\to \Cont([0,1];\mms)}$ be the map restricting a curve $\gamma\colon[0,1]\to\mms$ to $[s,t]$ and then stretching it to $[0,1]$, i.e.
\begin{align}\label{Eq:Restr def}
\Restr_s^t(\gamma)_r := \gamma_{(1-r)s + rt}.
\end{align}

\begin{definition}\label{Def:Essentially nonbranching} A subset $G\subset\TGeo^\tsep(\mms)$ is \emph{timelike nonbranching} if for every $r\in (0,1)$, $\smash{\Restr_s^t\big\vert_G}$ is injective for every $(s,t)\in \{(0,r),(r,1)\}$.

We call $\scrX$ \emph{timelike $p$-essentially  nonbranching}, $p\in (0,1]$, if every plan $\bdpi$ representing some timelike proper-time parametrized $\smash{\ell_p}$-geodesic between measures in $\smash{\scrP_\comp^\ac(\mms,\meas)}$ is concentrated on a time\-like nonbranching subset of $\TGeo^\tsep(\mms)$.
\end{definition}

\begin{remark}\label{Re:reversed} The set $G:= \TGeo^\tsep(\mms)$ is timelike nonbranching if the Lorentzian space $(\mms,\met,\ll,\leq,\tsep)$ is timelike nonbranching according to \cite[Def.~1.10]{cavalletti2020}. In particular, timelike nonbranching Lorentzian pre-length spaces are timelike $p$-essentially nonbranching for every $p\in (0,1]$. 
\end{remark}

Important results derived later which assume $p$-timelike essential nonbranching are the local-to-global property, \autoref{Th:Local to global}, uniqueness of $\smash{\ell_p}$-optimal couplings and timelike proper-time parametrized $\smash{\ell_p}$-geodesics, \autoref{Th:Uniqueness couplings} and \autoref{Th:Uniqueness geodesics}, and the equivalence of some of our curvature-dimension conditions and measure-contraction properties to their entropic versions set up by Cavalletti and Mondino, \autoref{Th:Equivalence TCD* and TCDe} and \autoref{Th:Equivalence TMCP* and TMCPe}.

The following result will be useful at many occasions. See \cite[Lem.~2.6]{bacher2010}, \cite[Lem.~3.11]{erbar2015}, and \cite[Lem.~2.11]{sturm2006a} for related results for metric measure spaces. It follows from a Lorentzian adaptation of the mixing procedure in \cite[Ch.~4]{rajala2014}. The proof, which we carry out for completeness, is postponed to \autoref{Postponed}.  

\begin{lemma}\label{Le:Mutually singular} Assume timelike $p$-essential nonbranching for some $p\in (0,1)$. Let $\bdpi$ be a timelike $\smash{\ell_p}$-optimal geodesic plan between compactly supported measures which can be decomposed as $\smash{\bdpi = \lambda_1\,\bdpi^1 + \dots +\lambda_n\,\bdpi^n}$ for some $n\in\N$, $\lambda_1,\dots,\lambda_n \in (0,1)$, and timelike $\smash{\ell_p}$-optimal geodesic plans $\smash{\bdpi^1,\dots,\bdpi^n}$. Furthermore, let $t\in (0,1)$ and assume that $\smash{(\eval_r)_\push\bdpi^1,\dots,(\eval_r)_\push\bdpi^n\in \scrP^\ac(\mms,\meas)}$ for every $r\in\{0,t,1\}$. Lastly, assume $\smash{\bdpi^1,\dots,\bdpi^n}$ to be mutually singular. Then $\smash{(\eval_t)_\push\bdpi^1,\dots,(\eval_t)_\push\bdpi^n}$ are mutually singular.
\end{lemma}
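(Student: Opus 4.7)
Since mutual singularity of a finite family is equivalent to pairwise singularity, it suffices to treat $n=2$. The plan is to argue by contradiction: assume $\mu_t^1 := (\eval_t)_\push\bdpi^1$ and $\mu_t^2 := (\eval_t)_\push\bdpi^2$ are not mutually singular, set $\sigma := \mu_t^1 \wedge \mu_t^2 \neq 0$, write $\sigma = f_k\,\mu_t^k$ with Borel $f_k \in [0,1]$, fix $\epsilon \in (0,\min\{\lambda_1,\lambda_2\})$, put $\sigma'' := \epsilon\sigma$, and introduce the sub-measures
\begin{align*}
\bdpi^{k,\sigma''}(\mathrm{d}\gamma) := \epsilon\,f_k(\gamma_t)\,\bdpi^k(\mathrm{d}\gamma),\qquad k \in \{1,2\}.
\end{align*}
Each $\bdpi^{k,\sigma''}$ has total mass $|\sigma''|$ and time-$t$ marginal $\sigma''$; the mutual singularity $\bdpi^1 \perp \bdpi^2$ passes to $\bdpi^{1,\sigma''} \perp \bdpi^{2,\sigma''}$. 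A straightforward computation using disintegrations $\bdpi^k = \int \bdpi^k_x\,\mathrm{d}\mu_t^k(x)$ over $\eval_t$ yields $\bdpi^{k,\sigma''}_x = \bdpi^k_x$ for $\sigma''$-a.e.~$x$.

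The geometric engine will be $\ell_p$-cyclical monotonicity of the optimal coupling $(\eval_0,\eval_1)_\push\bdpi$, applied to endpoint pairs arising from curves $\gamma^1,\gamma^2 \in \supp\bdpi$ with a common time-$t$ value $\gamma^1_t = \gamma^2_t = x$. Setting $L_k := \tsep(\gamma^k_0,\gamma^k_1)$ and using $\tsep(\gamma^k_0,x) = tL_k$, $\tsep(x,\gamma^k_1) = (1-t)L_k$ from $\gamma^k \in \TGeo^\tsep(\mms)$, together with the reverse triangle inequality \eqref{Eq:Reverse tau} and the strict concavity of $s \mapsto s^p$ on $(0,\infty)$, I expect to force $L_1 = L_2 =: L(x)$ and simultaneous equality in the reverse triangle, so that $\tsep(\gamma^1_0,\gamma^2_1) = L(x) = \tsep(\gamma^2_0,\gamma^1_1)$. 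A direct verification based on the reverse triangle will then show that the concatenation
\begin{align*}
\Phi(\gamma^1,\gamma^2)_s := \begin{cases}\gamma^1_s & s \in [0,t],\\ \gamma^2_s & s \in [t,1],\end{cases}
\end{align*}
is automatically proper-time parametrized, hence belongs to $\TGeo^\tsep(\mms)$; let $\Phi'$ denote the analogous gluing with the roles of the two arguments swapped. I then define the pairing $\bdalpha := \int \bdpi^1_x \otimes \bdpi^2_x\,\mathrm{d}\sigma''(x)$ on $\TGeo^\tsep(\mms)^2$ and the ``swap'' plans $\bdpi^{\mathrm{sw},k} := \Phi_\push\bdalpha,\,\Phi'_\push\bdalpha$ for $k=1,2$.

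The next stage is to consider the modified plan
\begin{align*}
\bdpi^{**} := \bdpi - \bdpi^{1,\sigma''} - \bdpi^{2,\sigma''} + \bdpi^{\mathrm{sw},1} + \bdpi^{\mathrm{sw},2}.
\end{align*}
Non-negativity is enforced by $\epsilon f_k \leq \lambda_k$, so that $(\lambda_k - \epsilon f_k(\gamma_t))\bdpi^k(\mathrm{d}\gamma) \geq 0$. Termwise computation of the time-$0$, $t$, and $1$ marginals (using $(\eval_r)_\push\bdpi^{\mathrm{sw},1} = (\eval_r)_\push\bdpi^{1,\sigma''}$ for $r=0$ and the analogous identity at $r=1$ with indices swapped) will show they coincide with those of $\bdpi$. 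Because every curve occurring in each of the four signed terms has $\tsep$-length $L(x)$ with $x$ its time-$t$ value, the $\ell_p$-cost of $\bdpi^{**}$ equals that of $\bdpi$; in particular $\bdpi^{**} \in \OptTGeo^\tsep_{\ell_p}(\mu_0,\mu_1)$ and represents a timelike proper-time parametrized $\ell_p$-geodesic between the absolutely continuous compactly supported measures $\mu_0, \mu_1$.

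Timelike $p$-essential nonbranching then produces a timelike nonbranching $G' \subset \TGeo^\tsep(\mms)$ carrying $\bdpi^{**}$. Since $\lambda_k - \epsilon f_k \geq \lambda_k - \epsilon > 0$, the estimate $(\lambda_k - \epsilon f_k)\bdpi^k \leq \bdpi^{**}$ forces $\bdpi^k$ itself, and hence $\bdpi^{k,\sigma''}$ and $\bdpi^{\mathrm{sw},k}$, to be concentrated on $G'$. For $\bdalpha$-a.e.~$(\gamma^1,\gamma^2)$ all three of $\gamma^1,\gamma^2,\Phi(\gamma^1,\gamma^2)$ therefore lie in $G'$; the pointwise identities $\Phi(\gamma^1,\gamma^2)|_{[0,t]} = \gamma^1|_{[0,t]}$ and $\Phi(\gamma^1,\gamma^2)|_{[t,1]} = \gamma^2|_{[t,1]}$, combined with the injectivity of $\Restr_0^t|_{G'}$ and $\Restr_t^1|_{G'}$, yield $\Phi(\gamma^1,\gamma^2) = \gamma^1 = \gamma^2$. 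As $|\sigma''|>0$ makes $\bdalpha$ nontrivial, this contradicts the disjointness of the supports of $\bdpi^{1,\sigma''}$ and $\bdpi^{2,\sigma''}$ inherited from $\bdpi^1 \perp \bdpi^2$. I expect the main obstacle to be the cyclical-monotonicity argument pinning down $L(x)$ and the proper-time parametrization of $\Phi(\gamma^1,\gamma^2)$; the remaining steps (non-negativity, marginal and cost preservation of $\bdpi^{**}$, and locating $\bdpi^k$ inside $G'$) are essentially book-keeping.
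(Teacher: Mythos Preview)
Your proposal is correct and follows essentially the same route as the paper: argue by contradiction, use $l^p$-cyclical monotonicity of $(\eval_0,\eval_1)_\push\bdpi$ together with strict concavity of $r\mapsto r^p$ to show that any two curves in $\supp\bdpi$ meeting at time $t$ have the same $\tsep$-length, then verify that concatenations at time $t$ again lie in $\TGeo^\tsep(\mms)$ (this is precisely \autoref{Le:Concat} in the paper), and finally contradict timelike $p$-essential nonbranching via a mixed plan.

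The only genuine difference is in the mixing construction. The paper forms a single plan $\bdpi^{\mathrm{mix}}$ by gluing \emph{all} left pieces of $(\bdpi^1+\bdpi^2)/2$ with \emph{all} right pieces through the common time-$t$ marginal, and then argues that on the overlap set the fibres $\bdpi_x^{\mathrm{left}}$, $\bdpi_x^{\mathrm{right}}$ are non-Dirac, so $\bdpi^{\mathrm{mix}}$ (or its time-reversal) visibly branches. You instead keep $\bdpi$ and perturb it to $\bdpi^{**}$ by subtracting the sub-plans $\bdpi^{k,\sigma''}$ and adding the two swaps; this has the pleasant feature that the endpoint marginals are unchanged, at the cost of the extra bookkeeping with the $\epsilon$ and the coefficients $\lambda_k-\epsilon f_k$. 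One small point to tighten: the claim that $\bdpi^{\mathrm{sw},k}$ is concentrated on $G'$ does not follow ``hence'' from $\bdpi^k$ being concentrated on $G'$ (the swapped curves are new); rather it follows directly because $\bdpi^{\mathrm{sw},1}+\bdpi^{\mathrm{sw},2}\le\bdpi^{**}$ as the remaining summand $\bdpi-\bdpi^{1,\sigma''}-\bdpi^{2,\sigma''}$ is nonnegative. With that adjustment your argument closes exactly as you describe.
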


\section{Timelike curvature-dimension condition}\label{Ch:TCD conditions}

\begin{assumption}\label{Ass:ASS}
From now on, in the notation of \eqref{Eq:X}, unless stated otherwise we assume $\scrX$ to be a causally closed, globally hyperbolic, regular measured Lorentzian geodesic space. 
\end{assumption}

Recall that $\scrK$-global hyperbolicity holds under this assumption.

\subsection{Definition and basic properties} \autoref{Def:TCD*} is the Lorentzian counterpart of the \emph{reduced} curvature-dimension condition \cite[Def.~2.3]{bacher2010}. On the other hand, \autoref{Def:TCD} is the natural Lorentzian adaptation of the $\CD(K,N)$ condition for metric measure spaces from \cite[Def.~1.3]{sturm2006b}. Both conditions deserve their names partly from their compatibility with smooth Lorentzian case, cf.~\autoref{Th:Equiv}. 

\begin{definition}\label{Def:TCD*} Let $p\in (0,1)$, $K\in\R$, and $N\in[1,\infty)$.
\begin{enumerate}[label=\textnormal{\alph*.}]
\item We say that $\scrX$ satisfies the \emph{reduced timelike curva\-ture-di\-men\-sion condition} $\smash{\TCD_p^*(K,N)}$ if for every timelike $p$-dualizable pair $(\mu_0,\mu_1) = (\rho_0\,\meas,\rho_1\,\meas) \in \smash{\scrP_\comp^\ac(\mms,\meas)^2}$, there is a timelike proper-time parametrized $\smash{\ell_p}$-geodesic $(\mu_t)_{t\in [0,1]}$ connecting $\mu_0$ to $\mu_1$ as well as a timelike $p$-dua\-lizing coupling $\pi\in \smash{\Pi_\ll(\mu_0,\mu_1)}$ such that for every $t\in [0,1]$ and every $N'\geq N$,
\begin{align*}
\scrS_{N'}(\mu_t) &\leq -\int_{\mms^2} \sigma_{K,N'}^{(1-t)}(\tsep(x^0,x^1))\,\rho_0(x^0)^{-1/N'} \d\pi(x^0,x^1) \\
&\qquad\qquad -\int_{\mms^2} \sigma_{K,N'}^{(t)}(\tsep(x^0,x^1))\,\rho_1(x^1)^{-1/N'} \d\pi(x^0,x^1).
\end{align*}
\item If the previous statement holds only for every strongly timelike $p$-dualizable  $\smash{(\mu_0,\mu_1)=(\rho_0\,\meas,\rho_1\,\meas)\in\scrP_\comp^\ac(\mms,\meas)^2}$, we say that $\scrX$ obeys the \emph{weak reduced timelike curvature-dimension condition} $\smash{\wTCD_p^*(K,N)}$.
\end{enumerate}
\end{definition}

\begin{definition}\label{Def:TCD} Let $p\in (0,1)$, $K\in\R$, and $N\in[1,\infty)$. 
\begin{enumerate}[label=\textnormal{\alph*.}]
\item We say that $\scrX$ obeys the \emph{timelike curva\-ture-di\-men\-sion condition} $\smash{\TCD_p(K,N)}$ if for every timelike $p$-dualizable pair $(\mu_0,\mu_1) = (\rho_0\,\meas,\rho_1\,\meas) \in \smash{\scrP_\comp^\ac(\mms,\meas)^2}$, there is a timelike proper-time parametrized $\smash{\ell_p}$-geodesic $(\mu_t)_{t\in [0,1]}$  connecting $\mu_0$ to $\mu_1$ as well as a timelike $p$-dua\-lizing coupling $\pi\in \smash{\Pi_\ll(\mu_0,\mu_1)}$ such that for every $t\in [0,1]$ and every $N'\geq N$,
\begin{align*}
\scrS_{N'}(\mu_t) &\leq -\int_{\mms^2} \tau_{K,N'}^{(1-t)}(\tsep(x^0,x^1))\,\rho_0(x^0)^{-1/N'} \d\pi(x^0,x^1) \\
&\qquad\qquad -\int_{\mms^2} \tau_{K,N'}^{(t)}(\tsep(x^0,x^1))\,\rho_1(x^1)^{-1/N'} \d\pi(x^0,x^1).
\end{align*}
\item If the previous statement holds only for every strongly timelike $p$-dualizable $(\mu_0,\mu_1)=(\rho_0\,\meas,\rho_1\,\meas)\in\scrP_\comp^\ac(\mms,\meas)^2$, we say that $\scrX$ obeys the \emph{weak timelike curvature-dimension condition} $\smash{\wTCD_p(K,N)}$.
\end{enumerate}
\end{definition}

\begin{remark} The conditions from  \autoref{Def:TCD*}  and \autoref{Def:TCD} are a priori formulated for compactly supported endpoints $\mu_0$ and $\mu_1$ only. In the timelike $p$-essentially nonbranching case, cf.~\autoref{Def:Essentially nonbranching}, this does in fact imply the respective defining statements for more general $\smash{\mu_0,\mu_1\in \scrP^\ac(\mms,\meas)}$ according to \autoref{Th:Equiv TCD with geo} and \autoref{Th:Equivalence TCD* and TCDe}.
\end{remark}

\begin{remark}\label{Re:TCD 0} Note that $\smash{\TCD_p^*(0,N)}$ and $\smash{\wTCD_p^*(0,N)}$ coincide with $\smash{\TCD_p(0,N)}$ and $\smash{\wTCD_p(0,N)}$, respectively, for every $p\in (0,1)$ and every $N\in[1,\infty)$, since
\begin{align*}
\sigma_{0,N'}^{(r)}(\vartheta) = \tau_{0,N'}^{(r)}(\vartheta) = r
\end{align*}
for every $r\in [0,1]$, every $N'\in [1,\infty)$, and every $\vartheta\geq 0$.
\end{remark}

In the remainder of this section, we examine elementary properties of the  conditions introduced above, and their respective relations. \autoref{Pr:TCD and TCD*} as well as  \autoref{Pr:Consistency TCD} hold on any measured Lorentzian pre-length space.

\begin{proposition}\label{Pr:TCD and TCD*} The following  hold for every $p\in (0,1)$, $K\in\R$, and $N\in[1,\infty)$; set $K^* := K(N-1)/N$.
\begin{enumerate}[label=\textnormal{(\roman*)}]
\item The condition $\smash{\TCD_p(K,N)}$ implies $\smash{\TCD_p^*(K,N)}$.
\item The condition $\smash{\wTCD_p(K,N)}$ implies $\smash{\wTCD_p^*(K,N)}$.
\item If $K>0$, the condition $\smash{\TCD_p^*(K,N)}$ implies $\smash{\TCD_p(K^*,N)}$.
\item If $K>0$, the condition $\smash{\wTCD_p^*(K,N)}$ implies $\smash{\wTCD_p(K^*,N)}$.
\end{enumerate}
\end{proposition}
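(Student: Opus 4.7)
The plan is to reduce each of the four implications to a pointwise comparison between the distortion coefficients $\sigma_{K,N'}^{(r)}(\vartheta)$ and $\tau_{K,N'}^{(r)}(\vartheta)$ in \autoref{Def:Dist coeff}. Once such an inequality is in hand, multiplying by the nonnegative factor $\rho_i(x^i)^{-1/N'}$, integrating against the timelike $p$-dualizing coupling $\pi$ delivered by the hypothesis, and prepending a minus sign transfers it directly to the Rényi entropy bound required by the conclusion. The essential observation is that the classes of timelike $p$-dualizable and of strongly timelike $p$-dualizable pairs in $\smash{\scrP_\comp^\ac(\mms,\meas)^2}$, as well as the sets of timelike proper-time parametrized $\smash{\ell_p}$-geodesics and timelike $p$-dualizing couplings joining two such endpoints, do not depend on the $\TCD$ variant. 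Hence the geodesic $(\mu_t)_{t\in[0,1]}$ and coupling $\pi$ witnessing the hypothesis may be reused verbatim as witnesses for the conclusion; the whole argument is pointwise on $\mms^2$ inside the integral.

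For (i) and (ii), I would invoke the standard coefficient comparison $\tau_{K,N'}^{(r)}(\vartheta)\geq\sigma_{K,N'}^{(r)}(\vartheta)$, valid for all $K\in\R$, $N'\geq 1$, $r\in[0,1]$, and $\vartheta\geq 0$ (see \cite[Rem.~2.2]{bacher2010}; a direct Taylor expansion around $\vartheta=0$ quickly confirms the direction of the inequality, and it extends to arbitrary $\vartheta$ by elementary comparison of the $\sin$/$\sinh$ expressions). Applied to both endpoint terms in the $\TCD_p(K,N)$ estimate, and under the integration procedure just described, this immediately yields the $\TCD_p^*(K,N)$ estimate with the same $(\mu_t)_{t\in[0,1]}$ and $\pi$. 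The identical reasoning relativized to strongly timelike $p$-dualizable pairs proves the weak implication (ii).

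For (iii) and (iv), fix $N'\geq N$ and $K>0$, and set $K^*:=K(N-1)/N$. Two coefficient facts drive the argument. First, the algebraic identity
\begin{align*}
\frac{K}{N'}-\frac{K^*}{N'-1}=\frac{K(N'-N)}{NN'(N'-1)}\geq 0,
\end{align*}
together with monotonicity of $k\mapsto\sigma_k^{(r)}(\vartheta)$ (cf.\ the discussion following \autoref{Def:Dist coeff}), gives $\sigma_{K,N'}^{(r)}(\vartheta)\geq \sigma_{K^*,N'-1}^{(r)}(\vartheta)$. Second, since $K>0$, the concavity of $\sin$ on $[0,\pi]$ (with the usual convention $\infty\geq r$ in the degenerate case $K\vartheta^2/N'\geq\pi^2$) yields $\sigma_{K,N'}^{(r)}(\vartheta)\geq r$. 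Combining these,
\begin{align*}
\tau_{K^*,N'}^{(r)}(\vartheta)
&=r^{1/N'}\,\sigma_{K^*,N'-1}^{(r)}(\vartheta)^{1-1/N'}\\
&\leq r^{1/N'}\,\sigma_{K,N'}^{(r)}(\vartheta)^{1-1/N'}\\
&\leq \sigma_{K,N'}^{(r)}(\vartheta)^{1/N'}\,\sigma_{K,N'}^{(r)}(\vartheta)^{1-1/N'}\\
&=\sigma_{K,N'}^{(r)}(\vartheta),
\end{align*}
and the same integration procedure converts the $\TCD_p^*(K,N)$ bound into the $\TCD_p(K^*,N)$ bound with the same geodesic and coupling, proving (iii). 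The weak version (iv) follows analogously by restricting to strongly timelike $p$-dualizable pairs.

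No step poses a conceptual obstacle: the entire proof is a bookkeeping exercise built on top of two elementary inequalities for the distortion coefficients. The only subtlety worth flagging is that both key facts in the second comparison fail for $K<0$ (the algebraic identity reverses sign, and the inequality $\sigma\geq r$ reverses via $\sinh(rx)\leq r\sinh(x)$), which precisely accounts for the hypothesis $K>0$ in (iii) and (iv).
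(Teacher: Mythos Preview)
Your proposal is correct and follows essentially the same route as the paper: both parts reduce to the pointwise coefficient inequalities $\tau_{K,N'}^{(r)}(\vartheta)\geq\sigma_{K,N'}^{(r)}(\vartheta)$ for (i)--(ii) and $\tau_{K^*,N'}^{(r)}(\vartheta)\leq\sigma_{K,N'}^{(r)}(\vartheta)$ for (iii)--(iv), applied under the same witnessing geodesic and coupling. You have in fact spelled out the derivation of the second inequality (via $K/N'\geq K^*/(N'-1)$ and $\sigma_{K,N'}^{(r)}(\vartheta)\geq r$ for $K>0$) in more detail than the paper, which simply records the inequality and refers to \cite{bacher2010}.
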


\begin{proof} The first two statements are a direct consequence of the inequality
\begin{align*}
\tau_{K,N'}^{(r)}(\vartheta) \geq \sigma_{K,N'}^{(r)}(\vartheta)
\end{align*}
for every $r\in[0,1]$, every $N'\geq N$, and every $\vartheta\geq 0$, cf.~the proof of \cite[Prop.~2.5]{bacher2010}. The last two statements  follow similarly by noting that
\begin{align*}
\tau_{K^*,N'}^{(t)}(\vartheta)  \leq \sigma_{K,N'}^{(t)}(\vartheta).\tag*{\qedhere}
\end{align*}
\end{proof}

\begin{proposition}\label{Pr:Consistency TCD} Assume $\smash{\TCD_p^*(K,N)}$ for $p\in (0,1)$, $K\in\R$, and $N\in[1,\infty)$. Then the following statements hold.
\begin{enumerate}[label=\textnormal{\textcolor{black}{(}\roman*\textcolor{black}{)}}]
\item\label{La:DieEINS} \textnormal{\textbf{Consistency.}} The condition $\smash{\TCD_p^*(K',N')}$ holds for every $K'\leq K$ and every $N'\geq N$.
\item\label{La:DieZWEI} \textnormal{\textbf{Scaling.}} Given any $a,b,\theta>0$, the rescaled measured Lorentzian pre-length space $(\mms, a\met, b\meas,\ll,\leq, \theta\tsep)$ satisfies $\smash{\TCD_p^*(K/\theta^2,N)}$.
\end{enumerate}

Analogous statements are obeyed by the conditions $\smash{\wTCD_p^*(K,N)}$, $\smash{\TCD_p(K,N)}$, and $\smash{\wTCD_p(K,N)}$. 
\end{proposition}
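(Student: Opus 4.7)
The plan is to derive both statements from the elementary monotonicity and scaling properties of the coefficients $\sigma_{K,N}^{(t)}$ and $\tau_{K,N}^{(t)}$ recalled immediately after \autoref{Def:Dist coeff}, together with the identity \eqref{Eq:Distortion coeff property}. I will treat $\smash{\TCD_p^*(K,N)}$ in detail; the remaining three conditions follow \emph{mutatis mutandis}, using the identical monotonicities and scalings of $\tau$ and, in the weak cases, merely restricting the quantifier to strongly timelike $p$-dualizable pairs.

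For \ref{La:DieEINS}, fix a timelike $p$-dualizable pair $(\mu_0,\mu_1)=(\rho_0\,\meas,\rho_1\,\meas)\in\scrP_\comp^\ac(\mms,\meas)^2$ and let $(\mu_t)_{t\in[0,1]}$ together with the timelike $p$-dualizing coupling $\pi\in\Pi_\ll(\mu_0,\mu_1)$ be the data furnished by $\smash{\TCD_p^*(K,N)}$. To verify $\smash{\TCD_p^*(K',N')}$ with $K'\leq K$ and $N'\geq N$, pick any exponent $N''\geq N'$; then $N''\geq N$, so the defining inequality at $N''$ is already available. The nondecreasing character of $\sigma_{K,N''}^{(r)}(\vartheta)$ in $K$ combined with the nonnegativity of the densities gives
\begin{align*}
-\sigma_{K',N''}^{(r)}(\vartheta)\,\rho^{-1/N''} \geq -\sigma_{K,N''}^{(r)}(\vartheta)\,\rho^{-1/N''},
\end{align*}
so the same $(\mu_t)_{t\in[0,1]}$ and $\pi$ witness $\smash{\TCD_p^*(K',N')}$ without producing any new transport plan.

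For \ref{La:DieZWEI}, observe first that $\ll$ and $\leq$ are unaffected by the rescaling, so the sets $\TGeo(\mms)$ and $\TGeo^\tsep(\mms)$, the classes of causal and chronological couplings, and the $\smash{\ell_p}$-optimal couplings (for the rescaled cost $\theta\,\ell_p$) are all preserved. Since \eqref{Eq:Proper time par} is homogeneous in $\tsep$, the notion of timelike proper-time parametrized $\smash{\ell_p}$-geodesic is invariant. Timelike and strong timelike $p$-dualizability survive upon replacing the dominating function $a\oplus b$ in \autoref{Def:TL DUAL} with $\theta^p\,(a\oplus b)$ and keeping the $l^p$-cyclically monotone set. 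The decisive input is the scaling identity, immediate from \eqref{Eq:Distortion coeff property},
\begin{align*}
\sigma_{K/\theta^2,\,N'}^{(t)}(\theta\,\vartheta) = \sigma_{K,N'}^{(t)}(\vartheta), \qquad \tau_{K/\theta^2,\,N'}^{(t)}(\theta\,\vartheta) = \tau_{K,N'}^{(t)}(\vartheta).
\end{align*}
Writing $\rho':=\rho/b$ for the density of $\mu$ with respect to $b\,\meas$, a direct computation yields $\scrS_{N'}^{b\,\meas}(\mu) = b^{1/N'}\,\scrS_{N'}^{\meas}(\mu)$; after simplifying the rescaled coefficients with the identity above, the right-hand side of the $\smash{\TCD_p^*(K/\theta^2,N)}$ inequality on the rescaled space equals $b^{1/N'}$ times the right-hand side of the $\smash{\TCD_p^*(K,N)}$ inequality on $\scrX$. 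Dividing through by $b^{1/N'}>0$ shows that one inequality holds if and only if the other does, and the same $(\mu_t)$ and $\pi$ transfer as witnesses. The (mild) main point is purely bookkeeping: to track that all three scaling parameters $(a,b,\theta)$ affect the defining inequality only by positive multiplicative constants, so that the witnessing geodesic and coupling on the rescaled space may be taken to coincide with those on $\scrX$; no substantive obstacle arises.
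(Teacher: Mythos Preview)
Your proof is correct and follows essentially the same approach as the paper: consistency in $K$ via monotonicity of the distortion coefficients, consistency in $N$ via the built-in ``for every $N''\geq N$'' quantifier, and scaling via the identity $\sigma_{K/\theta^2,N'}^{(r)}(\theta\vartheta)=\sigma_{K,N'}^{(r)}(\vartheta)$ (and its $\tau$-analogue). The paper's proof is terser and does not spell out the $b^{1/N'}$-scaling of the R\'enyi entropy or the invariance of the causal/dualizability structure that you verify explicitly, but the underlying argument is identical.
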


\begin{proof} Concerning  \ref{La:DieEINS}, for all four conditions consistency in $K$ follows by nondecreasingness of $\smash{\sigma_{K,N}^{(r)}(\vartheta)}$ and $\smash{\tau_{K,N}^{{(r)}}(\vartheta)}$ in $K\in\R$ for fixed $r\in[0,1]$, $N\in [1,\infty)$, and $\vartheta\geq 0$. Consistency in $N$ is clear  since the defining inequality, in either case, is asked to hold for every $N''\geq N$, so is particularly satisfied for every  $N''\geq N'$.

Item \ref{La:DieZWEI} follows from the scaling properties 
\begin{align*}
\sigma_{K/\theta^2,N'}^{(r)}(\theta\,\vartheta) &= \sigma_{K,N'}^{(r)}(\vartheta),\\
\tau_{K/\theta^2,N'}^{(r)}(\theta\,\vartheta) &= \tau_{K,N'}^{(r)}(\vartheta).\qedhere
\end{align*}
\end{proof}

Lastly, recall that the $N$-Rényi entropy $\scrS_N(\mu)$, $N\in[1,\infty)$, at $\mu\in\scrP(\mms)$ only depends on the $\meas$-absolutely continuous part of $\mu$, which might be trivial. Along those timelike proper-time parametrized $\smash{\ell_p}$-geodesics in \autoref{Def:TCD*} or \autoref{Def:TCD}, this does not happen; in fact, under mild additional hypotheses, these always consist of $\meas$-absolutely continuous measures. This is addressed in \autoref{Le:Stu} and will be useful at various occasions. (We  also use variants of it without explicit notice occasionally, e.g.~in \autoref{Sub:Local global}.) Its proof uses \eqref{Eq:Ent SN limit} and is analogous to the one of  \cite[Prop.~1.6]{sturm2006b}, hence omitted. 

\begin{lemma}\label{Le:Stu} Fix $p\in (0,1)$, $K\in\R$, and $N\in[1,\infty)$. Let $(\mu_0,\mu_1) = (\rho_0\,\meas,\mu_1 \,\meas)\in(\scrP_\comp(\mms)\cap \Dom(\Ent_\meas))^2$,  let $\smash{\mu\in\scrP_\comp(\mms)}$, and let $\smash{\pi\in\Pi(\mu_0,\mu_1)}$ be a coupling of $\mu_0$ and $\mu_1$ such that for every $N'\geq N$, 
\begin{align*}
\scrS_{N'}(\mu)&\leq -\int_{\mms^2} \sigma_{K,N'}^{(1-t)}(\tsep(x^0,x^1))\,\rho_0(x^0)^{-1/N'}\d\pi(x^0,x^1)\\
&\qquad\qquad - \int_{\mms^2} \sigma_{K,N'}^{(t)}(\tsep(x^0,x^1))\,\rho_1(x^1)^{-1/N'}\d\pi(x^0,x^1).
\end{align*}
Then $\smash{\mu = \rho\,\meas\in\Dom(\Ent_\meas)}$ with
\begin{align*}
\Ent_\meas(\mu) \leq (1-t)\Ent_\meas(\mu_0) + t\Ent_\meas(\mu_1) - \frac{K}{2}\,t(1-t)\,\big\Vert \tsep\big\Vert_{\Ell^2(\mms^2,\pi)}^2.
\end{align*}
\end{lemma}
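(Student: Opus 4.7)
The plan is to follow the strategy of \cite[Prop.~1.6]{sturm2006b}: extract the Boltzmann inequality from the one-parameter family of Rényi inequalities by sending $N'\to\infty$ and exploiting the limit relation \eqref{Eq:Ent SN limit}. First I would rewrite the hypothesis by adding $N'$ to both sides and using $\pi[\mms^{2}]=1$, reformulating it as
\begin{align*}
N'\bigl(1+\scrS_{N'}(\mu)\bigr)\leq N'\int_{\mms^{2}}\bigl[1-\sigma_{K,N'}^{(1-t)}(\tsep)\,\rho_0^{-1/N'}-\sigma_{K,N'}^{(t)}(\tsep)\,\rho_1^{-1/N'}\bigr]\d\pi.
\end{align*}
By \eqref{Eq:Ent SN limit}, the left-hand side is monotone nondecreasing in $N'$ and converges to $\Ent_\meas(\mu)\in(-\infty,\infty]$, which is why it suffices to produce a finite upper bound on the right.

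Next, for the right-hand side I would split the bracketed integrand as
\begin{align*}
&1-\sigma_{K,N'}^{(1-t)}(\vartheta)\,a^{-1/N'}-\sigma_{K,N'}^{(t)}(\vartheta)\,b^{-1/N'}\\
&\quad=\bigl[1-\sigma_{K,N'}^{(1-t)}(\vartheta)-\sigma_{K,N'}^{(t)}(\vartheta)\bigr]+\sigma_{K,N'}^{(1-t)}(\vartheta)\bigl(1-a^{-1/N'}\bigr)+\sigma_{K,N'}^{(t)}(\vartheta)\bigl(1-b^{-1/N'}\bigr),
\end{align*}
and apply the Taylor expansion at $\kappa=K/N'=0$, namely
\begin{align*}
\sigma_{K,N'}^{(r)}(\vartheta)=r+\frac{r(1-r)(1+r)\,K\,\vartheta^{2}}{6N'}+O(N'^{-2}),
\end{align*}
which immediately yields $N'[1-\sigma_{K,N'}^{(1-t)}(\vartheta)-\sigma_{K,N'}^{(t)}(\vartheta)]\to-\tfrac{K}{2}\,t(1-t)\,\vartheta^{2}$ uniformly for $\vartheta$ in bounded intervals. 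Combined with the elementary monotone convergence $N'(1-a^{-1/N'})\nearrow\log a$ for $a>0$, the $N'$-weighted integrand converges pointwise on $\{\rho_0>0,\rho_1>0\}$ to $(1-t)\log\rho_0(x^{0})+t\log\rho_1(x^{1})-\tfrac{K}{2}\,t(1-t)\,\tsep(x^{0},x^{1})^{2}$.

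The main obstacle will be justifying the interchange of limit and integral, since three individually $O(N')$ quantities must combine to produce a finite $O(1)$ limit. Compactness of $\supp\pi\subset\supp\mu_{0}\times\supp\mu_{1}$ together with continuity of $\tsep$ (using $\scrK$-global hyperbolicity) ensures $\tsep$ is bounded on $\supp\pi$, uniformly controlling the first summand. For each density piece I would partition the domain into $\{\rho_i\geq 1\}$ and $\{\rho_i<1\}$: on the former, the monotone bound $N'(1-\rho_i^{-1/N'})\leq\log\rho_i$ supplies a dominating function integrable against $\pi$ by $\mu_0,\mu_1\in\Dom(\Ent_\meas)$; on the latter, the sequence is monotone nondecreasing and bounded above by zero, with pointwise limit $\log\rho_i$ integrable against $\pi$ by compact support of $\mu_0,\mu_1$. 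Monotone and dominated convergence then give
\begin{align*}
\text{RHS}\ \longrightarrow\ (1-t)\,\Ent_\meas(\mu_{0})+t\,\Ent_\meas(\mu_{1})-\frac{K}{2}\,t(1-t)\,\|\tsep\|_{\Ell^{2}(\mms^{2},\pi)}^{2},
\end{align*}
which is finite. Comparison with the left side forces $\Ent_\meas(\mu)<\infty$, whence $\mu\in\scrP^\ac(\mms,\meas)$, and yields the claimed Boltzmann convexity inequality.
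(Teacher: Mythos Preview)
Your proposal is correct and follows precisely the approach the paper indicates: the paper omits the proof, stating only that it ``uses \eqref{Eq:Ent SN limit} and is analogous to the one of \cite[Prop.~1.6]{sturm2006b}'', which is exactly the strategy you carry out. Your Taylor expansion of $\sigma_{K,N'}^{(r)}$ and the splitting into monotone/dominated pieces are the standard ingredients of Sturm's argument, adapted correctly to the Lorentzian setting via the boundedness of $\tsep$ on the compact set $\supp\mu_0\times\supp\mu_1$.
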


\begin{remark} Using \autoref{Le:Villani lemma for geodesic}, the assumptions in \autoref{Le:Stu} are satisfied for $\mu := \mu_t$, $t\in[0,1]$, and $\pi$ if these objects are coming from a timelike proper-time parametrized $\smash{\ell_p}$-geodesic as well as a timelike $p$-dualizing coupling, respectively, witnessing the inequality defining $\smash{\TCD_p^*(K,N)}$.
\end{remark}

\begin{corollary} For every $p\in (0,1)$, $K\in\R$, and $N\in[1,\infty)$, the $\smash{\wTCD_p^*(K,N)}$ condition implies  $\smash{\wTCD_p(K,\infty)}$  in the sense of \textnormal{\cite[Def.~4.1]{braun2022}}.
\end{corollary}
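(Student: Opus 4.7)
The plan is to derive the defining inequality of $\wTCD_p(K,\infty)$ in the sense of \cite{braun2022} by specializing \autoref{Le:Stu} to the objects produced by $\wTCD_p^*(K,N)$. Fix a strongly timelike $p$-dualizable pair $(\mu_0,\mu_1)=(\rho_0\,\meas,\rho_1\,\meas)\in\scrP_\comp^\ac(\mms,\meas)^2$. If either $\mu_0$ or $\mu_1$ lies outside $\Dom(\Ent_\meas)$, the right-hand side of the target inequality is $+\infty$, so nothing is to prove. Hence I may assume $\mu_0,\mu_1\in\Dom(\Ent_\meas)$.

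Next, invoke $\wTCD_p^*(K,N)$ to obtain a timelike proper-time parametrized $\ell_p$-geodesic $(\mu_t)_{t\in[0,1]}$ joining $\mu_0$ and $\mu_1$ together with a timelike $p$-dualizing coupling $\pi\in\Pi_\ll(\mu_0,\mu_1)$ such that, for every $t\in[0,1]$ and every $N'\geq N$,
\begin{align*}
\scrS_{N'}(\mu_t) &\leq -\int_{\mms^2}\sigma_{K,N'}^{(1-t)}(\tsep(x^0,x^1))\,\rho_0(x^0)^{-1/N'}\d\pi(x^0,x^1)\\
&\qquad\qquad -\int_{\mms^2}\sigma_{K,N'}^{(t)}(\tsep(x^0,x^1))\,\rho_1(x^1)^{-1/N'}\d\pi(x^0,x^1).
\end{align*}

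Now apply \autoref{Le:Stu} with $\mu:=\mu_t$ for each $t\in[0,1]$. The hypotheses of the lemma are met because the displayed inequality above is precisely the assumption needed, and $\mu_0,\mu_1\in\Dom(\Ent_\meas)\cap\scrP_\comp(\mms)$ by construction. The lemma then yields $\mu_t=\rho_t\,\meas\in\Dom(\Ent_\meas)$ together with
\begin{align*}
\Ent_\meas(\mu_t)\leq (1-t)\,\Ent_\meas(\mu_0)+t\,\Ent_\meas(\mu_1)-\frac{K}{2}\,t(1-t)\,\big\Vert\tsep\big\Vert_{\Ell^2(\mms^2,\pi)}^2,
\end{align*}
which is exactly the $K$-convexity inequality defining $\wTCD_p(K,\infty)$ from \cite[Def.~4.1]{braun2022} along $(\mu_t)_{t\in[0,1]}$ with dualizing coupling $\pi$.

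There is no real obstacle: the only point to verify carefully is that the same geodesic $(\mu_t)_{t\in[0,1]}$ and the same coupling $\pi$ serve simultaneously to witness $\wTCD_p(K,\infty)$, which is immediate since $\wTCD_p^*(K,N)$ produces a single pair $((\mu_t),\pi)$ that witnesses the Rényi bound for \emph{all} $N'\geq N$ at \emph{all} $t$, and \autoref{Le:Stu} extracts the Boltzmann bound from this same pair. Together with the trivial case above, this establishes $\wTCD_p(K,\infty)$.
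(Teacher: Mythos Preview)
Your proof is correct and follows exactly the approach implicit in the paper: the corollary is stated without proof immediately after \autoref{Le:Stu} and its accompanying remark, precisely because it is an immediate consequence of applying that lemma to the geodesic and coupling furnished by $\wTCD_p^*(K,N)$. The only detail you leave implicit is that $\mu_t\in\scrP_\comp(\mms)$, which follows from $\scrK$-global hyperbolicity via $\supp\mu_t\subset J(\supp\mu_0,\supp\mu_1)$; the paper's remark handles this by invoking \autoref{Le:Villani lemma for geodesic}.
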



\subsection{Geometric inequalities} Now we derive fundamental geometric inequalities from the  conditions introduced in \autoref{Ch:TCD conditions}. We start with the more restrictive $\smash{\wTCD_p(K,N)}$-case which implies sharp versions of these facts, cf.~\autoref{Cor:HD};   \autoref{Sub:Versions reduced} provides nonsharp extensions for $\smash{\wTCD_p^*(K,N)}$-spaces. All proofs are  variations of standard arguments \cite{sturm2006b}, see also \cite{bacher2010,cavalletti2020}.

Some of the geometric inequalities to follow hold in fact under weaker timelike measure-contraction properties, cf.~\autoref{Pr:TMCP to TCD} and \autoref{Re:Geom inequ TMCP}.

\subsubsection{Sharp timelike Brunn--Minkowski inequality} 

To introduce the \emph{timelike Brunn--Minkowski inequality} in the next \autoref{Pr:Brunn-Minkowski}, given  $A_0,A_1\subset \mms$ and $t\in [0,1]$,  define the set of timelike $t$-intermediate points between $A_0$ and $A_1$ as
\begin{align*}
A_t := \{\gamma_t : \gamma\in\TGeo^\tsep(\mms),\, \gamma_0 \in A_0,\, \gamma_1\in A_1\}.
\end{align*}
Furthermore, define 
\begin{align}\label{Eq:THETA}
\Theta := \begin{cases} \sup\tsep(A_0\times A_1) & \textnormal{if }K<0,\\
\inf\tsep(A_0\times A_1) & \textnormal{otherwise}.
\end{cases}
\end{align}

\begin{proposition}\label{Pr:Brunn-Minkowski}  Assume $\smash{\wTCD_p(K,N)}$ for some $p\in (0,1)$, $K\in\R$, and $N\in [1,\infty)$. Let $A_0,A_1\subset\mms$ be relatively compact Borel sets with  $\meas[A_0]\,\meas[A_1] >0$, and assume strong timelike $p$-dualizability of 
\begin{align*}
(\mu_0,\mu_1) := (\meas[A_0]^{-1}\,\meas\mres A_0,\meas[A_1]^{-1}\, \meas\mres A_1).
\end{align*}
Then for every $t\in [0,1]$ and every $N'\geq N$, 
\begin{align}\label{Eq:BM}
\meas[A_t]^{1/N'} \geq \tau_{K,N'}^{(1-t)}(\Theta)\,\meas[A_0]^{1/N'} + \tau_{K,N'}^{(t)}(\Theta)\,\meas[A_1]^{1/N'}.
\end{align}

Assuming $\smash{\TCD_p(K,N)}$ in place of $\smash{\wTCD_p(K,N)}$, the same conclusion holds if $(\mu_0,\mu_1)$ is merely timelike $p$-dualizable.
\end{proposition}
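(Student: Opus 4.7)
The argument is the Lorentzian transcription of the Bacher--Sturm proof of the Brunn--Minkowski inequality from a curvature-dimension bound. Set $\rho_i := \meas[A_i]^{-1}\,\mathbbm{1}_{A_i}$ for $i\in\{0,1\}$, so that $\mu_0=\rho_0\,\meas$ and $\mu_1=\rho_1\,\meas$ belong to $\scrP_\comp^\ac(\mms,\meas)$ by relative compactness of $A_0$ and $A_1$. Under the strong timelike $p$-dualizability hypothesis (or mere timelike $p$-dualizability, in the $\smash{\TCD_p(K,N)}$ case), I would apply $\smash{\wTCD_p(K,N)}$ (respectively $\smash{\TCD_p(K,N)}$) to obtain a timelike proper-time parametrized $\smash{\ell_p}$-geodesic $(\mu_t)_{t\in[0,1]}$ joining $\mu_0$ to $\mu_1$, represented by some $\smash{\bdpi\in\OptTGeo_{\ell_p}^\tsep(\mu_0,\mu_1)}$, together with a timelike $p$-dualizing coupling $\pi\in\Pi_\ll(\mu_0,\mu_1)$ witnessing the defining Rényi entropy inequality for all $N'\geq N$.

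Next, I would derive the lower bound $\scrS_{N'}(\mu_t)\geq -\meas^*[A_t]^{1/N'}$, where $\meas^*$ denotes outer measure. Since $\bdpi$ concentrates on curves in $\smash{\TGeo^\tsep(\mms)}$ with endpoints in $\supp\mu_0\times\supp\mu_1\subset\overline{A_0}\times\overline{A_1}$, and by \autoref{Ass:ASS} (non-total imprisonment in particular) these curves stay within a compact set, the measure $\mu_t=(\eval_t)_\push\bdpi$ is concentrated on the relatively compact set $A_t$. Writing $\mu_t=\rho_t\,\meas+\mu_{t,\perp}$ with $\rho_t=0$ outside a Borel hull of $A_t$ of outer measure equal to $\meas^*[A_t]$, Jensen's inequality for the concave map $x\mapsto x^{1-1/N'}$ against $\meas^*[A_t]^{-1}\,\meas\mres A_t$ yields
\begin{align*}
-\scrS_{N'}(\mu_t) \;=\; \int_{\mms}\rho_t^{1-1/N'}\d\meas \;\leq\; \meas^*[A_t]^{1/N'}.
\end{align*}

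I would then combine this with the $\smash{\wTCD_p(K,N)}$ (respectively $\smash{\TCD_p(K,N)}$) inequality, noting that $\pi$-a.e.~$(x^0,x^1)\in A_0\times A_1$ so that $\rho_0(x^0)^{-1/N'}=\meas[A_0]^{1/N'}$ and $\rho_1(x^1)^{-1/N'}=\meas[A_1]^{1/N'}$ may be pulled out of the two integrals. This gives
\begin{align*}
\meas^*[A_t]^{1/N'} \;\geq\; \meas[A_0]^{1/N'}\int_{\mms^2}\tau_{K,N'}^{(1-t)}(\tsep)\d\pi + \meas[A_1]^{1/N'}\int_{\mms^2}\tau_{K,N'}^{(t)}(\tsep)\d\pi.
\end{align*}
The final step exploits the monotonicity of $\smash{\vartheta\mapsto\tau_{K,N'}^{(r)}(\vartheta)}$, which is nondecreasing when $K\geq 0$ and nonincreasing when $K<0$; together with the case distinction in \eqref{Eq:THETA} (ensuring $\tsep(x^0,x^1)\geq\Theta$ for $\pi$-a.e.~pair if $K\geq 0$, and $\tsep(x^0,x^1)\leq\Theta$ for $\pi$-a.e.~pair if $K<0$), this allows replacement of $\tsep(x^0,x^1)$ by $\Theta$ under both integrals, producing \eqref{Eq:BM}.

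The chief technical nuisance is the measurability of $A_t$, which need not be Borel a priori; this is handled, as in the metric setting, by passing to the outer measure or a Borel hull. Apart from this, the proof is essentially bookkeeping: the monotonicity and scaling properties of $\smash{\tau_{K,N'}^{(r)}}$ recorded after \autoref{Def:Dist coeff}, together with the standard Jensen step, give everything needed once the entropy inequality from \autoref{Def:TCD} has been invoked at the uniform endpoint densities.
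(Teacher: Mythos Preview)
Your proof is correct and follows essentially the same route as the paper's: invoke the defining entropy inequality at the uniform endpoint densities, bound $-\scrS_{N'}(\mu_t)$ above by $\meas[A_t]^{1/N'}$ via Jensen, and use the monotonicity of $\tau_{K,N'}^{(r)}(\vartheta)$ in $\vartheta$ together with the case split in \eqref{Eq:THETA} to replace $\tsep$ by $\Theta$. One small imprecision: the support argument you give (endpoints in $\overline{A_0}\times\overline{A_1}$) does not by itself land $\mu_t$ in $A_t$, since $A_t$ is defined via endpoints in $A_0,A_1$; the correct justification is simply that $(\eval_0,\eval_1)_\push\bdpi[A_0\times A_1]=1$ because $\mu_i[A_i]=1$, which you effectively use anyway. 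Your care with outer measure for $A_t$ is a detail the paper glosses over.
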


\begin{proof} Without loss of generality, we assume $N' > 1$ and $\meas[A_t] < \infty$. Let a time\-like proper-time parametrized $\smash{\ell_p}$-geodesic $(\mu_t)_{t\in [0,1]}$ connecting $\mu_0$ to $\mu_1$ and a timelike $p$-dualizing  $\pi\in\smash{\Pi_\ll(\mu_0,\mu_1)}$ witness the semiconvexity inequality defining $\smash{\wTCD_p(K,N)}$. Note that $\supp\mu_t\subset A_t$ and, by $\smash{\wTCD_p(K,N)}$, that $\meas[A_t]>0$. Let $\rho_t$ denote the density of the absolutely continuous part of $\mu_t$ with respect to $\meas$. Employing  Jensen's inequality,  and $\smash{\wTCD_p(K,N)}$ again we  obtain
\begin{align*}
&\tau_{K,N}^{(1-t)}(\Theta)\,\meas[A_0]^{1/N'} + \tau_{K,N}^{(t)}(\Theta)\,\meas[A_1]^{1/N'}\\
&\qquad\qquad\leq \int_{A_t}\rho_t^{1-1/N'}\d\meas \leq \meas[A_t]\,\Big[\!\fint_{A_t}\rho_t\d\meas\Big]^{1-1/N'} \leq \meas[A_t]^{-1/N'},
\end{align*}
which is the desired claim.

The proof of the last statement is completely analogous.
\end{proof}

\begin{remark} Note that in the case $K\geq 0$, \eqref{Eq:BM} implies
\begin{align*}
\meas[A_t]^{1/N'} \geq (1-t)\,\meas[A_0]^{1/N'} + t\,\meas[A_1]^{1/N'}.
\end{align*}
\end{remark}

\begin{remark} Recall from \autoref{Re:Strong timelike} that the strong timelike $p$-dualizability hy\-pothesis for $(\mu_0,\mu_1)$ in  \autoref{Pr:Brunn-Minkowski} is satisfied if $\smash{A_0 \times A_1 \subset \mms_\ll^2}$ provided the space $\scrX$ is locally causally closed, globally hyperbolic, and geodesic.
\end{remark}

\subsubsection{Sharp timelike Bonnet--Myers inequality} Next, an immediate consequence of \autoref{Pr:Brunn-Minkowski} is the subsequent \emph{timelike Bonnet--Myers inequality}.

\begin{corollary}\label{Cor:Bonnet-Myers}  Assume the $\smash{\wTCD_p(K,N)}$ condition  for some $p\in(0,1)$, $K>0$, and $N\in [1,\infty)$. Then 
\begin{align*}
\sup\tsep(\mms^2) \leq \pi\sqrt{\frac{N-1}{K}}.
\end{align*}
\end{corollary}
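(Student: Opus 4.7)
The plan is to argue by contradiction, localizing near an almost-maximal pair of points and then invoking the sharp timelike Brunn--Minkowski inequality of \autoref{Pr:Brunn-Minkowski} to produce an infinite lower bound on a finite quantity.

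Suppose, towards a contradiction, that $\sup\tsep(\mms^2) > \pi\sqrt{(N-1)/K}$. Choose $x_0,x_1\in\mms$ with $x_0\ll x_1$ and $\tsep(x_0,x_1)>\pi\sqrt{(N-1)/K}$. Under \autoref{Ass:ASS}, $\tsep$ is finite and continuous and the chronological relation $\ll$ is open, so we can pick relatively compact open neighborhoods $A_0\ni x_0$ and $A_1\ni x_1$ such that $A_0\times A_1\subset \mms_\ll^2$ and
\begin{align*}
\Theta := \inf\tsep(A_0\times A_1) > \pi\sqrt{(N-1)/K}.
\end{align*}
Since $\meas$ has full support, $\meas[A_0]\,\meas[A_1]>0$, and by \autoref{Re:Strong timelike} the pair
\begin{align*}
(\mu_0,\mu_1) := \big(\meas[A_0]^{-1}\,\meas\mres A_0,\, \meas[A_1]^{-1}\,\meas\mres A_1\big)
\end{align*}
is strongly timelike $p$-dualizable.

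Next, I need $\meas[A_t]<\infty$ for some (hence every) $t\in(0,1)$. Every $\gamma\in\TGeo^\tsep(\mms)$ with $\gamma_0\in A_0$ and $\gamma_1\in A_1$ satisfies $\gamma_t\in J(\bar A_0,\bar A_1)$, which is compact by $\scrK$-global hyperbolicity; hence $A_t\subset J(\bar A_0,\bar A_1)$ is relatively compact, and so $\meas[A_t]<\infty$ because $\meas$ is Radon.

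Now apply the timelike Brunn--Minkowski inequality \eqref{Eq:BM} from $\wTCD_p(K,N)$ with $N':=N$ and some $t\in(0,1)$:
\begin{align*}
\meas[A_t]^{1/N} \geq \tau_{K,N}^{(1-t)}(\Theta)\,\meas[A_0]^{1/N} + \tau_{K,N}^{(t)}(\Theta)\,\meas[A_1]^{1/N}.
\end{align*}
By \autoref{Def:Dist coeff} and \eqref{Eq:DIS COF}, $\tau_{K,N}^{(t)}(\Theta)=t^{1/N}\,\sigma_{K/(N-1)}^{(t)}(\Theta)^{1-1/N}$, and since $K\Theta^2/(N-1)>\pi^2$, the definition of $\sigma_\kappa^{(t)}$ forces $\sigma_{K/(N-1)}^{(t)}(\Theta)=\infty$, whence $\tau_{K,N}^{(t)}(\Theta)=\infty$ for every $t\in(0,1)$. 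Thus the right-hand side above is infinite while the left-hand side is finite, a contradiction. Therefore $\sup\tsep(\mms^2)\leq\pi\sqrt{(N-1)/K}$.

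The only delicate points are arranging the strict inequality $\Theta>\pi\sqrt{(N-1)/K}$ to persist after localization (which is clean here thanks to continuity of $\tsep$) and verifying finiteness of $\meas[A_t]$, which is where $\scrK$-global hyperbolicity is essential; the rest is a direct application of \autoref{Pr:Brunn-Minkowski}.
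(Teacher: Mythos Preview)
Your argument is correct and follows essentially the same route as the paper: contradict the bound, localize to small chronologically related neighborhoods with $\Theta>\pi\sqrt{(N-1)/K}$, invoke the Brunn--Minkowski inequality from \autoref{Pr:Brunn-Minkowski}, and derive $\meas[A_t]=\infty$ against relative compactness of $A_t$. The only cosmetic difference is that you bound $A_t\subset J(\bar A_0,\bar A_1)$ using $\scrK$-global hyperbolicity directly, whereas the paper introduces auxiliary points $z_0\ll x_0$, $x_1\ll z_1$ so that $A_{1/2}\subset J(z_0,z_1)$ and only point-diamond compactness is needed; both are available under \autoref{Ass:ASS}. (One tiny cleanup: for \autoref{Re:Strong timelike} you want $\bar A_0\times\bar A_1\subset\mms_\ll^2$, not just $A_0\times A_1$, which is achieved by shrinking the open neighborhoods slightly.)
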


\begin{proof} Suppose to the contrapositive that for some $\varepsilon > 0$, we have
\begin{align*}
\tsep(z_0,z_1) \geq \pi\sqrt{\frac{N-1}{K}}+4\varepsilon
\end{align*}
for two given points $z_0,z_1\in\mms$. By continuity of $\tsep$, we fix $\delta > 0$ and $x_0,x_1\in\mms$ such that $\smash{A_0 := \sfB^\met(x_0,\delta) \subset I^+(z_0)}$, $\smash{A_1 := \sfB^\met(x_1,\delta) \subset I^-(z_1)}$, and 
\begin{align*}
\inf\tsep(\sfB^\met(x_0,\delta) \times \sfB^\met(x_1,\delta)) \geq \pi\sqrt{\frac{N-1}{K}}+\varepsilon.
\end{align*}
This implies strong timelike $p$-dualizability of $\smash{(\meas[A_0]^{-1}\,\meas\mres{A_0},\meas[A_1]^{-1}\,\meas\mres{A_1})}$ by \autoref{Re:Strong timelike}. Hence \autoref{Pr:Brunn-Minkowski} is applicable, and the inherent  inequality $\smash{\Theta \geq \pi\sqrt{(N-1)/K}+\varepsilon}$ together with the definition of $\smash{\tau_{K,N}^{(r)}}$ gives $\smash{\meas[A_{1/2}]=\infty}$. On the other hand, the set $\smash{A_{1/2}}$ is relatively compact since $\smash{A_{1/2} \subset J(z_0,z_1)}$ by global hy\-per\-bolicity, which makes the situation $\meas[A_{1/2}]=\infty$ impossible.
\end{proof}

\begin{remark} \autoref{Cor:Bonnet-Myers} implies  that, under its assumptions, the $\tsep$-length of every causal curve in $\mms$ is no larger than $\smash{\pi\sqrt{(N-1)/K}}$. Moreover, if $K>0$ and $N=1$ then $\mms$ does not contain any chronologically related pair of points.
\end{remark}

\subsubsection{Sharp timelike Bishop--Gromov inequality}\label{Sub:Sharp BG inequ} To prove  volume growth estimates à la \emph{Bishop--Gromov}, cf.~\autoref{Th:BG} below, we  recall the notion of $\tsep$-star-shaped sets from \cite[Sec.~3.1]{cavalletti2020}; this is used to localize the volume of $\tsep$-balls 
\begin{align}\label{Eq:TAU BALLS}
\sfB^\tsep(x,r) := \{y\in I^+(x) : \tsep(x,y)< r\}\cup\{x\},
\end{align}
with $x\in\mms$ and $r>0$, which typically have infinite volume. 

A set $\smash{E\subset I^+(x)\cup\{x\}}$  is termed \emph{$\tsep$-star-shaped} with respect to $x\in\mms$ if for every $\gamma\in\TGeo^\tsep(\mms)$ with $\gamma_0 = x$ and $\gamma_1 \in E$, we have $\gamma_t\in E$ for every $t\in (0,1)$. For such $E$ and $x$, and given $r>0$, we define the quantities
\begin{align*}
\rmv_r &:= \meas\big[\bar{\sfB}^\tsep(x,r)\cap E\big],\\
\rms_r &:= \limsup_{\delta \to 0} \delta^{-1}\,\meas\big[(\bar{\sfB}^\tsep(x,r+\delta)\setminus \sfB^\tsep(x,r))\cap E\big].
\end{align*}
Whenever confusion is excluded, we avoid making notationally explicit the  dependency of $\rmv_r$ and $\rms_r$ on $E$ and $x$. Recalling \eqref{Eq:DIS COF}, we set $\mathfrak{s}_{K,N} := \mathfrak{s}_{K/N}$ and
\begin{align}\label{Eq:integral def}
\mathfrak{v}_{K,N}(r) := \Big[\!\int_0^r \mathfrak{s}_{K,N-1}(r)^{N-1}\d r\Big]^{1/N}.
\end{align}

In the next result, we employ the convention $1/0 := \infty$.

\begin{theorem}\label{Th:BG} Assume  $\smash{\wTCD_p(K,N)}$ for some $p\in (0,1)$, $K\in\R$, and $N\in (1,\infty)$. Let $\smash{E\subset I^+(x)\cup\{x\}}$ be a compact set which is $\tsep$-star-shaped with respect to $x\in\mms$. Then for every $r,R > 0$ with $\smash{r < R \leq \pi\sqrt{(N-1)/\max\{K,0\}}}$,
\begin{align*}
\frac{\rms_r}{\rms_R} &\geq \Big[\frac{\mathfrak{s}_{K,N-1}(r)}{\mathfrak{s}_{K,N-1}(R)}\Big]^{N-1},\\
\frac{\rmv_r}{\rmv_R} &\geq \Big[\frac{\mathfrak{v}_{K,N}(r)}{\mathfrak{v}_{K,N}(R)}\Big]^N.
\end{align*}
\end{theorem}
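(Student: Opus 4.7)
\smallskip

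\noindent\textbf{Proof plan.} The strategy is the standard Bishop--Gromov route: derive the surface inequality first as a consequence of the timelike Brunn--Minkowski inequality \autoref{Pr:Brunn-Minkowski} applied to a shrinking ``Dirac-like'' source near $x$ and a thin annular target at $\tsep$-radius $R$, then obtain the volume inequality by a monotone-rearrangement argument. Throughout, the case $N'=N$ of \autoref{Pr:Brunn-Minkowski} is used together with the identity $\tau_{K,N}^{(t)}(\vartheta)^N = t\,\sigma_{K,N-1}^{(t)}(\vartheta)^{N-1} = t\bigl[\mathfrak{s}_{K,N-1}(t\vartheta)/\mathfrak{s}_{K,N-1}(\vartheta)\bigr]^{N-1}$, which at $\vartheta=R$ and $t=r/R$ produces exactly the desired sharp coefficient.

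\smallskip

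\noindent\textit{Surface inequality.} Fix $0<r<R$ below the Myers radius and set $t:=r/R$. For $\delta,\varepsilon>0$ small, define the annular target $A_1^\delta:=\bigl(\bar{\sfB}^\tsep(x,R+\delta)\setminus\sfB^\tsep(x,R)\bigr)\cap E$, which is compact and contained in $I^+(x)$. Continuity of $\tsep$ together with compactness of $A_1^\delta$ produces an $\varepsilon_0>0$ such that $A_0^\varepsilon\times A_1^\delta\subset\mms_\ll^2$ for every $\varepsilon\in(0,\varepsilon_0)$, where $A_0^\varepsilon:=\sfB^\met(x,\varepsilon)$; hence the normalized pair $(\mu_0,\mu_1)$ is strongly timelike $p$-dualizable by \autoref{Re:Strong timelike}, so \autoref{Pr:Brunn-Minkowski} yields
\[
\meas\bigl[A_t^{\delta,\varepsilon}\bigr]^{1/N}\geq\tau_{K,N}^{(1-t)}(\Theta_{\delta,\varepsilon})\,\meas[A_0^\varepsilon]^{1/N}+\tau_{K,N}^{(t)}(\Theta_{\delta,\varepsilon})\,\meas[A_1^\delta]^{1/N},
\]
where $\Theta_{\delta,\varepsilon}\to R$ (if $K\geq 0$) or $R+\delta$ (if $K<0$) as $\varepsilon\to 0$. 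The geodesic reparametrization \eqref{Eq:Proper time par} and the reverse triangle inequality \eqref{Eq:Reverse tau} give, for every $\gamma\in\TGeo^\tsep(\mms)$ with $\gamma_0\in A_0^\varepsilon$ and $\gamma_1\in A_1^\delta$, the two-sided control $\tsep(x,\gamma_t)\in[tR-o_\varepsilon(1),t(R+\delta)+o_\varepsilon(1)]$, while $\tsep$-star-shapedness of $E$ applied to the limiting geodesics from $x$ forces $\gamma_t$ into $E$ up to the same error. Combining this with compactness of the geodesic plans (guaranteed by $\scrK$-global hyperbolicity, cf.~\autoref{Le:Villani lemma for geodesic}), one then lets $\varepsilon\to 0$ to obtain, after writing $\delta_r:=r\delta/R$,
\[
\meas\bigl[(\bar{\sfB}^\tsep(x,r+\delta_r)\setminus\sfB^\tsep(x,r))\cap E\bigr]^{1/N}\geq\tau_{K,N}^{(r/R)}(\Theta_\delta)\,\meas[A_1^\delta]^{1/N}
\]
with $\Theta_\delta\to R$ as $\delta\to 0$. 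Choosing a sequence $\delta_n\to 0$ realizing the $\limsup$ defining $\rms_R$, dividing by $\delta_r^{1/N}$, raising to the $N$-th power, and taking the $\limsup$ as $\delta_n\to 0$ produces the pointwise bound $\rms_r\geq\bigl[\mathfrak{s}_{K,N-1}(r)/\mathfrak{s}_{K,N-1}(R)\bigr]^{N-1}\,\rms_R$.

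\smallskip

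\noindent\textit{Volume inequality.} The map $r\mapsto \rms_r/\mathfrak{s}_{K,N-1}(r)^{N-1}$ is non-increasing by the step above. Since $r\mapsto\rmv_r$ is non-decreasing with upper derivative bounded from above by $\rms_r$, a standard calculus lemma (cf.~\cite{sturm2006b} and \cite{bacher2010}) asserts that whenever $f/g$ is non-increasing for non-negative $f,g$, the ratio $\int_0^r f/\int_0^r g$ is non-increasing as well. Applying this to $f=\rms_\cdot$ and $g=\mathfrak{s}_{K,N-1}^{N-1}$, and recalling the definition \eqref{Eq:integral def} of $\mathfrak{v}_{K,N}$, yields the desired $\rmv_r/\rmv_R\geq[\mathfrak{v}_{K,N}(r)/\mathfrak{v}_{K,N}(R)]^N$; the possible singular part of the Lebesgue--Stieltjes measure $d\rmv_s$ only improves the inequality, since it adds non-negative mass to the numerator.

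\smallskip

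\noindent\textit{Main obstacle.} The delicate point is Step 2 of the surface part: the set $A_t^{\delta,\varepsilon}$ is a priori not contained in $E$ because the source ball $A_0^\varepsilon$ is metric, whereas star-shapedness only controls geodesics emanating from $x$ itself. Making the ``up to error $o_\varepsilon(1)$'' rigorous requires a compactness/continuity argument for $\smash{\ell_p}$-optimal geodesic plans as $\varepsilon\to 0$, showing that the $\varepsilon$-geodesics accumulate on the star-shaped pencil through $x$, combined with continuity of the $\tsep$-ball stratification of $E$. All remaining ingredients --- strong timelike $p$-dualizability, compactness of $A_1^\delta$, continuity of $\tsep$, and the scaling $\tau_{K,N}^{(r/R)}(R)^N=(r/R)[\mathfrak{s}_{K,N-1}(r)/\mathfrak{s}_{K,N-1}(R)]^{N-1}$ --- are either immediate under \autoref{Ass:ASS} or arithmetic.
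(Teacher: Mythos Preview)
Your overall strategy---Brunn--Minkowski for the surface inequality, then a Gromov-type integration lemma for the volume inequality---is exactly the paper's. The one substantive difference is your choice of source set: you take the \emph{metric} ball $A_0^\varepsilon=\sfB^\met(x,\varepsilon)$, whereas the paper takes $A_0=\sfB^\tsep(x,\varepsilon)\cap E$. Because $E$ is compact and $\tsep(x,\cdot)$ is continuous and strictly positive on $E\setminus\{x\}\subset I^+(x)$, the set $\sfB^\tsep(x,\varepsilon)\cap E$ shrinks metrically to $\{x\}$ as $\varepsilon\to 0$, so it serves the same ``Dirac-like'' purpose as your metric ball while already sitting inside $E$. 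With this choice the paper simply asserts $A_t\subset(\bar\sfB^\tsep(x,r+\delta r)\setminus\sfB^\tsep(x,r))\cap E$, drops the $A_0$-term from Brunn--Minkowski, and sends only $\delta\to 0$---no separate $\varepsilon\to 0$ step, and in particular no compactness argument for geodesic plans is needed. Your ``main obstacle'' is thus an artifact of the metric-ball choice; switching to the $\tsep$-ball inside $E$ removes it.

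For the volume part, the paper first argues (following \cite[Thm.~2.3]{sturm2006b}) that $r\mapsto\rmv_r$ is locally Lipschitz, hence $\rmv_r=\int_0^r\dot\rmv_s\d s$, and then invokes \cite[Lem.~18.9]{villani2009}. This is the same lemma you describe; the explicit Lipschitz step is cleaner than your hand-wave about the singular part of $d\rmv_s$, which as stated does not obviously go the right way.
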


\begin{proof} We assume $K>0$, the rest is argued similarly. Define $t := r/R$,  let $\delta > 0$, and let $\varepsilon > 0$ be small enough such that $\smash{A_0 \times A_1 \subset \mms_\ll^2}$, where $\smash{A_0 := \sfB^\tsep(x,\varepsilon)\cap E}$ and $A_1 := (\bar{\sfB}^\tsep(x,R+\delta R) \setminus \sfB^\tsep(x,R))\cap E$. Thus $(\meas[A_0]^{-1}\,\meas\mres{A_0},\meas[A_1]^{-1}\,\meas\mres{A_1})$ is strongly time\-like $p$-dualizable according to  \autoref{Re:Strong timelike}. Now we  observe  that $\smash{A_t \subset (\sfB^\tsep(x,r+\delta r)\setminus \sfB^\tsep(x,r))\cap E}$. By the reverse triangle inequality for $\tsep$, we get $\Theta \leq R+\delta R$. Employing \autoref{Pr:Brunn-Minkowski}, we therefore obtain
\begin{align*}
&\meas\big[(\bar{\sfB}^\tsep(x,r+\delta r)\setminus \sfB^\tsep(x,r))\cap E\big]^{1/N}\\
&\qquad\qquad \geq \tau_{K,N}^{(1-r/R)}(R+\delta R)\,\meas\big[\sfB^\tsep(x,\varepsilon)\cap E\big]^{1/N}\\
&\qquad\qquad\qquad\qquad + \tau_{K,N}^{(r/R)}(R+\delta R)\, \meas\big[(\bar{\sfB}^\tsep(x,R+\delta R)\setminus \sfB^\tsep(x,R))\cap E\big]^{1/N}\\
&\qquad\qquad \geq \tau_{K,N}^{(r/R)}(R+\delta R)\, \meas\big[(\bar{\sfB}^\tsep(x,R+\delta R)\setminus \sfB^\tsep(x,R))\cap E\big]^{1/N}.
\end{align*}
Writing out the expression $\smash{\tau_{K,N}^{(r/R)}(R+\delta R)^N}$ and finally sending $\delta\to 0$ implies the claimed lower bound for $\rms_r/\rms_R$.

As in the proof of \cite[Thm.~2.3]{sturm2006b}, we argue that $\rmv$ is locally Lipschitz continuous on $(0,\infty)$. Hence, it is  differentiable $\smash{\Leb^1}$-a.e.~on $(0,\infty)$, and $\smash{\rmv_r = \int_0^r \dot{\rmv}_s\d s}$ for every $r > 0$. Therefore, the claimed lower bound on $\rmv_r/\rmv_R$  follows from the previous part in com\-bination with \cite[Lem.~18.9]{villani2009}.
\end{proof}

\begin{remark} If $K=0$, the estimates from \autoref{Th:BG}  become
\begin{align*}
\frac{\rms_r}{\rms_R} &\geq \Big[\frac{r}{R}\Big]^{N-1},\\
\frac{\rmv_r}{\rmv_R} &\geq \Big[\frac{r}{R}\Big]^N.
\end{align*}
In fact, these estimates hold for $N=1$ as well.
\end{remark}

Incidentally, from \autoref{Th:BG} we deduce the following natural and sharp upper bound on the $\tsep$-Hausdorff dimension as introduced and studied in \cite{mccann2021}. Our results improve \cite[Thm.~5.2, Cor.~5.3]{mccann2021}. We refer to \cite{mccann2021} for details about the notions presented in the subsequent \autoref{Cor:HD}; recall also \autoref{Ex:Low reg spt}.

\begin{corollary}\label{Cor:HD} Let $(\mms,\langle\cdot,\cdot\rangle)$ be a Lipschitz continuous, globally hyperbolic spacetime of dimension $n\in\N$, cf.~\autoref{Ex:Low reg spt}. Assume its induced Lo\-rentzian geodesic space, with $\meas$ being the Lorentzian volume measure induced by $\langle\cdot,\cdot\rangle$, to obey $\smash{\wTCD_p(K,N)}$ for some $p\in (0,1)$, $K\in\R$, and $N\in[1,\infty)$. Then the geometric dimension $\smash{\dim^\tsep\mms}$ in the sense of \textnormal{\cite[Def.~3.1]{mccann2021}} satisfies
\begin{align*}
n=\dim^\tsep\mms \leq N.
\end{align*}
\end{corollary}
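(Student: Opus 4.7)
The equality $n = \dim^\tsep\mms$ for a Lipschitz globally hyperbolic spacetime is part of the analysis in \cite{mccann2021}, so the essential new content is the upper bound $\dim^\tsep\mms \leq N$. The idea is to apply the sharp timelike Bishop--Gromov inequality of Theorem \ref{Th:BG} at a fixed point $x \in \mms$ and to compare its small-scale lower bound on volumes of $\tsep$-balls against the Euclidean-type upper bound that is available in a Lipschitz Lorentzian manifold; the two bounds can be reconciled only when $n \leq N$.

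Fix $x \in \mms$ and choose a compact $\tsep$-star-shaped set $E \subset I^+(x) \cup \{x\}$ with respect to $x$ which contains the initial portions of an $(n-1)$-parameter family of future-directed timelike geodesics issuing from $x$; the existence of such an $E$, together with $\rmv_R > 0$ for all sufficiently small $R > 0$, is guaranteed by $\scrK$-global hyperbolicity and by the Lipschitz spacetime structure (Example \ref{Ex:Low reg spt}). Applying Theorem \ref{Th:BG} to this $E$, and combining it with the elementary asymptotic
\[
\mathfrak{v}_{K,N}(r)^N = \int_0^r \mathfrak{s}_{K,N-1}(s)^{N-1}\,\d s \sim \frac{r^N}{N} \quad \textnormal{as } r \to 0^+,
\]
we obtain a constant $C > 0$ with $\rmv_r \geq C\,r^N$ for all sufficiently small $r > 0$.

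On the other hand, Lipschitz regularity of $\langle\cdot,\cdot\rangle$ delivers a matching upper bound of Euclidean type. In local coordinates about $x$ in which the metric at $x$ coincides with the Minkowski form, a proper-time parametrized timelike geodesic of $\tsep$-length $r$ has coordinate length $O(r)$, and the Lorentzian volume measure is comparable with Lebesgue measure. Hence $E \cap \bar{\sfB}^\tsep(x,r)$ lies inside the coordinate ball of radius $O(r)$ and has $\meas$-measure at most $C'\,r^n$ for small $r$. Combining the two estimates yields $C\,r^N \leq C'\,r^n$ for all sufficiently small $r > 0$, which forces $n \leq N$.

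\textbf{Main obstacle.} The delicate step is justifying $\rmv_r = O(r^n)$ in the merely Lipschitz setting, where a smooth exponential map and honest normal coordinates are not available. The cleanest way around this is to invoke the Lipschitz analogue of convex neighborhoods constructed in \cite{grafling} --- already used in Example \ref{Ex:Low reg spt} to establish regularity of the induced Lorentzian space --- which provides uniform comparison between $\tsep$ near $x$ and its Minkowski model, precisely the input required to bound $\rmv_r$ from above by a constant multiple of $r^n$.
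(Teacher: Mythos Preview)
Your proposal is correct and takes essentially the same approach as the paper. The paper does not give an explicit proof; it states that the result follows from the sharp Bishop--Gromov inequality (Theorem~\ref{Th:BG}) combined with the analysis of \cite{mccann2021}, and you have correctly unpacked what that entails. One minor difference of emphasis: the argument in \cite[Thm.~5.2]{mccann2021} bounds $\dim^\tsep\mms$ directly from the Bishop--Gromov growth rate via the $\tsep$-Hausdorff-measure definition of geometric dimension, whereas you instead establish $n\le N$ by matching the lower bound $\rmv_r \gtrsim r^N$ against the manifold upper bound $\rmv_r \lesssim r^n$ and then invoke the equality $n=\dim^\tsep\mms$ from \cite{mccann2021} as a black box. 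Since that equality is indeed proven in \cite{mccann2021} for Lipschitz spacetimes independently of any curvature hypothesis, your route is equivalent and arguably more transparent.
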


\begin{remark}\label{Re:HD2} As in \cite[Thm.~5.2]{mccann2021}, which is proven under $\smash{\wTCD_p^e(K,N)}$, assuming $\smash{\wTCD_p^*(K,N)}$ instead of $\smash{\wTCD_p(K,N)}$ in  \autoref{Cor:HD} would only lead to the conclusion $n = \dim^\tsep\mms \leq N+1$ by \autoref{Th:Reduced BG}.
\end{remark}

\subsubsection{Versions in the reduced case}\label{Sub:Versions reduced} With evident modifications, the following nonsharp versions of \autoref{Pr:Brunn-Minkowski}, \autoref{Cor:Bonnet-Myers}, \autoref{Th:BG}, and \autoref{Cor:HD} are readily verified under the more general $\smash{\wTCD_p^*(K,N)}$ condition.  The deduced inequalities are analogous to their counterparts drawn from the $\smash{\wTCD_p^e(K,N)}$ condition in  \cite[Prop.~3.4, Prop.~3.5, Prop.~3.6]{cavalletti2020}.

In view of \autoref{Th:Reduced BG} below, recall the definition of $\mathfrak{v}_{K,N}$ from \eqref{Eq:integral def}.

\begin{proposition}\label{Propos} Assume $\smash{\wTCD_p^*(K,N)}$ for some $p\in (0,1)$, $K\in\R$, and $N\in[1,\infty)$. Let $A_0,A_1\subset\mms$ be relatively compact Borel sets which satisfy  $\meas[A_0]\,\meas[A_1]>0$, and assume strong timelike $p$-dualizability of 
\begin{align*}
(\mu_0,\mu_1) := (\meas[A_0]^{-1}\,\meas\mres A_0, \meas[A_1]^{-1}\,\meas\mres A_1).
\end{align*}
Let $\Theta$ be as in \eqref{Eq:THETA}. Then for every $t\in [0,1]$ and every $N'\geq N$,
\begin{align*}
\meas[A_t]^{1/N'} \geq \sigma_{K,N'}^{(1-t)}(\Theta)\,\meas[A_0]^{1/N'} + \sigma_{K,N'}^{(t)}(\Theta)\,\meas[A_1]^{1/N'}.
\end{align*}

Assuming $\smash{\TCD_p^*(K,N)}$ in place of $\smash{\wTCD_p^*(K,N)}$, the same conclusion holds if $(\mu_0,\mu_1)$ is merely timelike $p$-dualizable.
\end{proposition}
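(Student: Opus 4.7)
The argument is a direct transcription of the proof of \autoref{Pr:Brunn-Minkowski}, with the distortion coefficients $\smash{\tau_{K,N'}^{(r)}}$ replaced by $\smash{\sigma_{K,N'}^{(r)}}$. The input is now the $\smash{\wTCD_p^*(K,N)}$ (respectively, $\smash{\TCD_p^*(K,N)}$) inequality applied to the uniform marginals on $A_0$ and $A_1$. I will reduce to the case $N' > 1$ and $\meas[A_t] < \infty$.

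\textbf{Main steps.} First, I set $\mu_0 := \meas[A_0]^{-1}\,\meas\mres A_0$ and $\mu_1 := \meas[A_1]^{-1}\,\meas\mres A_1$. By the hypothesis of strong timelike $p$-dualizability (or timelike $p$-dualizability in the $\smash{\TCD_p^*(K,N)}$ case), the reduced timelike curvature-dimension condition produces a timelike proper-time parametrized $\smash{\ell_p}$-geodesic $(\mu_t)_{t\in[0,1]}$ together with a timelike $p$-dualizing coupling $\pi\in\Pi_\ll(\mu_0,\mu_1)$ such that the defining inequality of \autoref{Def:TCD*} holds. Since every such curve $\gamma\in\TGeo^\tsep(\mms)$ with $\gamma_0\in A_0$ and $\gamma_1\in A_1$ satisfies $\gamma_t\in A_t$, the push-forward structure of $\mu_t$ gives $\supp\mu_t\subset A_t$; in particular $\meas[A_t]>0$.

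Second, writing $\mu_t = \rho_t\,\meas + \mu_t^\perp$, Jensen's inequality applied to the concave function $r\mapsto r^{1-1/N'}$ and the probability measure $\meas[A_t]^{-1}\,\meas\mres A_t$ yields
\begin{align*}
-\scrS_{N'}(\mu_t) = \int_{A_t}\rho_t^{1-1/N'}\d\meas \leq \meas[A_t]\,\Big[\!\fint_{A_t}\rho_t\d\meas\Big]^{1-1/N'} \leq \meas[A_t]^{1/N'},
\end{align*}
using $\int_{A_t}\rho_t\d\meas\leq 1$ in the last step. On the other side, plugging $\rho_0 \equiv \meas[A_0]^{-1}$ on $A_0$ and $\rho_1\equiv\meas[A_1]^{-1}$ on $A_1$ into the $\smash{\wTCD_p^*(K,N)}$ inequality gives
\begin{align*}
-\scrS_{N'}(\mu_t) \geq \meas[A_0]^{1/N'}\!\int_{\mms^2}\!\!\sigma_{K,N'}^{(1-t)}(\tsep)\d\pi + \meas[A_1]^{1/N'}\!\int_{\mms^2}\!\!\sigma_{K,N'}^{(t)}(\tsep)\d\pi.
\end{align*}

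Third, I use the monotonicity of $\vartheta\mapsto\sigma_{K,N'}^{(r)}(\vartheta)$: it is nondecreasing in $\vartheta$ when $K\geq 0$ (where $\Theta = \inf\tsep(A_0\times A_1)$) and nonincreasing when $K<0$ (where $\Theta=\sup\tsep(A_0\times A_1)$). Either way, since $\pi$ is concentrated on $A_0\times A_1$, we obtain pointwise the bound $\sigma_{K,N'}^{(r)}(\tsep(x^0,x^1))\geq\sigma_{K,N'}^{(r)}(\Theta)$ for $\pi$-a.e.~$(x^0,x^1)$, for both $r=t$ and $r=1-t$. Combining this with the two inequalities above yields the claim
\begin{align*}
\meas[A_t]^{1/N'}\geq \sigma_{K,N'}^{(1-t)}(\Theta)\,\meas[A_0]^{1/N'} + \sigma_{K,N'}^{(t)}(\Theta)\,\meas[A_1]^{1/N'}.
\end{align*}
The $\smash{\TCD_p^*(K,N)}$ case is identical, only the hypothesis on $(\mu_0,\mu_1)$ is weakened.

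\textbf{Main obstacle.} None of the steps is subtle; the only point requiring care is the direction of monotonicity of $\sigma_{K,N'}^{(r)}(\cdot)$ versus the choice of $\Theta$ in \eqref{Eq:THETA}. This is precisely why $\Theta$ is defined as a supremum for $K<0$ and as an infimum for $K\geq 0$, and the same convention makes \autoref{Pr:Brunn-Minkowski} work. Apart from this bookkeeping, the argument is a routine transcription of the proof of \autoref{Pr:Brunn-Minkowski}.
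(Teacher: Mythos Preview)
Your proposal is correct and follows exactly the approach the paper takes: the paper does not write out a separate proof for this proposition but states it is obtained by ``evident modifications'' of the proof of \autoref{Pr:Brunn-Minkowski}, replacing $\tau_{K,N'}^{(r)}$ by $\sigma_{K,N'}^{(r)}$ throughout. Your write-up is in fact more explicit than the paper's treatment, and your identification of the monotonicity of $\vartheta\mapsto\sigma_{K,N'}^{(r)}(\vartheta)$ versus the sign of $K$ as the only point requiring care is spot on.
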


\begin{corollary}\label{Cor:Reduced BM} Assume the $\smash{\wTCD_p^*(K,N)}$ condition for some $p\in(0,1)$, $K>0$, and $N\in [1,\infty)$. Then 
\begin{align*}
\sup\tsep(\mms^2) \leq \pi\sqrt{\frac{N}{K}}.
\end{align*}
\end{corollary}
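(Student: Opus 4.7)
I would follow the same contradiction scheme as in the proof of \autoref{Cor:Bonnet-Myers}, but substitute the sharp timelike Brunn--Minkowski inequality of \autoref{Pr:Brunn-Minkowski} with its reduced counterpart \autoref{Propos}. The key point is that the distortion coefficient $\sigma_{K,N}^{(t)}(\vartheta)$ from \eqref{Eq:DIS COF} becomes infinite precisely when $\kappa\vartheta^2 = (K/N)\vartheta^2 \geq \pi^2$, i.e.~when $\vartheta \geq \pi\sqrt{N/K}$ (whereas $\tau_{K,N}^{(t)}$ involves $\sigma_{K,N-1}^{(t)}$, giving the threshold $\pi\sqrt{(N-1)/K}$ used in \autoref{Cor:Bonnet-Myers}).

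Concretely, suppose for contradiction that some $z_0,z_1\in\mms$ satisfy
\begin{align*}
\tsep(z_0,z_1)\geq \pi\sqrt{\frac{N}{K}}+4\varepsilon
\end{align*}
for some $\varepsilon>0$. By continuity of $\tsep$ (which holds under \autoref{Ass:ASS} as noted in \autoref{Sec:GHYP}), I would choose $\delta>0$ and $x_0,x_1\in\mms$ with $A_0:=\sfB^\met(x_0,\delta)\subset I^+(z_0)$ and $A_1:=\sfB^\met(x_1,\delta)\subset I^-(z_1)$ such that
\begin{align*}
\inf\tsep(A_0\times A_1)\geq \pi\sqrt{\frac{N}{K}}+\varepsilon.
\end{align*}
In particular $A_0\times A_1\subset\mms_\ll^2$, so by \autoref{Re:Strong timelike} the normalized pair $(\meas[A_0]^{-1}\,\meas\mres A_0,\meas[A_1]^{-1}\,\meas\mres A_1)$ is strongly timelike $p$-dualizable, which is exactly the hypothesis needed to apply \autoref{Propos}.

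Applying \autoref{Propos} with $t=1/2$ and $N'=N$, and noting that $\Theta=\inf\tsep(A_0\times A_1)$ (since $K>0$) satisfies $K\Theta^2/N \geq \pi^2+\varepsilon'$ for some $\varepsilon'>0$, the defining formula \eqref{Eq:DIS COF} forces $\sigma_{K,N}^{(1/2)}(\Theta)=\infty$, and hence $\meas[A_{1/2}]^{1/N}=\infty$. On the other hand, every $\gamma\in\TGeo^\tsep(\mms)$ with $\gamma_0\in A_0$ and $\gamma_1\in A_1$ is in particular a causal curve with endpoints in $J^+(z_0)\cap J^-(z_1)=J(z_0,z_1)$, so $A_{1/2}\subset J(z_0,z_1)$; global hyperbolicity (\autoref{Def:GHYPPPPP}) makes $J(z_0,z_1)$ compact, whence $\meas[A_{1/2}]<\infty$ by the Radon property of $\meas$. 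This contradiction establishes the claim.

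\textbf{Main obstacle.} The argument is essentially routine once \autoref{Propos} is in hand; the only subtle point is verifying that the threshold computed from $\sigma_{K,N}^{(t)}$ indeed yields $\pi\sqrt{N/K}$ (rather than the sharper $\pi\sqrt{(N-1)/K}$), and that the lower bound $\Theta\geq\inf\tsep(A_0\times A_1)$ in the case $K>0$ follows from the definition \eqref{Eq:THETA}. No nontrivial obstruction is anticipated beyond this careful bookkeeping of constants and the confirmation that $A_{1/2}$ is nonempty once $\smash{\wTCD_p^*(K,N)}$ supplies a proper-time parametrized geodesic between the two endpoints.
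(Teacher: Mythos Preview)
Your proposal is correct and follows precisely the approach the paper indicates: the paper states that \autoref{Cor:Reduced BM} is obtained from \autoref{Cor:Bonnet-Myers} ``with evident modifications'' by replacing \autoref{Pr:Brunn-Minkowski} with \autoref{Propos}, and your argument carries out exactly this substitution, correctly identifying that the blow-up threshold for $\sigma_{K,N}^{(1/2)}(\Theta)$ occurs at $\Theta\geq\pi\sqrt{N/K}$ rather than $\pi\sqrt{(N-1)/K}$.
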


\begin{theorem}\label{Th:Reduced BG} Assume $\smash{\wTCD_p^*(K,N)}$ for some $p\in (0,1)$, $K\in\R$, and $N\in (1,\infty)$. Let $\smash{E\subset I^+(x)\cup\{x\}}$ be a compact set which is $\tsep$-star-shaped with respect to $x\in\mms$. Then for every $r,R > 0$ with $\smash{r < R \leq \pi\sqrt{N/\max\{K,0\}}}$,
\begin{align*}
\frac{\rms_r}{\rms_R} &\geq \Big[\frac{\mathfrak{s}_{K,N}(r)}{\mathfrak{s}_{K,N}(R)}\Big]^N,\\
\frac{\rmv_r}{\rmv_R} &\geq \Big[\frac{\mathfrak{v}_{K,N+1}(r)}{\mathfrak{v}_{K,N+1}(R)}\Big]^{N+1}.
\end{align*}
\end{theorem}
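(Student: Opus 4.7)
\textbf{Proof plan for Theorem~\ref{Th:Reduced BG}.} The strategy is a direct adaptation of the proof of Theorem~\ref{Th:BG}, substituting the reduced Brunn--Minkowski inequality from Proposition~\ref{Propos} in place of the sharp Brunn--Minkowski inequality of Proposition~\ref{Pr:Brunn-Minkowski}. The coefficient bookkeeping will then produce the exponent $N$ (respectively $N+1$) rather than $N-1$ (respectively $N$), since $\sigma_{K,N}^{(t)}(\vartheta) = \mathfrak{s}_{K,N}(t\vartheta)/\mathfrak{s}_{K,N}(\vartheta)$ by \eqref{Eq:DIS COF} and Definition~\ref{Def:Dist coeff}, whereas $\tau_{K,N}^{(t)}$ carries the shift $N \mapsto N-1$.

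For the surface inequality, I would first reduce to $K > 0$ (the other cases being analogous). Fix $r < R$, let $t := r/R$, and for small $\varepsilon, \delta > 0$ consider the relatively compact Borel sets
\begin{align*}
A_0 := \sfB^\tsep(x,\varepsilon) \cap E, \qquad A_1 := \bigl(\bar{\sfB}^\tsep(x, R + \delta R) \setminus \sfB^\tsep(x,R)\bigr) \cap E.
\end{align*}
For $\varepsilon$ small enough one has $A_0 \times A_1 \subset \mms_\ll^2$, which by Remark~\ref{Re:Strong timelike} yields strong timelike $p$-dualizability of the normalized uniform measures on $A_0,A_1$. Since $E$ is $\tsep$-star-shaped with respect to $x$, every timelike proper-time parametrized geodesic from a point of $A_0$ to a point of $A_1$ stays in $E$ at intermediate times, and the reverse triangle inequality \eqref{Eq:Reverse tau} gives $A_t \subset (\bar{\sfB}^\tsep(x, r + \delta r)\setminus \sfB^\tsep(x,r))\cap E$ together with $\Theta \leq R + \delta R$. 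Applying Proposition~\ref{Propos} at exponent $N' = N$ and discarding the non-negative first term yields
\begin{align*}
\meas\bigl[(\bar{\sfB}^\tsep(x, r+\delta r)\setminus \sfB^\tsep(x,r))\cap E\bigr]^{1/N} \geq \sigma_{K,N}^{(r/R)}(R+\delta R)\,\meas[A_1]^{1/N}.
\end{align*}
Raising to the $N$-th power, dividing by $\delta$, and letting $\delta \to 0$ produces the claimed bound on $\rms_r/\rms_R$ upon noting $\sigma_{K,N}^{(r/R)}(R) = \mathfrak{s}_{K,N}(r)/\mathfrak{s}_{K,N}(R)$.

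For the volume inequality, I would argue exactly as in the closing paragraph of the proof of Theorem~\ref{Th:BG}: the function $r \mapsto \rmv_r$ is locally Lipschitz on $(0,\infty)$, hence differentiable almost everywhere with $\rmv_r = \int_0^r \dot{\rmv}_s \d s$ and $\dot{\rmv}_s \leq \rms_s$. Invoking \cite[Lem.~18.9]{villani2009} with the surface bound just established then converts the pointwise ratio estimate with exponent $N$ on the model density $\mathfrak{s}_{K,N}$ into the volumetric ratio estimate with exponent $N+1$ on $\mathfrak{v}_{K,N+1}$. The main obstacle is no deep new idea but rather checking that the coefficient shift $\sigma_{K,N}$ versus $\tau_{K,N}$ cleanly accounts for the loss of one dimension in the exponents, and that the range $R \leq \pi\sqrt{N/\max\{K,0\}}$ is exactly the regime where $\sigma_{K,N}^{(t)}(\vartheta)$ remains finite, cf.~\eqref{Eq:DIS COF}.
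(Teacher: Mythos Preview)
Your proposal is correct and matches the paper's approach exactly: the paper states Theorem~\ref{Th:Reduced BG} without proof, indicating only that it follows by ``evident modifications'' of the proof of Theorem~\ref{Th:BG} using Proposition~\ref{Propos} in place of Proposition~\ref{Pr:Brunn-Minkowski}, which is precisely what you spell out. The coefficient tracking you outline---replacing $\tau_{K,N}$ by $\sigma_{K,N}$, hence shifting the exponent from $N-1$ to $N$ in the surface estimate and from $N$ to $N+1$ in the volume estimate via \cite[Lem.~18.9]{villani2009}---is the entire content of the adaptation.
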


\subsection{Stability}\label{Sec:Stability TCD} In this section, we prove a key property of our timelike curvature-dimension conditions, namely the weak stability of the notions from \autoref{Def:TCD*} and \autoref{Def:TCD}, cf.~\autoref{Th:Stability TCD}. The relevant notion of convergence of measured Lorentzian pre-length spaces, see \autoref{Def:Convergence}, is due to \cite[Sec. 3.3]{cavalletti2020}. Given our \autoref{Ass:ASS}, we add the hypothesis of regularity to it.

\begin{definition}\label{Def:Lor isometry} For Lorentzian pre-length spaces $\smash{(\mms^i,\met^i,\ll^i,\leq^i,\tsep^i)}$, $i\in\{0,1\}$, we term a map $\smash{\iota\colon \mms^0\to\mms^1}$ a \emph{Lorentzian isometric embedding} if $\iota$ is a topolo\-gical embedding such that for every $x,y\in\mms^0$,
\begin{enumerate}[label=\textnormal{\alph*.}]
\item $x\leq^0 y$ if and only if $\iota(x) \leq^1\iota(y)$, and
\item $\tau^1(\iota(x),\iota(y)) = \tau^0(x,y)$.
\end{enumerate}
\end{definition}

Given any $k\in\N_\infty := \N\cup\{\infty\}$ let $(\mms_k,\met_k,\meas_k,\ll_k,\leq_k,\tsep_k)$ be a fixed measured Lorentzian pre-length space which, as in \eqref{Eq:X}, we will abbreviate by $\scrX_k$.

\begin{definition}\label{Def:Convergence} We define $(\scrX_k)_{k\in\N}$ to converge to $\scrX_\infty$ if the following holds.
\begin{enumerate}[label=\textnormal{\alph*\textcolor{black}{.}}]
\item\label{La:Blurr1reg} There is a causally closed, $\scrK$-globally hyperbolic, regular measured Lorentzian geodesic space $\smash{(\mms,\met,\ll,\leq,\tsep)}$ such that for every $k\in\N_\infty$ there exists a Lorentzian isometric embedding $\smash{\iota_k\colon \mms_k\to \mms}$.
\item In terms of the maps $\iota_k$ from \ref{La:Blurr1reg}, the sequence $((\iota_k)_\push\meas_k)_{k\in\N}$ converges to $(\iota_\infty)_\push\meas_\infty$ in duality with $\Cont_\comp(\mms)$, i.e.~for every $\varphi\in\Cont_\comp(\mms)$ we have
\begin{align*}
\lim_{k\to\infty}\int_\mms \varphi\d(\iota_k)_\push\meas_k = \int_\mms \varphi\d(\iota_\infty)_\push\meas_\infty.
\end{align*}
\end{enumerate}
\end{definition}

\begin{remark} If $(\scrX_k)_{k\in\N}$ converges to $\scrX_\infty$,  $\scrX_k$ is automatically causally closed, $\scrK$-globally hyperbolic, regular, and geodesic for every $k\in\N_\infty$.
\end{remark}

The following \autoref{Le:CM lemma} is an implicit consequence from the proof of \cite[Lem. 3.15]{cavalletti2020} and will be useful for the proof of \autoref{Th:Stability TCD}. 

\begin{lemma}\label{Le:CM lemma} Assume that $\scrX$ is causally closed, globally hyperbolic, and geodesic. Let the pair $\smash{(\mu_0,\mu_1)=(\rho_0\,\meas,\rho_1\,\meas)\in\scrP_\comp^\ac(\mms,\meas)^2}$ admit some $\smash{\ell_p}$-optimal coupling $\smash{\pi\in\Pi_\ll(\mu_0,\mu_1)}$, $p\in (0,1]$. Then there exist sequences $(\pi^n)_{n\in\N}$ in $\scrP(\mms^2)$ and $(a_n)_{n\in\N}$ in $[1,\infty)$ with the properties 
\begin{enumerate}[label=\textnormal{\textcolor{black}{(}\roman*\textcolor{black}{)}}]
\item $(a_n)_{n\in\N}$ converges to $1$,
\item $\pi^n[\mms_\ll^2]=1$ for every $n\in\N$,
\item $\smash{\pi^n = \rho^n\,\meas^{\otimes 2}\in \scrP^\ac(\mms^2,\meas^{\otimes 2})}$ with $\smash{\rho^n\in\Ell^\infty(\mms^2,\meas^{\otimes 2})}$ for every $n\in\N$,
\item $(\pi^n)_{n\in\N}$ converges weakly to $\pi$, 
\item\label{DENS} the density $\smash{\rho_0^n}$ and $\smash{\rho_0^n}$ of the first and second marginal of $\pi_n$ with respect to $\meas$ is no larger than $a_n\,\rho_0$ and $a_n\,\rho_1$, respectively, for every $n\in\N$, and
\item $\smash{\rho_0^n\to \rho_0}$ and $\smash{\rho_1^n\to\rho_1}$ in $\Ell^1(\mms,\meas)$ as $n\to\infty$.
\end{enumerate}
\end{lemma}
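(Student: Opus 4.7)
The plan is to construct $\pi^n$ in two stages: first, \textbf{truncate} $\pi$ onto a set on which both marginal densities are bounded; then \textbf{average within a product partition} to gain absolute continuity while preserving the truncated marginals. The two stages fit together because $\supp\pi$ is compact inside the open set $\mms_\ll^2$, so by Lebesgue's number lemma there exists $\delta>0$ such that any subset of a neighborhood of $\supp\pi$ of diameter at most $\delta$ remains in $\mms_\ll^2$, which lets the partition blocks carrying positive mass be chronological automatically.

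For the first stage, I would pick $M_n\to\infty$ and set $G_n:=\{\rho_0\leq M_n\}\times\{\rho_1\leq M_n\}\subset\mms^2$. Since $\rho_0,\rho_1<\infty$ $\meas$-a.e., one has $\pi[G_n]\geq 1-\mu_0[\rho_0>M_n]-\mu_1[\rho_1>M_n]\to 1$. Then $\tilde\pi^n:=\pi[G_n]^{-1}\,\pi\mres G_n$ is chronological, converges weakly to $\pi$, and the disintegration $\pi=\int \delta_x\otimes\pi_x\d\mu_0(x)$ yields that its first marginal has density
\begin{align*}
\tilde\rho_0^n(x)=\pi[G_n]^{-1}\,\One_{\{\rho_0\leq M_n\}}(x)\,\pi_x[\{\rho_1\leq M_n\}]\,\rho_0(x)\leq \rho_0(x)/\pi[G_n]
\end{align*}
with respect to $\meas$; for $\mu_0$-a.e.\ $x$ the probability $\pi_x$ does not charge the $\meas$-null set $\{\rho_1=\infty\}$, so dominated convergence gives $\tilde\rho_0^n\to \rho_0$ in $\Ell^1(\mms,\meas)$, and symmetrically for the second marginal. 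This yields properties (v) and (vi) for $\tilde\pi^n$ with $a_n:=1/\pi[G_n]\to 1$.

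For the second stage, I would choose finite Borel partitions $(A_k^n)_k$ of $\supp\mu_0$ and $(B_l^n)_l$ of $\supp\mu_1$ with maximal block diameter $\delta_n\to 0$ (so for $n$ large every block $A_k^n\times B_l^n$ meeting $\supp\pi$ lies in $\mms_\ll^2$), and define
\begin{align*}
\pi^n:=\sum_{(k,l)\,:\,\tilde\pi^n[A_k^n\times B_l^n]>0}\frac{\tilde\pi^n[A_k^n\times B_l^n]}{\tilde\mu_0^n[A_k^n]\,\tilde\mu_1^n[B_l^n]}\,(\tilde\mu_0^n\mres A_k^n)\otimes (\tilde\mu_1^n\mres B_l^n).
\end{align*}
Telescoping $\sum_l \tilde\pi^n[A_k^n\times B_l^n]=\tilde\mu_0^n[A_k^n]$ shows that the marginals of $\pi^n$ equal $\tilde\mu_0^n,\tilde\mu_1^n$ exactly, so (v) and (vi) transfer from $\tilde\pi^n$ to $\pi^n$; property (ii) is built into the choice of $\delta_n$; and (iii) holds for each fixed $n$ since only finitely many blocks contribute with positive mass. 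For (iv), uniform continuity of any $\varphi\in\Cont_\bounded(\mms^2)$ on a fixed compact neighborhood of $\supp\pi$ gives the blockwise error bound
\begin{align*}
\Big|\int\varphi\d\pi^n-\int\varphi\d\tilde\pi^n\Big|\leq 2\,\omega_\varphi(\delta_n)\to 0,
\end{align*}
where $\omega_\varphi$ is a modulus of continuity of $\varphi$, so weak convergence $\pi^n\to\pi$ follows from $\tilde\pi^n\to\pi$.

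The main obstacle is reconciling (iii) with (v). A direct product-block averaging of $\pi$ itself would inherit the $\rho_0(x)\rho_1(y)$-unboundedness of its densities, whereas replacing the averaging measures by $\meas\mres A_k^n\otimes \meas\mres B_l^n$ produces marginals of conditional-expectation type $\bbE_\meas[\rho_i\mid\sigma_n]$, which converge to $\rho_i$ in $\Ell^1$ but need not be dominated pointwise by $a_n\rho_i$ with $a_n\to 1$. The truncation step enforces $\tilde\rho_i^n\leq a_n\rho_i$ by construction, and the block formula, designed to preserve marginals exactly, transfers these controls verbatim to $\pi^n$.
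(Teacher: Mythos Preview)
The paper does not give a self-contained argument here; it only records the lemma as an implicit byproduct of the proof of \cite[Lem.~3.15]{cavalletti2020}. Your two-stage construction --- truncation followed by product-block averaging --- is a valid and transparent route to all six properties, and the observation that block averaging preserves the marginals of $\tilde\pi^n$ exactly is precisely what reconciles (iii) with (v).

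There is one gap. You assert that $\supp\pi$ is compact \emph{inside} $\mms_\ll^2$ and base the Lebesgue-number argument on this. But $\pi[\mms_\ll^2]=1$ does not force $\supp\pi\subset\mms_\ll^2$; since $\mms_\ll^2$ is open, the support may well contain light-cone points on its boundary. Consequently it is not clear that every small product block carrying $\tilde\pi^n$-mass lies entirely in $\mms_\ll^2$. The repair is minor: in stage one, also intersect $G_n$ with a compact $K_n\subset\mms_\ll^2$ satisfying $\pi[K_n]\geq 1-1/n$ (inner regularity), and in stage two choose $\delta_n$ strictly smaller than the distance from $K_n$ to $(\mms_\ll^2)^\sfc$. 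Then every block with positive $\tilde\pi^n$-mass meets $K_n$ and is therefore contained in $\mms_\ll^2$. Property (vi) survives the extra restriction automatically: since $\tilde\rho_0^n\leq a_n\rho_0$ with $a_n\to 1$ and both sides are probability densities, one has $\Vert\tilde\rho_0^n-\rho_0\Vert_{\Ell^1(\mms,\meas)}\leq 2(a_n-1)\to 0$ without any pointwise analysis.
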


Furthermore, in the notation of \autoref{Le:CM lemma}, given any $K\in\R$, $N\in[1,\infty)$, and $t\in [0,1]$, as well as a measure $\pi\in\Pi(\mu_0,\mu_1)$ with marginals $\mu_0 = \rho_0\,\meas\in\smash{\scrP^\ac(\mms,\meas)}$ and $\mu_1 = \rho_1\,\meas\in\smash{\scrP^\ac(\mms,\meas)}$, for brevity we define
\begin{align}\label{Eq:TDdef}
\begin{split}
\scrT_{K,N}^{(t)}(\pi) &:= -\int_{\mms^2} \tau_{K,N}^{(1-t)}(\tsep(x^0,x^1))\,\rho_0(x^0)^{-1/N}\d\pi(x^0,x^1)\\
&\qquad\qquad -\int_{\mms^2} \tau_{K,N}^{(t)}(\tsep(x^0,x^1))\,\rho_1(x^1)^{-1/N}\d\pi(x^0,x^1).
\end{split}
\end{align}

We  note the following variant of \cite[Lem.~3.2]{sturm2006b}. The latter proof easily carries over to the setting of \autoref{Le:Const perturb}, where the marginal densities, instead of being assumed to be the same throughout the sequence as in \cite{sturm2006b}, are allowed to be perturbed by constants vanishing in the limit.  Compare with \autoref{Le:USC lemma}.

\begin{lemma}\label{Le:Const perturb} Suppose  $\meas\in\scrP(\mms)$. Let $(a_k)_{k\in\N}$ be a sequence in $[1,\infty)$ converging to $1$, and let $(b_k)_{k\in\N}$ be a sequence of nonnegative real numbers converging to $0$. Define $\smash{\nu_{k,0},\nu_{k,1}\in\scrP^\ac(\mms,\meas)}$, $k\in\N$, by
\begin{align*}
\nu_{k,0} &:= (a_k+b_k)^{-1}\,(a_k\,\mu_0 + b_k\,\meas) = (a_k+b_k)^{-1}\,(a_k\,\rho_0 + b_k)\,\meas,\\
\nu_{k,1} &:= (a_k+b_k)^{-1}\,(a_k\,\mu_1 + b_k\,\meas) = (a_k+b_k)^{-1}\,(a_k\,\rho_1 + b_k)\,\meas.
\end{align*}
Then for every sequence $(\pi_k)_{k\in\N}$ of  measures $\pi_k\in \Pi(\nu_{k,0},\nu_{k,1})$ which converges weakly to $\pi\in\Pi(\mu_0,\mu_1)$,  every $K\in\R$, every $N\in[1,\infty)$, and every $t\in[0,1]$, 
\begin{align*}
\limsup_{k\to\infty}\scrT_{K,N}^{(t)}(\pi_k) \leq \scrT_{K,N}^{(t)}(\pi).
\end{align*}
\end{lemma}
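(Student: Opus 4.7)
The strategy is to mimic Sturm's proof of the analogous statement \textnormal{\cite[Lem.~3.2]{sturm2006b}} on metric measure spaces, addressed therein only for the unperturbed case $(a_k,b_k) \equiv (1,0)$; the only new ingredient is a uniform control on the vanishing perturbation of the marginals. By the symmetry of \eqref{Eq:TDdef} in its two summands, it suffices to verify
\begin{equation*}
\liminf_{k\to\infty}\int_{\mms^2}\tau_0(x_0,x_1)\,\rho_{k,0}(x_0)^{-1/N}\d\pi_k \;\geq\; \int_{\mms^2}\tau_0(x_0,x_1)\,\rho_0(x_0)^{-1/N}\d\pi,
\end{equation*}
with $\tau_0 := \tau_{K,N}^{(1-t)}(\tsep(\cdot,\cdot))$; the $\rho_1$-term is identical. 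The case $N=1$ is direct since $\tau_{K,1}^{(r)}(\vartheta)=r$, so we may assume $N>1$. We may also assume $\scrS_N(\mu_0)>-\infty$, else the right side is $-\infty$ and nothing is left to prove.

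The key monotonicity-based lower bound is as follows. Since $a_k\geq 1$ and $b_k\geq 0$, pointwise $\rho_{k,0}\geq a_k(a_k+b_k)^{-1}\rho_0$, and the monotonicity of $s\mapsto s^{1-1/N}$ yields $\rho_{k,0}^{1-1/N}\geq c_k\,\rho_0^{1-1/N}$ with $c_k:=(a_k/(a_k+b_k))^{1-1/N}\to 1$. Disintegrating $\pi_k$ against its first marginal as $\d\pi_k = \rho_{k,0}\d\meas\,\d q_k$ and reverting this change of variables then gives
\begin{equation*}
\int \tau_0\,\rho_{k,0}^{-1/N}\d\pi_k \;\geq\; c_k\!\int_{\{\rho_0>0\}}\!\tau_0(x_0,x_1)\,\rho_0(x_0)^{-1/N}\,\frac{(a_k+b_k)\rho_0(x_0)}{a_k\rho_0(x_0)+b_k}\d\pi_k(x_0,x_1),
\end{equation*}
in which the correction factor is bounded on $\{\rho_0>0\}$ by $(a_k+b_k)/a_k$ and converges to $1$ pointwise. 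A separate direct computation shows that the contribution of the null set $\{\rho_0=0\}$ to the left-hand integral is at most of order $(b_k/(a_k+b_k))^{1-1/N}\cdot\meas[\{\rho_0=0\}]$, which vanishes as $k\to\infty$ since $N>1$.

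The final step is to pass to the limit on the right-hand side of the displayed inequality above, and this is where the main technical obstacle lies: the integrand involves the merely measurable density $\rho_0^{-1/N}$, whereas $\pi_k\to\pi$ only weakly. This is resolved exactly as in \textnormal{\cite[Lem.~3.2]{sturm2006b}}. Since $\mu_0\in\scrP_\comp^\ac(\mms,\meas)$ and $\rho_0^{1-1/N}\in\Ell^1(\mms,\meas)$, one approximates each truncation $\rho_0^{-1/N}\,\mathbf{1}_{\{\rho_0>\varepsilon\}}\wedge M$ from below by continuous bounded functions in $\Ell^1(\mu_0)$ via Lusin's theorem, applies Portmanteau's theorem (\autoref{Th:EIN}) to the resulting continuous integrands (using continuity of $\tau_0$ under \autoref{Ass:ASS}), and finally sends $\varepsilon\downarrow 0$ and $M\uparrow\infty$. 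The perturbation $b_k\meas$ of the marginals --- the sole novelty with respect to \textnormal{\cite{sturm2006b}} --- is absorbed by the vanishing factor $c_k$ from the monotonicity step and by the total-variation convergence $\nu_{k,0}\to\mu_0$, which keeps the Lusin approximation errors uniform in $k$.
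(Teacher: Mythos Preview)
Your approach is correct and is exactly what the paper intends: the paper does not give a full proof of this lemma but simply remarks that the argument of \cite[Lem.~3.2]{sturm2006b} ``easily carries over'' once the marginal densities are allowed to be perturbed by constants vanishing in the limit, and your sketch makes precisely this adaptation explicit. A few small points worth tightening: the observation about the contribution of $\{\rho_0=0\}$ is superfluous since $\tau_0\geq 0$ makes that term nonnegative, so it can simply be dropped for the lower bound; the remark that one may assume $\scrS_N(\mu_0)>-\infty$ is vacuous (this always holds since $\meas\in\scrP(\mms)$), and what you actually need is finiteness of $\int\tau_0\,\rho_0^{-1/N}\d\pi$; and the claim $\tau_{K,1}^{(r)}(\vartheta)=r$ is only valid for $K\leq 0$, though the $K>0$, $N=1$ case is degenerate anyway.
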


\begin{remark} Of course, the assertion from \autoref{Le:Const perturb} remains valid if $\smash{\tau_{K,N}^{(1-t)}}$ and $\smash{\tau_{K,N}^{(t)}}$ are replaced by $\smash{\sigma_{K,N}^{(1-t)}}$ and $\smash{\sigma_{K,N}^{(t)}}$ in \eqref{Eq:TDdef}, respectively.
\end{remark}

The subsequent proof of \autoref{Th:Stability TCD} below  follows  \cite[Thm.~3.1]{sturm2006b}. However, by the nature of our timelike curvature-dimension conditions, we additionally  have to ensure chronological relations of many measures under consideration. Hence, the proof becomes much longer and quite technical. It may be skipped at first reading; alternatively, the reader may consult the proof of the counterpart  \autoref{Th:Stability TMCP} of \autoref{Th:Stability TCD} below for a somewhat easier argument in which one does not have to trace chronological $\smash{\ell_p}$-optimal couplings.

Let $\scrS_N^n$ denote the $N$-Rényi entropy with respect to $\meas_n$, $n\in\N_\infty$.

\begin{theorem}\label{Th:Stability TCD} Assume the convergence of $(\scrX_k)_{k\in\N}$ to $\scrX_\infty$ as in  \autoref{Def:Convergence}. Moreover, let $(K_k,N_k)_{k\in\N}$ be a sequence in $\R\times [1,\infty)$ converging to $(K_\infty,N_\infty)\in\R\times[1,\infty)$. Suppose the existence of $p\in (0,1)$ such that $\scrX_k$ obeys $\TCD_p(K_k,N_k)$ for every $k\in\N$. 
Then $\smash{\scrX_\infty}$ satisfies $\smash{\wTCD_p(K_\infty,N_\infty)}$. 

The analogous statement in which $\smash{\TCD_p(K_k,N_k)}$ and $\smash{\wTCD_p(K_\infty,N_\infty)}$ are replaced by $\smash{\TCD_p^*(K_k,N_k)}$ and $\smash{\wTCD_p^*(K_\infty,N_\infty)}$, $k\in\N$, respectively, holds too.
\end{theorem}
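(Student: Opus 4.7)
My plan is a regularize--transport--limit scheme. Identify each $\scrX_k$ with its image in the ambient space $\mms$ via $\iota_k$, so that all relevant measures and curves live in a single universe. Fix a strongly timelike $p$-dualizable pair $(\mu_0,\mu_1) = (\rho_0\meas_\infty, \rho_1\meas_\infty) \in \scrP_\comp^\ac(\mms_\infty, \meas_\infty)^2$ with witnessing chronological coupling $\pi_\infty \in \Pi_\ll(\mu_0, \mu_1)$ and distinguished $l^p$-cyclically monotone set $\Gamma \subset \mms^2_\ll$ on which every $\ell_p$-optimal coupling of $(\mu_0, \mu_1)$ must concentrate. By \autoref{Le:CM lemma} I obtain absolutely continuous chronological approximations $\pi^n = \rho^n \meas_\infty^{\otimes 2} \rightharpoonup \pi_\infty$ with bounded densities and marginal densities $\rho^n_i$ satisfying $\rho^n_i \leq a_n \rho_i$ and $\rho^n_i \to \rho_i$ in $\Ell^1(\mms, \meas_\infty)$ for some $a_n \to 1$. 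I then transport $\rho^n$ to $\scrX_k$ using $(\iota_k)_\push \meas_k \rightharpoonup (\iota_\infty)_\push \meas_\infty$: after a harmless additive perturbation $b_k \to 0$ of the marginals in the spirit of \autoref{Le:Const perturb}, this produces $\pi^n_k = \rho^n \meas_k^{\otimes 2}$ (suitably renormalized) with compactly supported $\meas_k$-absolutely continuous marginals $\mu^n_{k,i}$ and $\pi^n_k \rightharpoonup \pi^n$ as $k \to \infty$. Since $\supp \rho^n$ is compact and contained in the open set $\mms_\ll^2$, the coupling $\pi^n_k$ is chronological for all $k$ large enough; by \autoref{Re:Strong timelike} together with compactness of supports, its $\ell_p$-optimal improvement $\tilde\pi^n_k$ witnesses timelike $p$-dualizability of $(\mu^n_{k, 0}, \mu^n_{k, 1})$ in $\scrX_k$.

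Next I apply $\TCD_p(K_k, N_k)$ to the pair $(\mu^n_{k, 0}, \mu^n_{k, 1})$ to obtain a timelike proper-time parametrized $\ell_p$-geodesic $(\mu^n_{k, t})_{t \in [0, 1]}$ represented by some $\bdpi^n_k \in \OptTGeo_{\ell_p}^\tsep(\mu^n_{k, 0}, \mu^n_{k, 1})$ and a timelike $p$-dualizing coupling $\hat\pi^n_k$, satisfying the $\tau_{K_k, N'}$-Rényi inequality for every $N' \geq N_k$ and every $t \in [0, 1]$. All supports being contained in a fixed compact causal diamond of $\mms$, non-total imprisonment together with the compactness results of \autoref{App:B} makes $(\bdpi^n_k)_{k \in \N}$ tight in $\scrP(\Cont([0, 1]; \mms))$; extract a weak subsequential limit $\bdpi^n_\infty$, and let $(\mu^n_{\infty, t})_{t \in [0, 1]}$ and $\hat\pi^n_\infty := (\eval_0, \eval_1)_\push \bdpi^n_\infty$ be the associated curve and coupling. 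Upper semicontinuity of $l^p$, the reverse triangle inequality \eqref{Eq:Reverse triangle lp}, and causal closedness of $\scrX_\infty$ yield that $\hat\pi^n_\infty$ is an $\ell_p$-optimal causal coupling of $(\mu^n_0, \mu^n_1)$; continuity of the reparametrization map $\sfr$ and stability of the identity \eqref{Eq:Proper time par} under uniform limits ensure that the limit curve is again proper-time parametrized.

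The hard part will be upgrading $\hat\pi^n_\infty$ from causal to chronological. Since $\supp \mu^n_i \subset \supp \mu_i$ by $\rho^n_i \leq a_n \rho_i$, strong timelike $p$-dualizability of $(\mu_0, \mu_1)$ forces $\hat\pi^n_\infty[\Gamma] = 1$, hence chronology, giving $\bdpi^n_\infty \in \OptTGeo_{\ell_p}^\tsep(\mu^n_0, \mu^n_1)$. The Rényi inequality then passes to the limit: joint weak lower semicontinuity of Rényi along $(\iota_k)_\push \meas_k \rightharpoonup (\iota_\infty)_\push \meas_\infty$ handles the left-hand side, and \autoref{Le:Const perturb} combined with continuity of $\tau_{K, N'}^{(r)}$ in $(K, N)$ handles the right-hand side, yielding the $(K_\infty, N_\infty)$-Rényi inequality for $(\mu^n_{\infty, t})_{t \in [0, 1]}$ and $\hat\pi^n_\infty$. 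A final diagonal extraction in $n$, using $\rho^n_i \to \rho_i$ in $\Ell^1$ and uniform tightness, produces the desired timelike proper-time parametrized $\ell_p$-geodesic and dualizing coupling between the original endpoints $(\mu_0, \mu_1)$. The essential obstruction throughout is the non-stability of chronology under weak convergence; this is precisely why only the weak form $\wTCD_p$ can be concluded and why strong timelike $p$-dualizability of the initial pair is indispensable. The reduced case $\TCD_p^*(K_k, N_k) \Rightarrow \wTCD_p^*(K_\infty, N_\infty)$ is handled by the verbatim argument with $\sigma_{K, N'}^{(r)}$ replacing $\tau_{K, N'}^{(r)}$.
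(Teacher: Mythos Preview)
Your overall architecture (regularize via \autoref{Le:CM lemma}, transport to $\scrX_k$, invoke $\TCD_p$, pass to the limit, diagonalize) matches the paper's, but two steps contain genuine gaps that the paper spends most of its nine-step proof repairing.

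\textbf{The transport step does not work as written.} You set $\pi^n_k := \rho^n\,\meas_k^{\otimes 2}$, treating the $\meas_\infty^{\otimes 2}$-density $\rho^n$ as a pointwise function that can simply be integrated against a different reference measure. But $\rho^n$ is only defined $\meas_\infty^{\otimes 2}$-a.e., it need not be continuous, and the supports of $\meas_k$ and $\meas_\infty$ can be disjoint inside the ambient space $\mms$. Hence neither the definition of $\pi^n_k$ nor the claimed weak convergence $\pi^n_k \rightharpoonup \pi^n$ is justified (the latter would need $\rho^n$ to be an admissible test function for $\meas_k^{\otimes 2} \rightharpoonup \meas_\infty^{\otimes 2}$). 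The paper instead takes a $W_2$-optimal coupling $\mathfrak{q}_k$ of $\meas_k$ and $\meas_\infty$, disintegrates it to obtain transfer operators $\mathfrak{p}^k \colon \scrP^\ac(\mms,\meas_\infty) \to \scrP^\ac(\mms,\meas_k)$, and pushes $\pi^n_\infty$ through these. This is what allows the densities on the $\scrX_k$ side to be compared to those on the $\scrX_\infty$ side; the resulting error terms are then controlled via uniform continuity of $\tau_{K,N'}^{(r)}\circ\tsep$ and the decay $W_2(\meas_k,\meas_\infty)\to 0$ (see the paper's Steps~5.3--5.5 and \autoref{RE:POTEN}). Your invocation of \autoref{Le:Const perturb} alone does not substitute for this: that lemma assumes the perturbed marginals are explicit convex combinations of the \emph{fixed} limit densities with the reference measure, which is not what your $\mu^n_{k,i}$ are.

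\textbf{The chronology upgrade is misattributed.} You claim that $\supp\rho^n$ is a compact subset of $\mms_\ll^2$, but \autoref{Le:CM lemma} only gives $\pi^n[\mms_\ll^2]=1$, not support containment; the paper handles this in Step~3.4 by restricting to $\mms_\ll^2$ and carrying an explicit defect $\delta_k = \tilde\pi_k[\{\tsep=0\}]$. More seriously, you argue that strong timelike $p$-dualizability of $(\mu_0,\mu_1)$ forces $\hat\pi^n_\infty[\Gamma]=1$ because $\supp\mu^n_i \subset \supp\mu_i$. But the set $\Gamma$ in \autoref{Def:TL DUAL} characterizes $\ell_p$-optimal couplings \emph{of $\mu_0$ and $\mu_1$}, not of the sub-pair $(\mu^n_0,\mu^n_1)$; there is no reason an $\ell_p$-optimal coupling of the latter should concentrate on $\Gamma$. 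The paper's route to chronology (Steps~3.4--3.5) is different: it first replaces the transported coupling by an $\ell_p$-optimal one $\check\pi_k$ of the \emph{same} marginals, shows any weak limit $\check\pi_\infty$ is $\ell_p$-optimal for $(\mu_{\infty,0},\mu_{\infty,1})$ themselves, and only then invokes strong dualizability of the \emph{original} pair to get $\check\pi_\infty[\mms_\ll^2]=1$; a second restriction to $\mms_\ll^2$ (with defect $\varepsilon_k$) finally yields the timelike $p$-dualizable pair $(\mu_{k,0},\mu_{k,1})$. All of these defects have to be tracked through the Rényi-entropy estimate, which is why the paper's Step~5 is as long as it is.
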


\begin{proof} We only prove the first implication, the second is similar. To relax  notation, without restriction and further notice we identify $\mms_k$ with its image $\iota_k(\mms_k)$ in $\mms$, and the measure $\meas_k$ with its push-forward $(\iota_k)_\push\meas_k$ for every $k\in\N_\infty$.

\textbf{Step 1.} \textit{Reduction to compact $\mms$.} Fix any  strongly timelike $p$-dualizable pair  $(\mu_{\infty,0},\mu_{\infty,1}) = (\rho_{\infty,0}\,\meas_\infty,\rho_{\infty,1}\,\meas_\infty)\in\scrP_\comp^\ac(\mms,\meas_\infty)$. By compactness of $\supp\mu_{\infty,0}$ and $\supp\mu_{\infty,1}$, and by $\scrK$-global hyperbolicity of $\smash{\mms}$, we restrict all  arguments below to a compact subset $\smash{C\subset \mms}$ with $\smash{\meas_k[C]^{-1}\,\meas_k\mres C \to \meas_\infty[C]^{-1}\,\meas_\infty\mres C}$ weakly as $k\to\infty$. To relax notation, we will thus assume with no loss of generality  that $\smash{\mms}$ itself is compact, and that  $\smash{\meas_k\in\scrP(\mms)}$ for every $k\in\N_\infty$. In particular, without notationally expressing this property all the time, all measures will henceforth be compactly supported.

\textbf{Step 2.} \textit{Restriction of the assumptions on $\mu_{\infty,0}$ and $\mu_{\infty,1}$.} We will first assume that $\rho_{\infty,0},\rho_{\infty,1}\in\Ell^\infty(\mms,\meas_\infty)$. The general case is discussed in Step 8 below; we note for now that this conclusion will not conflict with our reductions from Step 1.

\textbf{Step 3.} \textit{Construction of a chronological recovery sequence.} The goal of this step is the construction of a sequence $\smash{(\mu_{k,0},\mu_{k,1})_{k\in\N}}$ of timelike $p$-dualizable pairs $\smash{(\mu_{k,0},\mu_{k,1})\in\scrP^\ac(\mms,\meas_k)^2}$, $k\in\N$, such that $\smash{\mu_{k,0}\to \mu_{\infty,0}}$ and $\smash{\mu_{k,1}\to \mu_{\infty,1}}$ weakly as $k\to\infty$, up to a subsequence. The highly nontrivial part we address here is that such a pair can in fact be constructed to be chronological.

\textbf{Step 3.1.} Let $W_2$ denote the $2$-Wasserstein distance on $\smash{(\mms,\met)}$. Since $(\meas_k)_{k\in\N}$ converges weakly to $\meas_\infty$ and since $\mms$ is compact, we have $W_2(\meas_k,\meas_\infty)\to 0$ as $k\to\infty$. Let $\mathfrak{q}_k\in\scrP(\mms^2)$ be a $W_2$-optimal coupling of $\meas_k$ and $\meas_\infty$, $k\in\N$. For later use, we disintegrate $\mathfrak{q}_k$ with respect to $\pr_1$ and $\pr_2$, respectively, writing
\begin{align*}
\rmd\mathfrak{q}_k(x, y) = \rmd\mathfrak{p}_{x}^k(y)\d\meas_k(x) = \rmd\mathfrak{p}_{y}^\infty(x)\d\meas_\infty(y)
\end{align*}
for certain Borel maps $\smash{\mathfrak{p}^k \colon \mms\to \scrP(\mms)}$ and $\smash{\mathfrak{p}^\infty \colon \mms\to \scrP(\mms)}$. (Although $\mathfrak{p}^\infty$  depends on $k$ as well, we do not make this explicit to not overload  our  notation.) These disin\-te\-grations induce nonrelabeled Borel maps $\smash{\mathfrak{p}^k\colon \scrP^\ac(\mms,\meas_\infty)\to \scrP^\ac(\mms,\meas_k)}$ and $\smash{\mathfrak{p}^\infty}\colon \scrP^\ac(\mms,\meas_k)\to \scrP^\ac(\mms,\meas_\infty)$ defined by
\begin{align*}
\rmd\mathfrak{p}^k(f\,\meas_\infty)(x) &:= \Big[\!\int_\mms  f(y)\d\mathfrak{p}_{x}^k(y) \Big]\d\meas_k(x),\\
\rmd\mathfrak{p}^\infty(g\,\meas_k)(y) &:= \Big[\!\int_\mms  g(x)\d\mathfrak{p}_{y}^\infty(x) \Big]\d\meas_\infty(y).
\end{align*} 

\textbf{Step 3.2.} Let $\pi_\infty\in\Pi_\ll(\mu_{\infty,0},\mu_{\infty,1})$ be timelike $p$-dualizing, and let $(\pi^n_\infty)_{n\in\N}$ be an asso\-ciated sequence in $\smash{\scrP^\ac(\mms^2,\meas_\infty^{\otimes 2})}$ as in \autoref{Le:CM lemma}, where we write
\begin{align*}
\pi^n_\infty &= \rho_\infty^n\,\meas_\infty^{\otimes 2},\\
\mu_{\infty,0}^n := (\pr_1)_\push\pi_\infty^n &= \rho_{\infty,0}^n\,\meas_\infty,\\
\mu_{\infty,1}^n := (\pr_2)_\push\pi_\infty^n &= \rho_{\infty,1}^n\,\meas_\infty.
\end{align*}
For $k\in\N$, define $\smash{\mu_{k,0}^n,\mu_{k,1}^n\in\scrP^\ac(\mms,\meas_k)}$ by
\begin{align*}
\mu_{k,0}^n &:= \mathfrak{p}^k(\mu_{\infty,0}^n) = \rho_{k,0}^n\,\meas_k,\\
\mu_{k,1}^n &:= \mathfrak{p}^k(\mu_{\infty,1}^n) = \rho_{k,1}^n\,\meas_k
\end{align*}
as well as  $\smash{\pi_k^n\in\scrP^\ac(\mms^2,\meas_k^{\otimes 2})}$ by
\begin{align*}
\rmd \pi_k^n(x^0,x^1) &:= \Big[\!\int_{\mms^2}\rho_\infty^n(y^0,y^1)\d(\mathfrak{p}_{x^0}^k\otimes\mathfrak{p}_{x^1}^k)(y^0,y^1)\Big]\d\meas_k^{\otimes 2}(x^0,x^1),
\end{align*}
or in other words,
\begin{align*}
\pi_k^n = (\pr_1,\pr_3)_\push\big[(\rho_\infty^n \circ (\pr_2,\pr_4))\,\mathfrak{q}_k\otimes\mathfrak{q}_k\big].
\end{align*}
A straightforward computation entails $\smash{\pi_k^n\in\Pi(\mu_{k,0}^n,\mu_{k,1}^n)}$. Also, for given $n\in\N$, weak convergence of $(\mathfrak{q}_k)_{k\in\N}$ to the unique $W_2$-optimal (diagonal) coupling of $\meas_\infty$ and $\meas_\infty$ by tightness \cite[Lem.~4.3, Lem.~4.4]{villani2009}, up to a nonrelabeled subse\-quence, imply  weak convergence of $\smash{(\pi_k^n)_{k\in\N}}$ to $\smash{\pi_\infty^n}$ up to a nonrelabeled subsequence. 

\textbf{Step 3.3.} By \autoref{Le:CM lemma}, a tightness and a diagonal argument, passage to a subsequence, and relabeling of indices, we find $n\colon\N\to\N$ such that, setting
\begin{align*}
\tilde{\pi}_k := \pi_k^{n_k},
\end{align*}
the sequence $\smash{(\tilde{\pi}_k)_{k\in\N}}$ converges weakly to $\pi_\infty$. By Portmanteau's theorem,
\begin{align}\label{Eq:Portmanteau in stability}
1 = \pi_\infty[\mms_\ll^2] \leq \liminf_{k\to\infty} \tilde{\pi}_k[\mms_\ll^2].
\end{align}
Hence, up passing to a  subsequence, we henceforth assume that $\tilde{\pi}_k[\mms_\ll^2] > 0$ for every $k\in\N$. We write the mar\-ginals $\smash{\tilde{\mu}_{k,0},\tilde{\mu}_{k,1}\in\scrP^\ac(\mms,\meas_k)}$ of $\tilde{\pi}_k$ as
\begin{align*}
\tilde{\mu}_{k,0} &= \tilde{\rho}_{k,0}\,\meas_k = \rho_{k,0}^{n_k}\,\meas_k,\\
\tilde{\mu}_{k,1} &= \tilde{\rho}_{k,1}\,\meas_k = \rho_{k,1}^{n_k}\,\meas_k.
\end{align*}


\textbf{Step 3.4.} Now we 
define $\smash{\hat{\pi}_k\in\scrP^\ac(\mms^2)}$ by
\begin{align*}
\hat{\pi}_k := \tilde{\pi}_k[\mms_\ll^2]^{-1}\,\tilde{\pi}_k\mres \mms^2_\ll.
\end{align*}
Note that $\smash{(\hat{\pi}_k)_{k\in\N}}$ converges weakly to $\smash{\pi_\infty}$, whence
\begin{align}\label{Eq:Der Limes}
\lim_{k\to\infty}\int_{\mms^2}\tsep^p\d\hat{\pi}_k = \int_{\mms^2}\tsep^p\d\pi_\infty.
\end{align}
Let $\smash{\hat{\mu}_{k,0},\hat{\mu}_{k,0}\in\scrP^\ac(\mms,\meas_k)}$ denote the marginals of $\smash{\hat{\pi}_k}$, with
\begin{align*}
\hat{\mu}_{k,0} &= \hat{\rho}_{k,0}\,\meas_k,\\
\hat{\mu}_{k,1} &= \hat{\rho}_{k,1}\,\meas_k.
\end{align*}

These measures are chronologically related, but we do not know whether they have a chronological \emph{$\smash{\ell_p}$-optimal} coupling, i.e.~whether they are timelike $p$-dualizable. To achieve this,  note that $\smash{\ell_p(\hat{\mu}_{k,0},\hat{\mu}_{k,1})} \in (0,\infty)$ by chronology of $\smash{\hat{\pi}_k}$ as well as compactness of $\mms^2$. Let $\smash{\check{\pi}_k\in\Pi_\leq(\hat{\mu}_{k,0},\hat{\mu}_{k,1})}$ be an $\smash{\ell_p}$-optimal coupling. By weak convergence of $\smash{(\hat{\pi}_k)_{k\in\N}}$, its marginal sequences $\smash{(\hat{\mu}_{k,0})_{k\in\N}}$ and $\smash{(\hat{\mu}_{k,1})_{k\in\N}}$ are tight, and so is $\smash{(\check{\pi}_k)_{k\in\N}}$ \cite[Lem.~4.4]{villani2009}. By Prokhorov's theorem, the latter converges weakly to some $\smash{\check{\pi}_\infty\in\Pi(\mu_{\infty,0},\mu_{\infty,1})}$; causal closedness of $\mms$ and Portmanteau's theorem imply $\smash{\check{\pi}_\infty[\mms_\leq^2]=1}$. Moreover,  \eqref{Eq:Der Limes} gives
\begin{align}\label{Eq:Lp calc}
\begin{split}
\ell_p(\mu_{\infty,0},\mu_{\infty,1})^p &= \int_{\mms^2} \tsep^p\d\pi_\infty\\ 
&= \lim_{k\to\infty}\int_{\mms^2} \tsep^p\d\hat{\pi}_k\\ &\leq \lim_{k\to\infty} \int_{\mms^2}\tsep^p\d\check{\pi}_k\\ 
&= \int_{\mms^2}\tsep^p\d\check{\pi}_\infty.
\end{split}
\end{align}
Here the inequality follows from $\smash{\ell_p}$-optimality of $\smash{\check{\pi}_k}$. Thus, $\smash{\check{\pi}_\infty}$ is an $\smash{\ell_p}$-optimal coupling of $\smash{\mu_{\infty,0}}$ and $\smash{\mu_{\infty,1}}$. Since the latter pair is  strongly timelike $p$-dualizable, every such coupling is chronological, whence $\smash{\check{\pi}_\infty[\mms_\ll^2]=1}$; thus,
\begin{align*}
1 = \check{\pi}_\infty[\mms_\ll^2] \leq \liminf_{k\to\infty} \check{\pi}_k[\mms_\ll^2].
\end{align*}
As usual, we may and will assume  $\smash{\check{\pi}_k[\mms_\ll^2] > 0}$ for every $k\in\N$.

\textbf{Step 3.5.} Lastly, we define $\smash{\bar{\pi}_k\in\scrP(\mms^2)}$ by
\begin{align*}
\bar{\pi}_k := \check{\pi}_k[\mms_\ll^2]^{-1}\,\check{\pi}_k\mres \mms_\ll^2.
\end{align*}
Since the restriction of an $\smash{\ell_p}$-optimal coupling is $\smash{\ell_p}$-optimal \cite[Lem.~2.10]{cavalletti2020}, $\smash{\bar{\pi}_k}$ is a chronological $\smash{\ell_p}$-optimal coupling of its marginals $\smash{\mu_{k,0},\mu_{k,1}\in\scrP^\ac(\mms,\meas_k)}$ with
\begin{align*}
\mu_{k,0} &= \rho_{k,0}\,\meas_k,\\
\mu_{k,1} &= \rho_{k,1}\,\meas_k.
\end{align*}
The pair $\smash{(\mu_{k,0},\mu_{k,1})}$ is timelike $p$-dualizable by $\bar{\pi}_k$ for every $k\in\N$, as desired.

\textbf{Step 4.} \textit{Invoking the $\TCD$ condition.} Fix  $K\in\R$ and $N\in (1,\infty)$ with $K < K_\infty$ and $N > N_\infty$. Then $K < K_k$ and $N > N_k$ for large enough $k\in\N$. Hence, again up to a subsequence, we may and will assume that the previous strict inequalities hold for every $k\in\N$, respectively.

By \autoref{Pr:Consistency TCD}, for every $k\in\N$ there are a timelike proper-time parametrized $\smash{\ell_p}$-geodesic $(\mu_{k,t})_{t\in[0,1]}$ con\-necting $\smash{\mu_{k,0}}$ to $\mu_{k,1}$ as well as a timelike $p$-dualizing coupling $\smash{\pi_k\in\Pi_\ll(\mu_{k,0},\mu_{k,1})}$ such that for every $t\in[0,1]$ and every $N'\geq N$,
\begin{align}\label{Eq:TCD COND INVOKING}
\scrS_{N'}^k(\mu_{k,t}) \leq \scrT_{K,N'}^{(t)}(\pi_k).
\end{align}

\textbf{Step 5.} \textit{Estimating $\smash{-\scrT_{K,N'}^{(t)}(\pi_k)}$ from below.} Tracing back the construction from Step 2 and Step 3, we infer the inequalities
\begin{align}\label{Eq:Inequ rhok0}
\begin{split}
\rho_{k,0} &\leq \check{\pi}_k[\mms_\ll^2]^{-1}\,\tilde{\pi}_k[\mms_\ll^2]^{-1}\,\tilde{\rho}_{k,0}\quad \meas\textnormal{-a.e.},\\
\rho_{k,1} &\leq \check{\pi}_k[\mms_\ll^2]^{-1}\,\tilde{\pi}_k[\mms_\ll^2]^{-1}\,\tilde{\rho}_{k,1} \quad\meas\textnormal{-a.e.}
\end{split}
\end{align}
However, when using these estimates to bound $\smash{-\scrT_{K,N'}^{(t)}(\pi_k)}$ from below, we change the involved densities in the integrands, but not the coupling with respect to which it is integrated; we thus have to make up for this in the coupling as well. In turn, this requires us to take into account that both $\smash{\tilde{\pi}_k}$ and $\smash{\check{\pi}_k}$ are not chro\-nological, but only close to being so for large $k\in\N$.

\textbf{Step 5.1.} We modify $\pi_k$ into a  measure with marginals 
\begin{align*}
\nu_{k,0} &:=  (1+\delta_k+\varepsilon_k+\zeta_k)^{-1}\,(\tilde{\rho}_{k,0}+ \delta_k + \varepsilon_k+\zeta_k)\,\meas_k = \varrho_{k,0}\,\meas_k,\\
\nu_{k,1} &:=  (1+\delta_k+\varepsilon_k+\zeta_k)^{-1}\,(\tilde{\rho}_{k,1}+ \delta_k + \varepsilon_k+\zeta_k)\,\meas_k = \varrho_{k,1}\,\meas_k,
\end{align*}
where $\delta_k,\varepsilon_k \in[0,1]$ and $\zeta_k \geq 0$ are defined by
\begin{align}\label{Eq:DEFS}
\begin{split}
\delta_k &:= \tilde{\pi}_k[\{\tsep=0\}],\\
\varepsilon_k &:= \check{\pi}_k[\{\tsep=0\}],\\
\zeta_k &:= W_2(\meas_k,\meas_\infty).
\end{split}
\end{align}
Note that $(\delta_k)_{k\in\N}$, $(\varepsilon_k)_{k\in\N}$, and $(\zeta_k)_{k\in\N}$ converge to $0$. Define  $\varpi_k\in\scrP(\mms^2)$ by
\begin{align*}
\varpi_k &:= (1+\delta_k+\varepsilon_k+\zeta_k)^{-1}\,\big[\tilde{\pi}_k[\mms_\ll^2]\,\check{\pi}_k[\mms_\ll^2]\,\pi_k + \tilde{\pi}_k\mres \{\tsep=0\}\\
&\qquad\qquad + (\delta_k + \varepsilon_k+\zeta_k)\,\meas_k^{\otimes 2}  + \tilde{\pi}_k[\mms_\ll^2]\,\check{\pi}_k\mres \{\tsep=0\}\big].
\end{align*}
By tracing back the definitions of $\tilde{\pi}_k$ and $\check{\pi}_k$ in Step 2, one verifies that $\varpi_k$ is a coupling, not necessarily $\smash{\ell_p}$-optimal, of $\nu_{k,0}$ and $\nu_{k,1}$. 

\textbf{Step 5.2.} By \eqref{Eq:Inequ rhok0}, we have 
\begin{alignat*}{3}
\rho_{k,0} &\leq \check{\pi}_k[\mms_\ll^2]^{-1}\,\tilde{\pi}_k[\mms_\ll^2]^{-1}\,(1+\delta_k + \varepsilon_k+\zeta_k)\,\varrho_{k,0} & \quad & \meas\textnormal{-a.e.},\\
\rho_{k,1} &\leq \check{\pi}_k[\mms_\ll^2]^{-1}\,\tilde{\pi}_k[\mms_\ll^2]^{-1}\,(1+\delta_k + \varepsilon_k+\zeta_k)\,\varrho_{k,1} & \quad & \meas\textnormal{-a.e.}
\end{alignat*}
In the sequel, to clear up notation, given $a,b\in\R$ we write  $a\geq_k b$ if there exists a sequence $(c_k)_{k\in\N}$ of positive real numbers converging to $1$ with $a\geq c_k\,b$ for every $k\in\N$. By definition of the inherent distortion coefficients, we obtain
\begin{align*}
-\scrT_{K,N'}^{(t)}(\pi_k) &= \int_{\mms^2}\tau_{K,N'}^{(1-t)}(\tsep(x^0,x^1))\,\rho_{k,0}(x^0)^{-1/N'}\d\pi_k(x^0,x^1)\\
&\qquad\qquad +\int_{\mms^2}\tau_{K,N'}^{(t)}(\tsep(x^0,x^1))\,\rho_{k,1}(x^1)^{-1/N'}\d\pi_k(x^0,x^1)\\
&\geq_k \int_{\mms^2} \tau_{K,N'}^{(1-t)}(\tsep(x^0,x^1))\,\varrho_{k,0}(x^0)^{-1/N'}\d\pi_k(x^0,x^1)\\
&\qquad\qquad + \int_{\mms^2} \tau_{K,N'}^{(t)}(\tsep(x^0,x^1))\,\varrho_{k,1}(x^1)^{-1/N'}\d\pi_k(x^0,x^1)\\
&\geq_k\int_{\mms^2} \tau_{K,N'}^{(1-t)}(\tsep(x^0,x^1))\,\varrho_{k,0}(x^0)^{-1/N'}\d\varpi_k(x^0,x^1)\\
&\qquad\qquad + \int_{\mms^2}\tau_{K,N'}^{(t)}(\tsep(x^0,x^1))\,\varrho_{k,0}(x^0)^{-1/N'}\d\varpi_k(x^0,x^1)\\
&\qquad\qquad - (1-t)\int_{\{\tsep=0\}}\varrho_{k,0}(x^0)^{-1/N'} \d\tilde{\pi}_k(x^0,x^1)\\
&\qquad\qquad -  t\int_{\{\tsep=0\}} \varrho_{k,1}(x^1)^{-1/N'}\d\tilde{\pi}_k(x^0,x^1)\\
&\qquad\qquad - (1-t)\int_{\{\tsep=0\}}\varrho_{k,0}(x^0)^{-1/N'}\d\check{\pi}_k(x^0,x^1)\\
&\qquad\qquad - t\int_{\{\tsep=0\}}\varrho_{k,1}(x^1)^{-1/N'}\d\check{\pi}_k(x^0,x^1)\\
&\qquad\qquad - c\,(\delta_k + \varepsilon_k+\zeta_k)\int_{\mms} \varrho_{k,0}(x^0)^{-1/N'}\d\meas_k(x^0)\\
&\qquad\qquad - c\,(\delta_k + \varepsilon_k+\zeta_k)\int_{\mms} \varrho_{k,1}(x^1)^{-1/N'}\d\meas_k(x^1),
\end{align*}
where, recalling our choice $K < K_k$ for large enough $k\in\N$ and \autoref{Cor:Bonnet-Myers},
\begin{align}\label{Eq:c value}
c := \max\{\sup\tau_{K,N'}^{(1-t)}\circ\tsep(\mms^2),\sup\tau_{K,N'}^{(t)}\circ\tsep(\mms^2)\}.
\end{align}
Since $\varrho_{k,0} \geq_k \delta_k + \varepsilon_k+\zeta_k$ $\meas_k$-a.e.~and $\varrho_{k,1} \geq_k \delta_k+\varepsilon_k+\zeta_k$ $\meas_k$-a.e., by definition of $\delta_k$ and $\varepsilon_k$ we obtain the estimates
\begin{align*}
(\delta_k + \varepsilon_k+\zeta_k)^{1-1/N'}&\geq_k (1-t)\int_{\{\tsep=0\}} \varrho_{k,1}(x^0)^{-1/N'}\d\tilde{\pi}_k(x^0,x^1) \\
&\qquad\qquad + t\int_{\{\tsep=0\}} \varrho_{k,1}(x^0)^{-1/N'}\d\tilde{\pi}_k(x^0,x^1),\\
(\delta_k + \varepsilon_k+\zeta_k)^{1-1/N'} &\geq_k (1-t)\int_{\{\tsep=0\}} \varrho_{k,1}(x^0)^{-1/N'}\d\check{\pi}_k(x^0,x^1)\\
&\qquad\qquad  + t\int_{\{\tsep=0\}} \varrho_{k,1}(x^0)^{-1/N'}\d\check{\pi}_k(x^0,x^1),\\
2\,(\delta_k + \varepsilon_k+\zeta_k)^{1-1/N'} &\geq_k (\delta_k+\varepsilon_k+\zeta_k)\int_{\mms}\varrho_{k,0}(x^0)^{-1/N'}\d\meas_k(x^0)\\
&\qquad\qquad + (\delta_k+\varepsilon_k+\zeta_k)\int_{\mms}\varrho_{k,1}(x^1)^{-1/N'}\d\meas_k(x^1),
\end{align*}
and therefore
\begin{align*}
-\scrT_{K,N'}^{(t)}(\pi_k) &\geq_k\int_{\mms^2} \tau_{K,N'}^{(1-t)}(\tsep(x^0,x^1))\,\varrho_{k,0}(x^0)^{-1/N'}\d\varpi_k(x^0,x^1)\\
&\qquad\qquad + \int_{\mms^2}\tau_{K,N'}^{(t)}(\tsep(x^0,x^1))\,\varrho_{k,0}(x^0)^{-1/N'}\d\varpi_k(x^0,x^1)\\
&\qquad\qquad -2\,(c+1)\,(\delta_k + \varepsilon_k+\zeta_k)^{1-1/N'}.
\end{align*}

\textbf{Step 5.3.} Let $\smash{\mathfrak{r}^0,\mathfrak{r}^1\colon \mms\to \scrP(\mms)}$   denote the ($k$-dependent)  disintegrations of $\varpi_k$ with respect to $\pr_1$ and $\pr_2$, respectively, i.e.
\begin{align*}
\rmd \varpi_k(x^0,x^1) = \rmd\mathfrak{r}^0_{x^0}(x^1)\d\nu_{k,0}(x^0) = \rmd\mathfrak{r}^1_{x^1}(x^0)\d\nu_{k,1}(x^1),
\end{align*}
and define
\begin{align}\label{Eq:Disint tau}
\begin{split}
v_0(x^0) &:= \int_{\mms} \tau_{K,N'}^{(1-t)}(\tsep(x^0,x^1))\d\mathfrak{r}_{x^0}^0(x^1),\\
v_1(x^1) &:= \int_{\mms} \tau_{K,N'}^{(t)}(\tsep(x^0,x^1))\d\mathfrak{r}_{x^1}^1(x^0).
\end{split}
\end{align}
Moreover, define $\smash{\nu_{\infty,0}^k,\nu_{\infty,1}^k\in\scrP^\ac(\mms,\meas_\infty)}$ by
\begin{align*}
\nu_{\infty,0}^k &:= (1+\delta_k + \varepsilon_k+\zeta_k)^{-1}\,(\rho_{\infty,0}^{n_k} + \delta_k +\varepsilon_k+\zeta_k)\,\meas_\infty = \varrho_{\infty,0}^k\,\meas_\infty,\\
\nu_{\infty,1}^k &:= (1+\delta_k + \varepsilon_k+\zeta_k)^{-1}\,(\rho_{\infty,1}^{n_k} + \delta_k +\varepsilon_k+\zeta_k)\,\meas_\infty = \varrho_{\infty,1}^k\,\meas_\infty,
\end{align*}
and observe that
\begin{align}\label{Eq:Observation!}
\begin{split}
\mathfrak{p}^k(\nu_{\infty,0}^k) &= \nu_{k,0} = \varrho_{k,0}\,\meas_k,\\
\mathfrak{p}^k(\nu_{\infty,1}^k) &= \nu_{k,1} = \varrho_{k,1}\,\meas_k.
\end{split}
\end{align}
Then by Jensen's inequality,
\begin{align*}
&\int_{\mms^2} \tau_{K,N'}^{(1-t)}(\tsep(x^0,x^1))\,\varrho_{k,0}(x^0)^{-1/N'}\d\varpi_k(x^0,x^1)\\
&\qquad\qquad\qquad\qquad + \int_{\mms^2}\tau_{K,N'}^{(t)}(\tsep(x^0,x^1))\,\varrho_{k,0}(x^0)^{-1/N'}\d\varpi_k(x^0,x^1)\\
&\qquad\qquad = \sum_{i=0}^1 \int_{\mms} \varrho_{k,i}(x^i)^{1-1/N'}\,v_i(x^i)\d\meas_k(x^i)\\
&\qquad\qquad  = \sum_{i=0}^1 \int_{\mms}\Big[\!\int_{\mms} \varrho_{\infty,i}^k(y^i)\d\mathfrak{p}_{x^i}^k(y^i)\Big]^{1-1/N'}v_i(x^i)\d\meas_k(x^i)\\
&\qquad\qquad \geq \sum_{i=0}^1 \int_{\mms^2} \varrho_{\infty,i}^k(y^i)^{1-1/N'}\,v_i(x^i)\d\mathfrak{p}_{x^i}^k(y^i)\d\meas_k(x^i)\\
&\qquad\qquad = \sum_{i=0}^1 \int_{\mms^2}  \varrho_{\infty,i}^k(y^i)^{1-1/N'}\,v_i(x^i)\d\mathfrak{p}_{y^i}^\infty(x^i)\d\meas_\infty(y^i).
\end{align*}

\textbf{Step 5.4.} Next, we estimate the latter sum from below. Using \eqref{Eq:Disint tau} and \eqref{Eq:Observation!},
\begin{align*}
&\int_{\mms^2} \varrho_{\infty,0}^k(y^0)^{1-1/N'}\,v_0(x^0)\d\mathfrak{p}_{y^0}^\infty(x^0)\d\meas_\infty(y^0)\\
&\qquad\qquad =\int_{\mms^4}\d\mathfrak{p}_{x^1}^k(y^1)\d\mathfrak{r}_{x^0}^0(x^1)\d\mathfrak{p}_{y^0}^\infty(x^0)\d\meas_\infty(y^0)\\
&\qquad\qquad\qquad\qquad \Big[\varrho_{\infty,0}^k(y^0)^{-1/N'}\,\tau_{K,N'}^{(1-t)}(\tsep(x^0,x^1))\,\frac{\varrho_{\infty,0}^k(y^0)\,\varrho_{\infty,1}^k(y^1)}{\varrho_{k,1}(x^1)}\Big].
\end{align*}

Let $\varepsilon > 0$. Since $\smash{\tau_{K,N'}^{(t)}\circ\tsep}$ is uniformly continuous by compactness of $\smash{\mms^2}$, our choice of $K$, and possibly invoking \autoref{Cor:Bonnet-Myers}, we fix $\delta> 0$ such that
\begin{align*}
\big\vert \tau_{K,N'}^{(1-t)}(\tsep(x^0,x^1)) - \tau_{K,N'}^{(1-t)}(\tsep(y^0,y^1))\big\vert &\leq \varepsilon,\\
\big\vert \tau_{K,N'}^{(t)}(\tsep(x^0,x^1)) - \tau_{K,N'}^{(t)}(\tsep(y^0,y^1))\big\vert &\leq \varepsilon
\end{align*}
for every  $(x^0,y^0,x^1,y^1)\in A_\delta$, where
\begin{align}\label{Eq:A_delta}
A_\delta := \{ z \in\mms^4 :  \met(z_1,z_2) + \met(z_3,z_4) \leq \delta\}.
\end{align}
Lastly, somewhat suggestively, for $k\in\N$ we define a coupling $\smash{\eta_k\in\Pi(\nu_{\infty,0}^k,\nu_{\infty,1}^k)}$, which is notably independent of $\varepsilon$, by
\begin{align*}
\rmd\eta_k(y^0,y^1) &:= \int_{\mms^2} \frac{\varrho_{\infty,0}^k(y^0)\,\varrho_{\infty,1}^k(y^1)}{\varrho_{k,0}(x^0)\,\varrho_{k,1}(x^1)}\d\mathfrak{p}_{x^0}^k(y^0)\d\mathfrak{p}_{x^1}^k(y^1)\d\varpi_k(x^0,x^1)\\
&\phantom{:}= \int_{\mms^2} \frac{\varrho_{\infty,0}^k(y^0)\,\varrho_{\infty,1}^k(y^1)}{\varrho_{k,1}(x^1)}\d\mathfrak{p}_{x^1}^k(y^1)\d\mathfrak{r}_{x^0}^0(x^1)\d\mathfrak{p}_{y^0}^\infty(x^0) \d\meas_\infty(y^0)\\
&\phantom{:}= \int_{\mms^2} \frac{\varrho_{\infty,0}^k(y^0)\,\varrho_{\infty,1}^k(y^1)}{\varrho_{k,0}(x^0)}\d\mathfrak{p}_{x^0}^k(y^0)\d\mathfrak{r}_{x^1}^1(x^0)\d\mathfrak{p}_{y^1}^\infty(x^1)\d\meas_\infty(y^1).
\end{align*}
The previous computations then entail
\begin{align*}
&\int_{\mms^2} \varrho_{\infty,0}^k(y^0)^{1-1/N'}\,v_0(x^0)\d\mathfrak{p}_{y^0}^\infty(x^0)\d\meas_\infty(y^0)\\
&\qquad\qquad \geq -\varepsilon+\int_{A_\delta} \d\mathfrak{p}_{x^1}^k(y^1)\d\mathfrak{r}_{x^0}^0(x^1)\d\mathfrak{p}_{y^0}^\infty(x^0)\d\meas_\infty(y^0)\\
&\qquad\qquad\qquad\qquad\qquad\qquad \Big[\varrho_{\infty,0}^k(y^0)^{-1/N'}\,\tau_{K,N'}^{(1-t)}(\tsep(y^0,y^1))\,\frac{\varrho_{\infty,0}^k(y^0)\,\varrho_{\infty,1}^k(y^1)}{\varrho_{k,1}(x^1)}\Big]\\
&\qquad\qquad\qquad\qquad + \int_{A_\delta^\sfc} \d\mathfrak{p}_{x^1}^k(y^1)\d\mathfrak{r}_{x^0}^0(x^1)\d\mathfrak{p}_{y^0}^\infty(x^0)\d\meas_\infty(y^0)\\
&\qquad\qquad\qquad\qquad\qquad\qquad\Big[\varrho_{\infty,0}^k(y^0)^{-1/N'}\,\tau_{K,N'}^{(1-t)}(\tsep(x^0,x^1))\,\frac{\varrho_{\infty,0}^k(y^0)\,\varrho_{\infty,1}^k(y^1)}{\varrho_{k,1}(x^1)}\Big]\\
&\qquad\qquad = - \varepsilon + \int_{\mms^2} \varrho_{\infty,0}^k(y^0)^{-1/N'}\,\tau_{K,N'}^{(1-t)}(\tsep(y^0,y^1))\d\eta_k(y^0,y^1)\\
&\qquad\qquad\qquad\qquad + \int_{A_\delta^\sfc} \d\mathfrak{p}_{x^1}^k(y^1)\d\mathfrak{r}_{x^0}^0(x^1)\d\mathfrak{p}_{y^0}^\infty(x^0)\d\meas_\infty(y^0)\\
&\qquad\qquad\qquad\qquad\qquad\qquad \Big[\varrho_{\infty,0}^k(y^0)^{-1/N'}\,\big[\tau_{K,N'}^{(1-t)}(\tsep(x^0,x^1))- \tau_{K,N'}^{(1-t)}(\tsep(y^0,y^1))\big]\\
&\qquad\qquad\qquad\qquad\qquad\qquad \times \frac{\varrho_{\infty,0}^k(y^0)\,\varrho_{\infty,1}^k(y^1)}{\varrho_{k,1}(x^1)}\Big].
\end{align*}

\textbf{Step 5.5.} Let us estimate the latter error term. We first recall the definition $\zeta_k := W_2(\meas_k,\meas_\infty)$ from \eqref{Eq:DEFS}. By definition of $A_\delta$ and the value $c$ from \eqref{Eq:c value}, the $\meas$-a.e.~valid inequality for the density $\smash{\rho_{\infty,1}^{n_k}}$ involved in the definition of $\smash{\varrho_{\infty,1}^k}$ from \autoref{Le:CM lemma}, as well as Hölder's inequality,
\begin{align*}
&\Big\vert\!\int_{A_\delta^\sfc} \d\mathfrak{p}_{x^1}^k(y^1)\d\mathfrak{r}_{x^0}^0(x^1)\d\mathfrak{p}_{y^0}^\infty(x^0)\d\meas_\infty(y^0)\\
&\qquad\qquad\qquad\qquad \Big[\varrho_{\infty,0}^k(y^0)^{-1/N'}\,\big[\tau_{K,N'}^{(1-t)}(\tsep(x^0,x^1))- \tau_{K,N'}^{(1-t)}(\tsep(y^0,y^1))\big]\\
&\qquad\qquad\qquad\qquad \times \frac{\varrho_{\infty,0}^k(y^0)\,\varrho_{\infty,1}^k(y^1)}{\varrho_{k,1}(x^1)}\Big]\Big\vert\\
&\qquad\qquad \leq 2\,c\,\delta^{-1}\!\int_{\mms^4} \d\mathfrak{p}_{x^1}^k(y^1)\d\mathfrak{r}_{x^0}^0(x^1)\d\mathfrak{p}_{y^0}^\infty(x^0)\d\meas_\infty(y^0)\\
&\qquad\qquad\qquad\qquad\Big[\varrho_{\infty,0}^k(y^0)^{-1/N'}\,\big[\met(x^0,y^0) + \met(x^1,y^1)\big]\,\frac{\varrho_{\infty,0}^k(y^0)\,\varrho_{\infty,1}^k(y^1)}{\varrho_{k,1}(x^1)}\Big]\\
&\qquad\qquad \leq_k 2\,c\,\delta^{-1}\!\int_{\mms^2}\varrho_{\infty,0}^k(y^0)^{1-1/N'}\,\met(x^0,y^0)\d\mathfrak{q}_k(x^0,y^0)\\
&\qquad\qquad\qquad\qquad + 2\,c\,\delta^{-1}\,\zeta_k^{-1/N'}\!\int_{\mms} \rho_{\infty,1}(y^1)\,\met(x^1,y^1)\d\mathfrak{q}_k(x^1,y^1)\\
&\qquad\qquad \leq_k 2\,c\,\delta^{-1}\,\big\Vert\rho_{\infty,0}\big\Vert_{\Ell^\infty(\mms,\meas_\infty)}^{1-1/N'}\,\zeta_k + 2\,c\,\delta^{-1}\,\Vert\rho_{\infty,1}\Vert_{\Ell^\infty(\mms,\meas_\infty)}\,\zeta_k^{1-1/N'}\!.
\end{align*}

\textbf{Step 5.6.} Taking the estimates from Step 5.4 and Step 5.5 together, we get
\begin{align*}
&\int_{\mms^2}\varrho_{\infty,0}^k(y^0)^{1-1/N'}\,v_0(x^0)\d\mathfrak{p}_{y^0}^\infty(x^0)\d\meas_\infty(y^0)\\
&\qquad\qquad \geq_k -\varepsilon + \int_{\mms^2}\varrho_{\infty,0}^k(y^0)^{-1/N'}\,\tau_{K,N'}^{(1-t)}(\tsep(y^0,y^1))\d\eta_k(y^0,y^1)\\
&\qquad\qquad\qquad\qquad - 2\,c\,\delta^{-1}\,\big\Vert\rho_{\infty,0}\big\Vert_{\Ell^\infty(\mms,\meas_\infty)}^{1-1/N'}\,\zeta_k\\
&\qquad\qquad\qquad\qquad - 2\,c\,\delta^{-1/N'}\,\Vert\rho_{\infty,1}\Vert_{\Ell^\infty(\mms,\meas_\infty)}\,\zeta_k^{1-1/N'}\!.
\end{align*}
Analogously, for the second summand in the last part of Step 5.3 we obtain
\begin{align*}
&\int_{\mms^2} \varrho_{\infty,1}^k(y^1)^{1-1/N'}\,v_1(x^1)\d\mathfrak{p}_{y^1}^\infty(x^1)\d\meas_\infty(y^1)\\
&\qquad\qquad \geq_k -\varepsilon + \int_{\mms^2}\varrho_{\infty,1}^k(y^1)^{-1/N'}\,\tau_{K,N'}^{(t)}(\tsep(y^0,y^1))\d\eta_k(y^0,y^1)\\
&\qquad\qquad\qquad\qquad - 2\,c\,\delta^{-1}\,\big\Vert\rho_{\infty,1}\big\Vert_{\Ell^\infty(\mms,\meas_\infty)}^{1-1/N'}\,\zeta_k\\
&\qquad\qquad\qquad\qquad - 2\,c\,\delta^{-1}\,\Vert\rho_{\infty,0}\Vert_{\Ell^\infty(\mms,\meas_\infty)}\,\zeta_k^{1-1/N'}\!.
\end{align*}

\textbf{Step 6.} \textit{Passage to the limit.} Let $(a_k)_{k\in\N}$ be a given sequence of normalization constants in $[1,\infty)$ as provided by \autoref{Le:CM lemma}, i.e. 
\begin{align*}
\rho_{\infty,0}^{n_k} &\leq a_k\,\rho_{\infty,0}\quad\meas\textnormal{-a.e.},\\
\rho_{\infty,1}^{n_k} &\leq a_k\,\rho_{\infty,1}\quad\meas\textnormal{-a.e.}
\end{align*}

\textbf{Step 6.1.} For $k\in\N$, we define $\smash{\tilde{\nu}_{\infty,0}^k,\tilde{\nu}_{\infty,1}^k}\in\smash{\scrP^\ac(\mms,\meas_\infty)}$ by
\begin{align*}
\tilde{\nu}_{\infty,0}^k &:=(a_k^2+\delta_k+\varepsilon_k+\zeta_k)^{-1}\,(a_k^2\,\rho_{\infty,0} + \delta_k+\varepsilon_k+\zeta_k)\,\meas_\infty = \tilde{\varrho}_{\infty,0}^k\,\meas_\infty,\\
\tilde{\nu}_{\infty,1}^k &:= (a_k^2+\delta_k+\varepsilon_k+\zeta_k)^{-1}\,(a_k^2\,\rho_{\infty,1} + \delta_k+\varepsilon_k+\zeta_k)\,\meas_\infty = \tilde{\varrho}_{\infty,1}^k\,\meas_\infty.
\end{align*}
We turn $\eta_k$ into a coupling $\smash{\alpha_k \in\Pi(\nu_{\infty,0}^k,\nu_{\infty,1}^k)}$ by setting
\begin{align*}
\alpha_k &:= (a_k^2+\delta_k+\varepsilon_k+\zeta_k)^{-1} \big[(1+\delta_k+\varepsilon_k+\zeta_k)\,\eta_k\\
&\qquad\qquad + a_k^2\,\mu_{\infty,0}\otimes\mu_{\infty,1} - \tilde{\mu}_{k,0}\otimes\tilde{\mu}_{k,1}\big].
\end{align*}
This construction yields
\begin{align*}
&\int_{\mms^2}\varrho_{k,0}^k(y^0)^{-1/N'}\,\tau_{K,N'}^{(1-t)}(\tsep(y^0,y^1))\d\eta_k(y^0,y^1)\\
&\qquad\qquad\qquad\qquad + \int_{\mms^2}\varrho_{k,1}^k(y^1)^{-1/N'}\,\tau_{K,N'}^{(t)}(\tsep(y^0,y^1))\d\eta_k(y^0,y^1)\\
&\qquad\qquad \geq_k \int_{\mms^2} \tilde{\varrho}_{\infty,0}^k(y^0)^{-1/N'}\,\tau_{K,N'}^{(1-t)}(\tsep(y^0,y^1))\d\alpha_k(y^0,y^1)\\
&\qquad\qquad\qquad\qquad - c\int_\mms \varrho_{\infty,0}^k(y^0)^{-1/N'}\,\big\vert \rho_{\infty,0}(y^0) - \tilde{\rho}_{k,0}(y^0)\big\vert\d\meas_\infty(y^0)\\
&\qquad\qquad\qquad\qquad + \int_{\mms^2}\tilde{\varrho}_{\infty,1}^k(y^1)^{-1/N'}\,\tau_{K,N'}^{(t)}(\tsep(y^0,y^1))\d\alpha_k(y^0,y^1)\\
&\qquad\qquad\qquad\qquad - c\int_\mms \varrho_{\infty,1}^k(y^1)^{-1/N'}\,\big\vert \rho_{\infty,1}(y^1) - \tilde{\rho}_{k,1}(y^1)\big\vert\d\meas_\infty(y^1).
\end{align*}

\textbf{Step 6.2.} Taking together the above estimates with those obtained in Step 5 and invoking the $\Ell^1$-convergence asserted in  \autoref{Le:CM lemma}, we obtain
\begin{align*}
-\limsup_{k\to\infty} \scrT_{K,N'}^{(t)}(\pi_k) \geq - \limsup_{k\to\infty}\scrT_{K,N'}^{(t)}(\alpha_k) - 2\varepsilon.
\end{align*}
Now, by Prokhorov's theorem, the sequence $(\alpha_k)_{k\in\N}$ converges weakly to some $\alpha\in\Pi(\mu_{\infty,0},\mu_{\infty,1})$ up to a nonrelabeled subsequence. Since $\varepsilon$ was arbitrary and did not influence the construction of $\alpha_k$, by \autoref{Le:Const perturb} we thus get
\begin{align}\label{Eq:Eqeqeqeq}
-\limsup_{k\to\infty}\scrT_{K,N'}^{(t)}(\pi_k)\geq -\scrT_{K,N'}^{(t)}(\alpha).
\end{align}

\textbf{Step 6.3.} Now we send $k\to\infty$ in \eqref{Eq:TCD COND INVOKING}. For every $k\in\N$, there exists some plan $\smash{\bdpi_k\in\OptTGeo_{\ell_p}^\tsep(\mu_{k,0},\mu_{k,1})}$ representing the curve $(\mu_{k,t})_{t\in[0,1]}$. \autoref{Le:Villani lemma for geodesic} allows us to extract a nonrelabeled subsequence of $\smash{(\bdpi_k)_{k\in\N}}$ converging weakly to some $\smash{\bdpi\in\OptTGeo_{\ell_p}^\tsep(\mu_{\infty,0},\mu_{\infty,1})}$. The assignment $\mu_{\infty,t} := (\eval_t)_\push\bdpi_\infty$ produces a timelike proper-time parametrized $\smash{\ell_p}$-geodesic $(\mu_{\infty,t})_{t\in[0,1]}$ connecting $\mu_{\infty,0}$ to $\mu_{\infty,1}$. Joint lower semicontinuity of Rényi's entropy on $\scrP(\mms)$ by compactness of $\mms$ and \eqref{Eq:Eqeqeqeq} thus yield, for every $t\in[0,1]$ and every $N'\geq N$,
\begin{align}\label{Eq:Renyi inequ}
\scrS_{N'}^\infty(\mu_{\infty,t})\leq \limsup_{k\to\infty} \scrS_{N'}^k(\mu_{k,t}) \leq \limsup_{k\to\infty}\scrT_{K,N'}^{(t)}(\pi_k) \leq \scrT_{K,N'}^{(t)}(\alpha).
\end{align}

\textbf{Step 7.} \textit{Proof of the $\smash{\ell_p}$-optimality of $\alpha$.} To conclude the desired property leading towards $\smash{\TCD_p(K,N)}$ for $\scrX_\infty$, at least under the restrictions of Step 2, we have to prove the causality and the  $\smash{\ell_p}$-optimality of $\alpha$. Both is argued similarly to Step 5.4; we concentrate on the proof of the estimate entailing $\smash{\ell_p}$-optimality later, and then outline how to modify this argument to prove $\smash{\alpha[\mms_\ll^2]=1}$.

Given any $\varepsilon>0$, fix $\delta > 0$ with the property
\begin{align*}
\big\vert\tsep(x^0,x^1) - \tsep(y^0,y^1)\big\vert \leq \varepsilon
\end{align*}
for every $(x^0,y^0,x^1,y^1)\in\mms^4$  belonging to the set $\smash{A_\delta}$ from \eqref{Eq:A_delta}. As in Step 5.5 and tracing back the definitions of all considered couplings, we obtain
\begin{align*}
&\int_{\mms^2}\tsep^p(y^0,y^1)\d\alpha(y^0,y^1) = \lim_{k\to\infty}\int_{\mms^2}\tsep^p(y^0,y^1)\d\alpha_k(y^0,y^1)\\
&\qquad\qquad \geq \liminf_{k\to\infty} \int_{\mms^2}\tsep^p(y^0,y^1)\d\eta_k(y^0,y^1)\\
&\qquad\qquad \geq \liminf_{k\to\infty} \int_{\mms^2} \tsep^p(x^0,x^1)\d\varpi_k(x^0,x^1) - \varepsilon\\
&\qquad\qquad\qquad\qquad -2\,c\,\delta^{-1}\,\Vert\rho_{\infty,0}\Vert_{\Ell^\infty(\mms,\meas_\infty)}\limsup_{k\to\infty} W_2(\meas_k,\meas_\infty)\\
&\qquad\qquad\qquad\qquad -2\,c\,\delta^{-1}\,\Vert\rho_{\infty,1}\Vert_{\Ell^\infty(\mms,\meas_\infty)}\limsup_{k\to\infty} W_2(\meas_k,\meas_\infty)\\
&\qquad\qquad \geq \liminf_{k\to\infty}\int_{\mms^2}\tsep^p(x^0,x^1)\d\pi_k(x^0,x^1)-\varepsilon\\
&\qquad\qquad = \liminf_{k\to\infty} \int_{\mms^2}\tsep^p(x^0,x^1)\d\bar{\pi}_k(x^0,x^1)-\varepsilon\\
&\qquad\qquad \geq \liminf_{k\to\infty}\int_{\mms^2}\tsep^p(x^0,x^1)\d\check{\pi}_k(x^0,x^1)-\varepsilon\\
&\qquad\qquad\geq \ell_p(\mu_{\infty,0},\mu_{\infty,1})^p-\varepsilon.
\end{align*}
In the equality in the third last step, we have used that both $\smash{\pi_k,\bar{\pi}_k\in\Pi_\ll(\mu_{k,0},\mu_{k,1})}$ are $\smash{\ell_p}$-optimal; the last inequality is due to \eqref{Eq:Lp calc}.

The relation $\smash{\alpha[\mms_\ll^2]=1}$ is argued similarly by replacing, given $\varepsilon > 0$, $\tsep$ by a nonnegative function $\phi\in\Cont(\mms^2)$ obeying $\smash{\phi(\mms_\ll^2)=\{1\}}$, $\sup\phi(\mms)\leq 1$, and
\begin{align*}
\alpha[\mms_\ll^2] \geq \int_{\mms^2}\phi\d\alpha - \varepsilon.
\end{align*}
Both results together and the arbitrariness of $\varepsilon$ yield the claim.

\textbf{Step 8.} \textit{Relaxation of the assumptions on $\mu_{\infty,0}$ and $\mu_{\infty,1}$.} Now we outline how to get rid of the assumption $\smash{\rho_{\infty,0},\rho_{\infty,1}\in\Ell^\infty(\mms,\meas_\infty)}$ from Step 2. 

If $\smash{\rho_{\infty,0}}$ and $\smash{\rho_{\infty,1}}$ are not $\meas$-essentially bounded, given any $i\in\N$ we set
\begin{align*}
E_i := \{\rho_{\infty,0} \leq i,\, \rho_{\infty,1}\leq i\}.
\end{align*}
Let $\smash{\pi\in\Pi_\ll(\mu_{\infty,0},\mu_{\infty,1})}$ be an $\smash{\ell_p}$-optimal coupling, and set
\begin{align*}
\pi_i := \pi[E_i]^{-1}\,\pi\mres E_i
\end{align*}
provided $\pi[E_i]>0$. By restriction \cite[Lem.~2.10]{cavalletti2020}, $\pi_i$ is an $\smash{\ell_p}$-optimal coupling of its marginals $\smash{\lambda_{\infty,0}^i,\lambda_{\infty,1}^i\in\scrP^\ac(\mms,\meas_\infty)}$; moreover, the pair $\smash{(\lambda_{\infty,0}^i,\lambda_{\infty,1}^i)}$ is strongly timelike $p$-dualizable.  Let $\smash{\dot{\pi}_i\in\Pi_\ll(\lambda_{\infty,0}^i,\lambda_{\infty,1}^i)}$ be an $\smash{\ell_p}$-optimal coupling as constructed in the previous steps. Define $\smash{\beta_i\in\Pi_\ll(\mu_{\infty,0},\mu_{\infty,1})}$ by
\begin{align*}
\beta_i := \pi[E_i]\,\dot{\pi}_i + \pi\mres E_i^\sfc,
\end{align*}
and note that $\beta_i$ is an $\smash{\ell_p}$-optimal coupling of its marginals. Moreover, we observe that, for every $t\in[0,1]$ and every $N'\geq N$,
\begin{align*}
\scrT_{K,N'}^{(t)}(\dot{\pi}_i) &\leq_i -\int_{\mms^2} \rho_{\infty,0}(y^0)^{-1/N'}\,\tau_{K,N'}^{(1-t)}(\tsep(y^0,y^1))\d\dot{\pi}_i(y^0,y^1)\\
&\qquad\qquad -\int_{\mms^2} \rho_{\infty,1}(y^1)^{-1/N'}\,\tau_{K,N'}^{(1-t)}(\tsep(y^0,y^1))\d\dot{\pi}_i(y^0,y^1)\\
&\leq_i -\int_{\mms^2} \rho_{\infty,0}(y^0)^{-1/N'}\,\tau_{K,N'}^{(1-t)}(\tsep(y^0,y^1))\d\beta_i(y^0,y^1)\\
&\qquad\qquad + c \int_{E_i^\sfc} \rho_{\infty,0}(y^0)^{-1/N'}\d\pi(y^0,y^1)\\
&\qquad\qquad -\int_{\mms^2}\rho_{\infty,1}(y^1)^{-1/N'}\,\tau_{K,N'}^{(t)}(\tsep(y^0,y^1))\d\beta_i(y^0,y^1)\\
&\qquad\qquad + c\int_{E_i^\sfc}\rho_{\infty,1}(y^1)^{-1/N'}\d\pi(y^0,y^1).
\end{align*}

By Prokhorov's theorem, $(\beta_i)_{i\in\N}$ converges weakly to some $\smash{\beta\in\Pi(\mu_{\infty,0},\mu_{\infty,1})}$ which, thanks to Portmanteau's theorem, is a causal coupling. By stability \cite[Thm. 2.14]{cavalletti2020}, we thus infer the $\smash{\ell_p}$-optimality of $\beta$. Together with \autoref{Le:Const perturb} and Hölder's inequality, this addresses the right-hand side of the desired inequality; the left-hand side is treated with the  compactness argument from Step 6.3.

\textbf{Step 9.} \textit{Passing from $K$ and $N$ to $K_\infty$ and $N_\infty$.} Lastly, the restrictions from Step 4 the previous arguments imply $\smash{\TCD_p(K,N)}$ for $\scrX_\infty$ for every $K<K_\infty$ and every $N>N_\infty$. Given $\smash{(\mu_{\infty,0},\mu_{\infty,1})\in\scrP^\ac(\mms,\meas_\infty)}$ strongly timelike $p$-dualizable, letting $K$ and $N$ approach $K_\infty$ and $N_\infty$, respectively, and using compactness arguments as in Step 6.2 and Step 6.3 gives the existence of a proper-time parametrized $\smash{\ell_p}$-geodesic $(\mu_{\infty,t})_{t\in[0,1]}$ from $\mu_{\infty,0}$ to $\mu_{\infty,1}$ and a timelike $p$-dualizing coupling $\smash{\pi\in\Pi_\ll(\mu_{\infty,0},\mu_{\infty,1})}$ witnessing the desired inequality for the Rényi entropy for every $K < K_\infty$ and every $N' > N_\infty$. The semicontinuity properties of both sides in the respective parameters yields $\smash{\TCD_p(K_\infty,N_\infty)}$. The proof is terminated.
\end{proof}

\begin{remark}\label{RE:POTEN} Very roughly speaking, the relevant convergences  in the above proof are justified by \emph{uniform continuity}, recall e.g.~Step 5.4. This is considerably weaker than \emph{Lipschitz continuity} of the quantity inside the distortion coefficients, which has been used in the proof of \cite[Thm.~3.1]{sturm2006b} in the metric case at a similar point, yet is useless here since $\tsep$ is no distance. Our proof thus has the potential to admit straightforward adaptations to settings with continuous potentials other than $\tsep$ which do not come from a distance either.
\end{remark}

\begin{remark}\label{Re:A remarkable} A byproduct of the above proof is that $\smash{\TCD_p^*(K,N)}$ and $\smash{\TCD_p(K,N)}$ are weakly stable under convergence of \emph{normalized} Lorentzian pre-length spaces with respect to the following Lorentzian modification of Sturm's transport distance $\boldsymbol{\mathsf{D}}$ \cite[Def. 3.2]{sturm2006a}: it is still defined in terms of $\met$, but the inherent metric measure isometric embeddings should also respect the given Lorentzian structures, and are required to map into spaces obeying the regularity assumptions from item \ref{La:Blurr1reg} of \autoref{Def:Convergence}.
\end{remark}

\subsection{Good geodesics}\label{Sub:Good TCD} In this section, following \cite{braun2022}, see also  \cite{cavalletti2017,rajala2012a,rajala2012b}, we show the existence of timelike proper-time parametrized $\smash{\ell_p}$-geodesics with $\meas$-densities uniformly in $\Ell^\infty$ in time under mild assumptions on their endpoints. This treatise does not require any nonbranching condition on the underlying space. The proofs of the results to follow are mostly analogous to those of \cite{braun2022} and hence only outlined. 

We regard the corresponding result from \autoref{Th:Good geos TCD} as a key to develop the notion of \emph{timelike weak gradients} on Lorentzian geodesic spaces by using so-called \emph{test plans} --- many of which should exist by \autoref{Th:Good geos TCD} --- following \cite{ambrosio2014a}, and to prove a Lorentzian analogue of the \emph{Sobolev-to-Lipschitz property} \cite[Subsec.~4.1.3]{gigli2013}.

The crucial result providing the critical threshold for \eqref{Eq:Thresh} is the following.

\begin{lemma}\label{Le:Lalelu} Assume $\smash{\wTCD_p^*(K,N)}$ for  $p\in(0,1)$, $K\in\R$, and $N\in[1,\infty)$. Let $\smash{(\mu_0,\mu_1)=(\rho_0\,\meas,\rho_1\,\meas)\in\scrP_\comp^\ac(\mms,\meas)}$ be strongly timelike $p$-dualizable with $\rho_0,\rho_1\in\Ell^\infty(\mms,\meas)$. Let  $D$ be any real number no smaller than $\sup\tsep(\supp\mu_0\times\supp\mu_1)$. Then there exists a timelike proper-time parametrized  $\smash{\ell_p}$-geodesic $(\mu_t)_{t\in[0,1]}$ from $\mu_0$ to $\mu_1$ such that $\mu_t=\rho_t\,\meas\in\Dom(\Ent_\meas)$ for every $t\in[0,1]$, and
\begin{align*}
\meas\big[\{\rho_{1/2}>0\}\big] \geq \rme^{-D\sqrt{K^-N}/2}\,\max\{ \Vert\rho_0\Vert_{\Ell^\infty(\mms,\meas)},\Vert\rho_1\Vert_{\Ell^\infty(\mms,\meas)}\}^{-1}.
\end{align*}
\end{lemma}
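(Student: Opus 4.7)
The plan is to apply $\wTCD_p^*(K,N)$ to the strongly timelike $p$-dualizable pair $(\mu_0,\mu_1)$ directly, and then simply unpack the resulting entropy inequality at the midpoint $t=1/2$ with parameter $N'=N$. This yields a timelike proper-time parametrized $\ell_p$-geodesic $(\mu_t)_{t\in[0,1]}$ from $\mu_0$ to $\mu_1$ and a timelike $p$-dualizing coupling $\pi\in\Pi_\ll(\mu_0,\mu_1)$ such that
\begin{align*}
\scrS_N(\mu_{1/2}) \leq -\int_{\mms^2} \sigma_{K,N}^{(1/2)}(\tsep(x^0,x^1))\,\big[\rho_0(x^0)^{-1/N} + \rho_1(x^1)^{-1/N}\big]\d\pi(x^0,x^1).
\end{align*}
Since $\rho_0,\rho_1\in\Ell^\infty(\mms,\meas)\cap \Dom(\Ent_\meas)$, \autoref{Le:Stu} ensures $\mu_t=\rho_t\,\meas\in\Dom(\Ent_\meas)$ for every $t\in[0,1]$, giving the absolute-continuity part of the statement for free.

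The volume estimate will come from two elementary bounds applied to the preceding inequality. First, since $\pi$ is concentrated on $\supp\mu_0\times\supp\mu_1$ and $\tsep\leq D$ there, \autoref{Re:Lower bounds sigma} with $t=1/2$ gives
\begin{align*}
\sigma_{K,N}^{(1/2)}(\tsep(x^0,x^1))\geq \tfrac{1}{2}\,\rme^{-D\sqrt{K^-/N}/2}
\end{align*}
on $\supp\pi$. Second, from $\rho_i\leq\Vert\rho_i\Vert_{\Ell^\infty(\mms,\meas)}$ $\mu_i$-a.e., the marginal identity gives
\begin{align*}
\int_{\mms^2}\rho_i(x^i)^{-1/N}\d\pi(x^0,x^1) = \int_\mms \rho_i^{1-1/N}\d\meas \geq \Vert\rho_i\Vert_{\Ell^\infty(\mms,\meas)}^{-1/N}.
\end{align*}
Using the elementary inequality $a^{-1/N}+b^{-1/N}\geq 2\max\{a,b\}^{-1/N}$ for $a,b>0$ then yields
\begin{align*}
-\scrS_N(\mu_{1/2})\geq \rme^{-D\sqrt{K^-/N}/2}\,\max\{\Vert\rho_0\Vert_{\Ell^\infty(\mms,\meas)},\Vert\rho_1\Vert_{\Ell^\infty(\mms,\meas)}\}^{-1/N}.
\end{align*}

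To transform this entropy bound into a measure bound for $A:=\{\rho_{1/2}>0\}$, apply Jensen's inequality on $A$ to the concave function $r\mapsto r^{1-1/N}$, using $\int_\mms\rho_{1/2}\d\meas=1$:
\begin{align*}
-\scrS_N(\mu_{1/2}) = \int_A \rho_{1/2}^{1-1/N}\d\meas \leq \meas[A]\,\Big[\frac{1}{\meas[A]}\Big]^{1-1/N} = \meas[A]^{1/N}.
\end{align*}
Combining this with the previous lower bound and raising to the $N$-th power gives exactly
\begin{align*}
\meas[A]\geq \rme^{-D\sqrt{K^-N}/2}\,\max\{\Vert\rho_0\Vert_{\Ell^\infty(\mms,\meas)},\Vert\rho_1\Vert_{\Ell^\infty(\mms,\meas)}\}^{-1},
\end{align*}
as desired. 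The whole argument is a direct unpacking of $\wTCD_p^*(K,N)$ followed by two convexity estimates; no branching hypothesis or selection argument is needed, which is why this lemma fits cleanly before the more delicate ``good geodesic'' constructions of \autoref{Th:Good geos TCD}.
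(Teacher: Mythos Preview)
Your proof is correct and follows essentially the same route as the paper's: apply the $\wTCD_p^*$ inequality at $t=1/2$, invoke \autoref{Le:Stu} for absolute continuity, bound $\sigma_{K,N}^{(1/2)}$ via \autoref{Re:Lower bounds sigma} and the densities via their $\Ell^\infty$-norms, then finish with Jensen's inequality $-\scrS_N(\mu_{1/2})\leq\meas[\{\rho_{1/2}>0\}]^{1/N}$. If anything, your version is slightly cleaner in that you apply \autoref{Re:Lower bounds sigma} directly to $\sigma_{K,N}$ rather than first passing to a weaker curvature parameter via consistency.
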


\begin{proof} Recall that $\smash{\wTCD_p^*(K,N)}$ implies $\smash{\wTCD_p^*(K^-,N)}$ by \autoref{Pr:Consistency TCD}. Let $\smash{(\mu_t)_{t\in[0,1]}}$ be a timelike proper-time parametrized $\smash{\ell_p}$-geodesic from $\mu_0$ to $\mu_1$ and $\smash{\pi\in\Pi_\ll(\mu_0,\mu_1)}$ be some $\smash{\ell_p}$-optimal coupling obeying the inequality defining the condition $\smash{\wTCD_p^*(K^-,N)}$ as in  \autoref{Def:TCD*}. Thanks to \autoref{Le:Stu}, we obtain $\mu_t\in\Dom(\Ent_\meas)$ for every $t\in[0,1]$; in particular, we have  $\smash{\mu_t = \rho_t\,\meas\in\scrP_\comp^\ac(\mms,\meas)}$. Moreover, by \autoref{Re:Lower bounds sigma},
\begin{align*}
\scrS_N(\mu_{1/2})&\leq -\int_{\mms^2} \sigma_{K^-,N}^{(1/2)}(\tsep(x^0,x^1))\,\rho_0(x^0)^{-1/N}\d\pi(x^0,x^1)\\
&\qquad\qquad -\int_{\mms^2} \sigma_{K^-,N}^{(1/2)}(\tsep(x^0,x^1))\,\rho_1(x^1)^{-1/N}\d\pi(x^0,x^1)\\
&\leq -\rme^{-D\sqrt{K^-/N}/2}\,\max\{\Vert\rho_0\Vert_{\Ell^\infty(\mms,\meas)}, \Vert\rho_1\Vert_{\Ell^\infty(\mms,\meas)}\}^{-1/N}.
\end{align*}
On the other hand, we have $\smash{\scrS_N(\mu_{1/2}) \geq -\meas[\{\rho_{1/2}>0\}]^{1/N}}$ by Jensen's inequality, and rearranging terms provides the claim.
\end{proof}

\begin{theorem}\label{Th:Good geos TCD} Under the hypotheses of \autoref{Le:Lalelu}, there exists a timelike proper-time parametrized $\smash{\ell_p}$-geodesic $(\mu_t)_{t\in[0,1]}$ from $\mu_0$ to $\mu_1$ such that for every $t\in[0,1]$, $\smash{\mu_t = \rho_t\,\meas\in\scrP_\comp^\ac(\mms,\meas)}$ and
\begin{align}\label{Eq:Thresh}
\Vert\rho_t\Vert_{\Ell^\infty(\mms,\meas)} \leq \rme^{D\sqrt{K^-N}/2}\,\max\{\Vert\rho_0\Vert_{\Ell^\infty(\mms,\meas)},\Vert\rho_1\Vert_{\Ell^\infty(\mms,\meas)}\},
\end{align}
where
\begin{align*}
D := \sup\tsep(\supp\mu_0\times\supp\mu_1).
\end{align*}
\end{theorem}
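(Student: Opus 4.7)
The strategy follows the Rajala-style ``good geodesic'' construction adapted to the Lorentzian setting as in \cite{rajala2012a,braun2022}. The first step is to upgrade \autoref{Le:Lalelu} to a uniform-in-time statement: under its hypotheses, there exists a timelike proper-time parametrized $\ell_p$-geodesic $(\mu_t)_{t\in[0,1]}$ from $\mu_0$ to $\mu_1$ with $\mu_t=\rho_t\meas\in\Dom(\Ent_\meas)$ for every $t\in[0,1]$, and
\begin{align*}
\meas\bigl[\{\rho_t>0\}\bigr] \;\geq\; e^{-D\sqrt{K^-N}/2}/M \quad\text{for every } t\in[0,1],
\end{align*}
where $M:=\max\{\Vert\rho_0\Vert_{\Ell^\infty(\mms,\meas)},\Vert\rho_1\Vert_{\Ell^\infty(\mms,\meas)}\}$. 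This follows by repeating the proof of \autoref{Le:Lalelu} verbatim once the midpoint-only lower bound is replaced by the uniform coefficient estimate
\begin{align*}
\sigma_{-K^-,N}^{(1-t)}(\vartheta)+\sigma_{-K^-,N}^{(t)}(\vartheta) \;\geq\; (1-t)e^{-t\vartheta\sqrt{K^-/N}}+t\,e^{-(1-t)\vartheta\sqrt{K^-/N}} \;\geq\; e^{-\vartheta\sqrt{K^-/N}/2},
\end{align*}
whose last step is Jensen's inequality applied to the convex function $s\mapsto e^{-sA}$ with weights $(1-t,t)$ at the points $(-tA,-(1-t)A)$; combined with the Jensen bound $\meas[\{\rho_t>0\}]^{1/N}\geq -\scrS_N(\mu_t)$ one recovers the claimed support estimate uniformly in $t$.

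The second step converts this support lower bound into the sought $\Ell^\infty$ upper bound via the Rajala restriction trick. Fix $t\in(0,1)$, let $\bdpi\in\OptTGeo_{\ell_p}^\tsep(\mu_0,\mu_1)$ represent the geodesic of step one, set $R:=Me^{D\sqrt{K^-N}/2}$, and suppose for contradiction that $A:=\{\rho_t>R\}$ satisfies $\meas[A]>0$. Let $\lambda:=\bdpi[\eval_t^{-1}(A)]=\mu_t[A]>R\meas[A]$. The restricted plan $\bdpi':=\lambda^{-1}\bdpi\mres\eval_t^{-1}(A)$ belongs to $\OptTGeo_{\ell_p}^\tsep(\nu_0,\nu_1)$; its marginals $(\nu_0,\nu_1)=(\nu_0^\rho\meas,\nu_1^\rho\meas)$ satisfy $\Vert\nu_i^\rho\Vert_{\Ell^\infty(\mms,\meas)}\leq M/\lambda$ and $\sup\tsep(\supp\nu_0\times\supp\nu_1)\leq D$. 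The pair $(\nu_0,\nu_1)$ inherits strong timelike $p$-dualizability by intersecting the $l^p$-cyclically monotone set $\Gamma$ from \autoref{Def:TL DUAL} with $\supp\nu_0\times\supp\nu_1$, and $\ell_p$-optimality of $\bdpi'$ follows from \cite[Lem.~2.10]{cavalletti2020}. By construction $(\eval_t)_\push\bdpi'$ is concentrated on $A$, so the uniform-in-time lemma of step one, applied to $(\nu_0,\nu_1)$, yields
\begin{align*}
\meas[A] \;\geq\; \meas\bigl[\{\rho_t^{\bdpi'}>0\}\bigr] \;\geq\; e^{-D\sqrt{K^-N}/2}\cdot\frac{\lambda}{M} \;>\; e^{-D\sqrt{K^-N}/2}\cdot\frac{R\meas[A]}{M} \;=\; \meas[A],
\end{align*}
the desired contradiction. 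To obtain a single geodesic with the bound at every $t\in[0,1]$, I would then maximize $\meas[\{\rho_t^{\bdpi}>0\}]$ (or an averaged proxy over dyadic times) over the weakly compact nonempty set of $\wTCD_p^*(-K^-,N)$-witnessing plans from step one; weak compactness follows from \autoref{Le:Villani lemma for geodesic}, and the maximality ensures that at any $t$ the support bound propagates to the restricted plan $\bdpi'$.

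The principal obstacle is that restriction of a plan preserves $\ell_p$-optimality but does \emph{not} automatically preserve the R\'enyi convexity inequality defining $\wTCD_p^*(-K^-,N)$; thus $\bdpi'$ need not itself witness the condition for $(\nu_0,\nu_1)$ and the uniform \autoref{Le:Lalelu} cannot be directly invoked along $\bdpi'$. This is resolved exactly as in \cite{braun2022}: one selects $\bdpi$ maximally within the weakly compact class of $\wTCD^*$-witnesses, so that any putative set $A$ on which the density exceeds $R$ can be replaced --- via the uniform lemma applied to $(\nu_0,\nu_1)$ and subsequent gluing with $\bdpi\mres\eval_t^{-1}(A^\sfc)$, which reassembles an admissible $\wTCD^*$-witnessing plan between $\mu_0$ and $\mu_1$ by linearity of the defining Rényi inequality --- by a plan with strictly larger support at time $t$, contradicting maximality. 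The gluing remains inside the admissible class thanks to concentration on $\Gamma$; the strict inequality comes from the computation $\lambda>R\meas[A]=(M/c)\meas[A]$ with $c=e^{-D\sqrt{K^-N}/2}$. The technical details parallel the infinite-dimensional treatment of \cite{braun2022}.
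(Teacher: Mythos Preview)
Your Step~1 (the uniform-in-$t$ upgrade of \autoref{Le:Lalelu}) is correct; the coefficient estimate via Jensen is fine, and since the $\smash{\wTCD_p^*}$ inequality already holds at every $t$ along a single witnessing geodesic, the support bound indeed propagates to all $t$.

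The genuine gap is in Step~2 and its proposed resolution. In your displayed contradiction you write $\meas[A]\geq\meas[\{\rho_t^{\bdpi'}>0\}]\geq e^{-D\sqrt{K^-N}/2}\lambda/M$, but the second inequality presumes that the \emph{restricted} plan $\bdpi'$ itself satisfies the conclusion of Step~1, i.e.\ is a $\smash{\wTCD_p^*}$-witness for $(\nu_0,\nu_1)$. It is not, as you acknowledge. Your fix --- select $\bdpi$ maximally among witnesses and glue the Step~1 geodesic for $(\nu_0,\nu_1)$ with $\bdpi\mres\eval_t^{-1}(A^\sfc)$ --- breaks down because the assertion that gluing ``reassembles an admissible $\smash{\wTCD_p^*}$-witnessing plan by linearity of the defining R\'enyi inequality'' is false: $\scrS_{N'}$ is concave, not linear, and $\bdpi\mres\eval_t^{-1}(A^\sfc)$ has no reason to satisfy the R\'enyi inequality for its own marginals merely because $\bdpi$ does for $(\mu_0,\mu_1)$. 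Hence the glued plan need not lie in your admissible class, and no contradiction to maximality follows. Even granting witness-ship, the ``strictly larger support at time $t$'' is not clear, since the $t$-support of the new piece may sit inside $A^\sfc\cap\{\rho_t>0\}$.

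The paper's argument is organized differently and avoids both issues. It runs a \emph{bisection}: between consecutive dyadic endpoints it minimizes the excess functional $\scrF_c\circ(\eval_{1/2})_\push$ over the \emph{entire} set $\smash{\OptTGeo_{\ell_p}^\tsep}$ of optimal plans (not merely over witnesses). That the minimum is zero is precisely \cite[Prop.~3.14]{braun2022}, whose restriction--replacement step only needs the competitor to remain $\smash{\ell_p}$-optimal --- which gluing does preserve --- and \autoref{Le:Lalelu} enters only to supply the threshold $c_{n+1}$. Iterating over the dyadic tree and summing the geometric series in the exponent gives the bound at all dyadic times with the single constant $\rme^{D\sqrt{K^-N}/2}$; one then passes to a weak limit in $\smash{\OptTGeo_{\ell_p}^\tsep(\mu_0,\mu_1)}$ and uses lower semicontinuity of $\scrF_c$ to obtain a single geodesic satisfying \eqref{Eq:Thresh} at every $t$.
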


\begin{proof} We construct the required geodesic at dyadic times in $[0,1]$ by a bisection argument and then define the rest of the geodesic by completion. By induction, given $n\in\N_0$ we assume that measures $\smash{\bdalpha^0, \dots, \bdalpha^n\in\OptTGeo_{\ell_p}^\tsep(\mu_0,\mu_1)}$ have already been constructed with the following properties. For every $\smash{k\in\{0,\dots,2^n\}}$, the pair $\smash{((\eval_{(k-1)2^{-n}})_\push\bdalpha^n,(\eval_{k2^{-n}})_\push\bdalpha^n)\in\scrP_\comp^\ac(\mms,\meas)^2}$ is strongly timelike $p$-dualizable, and
\begin{align*}
\sup\tsep(\supp\,(\eval_{(k-1)2^{-n}})_\push\bdalpha^n\times\supp\,(\eval_{k2^{-n}})_\push\bdalpha^n) \leq 2^{-n}\,D.
\end{align*}

\textbf{Step 1.} \textit{Minimization of an appropriate functional.} Set
\begin{align*}
c_{n+1} := \rme^{2^{-n-2}D\sqrt{K^-N}}\,\max\{\Vert\rho_0\Vert_{\Ell^\infty(\mms,\meas)},\Vert\rho_1\Vert_{\Ell^\infty(\mms,\meas)}\}
\end{align*}
and define the functional $\smash{\scrF_{c_{n+1}}\colon \scrP(\mms)\to [0,1]}$ by
\begin{align}\label{Eq:Functional FC}
\scrF_{c_{n+1}}(\mu) := \big\Vert (\rho- c_{n+1})^+\big\Vert_{\Ell^1(\mms,\meas)} + \mu_\perp[\mms].
\end{align}
subject to the decomposition $\mu = \rho\,\meas + \mu_\perp$. It measures how far $\rho$ deviates from being bounded from above by $c$, and how much the input $\mu$ fails to be $\meas$-absolutely continuous. Let $\smash{k\in\{1,\dots,2^{n+1}-1\}}$ be odd. As for \cite[Lem.~3.13]{braun2022}, we prove that $\smash{\scrF_{c_{n+1}}\circ (\eval_{1/2})_\push}$ admits a minimizer  $\smash{\bdpi_k^{n+1}\in \OptTGeo_{\ell_p}^\tsep(\mu_{(k-1)2^{-n-1}},\mu_{(k+1)2^{-n-1}}}$). By gluing, we find a measure $\smash{\bdalpha^{n+1}\in\OptTGeo_{\ell_p}^\tsep(\mu_0,\mu_1)}$ such that
\begin{align*}
(\Restr_{(k-1)2^{-n-1}}^{(k+1)2^{-n-1}})_\push\bdalpha^{n+1} = \bdpi_k^{n+1}
\end{align*}
for every $k$ as above. Here, given $s,t\in [0,1]$ with $s<t$, $\smash{\Restr_s^t}$ is defined in \eqref{Eq:Restr def}.

For every odd $k\in\{1,\dots,2^{n+1}-1\}$, the pairs 
\begin{align*}
&((\eval_{(k-1)2^{-n-1}})_\push\bdalpha^{n+1},(\eval_{k2^{-n-1}})_\push\bdalpha^{n+1}),\\
&((\eval_{k2^{-n-1}})_\push\bdalpha^{n+1},(\eval_{(k+1)2^{-n-1}})_\push\bdalpha^{n+1})
\end{align*}
are both strongly timelike $p$-dualizable \cite[Lem.~3.1]{braun2022}. Furthermore, since $\smash{\bdpi_k^{n+1}}$ is concentrated on curves in $\TGeo^\tsep(\mms)$ with $\tsep$-length at most $2^{-n}\,D$, we obtain
\begin{align*}
\sup\tsep(\supp\,(\eval_{(k-1)2^{-n-1}})_\push\bdalpha^{n+1}\times\supp\,(\eval_{k2^{-n-1}})_\push\bdalpha^{n+1}) &\leq 2^{-n-1}\,D,\\
\sup\tsep(\supp\,(\eval_{k2^{-n-1}})_\push\bdalpha^{n+1}\times\supp\,(\eval_{(k+1)2^{-n-1}})_\push\bdalpha^{n+1})&\leq 2^{-n-1}\,D.
\end{align*}

After Step 2, this will allow us to iteratively produce a sequence $(\bdalpha^n)_{n\in\N}$ in $\smash{\OptTGeo_{\ell_p}^\tsep(\mu_0,\mu_1)}$ according to the preceding scheme.

\textbf{Step 2.} \textit{Uniform density bound.} Given any $n\in\N_0$, any $c'>c_{n+1}$, and any odd  $k\in\smash{\{1,\dots,2^{n+1}-1\}}$, arguing as in \cite[Prop.~3.14]{braun2022} and using \autoref{Le:Lalelu} we get
\begin{align}\label{Eq:Fc inequality}
\begin{split}
&\inf\{ \scrF_{c'}((\eval_{1/2})_\push\bdpi) : \bdpi\in \OptTGeo_{\ell_p}^\tsep((\eval_{(k-1)2^{-n-1}})_\push\bdalpha^{n+1},\\
&\qquad\qquad (\eval_{(k+1)2^{-n-1}})_\push\bdalpha^{n+1}) \}=0.
\end{split}
\end{align}
The support of all considered measures belongs to the compact set $J(\mu_0,\mu_1)$; thus, using \cite[Cor.~3.15]{braun2022} we get \eqref{Eq:Fc inequality} for $c'$ replaced by $c_{n+1}$. Thus, $\smash{(\eval_{k2^{-n-1}})_\push\bdalpha^{n+1}} = \rho_{k2^{-n-1}}\,\meas\in \smash{\scrP_\comp^\ac(\mms,\meas)}$ and, computing the geometric sum in the exponent,
\begin{align}\label{Eq:DENS BD}
\Vert\rho_{k2^{-n-1}}\Vert_{\Ell^\infty(\mms,\meas)} \leq \rme^{D\sqrt{K^-N}/2}\,\max\{\Vert\rho_0\Vert_{\Ell^\infty(\mms,\meas)}, \Vert\rho_1\Vert_{\Ell^\infty(\mms,\meas)}\}.
\end{align}

\textbf{Step 3.} \textit{Completion.} Let $\smash{(\bdalpha^n)_{n\in\N}}$ be the sequence in $\smash{\OptTGeo_{\ell_p}^\tsep(\mu_0,\mu_1)}$ iteratively constructed according to Step 1 and Step 2. Using \autoref{Le:Villani lemma for geodesic}, this sequence converges weakly, up to a  subsequence, to some timelike $\smash{\ell_p}$-optimal geodesic plan $\smash{\bdalpha\in\OptTGeo_{\ell_p}^\tsep(\mu_0,\mu_1)}$. In particular, the assignment $\mu_t := (\eval_t)_\push\bdalpha$ defines a timelike proper-time parametrized $\smash{\ell_p}$-geodesic $(\mu_t)_{t\in[0,1]}$ from $\mu_0$ to $\mu_1$. Since the functional $\scrF_c$ given in terms of the threshold
\begin{align*}
c := \rme^{D\sqrt{K^-N}/2}\,\max\{\Vert\rho_0\Vert_{\Ell^\infty(\mms,\meas)}, \Vert\rho_1\Vert_{\Ell^\infty(\mms,\meas)}\}
\end{align*}
analogously to \eqref{Eq:Functional FC} is weakly lower semicontinuous in $\scrP(J(\mu_0,\mu_1))$ \cite[Lem.~3.13]{braun2022}, \eqref{Eq:DENS BD} yields $\scrF_c(\mu_t) =0$ for every $t\in[0,1]$, which is the claim.
\end{proof}

\subsection{Equivalence with the entropic TCD condition}\label{Sub:Equiv TCDs} In this section, we prove the equivalence of $\smash{\TCD_p^*(K,N)}$ with its entropic counterpart $\smash{\TCD_p^e(K,N)}$ introduced in \cite{cavalletti2020}, provided the underlying space is timelike $p$-essentially nonbranching according to \autoref{Def:Essentially nonbranching}. We also obtain equivalence of the respective weak with their strong versions; see \autoref{Th:Equivalence TCD* and TCDe}. Yet another characterization will be obtained in \autoref{Pr:MDPTS} below.

In this section, in addition to our standing assumptions on $\scrX$ we suppose the latter to be timelike $p$-essentially nonbranching for a fixed $p\in (0,1)$. 

For the convenience of the reader, let us recall the following notions \cite[Def.~3.2, Prop.~3.3]{cavalletti2020}, for which we define $\scrU_N \colon \scrP(\mms)\to [0,\infty]$, $N\in (0,\infty)$, by
\begin{align}\label{Eq:Expo Boltzmann}
\scrU_N(\mu) := \rme^{-\Ent_\meas(\mu)/N}.
\end{align}

\begin{definition}\label{Def:CD entropic} Let $K\in\R$ and $N\in (0,\infty)$. 
\begin{enumerate}[label=\textnormal{\alph*.}]
\item We say that $\scrX$ satis\-fies the \emph{entropic timelike curvature-dimension condition} $\smash{\TCD_p^e(K,N)}$ if for every timelike $p$-dualizable pair $(\mu_0,\mu_1)\in\Dom(\Ent_\meas)^2$, there exist a timelike proper-time parametrized $\smash{\ell_p}$-geo\-desic $(\mu_t)_{t\in [0,1]}$ connecting $\mu_0$ and $\mu_1$  as well as a timelike $p$-dualizing coupling $\pi\in\smash{\Pi_\ll(\mu_0,\mu_1)}$ such that for every $t\in [0,1]$,
\begin{align*}
\scrU_N(\mu_t) \geq \sigma_{K,N}^{(1-t)}\big[\Vert\tsep\Vert_{\Ell^2(\mms^2,\pi)}\big]\,\scrU_N(\mu_0) + \sigma_{K,N}^{(t)}\big[\Vert\tsep\Vert_{\Ell^2(\mms^2,\pi)}\big]\,\scrU_N(\mu_1).
\end{align*}
\item If the previous claim holds for every strongly timelike $p$-dualizable  $(\mu_0,\mu_1)\in(\scrP_\comp(\mms)\cap\Dom(\Ent_\meas))^2$, we say that $\scrX$ satisfies the \emph{weak entropic timelike curvature-dimension condition} $\smash{\wTCD_p^e(K,N)}$.
\end{enumerate}
\end{definition}

\begin{theorem}\label{Th:Equivalence TCD* and TCDe} The  following statements are equivalent for every given $K\in\R$ and $N\in[1,\infty)$.
\begin{enumerate}[label=\textnormal{\textcolor{black}{(}\roman*\textcolor{black}{)}}]
\item\label{La:1} The condition $\smash{\TCD_p^*(K,N)}$ holds.
\item\label{La:1.5} The condition $\smash{\wTCD_p^*(K,N)}$ holds.
\item\label{La:2} For every timelike $p$-dualizable  $(\mu_0,\mu_1) = (\rho_0\,\meas,\rho_1\,\meas)\in\scrP^\ac(\mms,\meas)^2$ there exists some timelike $\smash{\ell_p}$-optimal geodesic plan $\bdpi\in\smash{\OptTGeo_{\ell_p}^\tsep(\mu_0,\mu_1)}$ such that for every $t\in [0,1]$,  we have $\smash{(\eval_t)_\push\bdpi=\rho_t\,\meas\in\scrP^\ac(\mms,\meas)}$, and for every $N'\geq N$, the inequality
\begin{align*}
\rho_t(\gamma_t)^{-1/N'}&\geq \sigma_{K,N'}^{(1-t)}(\tsep(\gamma_0,\gamma_1))\,\rho_0(\gamma_0)^{-1/N'}  + \sigma_{K,N'}^{(t)}(\tsep(\gamma_0,\gamma_1))\,\rho_1(\gamma_1)^{-1/N'}
\end{align*}
holds for $\bdpi$-a.e.~$\gamma\in\TGeo^\tsep(\mms)$.
\item\label{La:3} The condition $\smash{\TCD_p^e(K,N)}$ holds.
\item\label{La:3.5} The condition $\smash{\wTCD_p^e(K,N)}$ holds.
\end{enumerate}
\end{theorem}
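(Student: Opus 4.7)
The plan is to prove the equivalences through a hub-and-spoke structure, showing that each of \ref{La:1}, \ref{La:1.5}, \ref{La:3}, \ref{La:3.5} is equivalent to the pathwise condition \ref{La:2}. The trivial implications \ref{La:1} $\Rightarrow$ \ref{La:1.5} and \ref{La:3} $\Rightarrow$ \ref{La:3.5} follow since strong timelike $p$-dualizability implies timelike $p$-dualizability, so the remaining work is to derive \ref{La:1} and \ref{La:3} from \ref{La:2}, and conversely \ref{La:2} from \ref{La:1.5} and from \ref{La:3.5}.

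For \ref{La:2} $\Rightarrow$ \ref{La:1}, one simply integrates the pathwise inequality against the plan $\bdpi$ given by \ref{La:2}. Since $(\eval_t)_\push\bdpi = \rho_t\,\meas$, a change of variables yields $\scrS_{N'}(\mu_t) = -\int \rho_t(\gamma_t)^{-1/N'}\,\rmd\bdpi(\gamma)$, and using $(\eval_0,\eval_1)_\push\bdpi = \pi$ on the right reproduces the defining inequality of $\TCD_p^*(K,N)$. For \ref{La:2} $\Rightarrow$ \ref{La:3}, set $u_i := -\log\rho_i(\gamma_i)/N'$ and $\vartheta := \tsep(\gamma_0,\gamma_1)$, so that the inequality in \ref{La:2} rewrites as $-\log\rho_t(\gamma_t)/N' \leq \rmG_t(u_0,u_1,K\vartheta^2/N')$ with $\rmG_t$ from \eqref{Eq:GtHt}. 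Integrating against $\bdpi$, applying Jensen's inequality to the jointly convex $\rmG_t$, and using \eqref{Eq:Distortion coeff property} to rewrite $\sigma_{K/N'}^{(s)}(\Vert\tsep\Vert_{\Ell^2(\mms^2,\pi)}) = \sigma_{K\Vert\tsep\Vert_{\Ell^2(\mms^2,\pi)}^2/N'}^{(s)}(1)$ yields the entropic bound after exponentiation.

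The main technical content lies in \ref{La:1.5} $\Rightarrow$ \ref{La:2} (and its entropic analogue \ref{La:3.5} $\Rightarrow$ \ref{La:2}); I describe the reduced version, the entropic one being parallel upon replacing density bounds with logarithmic ones. First I would reduce to the case $\mu_0,\mu_1\in\scrP_\comp^\ac(\mms,\meas)$ with densities in $\Ell^\infty(\mms,\meas)$ via a truncation-and-limit argument akin to Step 8 in the proof of \autoref{Th:Stability TCD}, using \autoref{Th:Good geos TCD} so that intermediate densities stay bounded. Fix any timelike $\ell_p$-optimal geodesic plan $\bdpi$ representing a timelike proper-time parametrized $\ell_p$-geodesic $(\mu_t)_{t\in[0,1]}$ witnessing $\wTCD_p^*(K,N)$ between $\mu_0$ and $\mu_1$. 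Given a finite Borel partition $\{E^1,\dots,E^n\}$ of $\TGeo^\tsep(\mms)$ with $\bdpi[E^i]>0$, set $\bdpi^i := \bdpi[E^i]^{-1}\,\bdpi\mres E^i$. By the restriction property of $\ell_p$-optimality \cite[Lem.~2.10]{cavalletti2020}, each $\bdpi^i$ is again a timelike $\ell_p$-optimal geodesic plan, with strongly timelike $p$-dualizable marginals (being restrictions of a chronological coupling). Applying $\wTCD_p^*(K,N)$ to each piece yields integrated entropy bounds for $(\eval_t)_\push\bdpi^i$. The crucial input is that timelike $p$-essential nonbranching together with \autoref{Le:Mutually singular} forces the $(\eval_t)_\push\bdpi^i$ to remain mutually singular for every $t\in(0,1)$. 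Writing $(\eval_t)_\push\bdpi^i = \rho_t^i\,\meas$ with pairwise disjoint supports, we have $\rho_t = \sum_i \bdpi[E^i]\,\rho_t^i$ and hence $\rho_t(\gamma_t)^{-1/N'} = \bdpi[E^i]^{-1/N'}\,\rho_t^i(\gamma_t)^{-1/N'}$ for $\bdpi^i$-a.e.~$\gamma$, so the integrated bounds assemble without cross terms. Refining the partitions so that the oscillation of $\smash{(\gamma_0,\gamma_1)\mapsto \sigma_{K,N'}^{(s)}(\tsep(\gamma_0,\gamma_1))}$ on each $E^i$ tends to $0$ and passing to the limit by a Lebesgue-differentiation-style argument  produces the pathwise inequality of \ref{La:2} $\bdpi$-a.e.

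The principal obstacle is exactly this limiting reconstruction: without timelike $p$-essential nonbranching the intermediate mass distributions from different pieces may overlap, producing cross terms that destroy the piecewise pointwise estimate. A secondary but nontrivial difficulty is propagating the reduction from compactly supported bounded marginals to general pairs in \ref{La:2}, where uniform-in-$t$ density control from \autoref{Th:Good geos TCD} is essential to justify taking limits in the Rényi densities without losing the causal/chronological structure recorded by the pathwise inequality.
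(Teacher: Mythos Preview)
Your overall architecture --- closing the cycle through the pathwise condition \ref{La:2} via integration, Jensen applied to $\rmG_t$, and a partition/restriction argument --- matches the paper's. But your sketch of \ref{La:1.5} $\Rightarrow$ \ref{La:2} has a genuine gap.

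When you restrict $\bdpi$ to $\bdpi^i$ and then ``apply $\smash{\wTCD_p^*(K,N)}$ to each piece'', the definition only provides \emph{some} timelike proper-time parametrized $\smash{\ell_p}$-geodesic and \emph{some} timelike $p$-dualizing coupling between the restricted marginals satisfying the entropy inequality. There is no reason a priori that this witness coincides with $\bdpi^i$ (or that the coupling is $(\eval_0,\eval_1)_\push\bdpi^i$). Without that identification, you obtain bounds for the wrong intermediate measures, and the pieces do not paste back to a pathwise estimate along the original $\bdpi$. The paper closes this gap by invoking the \emph{uniqueness} of chronological $\smash{\ell_p}$-optimal couplings and of timelike $\smash{\ell_p}$-optimal geodesic plans (\autoref{Th:Uniqueness couplings}, \autoref{Th:Uniqueness geodesics}), which are proved later under $\smash{\TMCP_p^*(K,N)}$ --- itself a consequence of $\smash{\wTCD_p^*(K,N)}$ via \autoref{Pr:TMCP to TCD}. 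This is precisely the point of \autoref{Re:Uniq}: uniqueness is a \emph{prerequisite} for the equivalence, not a consequence. Your invocation of \autoref{Le:Mutually singular} handles mutual singularity of the intermediate slices, but that lemma presupposes the $\bdpi^i$ are themselves the witnesses, which is exactly what needs uniqueness.

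A secondary issue: you assert that the restricted marginals are strongly timelike $p$-dualizable ``being restrictions of a chronological coupling''. That is not sufficient; strong timelike $p$-dualizability demands that \emph{every} $\smash{\ell_p}$-optimal coupling be chronological. The paper secures this by first reducing to $\supp\mu_0\times\supp\mu_1\subset\smash{\mms_\ll^2}$ (via \autoref{Re:Strong timelike}), and partitions the base space $\mms^2$ through a $\cap$-stable generator rather than the plan space directly. Your reduction to bounded compactly supported densities does not automatically achieve this chronological containment.
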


\begin{remark} The exceptional set in \ref{La:2} may depend on $t$ and $N'$. (However, with some effort it seems possible to achieve independence of the exceptional set on $t$, e.g.~by following \cite[Cor.~9.5]{cavallettimilman2021}.) 

By \autoref{Pr:iii to iv} and the proofs of \autoref{Pr:ii to iii} and \autoref{Pr:v to iii}, any of the claims in \autoref{Th:Equivalence TCD* and TCDe} is equivalent to the following weaker version of \ref{La:2}: for every pair $\smash{(\mu_0,\mu_1)=(\rho_0\,\meas,\rho_1\,\meas)\in\scrP_\comp^\ac(\mms,\meas)^2}$ such that $\smash{\supp\mu_0\times\supp\mu_1\subset\mms_\ll^2}$ and $\rho_0,\rho_1\in\Ell^\infty(\mms,\meas)$, there exists $\smash{\bdpi\in\OptTGeo_{\ell_p}^\tsep(\mu_0,\mu_1)}$ such that for every $t\in[0,1]$, we have $\smash{(\eval_t)_\push\bdpi\in\scrP^\ac(\mms,\meas)}$ and
\begin{align*}
\rho_t(\gamma_t)^{-1/N}&\geq \sigma_{K,N}^{(1-t)}(\tsep(\gamma_0,\gamma_1))\,\rho_0(\gamma_0)^{-1/N}  + \sigma_{K,N}^{(t)}(\tsep(\gamma_0,\gamma_1))\,\rho_1(\gamma_1)^{-1/N}
\end{align*}
holds for $\bdpi$-a.e.~$\gamma\in\TGeo^\tsep(\mms)$, where $(\eval_t)_\push\bdpi = \rho_t\,\meas$.

An analogous note applies to \autoref{Th:Equiv TCD with geo},  \autoref{Th:Equivalence TMCP* and TMCPe}, and \autoref{Th:Equivalence TMCP}.
\end{remark}

It is clear that \ref{La:1} implies \ref{La:1.5}, and that \ref{La:3} yields \ref{La:3.5}. Moreover, \ref{La:2}  implies \ref{La:1} by integration.  The proofs of \ref{La:1.5} implying \ref{La:2},  \ref{La:2} implying \ref{La:3}, and \ref{La:3.5} implying \ref{La:2} are, for the sake of clarity,  outsourced to \autoref{Pr:ii to iii}, \autoref{Pr:iii to iv}, and \autoref{Pr:v to iii}, respectively. 

\begin{remark}\label{Re:Uniq} Below, we occasionally use two results which are only established later in \autoref{Ch:TMCP}, namely  the following.
\begin{itemize}
\item First, the condition $\smash{\wTCD_p^*(K,N)}$ implies the timelike measure-contraction property $\smash{\TMCP^*(K,N)}$ according to \autoref{Def:TMCP}, cf.~\autoref{Pr:TMCP to TCD}.
\item Second, on timelike $p$-essentially nonbranching spaces satisfying the latter condition --- or $\smash{\TMCP^e(K,N)}$ according to \autoref{Def:TMCPe} ---, chronological $\smash{\ell_p}$-optimal couplings and timelike $\smash{\ell_p}$-optimal geodesic plans with sufficiently well-behaved endpoints are unique, cf.~\autoref{Th:Uniqueness couplings}, \autoref{Th:Uniqueness geodesics}, and \autoref{Re:From TNB to TENB}. 
\end{itemize}
As no result of this section is used in \autoref{Ch:TMCP}, no circular reasoning occurs.

The key point of these results here is that the timelike proper-time parametrized $\smash{\ell_p}$-geodesics and the $\smash{\ell_p}$-optimal couplings from \autoref{Def:TCD*}, \autoref{Def:TCD}, as well as  \autoref{Def:TMCP}, which are a priori unrelated, are in fact induced by the same timelike $\smash{\ell_p}$-optimal geodesic plan.

This means that we must have these uniqueness results at our disposal \emph{a priori} to establish \autoref{Th:Equivalence TCD* and TCDe}. In particular, these do \emph{not} follow from \autoref{Th:Equivalence TCD* and TCDe}, \autoref{Re:From TNB to TENB}, and \cite[Thm.~3.19, Thm.~3.20]{cavalletti2020}.
\end{remark}

\begin{proposition}\label{Pr:ii to iii} Assume $\smash{\wTCD_p^*(K,N)}$ for some $K\in\R$ and $N\in [1,\infty)$. Then for every timelike $p$-dualizable pair $(\mu_0,\mu_1) = (\rho_0\,\meas,\rho_1\,\meas)\in\scrP^\ac(\mms,\meas)^2$ there exists some $\bdpi\in\smash{\OptTGeo_{\ell_p}^\tsep(\mu_0,\mu_1)}$ such that for every $t\in [0,1]$, we have $(\eval_t)_\push\bdpi= \rho_t\, \meas\in\smash{\scrP^\ac(\mms,\meas)}$, and for every $N'\geq N$,  the inequality
\begin{align*}
\rho_t(\gamma_t)^{-1/N'}\geq \sigma_{K,N'}^{(1-t)}(\tsep(\gamma_0,\gamma_1))\,\rho_0(\gamma_0)^{-1/N'} + \sigma_{K,N'}^{(t)}(\tsep(\gamma_0,\gamma_1))\,\rho_1(\gamma_1)^{-1/N'}
\end{align*}
holds for $\bdpi$-a.e.~$\gamma\in\TGeo^\tsep(\mms)$, where $(\eval_t)_\push\bdpi = \rho_t\,\meas$.
\end{proposition}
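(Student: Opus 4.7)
My plan is to exploit uniqueness of the optimal plan under timelike $p$-essential nonbranching, so as to upgrade the \emph{integrated} Rényi inequality of $\wTCD_p^*(K,N)$ to a pointwise inequality by restricting the plan to arbitrary Borel subsets of $\TGeo^\tsep(\mms)$.

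First, reduce to the strongly timelike $p$-dualizable, compactly supported, bounded-density case, by the truncation trick from Step 8 of the proof of \autoref{Th:Stability TCD}: intersect an $\ell_p$-optimal chronological coupling witnessing timelike $p$-dualizability with the sub-level sets $E_i := \{\rho_0\leq i,\,\rho_1\leq i\}$, renormalize, and patch the pointwise conclusion back. In this reduced setting $\wTCD_p^*(K,N)$ implies $\TMCP_p^*(K,N)$ (see \autoref{Re:Uniq} referencing \autoref{Pr:TMCP to TCD}), so by \autoref{Th:Uniqueness couplings} and \autoref{Th:Uniqueness geodesics} the chronological $\ell_p$-optimal coupling $\pi$ between $\mu_0$ and $\mu_1$ is unique, and so is the timelike proper-time parametrized $\smash{\ell_p}$-optimal geodesic plan $\bdpi$ delivered by $\wTCD_p^*$; in particular $\pi = (\eval_0,\eval_1)_\push\bdpi$, and each $\mu_t := (\eval_t)_\push\bdpi$ is $\meas$-absolutely continuous by \autoref{Le:Stu}.

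The key localization step is the following. For any Borel $A\subset\TGeo^\tsep(\mms)$ with $\bdpi[A] > 0$, set $\bdpi^A := \bdpi[A]^{-1}\bdpi\mres A$. Then $\bdpi^A \in \OptTGeo_{\ell_p}^\tsep(\mu_0^A,\mu_1^A)$ (restrictions of $\ell_p$-optimal couplings are $\ell_p$-optimal, cf.~\cite[Lem.~2.10]{cavalletti2020}), its intermediate marginal $\mu_t^A$ remains $\meas$-absolutely continuous (via \autoref{Le:Mutually singular} applied to a suitable finite partition approximating~$A$, leveraging timelike $p$-essential nonbranching), and $(\mu_0^A,\mu_1^A)$ inherits strong timelike $p$-dualizability from $(\mu_0,\mu_1)$. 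Applying $\wTCD_p^*(K,N)$ to $(\mu_0^A,\mu_1^A)$ furnishes \emph{some} timelike proper-time parametrized $\ell_p$-geodesic plan; uniqueness (\autoref{Th:Uniqueness geodesics} applied to the marginals of $\bdpi^A$) identifies it with $\bdpi^A$ itself, yielding the integrated Rényi inequality for $\bdpi^A$ and every $N'\geq N$. Writing the densities of the marginals of $\bdpi^A$ as
\[
\rho_r^A(x) \;=\; \bdpi[A]^{-1}\,\bdpi^x_r[A]\,\rho_r(x), \qquad r\in\{0,t,1\},
\]
in terms of the disintegrations $(\bdpi^x_r)_{x\in\mms}$ of $\bdpi$ against $\eval_r$, the inequality rewrites as
\[
\int_A \bdpi^{\gamma_t}_t[A]^{-1/N'}\rho_t(\gamma_t)^{-1/N'}\d\bdpi(\gamma) \geq \int_A\!\Big[\sigma^{(1-t)}_{K,N'}\bdpi^{\gamma_0}_0[A]^{-1/N'}\rho_0(\gamma_0)^{-1/N'} + \sigma^{(t)}_{K,N'}\bdpi^{\gamma_1}_1[A]^{-1/N'}\rho_1(\gamma_1)^{-1/N'}\Big]\!\d\bdpi(\gamma).
\]

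Finally, running $A$ through a countable family shrinking to $\bdpi$-a.e.~curve (e.g.~preimages under $(\eval_0,\eval_t,\eval_1)$ of products of small balls in $\mms^3$), one concludes by a Lebesgue differentiation argument: timelike $p$-essential nonbranching forces the three disintegrations $\bdpi^{\gamma_0}_0, \bdpi^{\gamma_t}_t, \bdpi^{\gamma_1}_1$ to select the same family of curves near a typical $\gamma$, so all three mass ratios $\bdpi^{\gamma_r}_r[A]/\bdpi[A]$ tend to the same limit and cancel, leaving the desired pointwise inequality at $\bdpi$-a.e.\ $\gamma$ for each fixed $N'$ and $t$; a countable dense exhaustion in $(t,N')$ and continuity of the coefficients in $N'$ handle the full statement. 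The main obstacle is precisely this last differentiation step: controlling the joint asymptotics of the three disintegration masses along $\bdpi$, which is where timelike $p$-essential nonbranching (through \autoref{Le:Mutually singular}) is indispensable. The full assertion on $\scrP^\ac(\mms,\meas)^2$ then follows by undoing the truncation from the first paragraph.
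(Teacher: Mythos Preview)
Your overall architecture --- reduce, restrict the unique plan to subsets, invoke uniqueness to identify the $\wTCD_p^*$ plan with the restriction, pass from integrated to pointwise --- matches the paper. But two steps are not in order as written.

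First, the reduction. Truncating by density levels alone does not make $(\mu_0^A,\mu_1^A)$ \emph{strongly} timelike $p$-dualizable, which is what $\wTCD_p^*$ requires: having one chronological $\ell_p$-optimal coupling (namely $(\eval_0,\eval_1)_\push\bdpi^A$) does not prevent other, non-chronological $\ell_p$-optimal couplings of the restricted marginals. The paper fixes this by the further reduction (its Step~4.1) of covering $\supp\pi\cap\mms_\ll^2$ by rectangles $A_i\times B_i$ with $\bar A_i\times\bar B_i\subset\mms_\ll^2$, so that one may assume $\supp\mu_0\times\supp\mu_1\subset\mms_\ll^2$; then \emph{every} restricted pair is strongly timelike $p$-dualizable by \autoref{Re:Strong timelike}, and your localization goes through.

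Second, the ``Lebesgue differentiation'' is the wrong picture and is not made precise. Once uniqueness holds (maps from both endpoints via \autoref{Th:Uniqueness geodesics} applied in each causal direction), the disintegrations $\bdpi_0^{\gamma_0}$ and $\bdpi_1^{\gamma_1}$ are Dirac masses at $\gamma$; and injectivity of $\eval_t$ $\bdpi$-a.e.\ (which is exactly the content of \autoref{Le:Mutually singular}) forces $\bdpi_t^{\gamma_t}=\delta_\gamma$ as well. Hence your factors $\bdpi_r^{\gamma_r}[A]$ are identically $1$ on $A$ for $r\in\{0,t,1\}$ --- they do not ``tend to the same limit'', they \emph{are} the same --- and your displayed inequality collapses immediately to $\int_A\rho_t(\gamma_t)^{-1/N'}\d\bdpi\geq\int_A[\cdots]\d\bdpi$ for every Borel $A$, hence pointwise. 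The paper organizes this more transparently: it restricts along a refining endpoint partition $G_{ij}=(\eval_0,\eval_1)^{-1}(L_i\times L_j)$ built from a $\cap$-stable generator of the Borel $\sigma$-field; the plans $\bdpi^{ij}$ are mutually singular by disjointness of the $G_{ij}$, so \autoref{Le:Mutually singular} gives mutual singularity of their time-$t$ marginals and hence the exact density relation $\varrho_t^{ij}=\lambda_{ij}^{-1}\rho_t$ on the relevant support. The normalization then cancels exactly and one gets $\int_{G_{ij}}\rho_t^{-1/N'}\d\bdpi\geq\int_{G_{ij}}[\cdots]\d\bdpi$; since the $G_{ij}$ generate the relevant $\sigma$-algebra as $n\to\infty$, the pointwise inequality follows without any differentiation.
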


\begin{proof} \textbf{Step 1.} \emph{Reinforcement of the assumptions on $\mu_0$ and $\mu_1$.} We first assume that $\smash{\supp\mu_0\times\supp\mu_1\subset\mms_\ll^2}$, that $\mu_0,\mu_1\in\scrP_\comp(\mms)$, and that $\rho_0,\rho_1\in\Ell^\infty(\mms,\meas)$. As mentioned above, timelike $p$-dualizable couplings and timelike $\ell_p$-optimal geodesic plans will then be unique all over Step 1, Step 2, and Step 3 without further notice.

\textbf{Step 2.} \textit{Restriction.} Observe that the pair  $(\mu_0,\mu_1)$ is in fact strongly timelike $p$-dualizable thanks to  \autoref{Re:Strong timelike}. Fix a $\cap$-stable generator $\{M_n : n\in\N\}$  of the  Borel $\sigma$-field of $\mms$; to be precise, the $M_n$'s  should be chosen to generate the relative Borel $\sigma$-field of the compact set $\supp\mu_0\cup\supp\mu_1$, but we ignore this minor technicality by assuming compactness of $\mms$ instead until Step 4. Given $n\in\N$ we cover $\mms$ by the mutually disjoint Borel sets 
\begin{align*}
L_1 &:= M_1\cap \dots \cap M_n,\\
L_2 &:= M_1\cap\dots\cap M_{n-1}\cap M_n^\sfc,\,\dots\\
L_{2^n} &:= M_1^\sfc \cap \dots \cap M_n^\sfc.
\end{align*}
Given the unique timelike $p$-dualizing coupling $\smash{\pi\in\Pi_\ll(\mu_0,\mu_1)}$ of $\mu_0$ and $\mu_1$, let us define $\smash{\mu_0^{ij},\mu_1^{ij}\in\scrP_\comp(\mms)\cap\Dom(\Ent_\meas)}$ by 
\begin{align*}
\mu_0^{ij}[A] &:= \lambda_{ij}^{-1}\,\pi[(A\cap  L_i) \times L_j] = \varrho_0^{ij}\,\meas,\\
\mu_1^{ij}[A] &:= \lambda_{ij}^{-1}\,\pi[L_i\times (A\cap L_j)] = \varrho_1^{ij}\,\meas
\end{align*}
for $i,j\in\{1,\dots,2^n\}$ provided $\lambda_{ij} := \pi[L_i\times L_j]>  0$. Then 
\begin{align}\label{Eq:MSING}
\begin{split}
\mu_0^{ij} &\perp \mu_0^{i'j'},\\
\mu_1^{ij} &\perp \mu_1^{i'j'}
\end{split}
\end{align}
for every $i,i',j,j'\in\{1,\dots,2^n\}$ with $i\neq i'$ or $j\neq j'$. Also,  $\smash{(\mu_0^{ij},\mu_1^{ij})}$ is strongly timelike $p$-dualizable by \autoref{Re:Strong timelike}. By the consistency part in \autoref{Pr:Consistency TCD} and the above mentioned uniqueness results, and $\wTCD_p^*(K,N)$,  there exists some $\smash{\bdpi^{ij}\in\OptTGeo_{\ell_p}^\tsep(\mu_0^{ij},\mu_1^{ij})}$ with 
\begin{align}\label{Eq:THa inequ}
\begin{split}
&\int_{\TGeo^\tsep(\mms)} \sigma_{K,N'}^{(1-t)}(\tsep(\gamma_0,\gamma_1))\,\varrho_0^{ij}(\gamma_0)^{-1/N'}\d\bdpi^{ij}(\gamma)\\
&\qquad\qquad\qquad\qquad + \int_{\TGeo^\tsep(\mms)} \sigma_{K,N'}^{(t)}(\tsep(\gamma_0,\gamma_1))\,\varrho_1^{ij}(\gamma_1)^{-1/N'}\d\bdpi^{ij}(\gamma)\\
&\qquad\qquad \leq \int_{\TGeo^\tsep(\mms)} \varrho_t^{ij}(\gamma_t)^{-1/N'}\d\bdpi^{ij}(\gamma)
\end{split}
\end{align}
for every $t\in [0,1]$ and every $N'\geq N$; note that since $\smash{\mu_0^{ij},\mu_1^{ij}\in \scrP_\comp(\mms)\cap\Dom(\Ent_\meas)}$ as a consequence of the assumption $\rho_0,\rho_1\in\Ell^\infty(\mms,\meas)$, we have $(\eval_t)_\push\bdpi^{ij} = \smash{\rho_t^{ij}}\,\meas \in\smash{\scrP^\ac(\mms,\meas)}$ thanks to \autoref{Le:Stu}.

\textbf{Step 3.} \textit{Pasting plans together and conclusion.} 
By uniqueness, we get
\begin{align*}
\bdpi = \lambda_{11}\,\bdpi^{11} + \lambda_{12}\,\bdpi^{12} + \dots + \lambda_{2^n2^n}\,\bdpi^{2^n2^n}
\end{align*}
for every $n\in\N$, since the right-hand side belongs to $\smash{\OptTGeo_{\ell_p}^\tsep(\mu_0,\mu_1)}$. In particular, setting $\smash{\mu_t := (\eval_t)_\push\bdpi = \rho_t\,\meas}$, $t\in[0,1]$ --- the $\meas$-ab\-solute continuity stemming from \autoref{Le:Stu} and again uniqueness ---, we have
\begin{align*}
\rho_t = \lambda_{11}\,\varrho_t^{11} + \lambda_{12}\,\varrho_t^{12} + \dots + \lambda_{2^n2^n}\,\varrho_t^{2^n2^n}\quad\meas\text{-a.e.}
\end{align*}
Moreover, \eqref{Eq:MSING} and \autoref{Le:Mutually singular} imply
\begin{align*}
(\eval_t)_\push\bdpi^{ij} \perp (\eval_t)_\push\bdpi^{i'j'}
\end{align*}
for every $t\in (0,1)$ and every $i,i',j,j'\in\{1,\dots,2^n\}$ with $i\neq i'$ or $j\neq j'$, i.e.
\begin{align*}
\meas\big[\{\varrho_t^{ij} > 0\} \cap \{\varrho_t^{i'j'}>0\}\big] = 0.
\end{align*}
Setting $G_{ij} := (\eval_0,\eval_1)^{-1}(L_i\times L_j)$, by construction  we obtain from \eqref{Eq:THa inequ} that
\begin{align*}
&\int_{G_{ij}} \sigma_{K,N'}^{(1-t)}(\tsep(\gamma_0,\gamma_1))\,\rho_0(\gamma_0)^{-1/N'}\d\bdpi(\gamma)\\
&\qquad\qquad\qquad\qquad + \int_{G_{ij}} \sigma_{K,N'}^{(t)}(\tsep(\gamma_0,\gamma_1))\,\rho_1(\gamma_1)^{-1/N'}\d\bdpi(\gamma)\\
&\qquad\qquad \leq \int_{G_{ij}} \rho_t(\gamma_t)^{-1/N'}\d\bdpi(\gamma).
\end{align*}
Since $i,j\in\{1,\dots,2^n\}$ and $n\in\N$ were arbitrary, the claim follows.

\textbf{Step 4.} \textit{Removing the restrictions on $\mu_0$ and $\mu_1$.} We initially address the first two restrictions in Step 1, then the last one. 

\textbf{Step 4.1.} If the pair $\smash{(\mu_0,\mu_1) \in\scrP^\ac(\mms,\meas)^2}$ with $\rho_0,\rho_1\in\Ell^\infty(\mms,\meas)$ is merely timelike $p$-dualizable, let $\pi\in\Pi_\ll(\mu_0,\mu_1)$ be a corresponding timelike $p$-dualizing coupling. Since $\smash{\mms_\ll^2}$ is open and $\met$ is separable, we cover $\smash{\supp\pi\cap \mms_\ll^2}$ by countably many (not necessarily disjoint) relatively compact rectangles $A_i\times B_i$ with $\smash{\bar{A}_i\times\bar{B}_i \subset \mms_\ll^2}$, where $A_i,B_i\subset\mms$ are open, $i\in\N$. Set $\pi^0 := 0$, and define the measure $\pi^i$ on $\mms$ inductively by
\begin{align*}
\pi^i := (\pi - \pi^{i-1})\mres (A_i\times B_i).
\end{align*}
Then $\smash{\pi = \pi^1 + \pi^2 + \dots}$ by construction. Define $\smash{\hat{\pi}^i\in\scrP_\comp(\mms^2)}$ by
\begin{align*}
\hat{\pi}^i := \pi^i[\mms^2]^{-1}\,\pi^i
\end{align*}
provided $\smash{\pi^i[\mms^2]> 0}$, and consider its marginals
\begin{align*}
\mu_0^i := (\pr_1)_\push\hat{\pi}^i = \varrho_0^i\,\meas,\\
\mu_1^i := (\pr_2)_\push\hat{\pi}^i = \varrho_1^i\,\meas.
\end{align*} 
By \autoref{Re:Strong timelike}, the pair $\smash{(\mu_0^i,\mu_1^i)}$ is strongly timelike $p$-dualizable for every $i\in\N$. Thus, the previous steps imply the existence of $\smash{\bdpi^i\in\OptTGeo_{\ell_p}^\tsep(\mu_0^i,\mu_1^i)}$ such that for every $t\in (0,1)$, we have $(\eval_t)_\push\bdpi^i = \smash{\varrho_t^i}\,\meas\in\smash{\scrP^\ac(\mms,\meas)}$,  and for every $N'\geq N$,
\begin{align}\label{Eq:The Inequality rhoT}
\varrho_t^i(\gamma_t)^{-1/N'} \geq \sigma_{K,N'}^{(1-t)}(\tsep(\gamma_0,\gamma_1))\,\varrho_0^i(\gamma_0)^{-1/N'} + \sigma_{K,N'}^{(t)}(\tsep(\gamma_0,\gamma_1))\,\varrho_1^i(\gamma_1)^{-1/N'}
\end{align}
holds for $\smash{\bdpi^i}$-a.e.~$\gamma\in\TGeo^\tsep(\mms)$. By construction,
\begin{align*}
\mu_0^i &\perp \mu_0^j,\\
\mu_1^i &\perp \mu_1^j
\end{align*}
for every $i,j\in\N$ with $i\neq j$, and from  \autoref{Le:Mutually singular} we derive
\begin{align*}
(\eval_t)_\push\bdpi^i \perp (\eval_t)_\push\bdpi^j.
\end{align*}
Note that the measure
\begin{align*}
\bdpi := \pi^1[\mms^2]\,\bdpi^1 + \pi^2[\mms^2]\,\bdpi^2 + \dots
\end{align*}
is a timelike $\smash{\ell_p}$-optimal geodesic plan interpolating $\mu_0$ and $\mu_1$. It satisfies $(\eval_t)_\push\bdpi = \rho_t\,\meas\in \smash{\scrP^\ac(\mms,\meas)}$ for every $t\in[0,1]$, and 
\begin{align*}
\rho_t = \pi^1[\mms^2]\,\varrho_t^1 + \pi^2[\mms^2]\,\varrho_t^2 + \dots\quad\meas\textnormal{-a.e.}
\end{align*}
Together with \eqref{Eq:The Inequality rhoT} this discussion implies, for $\bdpi$-a.e.~$\gamma\in\TGeo^\tsep(\mms)$,
\begin{align*}
\rho_t(\gamma)^{-1/N'} \geq \sigma_{K,N'}^{(1-t)}(\tsep(\gamma_0,\gamma_1))\,\rho_0(\gamma_0)^{-1/N'} + \sigma_{K,N'}^{(t)}(\tsep(\gamma_0,\gamma_1))\,\rho_1(\gamma_1)^{-1/N'}.
\end{align*}

\textbf{Step 4.2.} Finally, given any timelike $p$-dualizable pair $(\mu_0,\mu_1)= (\rho_0\,\meas,\rho_1\,\meas)\in\smash{\scrP^\ac(\mms,\meas)^2}$, for every $i,j\in\N$ we set 
\begin{align*}
E_i &:= \{i-1 \leq \rho_0^i < i\},\\
F_j &:= \{j-1 \leq \rho_1^j < j\}.
\end{align*}
By restriction \cite[Lem. 2.10]{cavalletti2020}, $\smash{(\mu_0^i,\mu_1^j)}$ is timelike $p$-dualizable, where 
\begin{align*}
\mu_0^i &:= \mu_0[E_i]^{-1}\,\mu_0\mres E_i = \rho_0^i\,\meas,\\
\mu_1^j &:= \mu_1[F_j]^{-1}\,\mu_1\mres F_j = \rho_1^j\,\meas
\end{align*}
provided $\mu_0[E_i] > 0$ as well as $\mu_1[F_j]>0$. Clearly  $\smash{\rho_0^i,\rho_1^j\in\Ell^\infty(\mms,\meas)}$, and a similar argument as above  leads to the conclusion. We omit the details.
\end{proof}

\begin{proposition}\label{Pr:iii to iv} Let $K\in\R$ and $N\in [1,\infty)$. Assume that for every timelike $p$-dualizable pair $(\mu_0,\mu_1)=(\rho_0\,\meas,\rho_1\,\meas)\in\Dom(\Ent_\meas)^2$, there is $\smash{\bdpi\in\OptTGeo_{\ell_p}^\tsep(\mu_0,\mu_1)}$ such that for every $t\in [0,1]$, we have $(\eval_t)_\push\bdpi = \rho_t\,\meas\in\scrP^\ac(\mms,\meas)$ and
\begin{align}\label{Eq:rhot in}
\rho_t(\gamma_t)^{-1/N}\geq \sigma_{K,N}^{(1-t)}(\tsep(\gamma_0,\gamma_1))\,\rho_0(\gamma_0)^{-1/N} + \sigma_{K,N}^{(t)}(\tsep(\gamma_0,\gamma_1))\,\rho_1(\gamma_1)^{-1/N}
\end{align}
holds for $\bdpi$-a.e.~$\gamma\in\TGeo^\tsep(\mms)$. Then $\scrX$ obeys $\smash{\TCD_p^e(K,N)}$.
\end{proposition}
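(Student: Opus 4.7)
The plan is to integrate the given pathwise inequality against $\bdpi$ after a logarithmic change of variables, then apply Jensen's inequality using the joint convexity of $\rmG_t$ from \eqref{Eq:GtHt}.

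\textbf{Step 1: Setup.} Fix a timelike $p$-dualizable pair $(\mu_0,\mu_1) = (\rho_0\,\meas,\rho_1\,\meas)\in\Dom(\Ent_\meas)^2$, and let $\bdpi\in\OptTGeo_{\ell_p}^\tsep(\mu_0,\mu_1)$ be provided by the hypothesis. Set $\mu_t := (\eval_t)_\push\bdpi = \rho_t\,\meas$ for $t\in[0,1]$ and $\pi := (\eval_0,\eval_1)_\push\bdpi$. Since $(\mu_0,\mu_1)$ is timelike $p$-dualizable and $\pi\in\Pi_\ll(\mu_0,\mu_1)$ is $\smash{\ell_p}$-optimal (as endpoint projection of a timelike $\smash{\ell_p}$-optimal geodesic plan), $\pi$ is itself timelike $p$-dualizing --- the existence of Borel functions $a,b$ with $l^p\leq a\oplus b$ only concerns the marginals.

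\textbf{Step 2: Pointwise inequality in $\rmG_t$-form.} Combining \eqref{Eq:Distortion coeff property} with  $\sigma_{K,N}^{(r)}(\vartheta) = \sigma_{K/N}^{(r)}(\vartheta) = \sigma_{K\vartheta^2/N}^{(r)}(1)$, the pathwise estimate \eqref{Eq:rhot in} rewrites, for $\bdpi$-a.e.~$\gamma\in\TGeo^\tsep(\mms)$, as
\begin{align*}
\rho_t(\gamma_t)^{-1/N} \geq \sigma_{\kappa(\gamma)}^{(1-t)}(1)\,\rme^{-\log\rho_0(\gamma_0)/N} + \sigma_{\kappa(\gamma)}^{(t)}(1)\,\rme^{-\log\rho_1(\gamma_1)/N},
\end{align*}
where $\kappa(\gamma) := K\,\tsep(\gamma_0,\gamma_1)^2/N$. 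Taking logarithms gives
\begin{align*}
-\frac{1}{N}\log \rho_t(\gamma_t) \geq \rmG_t\!\Big(\!-\frac{1}{N}\log\rho_0(\gamma_0),\,-\frac{1}{N}\log\rho_1(\gamma_1),\,\kappa(\gamma)\Big).
\end{align*}

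\textbf{Step 3: Integrate and apply Jensen.} Integrating both sides against $\bdpi$, the left-hand side equals $-\Ent_\meas(\mu_t)/N = \log \scrU_N(\mu_t)$ by the identity $(\eval_t)_\push\bdpi = \mu_t$ and the convention that $\{\rho_t=0\}$ carries no $\mu_t$-mass. For the right-hand side, since $\rmG_t\colon \R^2\times (-\infty,\pi^2)\to (-\infty,\infty]$ is jointly convex \cite[Lem.~2.11]{erbar2015} and the integrability at the endpoints is ensured by $\mu_0,\mu_1\in\Dom(\Ent_\meas)$, Jensen's inequality yields
\begin{align*}
\int_{\TGeo^\tsep(\mms)} \rmG_t(\cdot,\cdot,\kappa)\d\bdpi \geq \rmG_t\!\Big(\log\scrU_N(\mu_0),\,\log\scrU_N(\mu_1),\,K\,\Vert\tsep\Vert_{\Ell^2(\mms^2,\pi)}^2/N\Big),
\end{align*}
using $\int -\frac{1}{N}\log\rho_i(\gamma_i)\d\bdpi = \log\scrU_N(\mu_i)$ for $i\in\{0,1\}$ and $\int \kappa\d\bdpi = K\,\Vert\tsep\Vert_{\Ell^2(\pi)}^2/N$.

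\textbf{Step 4: Conclusion.} Combining the two estimates, exponentiating, and invoking once more the scaling identity $\sigma_{K\Vert\tsep\Vert^2_{\Ell^2(\pi)}/N}^{(r)}(1) = \sigma_{K,N}^{(r)}(\Vert\tsep\Vert_{\Ell^2(\pi)})$ yields
\begin{align*}
\scrU_N(\mu_t) \geq \sigma_{K,N}^{(1-t)}\!\big(\Vert\tsep\Vert_{\Ell^2(\mms^2,\pi)}\big)\,\scrU_N(\mu_0) + \sigma_{K,N}^{(t)}\!\big(\Vert\tsep\Vert_{\Ell^2(\mms^2,\pi)}\big)\,\scrU_N(\mu_1),
\end{align*}
which is exactly the defining inequality of $\smash{\TCD_p^e(K,N)}$, witnessed by $(\mu_t)_{t\in[0,1]}$ and $\pi$.

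The only delicate point I anticipate concerns the justification of Jensen's inequality when $\Vert\tsep\Vert_{\Ell^2(\mms^2,\pi)}$ could be large or even $\kappa \geq \pi^2$: one must interpret $\rmG_t$ as taking values in $(-\infty,\infty]$ and verify the finiteness of the endpoint entropies suffices to make the integrals well-defined. In the case $K\leq 0$ this is immediate; for $K>0$, one invokes that the inequality is nontrivial only when $K\,\Vert\tsep\Vert^2_{\Ell^2(\pi)}/N < \pi^2$, otherwise the right-hand side is $+\infty$ by the conventions of \eqref{Eq:DIS COF}, in which case the inequality is either automatically true or must be read with $\scrU_N(\mu_t) = +\infty$, corresponding to $\Ent_\meas(\mu_t) = -\infty$ --- the latter being ruled out in any event by lower-semicontinuity of $\Ent_\meas$ on sets of finite $\meas$-volume contained in $J(\mu_0,\mu_1)$.
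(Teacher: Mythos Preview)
Your proof is correct and follows essentially the same route as the paper: rewrite the pathwise density inequality via the scaling identity \eqref{Eq:Distortion coeff property} in terms of $\rmG_t$, integrate against $\bdpi$, apply Jensen using the joint convexity of $\rmG_t$, and exponentiate. The only minor discrepancy is in your final remark on the borderline case $K>0$, $K\Vert\tsep\Vert_{\Ell^2(\pi)}^2/N\geq\pi^2$: the paper simply accepts that this forces $\Ent_\meas(\mu_t)=-\infty$ (making $\scrU_N(\mu_t)=+\infty$ and the $\TCD_p^e$ inequality trivially satisfied), whereas you try to rule it out via a finite-volume argument on $J(\mu_0,\mu_1)$ --- which is not guaranteed here since $\mu_0,\mu_1$ are only assumed in $\Dom(\Ent_\meas)$, not compactly supported.
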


\begin{proof} The assignment $\mu_t := (\eval_t)_\push\bdpi$ creates a timelike proper-time parametrized $\smash{\ell_p}$-geodesic $(\mu_t)_{t\in[0,1]}$ from $\mu_0$ to $\mu_1$. Moreover, $\smash{\pi := (\eval_0,\eval_1)_\push\bdpi\in\Pi_\ll(\mu_0,\mu_1)}$ forms a timelike $p$-dualizing coupling of $\mu_0$ and $\mu_1$. Taking logarithms on both sides of \eqref{Eq:rhot in} and recalling \eqref{Eq:Distortion coeff property}, we get
\begin{align*}
-\frac{1}{N}\log\rho_t(\gamma_t) \geq \rmG_t\Big[\!-\!\frac{1}{N}\log\rho_0(\gamma_0), -\frac{1}{N}\log\rho_1(\gamma_1), \frac{K}{N}\,\tsep^2(\gamma_0,\gamma_1)\Big]
\end{align*}
for $\bdpi$-a.e.~$\gamma\in\TGeo^\tsep(\mms)$; recall that the function $\rmG_t$ is defined in \eqref{Eq:GtHt}. Integrating this inequality with respect to $\bdpi$ and using Jensen's inequality together with joint convexity of $\rmG_t$ yields
\begin{align*}
-\frac{1}{N}\Ent_\meas(\mu_t) \geq \rmG_t\Big[\!-\!\frac{1}{N}\Ent_\meas(\mu_0), -\frac{1}{N}\Ent_\meas(\mu_1), \frac{K}{N}\,\big\Vert\tsep\big\Vert_{\Ell^2(\mms^2,\pi)}^2\Big].
\end{align*}
Here, in the case $K>0$ and $\smash{\Vert\tsep\Vert_{\Ell^2(\mms^2,\pi)} = \infty}$ the right-hand side is consistently interpreted as $\infty$, which implies  $\Ent_\meas(\mu_t) = -\infty$; for $K<0$ and $\smash{\Vert\tsep\Vert_{\Ell^2(\mms^2,\pi)}}=\infty$ the right-hand side has an evident interpretation. Exponentiating both sides now yields the desired $\smash{\TCD_p^e(K,N)}$ condition.
\end{proof}

\begin{proposition}\label{Pr:v to iii} Assume $\smash{\wTCD_p^e(K,N)}$ for some $K\in\R$ and $N\in [1,\infty)$. Then for every timelike $p$-dualizable pair $(\mu_0,\mu_1) = (\rho_0\,\meas,\rho_1\,\meas)\in\scrP^\ac(\mms,\meas)^2$ there exists $\bdpi\in\smash{\OptTGeo_{\ell_p}^\tsep(\mu_0,\mu_1)}$ such that for every $t\in [0,1]$, we have $(\eval_t)_\push\bdpi = \rho_t\,\meas$, and for every $N'\geq N$, the inequality 
\begin{align*}
\rho_t(\gamma_t)^{-1/N'}\geq \sigma_{K,N'}^{(1-t)}(\tsep(\gamma_0,\gamma_1))\,\rho_0(\gamma_0)^{-1/N'} + \sigma_{K,N'}^{(t)}(\tsep(\gamma_0,\gamma_1))\,\rho_1(\gamma_1)^{-1/N'}
\end{align*}
holds for $\bdpi$-a.e.~$\gamma\in\TGeo^\tsep(\mms)$, where $(\eval_t)_\push\bdpi = \rho_t\,\meas$. 
\end{proposition}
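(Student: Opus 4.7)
The proof parallels the structure of \autoref{Pr:ii to iii}, with the crucial new ingredient being the conversion of an integrated entropic inequality into a pointwise Rényi bound via localization together with essential-nonbranching uniqueness of the witness plan.

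Following Steps 1 and 4 of the proof of \autoref{Pr:ii to iii}, I will first reduce to the case where $\mu_0 = \rho_0\,\meas$ and $\mu_1 = \rho_1\,\meas$ are compactly supported with $\rho_0,\rho_1 \in \Ell^\infty(\mms,\meas)$ and $\supp\mu_0 \times \supp\mu_1 \subset \mms_\ll^2$. By \autoref{Re:Strong timelike} this pair is strongly timelike $p$-dualizable; since $\wTCD_p^e(K,N)$ implies $\TMCP_p^e(K,N)$ (the entropic analogue of \autoref{Pr:TMCP to TCD}, established in \cite{cavalletti2020}), \autoref{Re:Uniq} yields a unique $\bdpi \in \OptTGeo_{\ell_p}^\tsep(\mu_0,\mu_1)$. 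For each $n\in\N$, fix Borel partitions $\{L_i^{0,n}\}_i$ and $\{L_j^{1,n}\}_j$ of the endpoint supports of $\met$-diameter at most $1/n$, refining as $n\to\infty$; set $\lambda_{ij}^n := \pi[L_i^{0,n} \times L_j^{1,n}]$ for $\pi := (\eval_0,\eval_1)_\push\bdpi$, and when $\lambda_{ij}^n > 0$, let $\bdpi^{ij,n} := (\lambda_{ij}^n)^{-1}\,\bdpi \mres (\eval_0,\eval_1)^{-1}(L_i^{0,n} \times L_j^{1,n})$. By restriction and uniqueness, $\bdpi^{ij,n}$ is the unique element of $\OptTGeo_{\ell_p}^\tsep(\mu_0^{ij,n},\mu_1^{ij,n})$; applying $\wTCD_p^e(K,N)$ on each cell and forcing the witness to be $\bdpi^{ij,n}$ gives
\[
\scrU_N\big((\eval_t)_\push\bdpi^{ij,n}\big) \geq \sigma_{K,N}^{(1-t)}(D_{ij}^n)\,\scrU_N(\mu_0^{ij,n}) + \sigma_{K,N}^{(t)}(D_{ij}^n)\,\scrU_N(\mu_1^{ij,n}),
\]
with $D_{ij}^n := \Vert\tsep\Vert_{\Ell^2(\mms^2,\pi^{ij,n})}$.

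The heart of the proof is the passage to a pointwise bound along $\bdpi$ as $n \to \infty$. For each $t\in(0,1)$, \autoref{Le:Mutually singular} gives mutual $\meas$-singularity of $\{(\eval_t)_\push\bdpi^{ij,n}\}_{i,j}$, so $\rho_t = \sum_{i,j}\lambda_{ij}^n\,\varrho_t^{ij,n}$ $\meas$-a.e.~with disjoint supports. A martingale convergence argument along the filtration generated by $\{L_i^{0,n} \times L_j^{1,n}\}_{i,j,n}$ shows that for $\bdpi$-a.e.~$\gamma$ and each $r \in \{0,t,1\}$ one has $\lambda_{i(\gamma)j(\gamma)}^n\,\varrho_r^{i(\gamma)j(\gamma),n}(\gamma_r) \to \rho_r(\gamma_r)$, while $D_{i(\gamma)j(\gamma)}^n \to \tsep(\gamma_0,\gamma_1)$ by uniform continuity of $\tsep$ on the compact causal diamond $J(\supp\mu_0,\supp\mu_1)$. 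In this nearly uniform, small-support regime the Boltzmann entropy of a restriction collapses to a pure logarithm of density, yielding $\scrU_N(\mu_r^{ij,n}) \sim (\lambda_{ij}^n)^{1/N}\,\rho_r(\gamma_r)^{-1/N}$; the common factor $(\lambda_{ij}^n)^{1/N}$ cancels in the displayed inequality, leaving the claimed pointwise Rényi bound at $N' = N$. Extension to all $N'\geq N$ follows from $N$-monotonicity of $\wTCD_p^e$ (\cite[Prop.~3.3]{cavalletti2020}) together with uniqueness of $\bdpi$ across dimension parameters, and reversing Steps 4.1 and 4.2 of \autoref{Pr:ii to iii} then removes the simplifying restrictions on $\mu_0,\mu_1$.

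The principal obstacle is the martingale passage in the previous paragraph: in the absence of a doubling hypothesis on $\meas$, the convergence $\lambda_{ij}^n\,\varrho_r^{ij,n}(\gamma_r) \to \rho_r(\gamma_r)$ and the simultaneous vanishing of the error between $\scrU_N(\mu_r^{ij,n})$ and its Rényi surrogate $(\lambda_{ij}^n)^{1/N}\,\rho_r(\gamma_r)^{-1/N}$ must be controlled uniformly in $(i,j)$ along $\bdpi$-a.e.~curves, exploiting only the Radon finiteness of $\meas$ on compacta and the concavity of $x \mapsto e^{-x/N}$. Once this is carried out, the rest of the argument reduces to bookkeeping analogous to that of \autoref{Pr:ii to iii}.
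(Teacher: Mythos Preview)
Your plan is sound and matches the paper's overall structure (reduce, localize to cells $G_{ij}$, apply $\wTCD_p^e$ cell-by-cell with uniqueness forcing the witness, then pass to a pointwise bound), but the paper executes the last step more compactly via the function $\rmG_t$ of \eqref{Eq:GtHt}. Taking logarithms in the cell-localized entropic inequality and using mutual singularity (\autoref{Le:Mutually singular}) gives $\Ent_\meas(\mu_r^{ij}) = -\log\lambda_{ij} + \lambda_{ij}^{-1}\int_{G_{ij}}\log\rho_r(\gamma_r)\,\d\bdpi(\gamma)$ for each $r\in\{0,t,1\}$; the shift identity $\rmG_t(x+c,y+c,\kappa)=c+\rmG_t(x,y,\kappa)$ then cancels the $\log\lambda_{ij}$ contributions \emph{exactly}, so no asymptotic $\scrU_N(\mu_r^{ij,n})\sim(\lambda_{ij}^n)^{1/N}\rho_r(\gamma_r)^{-1/N}$ is needed. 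One is left with $\fint_{G_{ij}}(-\tfrac{1}{N'}\log\rho_t)\,\d\bdpi \ge \rmG_t\big(\text{cell averages}\big)$, and the paper invokes joint convexity of $\rmG_t$ together with Jensen's inequality to pass to $\fint_{G_{ij}}\rmG_t(\text{pointwise})\,\d\bdpi$ on the right; arbitrariness of $G_{ij}$ in a $\cap$-stable generator of the Borel $\sigma$-field then yields the $\bdpi$-a.e.\ bound directly, with no shrinking of cells.

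On your martingale route: the obstacle you flag is misplaced. Doob's martingale convergence for the conditional expectations of $\log\rho_r(\gamma_r)$ along the filtration $\mathcal{F}_n=(\eval_0,\eval_1)^{-1}(\sigma\text{-algebra of level-}n\text{ cells})$ requires no doubling hypothesis on $\meas$---only that $\log\rho_r(\gamma_r)\in \Ell^1(\bdpi)$ (i.e.\ finiteness of the Boltzmann entropies, which holds under your reductions) and that $\log\rho_r(\gamma_r)$ is measurable with respect to $\bigvee_n\mathcal{F}_n\subset\sigma(\eval_0,\eval_1)$. The latter is obvious for $r\in\{0,1\}$; for $r=t$ it is precisely the content of \autoref{Th:Uniqueness geodesics} (via \autoref{Re:Uniq}), which gives $\bdpi=\mathfrak{T}_\push\mu_0$ and hence makes $\gamma_t$ a $\bdpi$-a.e.\ function of $\gamma_0$. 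That measurability---not any regularity of $\meas$---is the substantive input your argument needs.
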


\begin{proof} The proof is similar to the one of \autoref{Pr:ii to iii} --- whose notation we retain here ---, whence we only outline the main differences. Again, it suffices to consider the case  $\smash{\supp\mu_0\times\supp\mu_1\subset\mms_\ll^2}$,  $\mu_0,\mu_1\in\scrP_\comp(\mms)$, and  $\smash{\rho_0,\rho_1\in\Ell^\infty(\mms,\meas)}$. In particular, these restrictions imply $\mu_0,\mu_1\in\Dom(\Ent_\meas)$. The parallel uniqueness results outlined in \autoref{Re:Uniq} under $\smash{\TCD_p^e(K,N)}$ which are implicitly used below are due to \cite[Thm.~3.19, Thm.~3.20]{cavalletti2020}, see also \autoref{Re:From TNB to TENB}.

Let $\smash{\bdpi\in\OptTGeo_{\ell_p}^\tsep(\mu_0,\mu_1)}$ be a timelike $\smash{\ell_p}$-optimal geodesic plan interpolating $\mu_0$ and  $\mu_1$. Given any $i,j\in\{1,\dots,2^n\}$, we define
\begin{align*}
\mu_0^{ij} &:= \lambda_{ij}^{-1}\,(\pr_1)_\push\big[(\eval_0,\eval_1)_\push\bdpi\mres(L_i\times L_j)\big] = \varrho_0^{ij}\,\meas\\
\mu_1^{ij} &:= \lambda_{ij}^{-1}\,(\pr_2)_\push\big[(\eval_0,\eval_1)_\push\bdpi\mres(L_i\times L_j)\big] = \varrho_1^{ij}\,\meas
\end{align*}
provided $\smash{\lambda_{ij} := \bdpi[G_{ij}] > 0}$. Recall that $\smash{\wTCD_p^e(K,N)}$ implies $\smash{\wTCD_p^e(K,N')}$ by  \cite[Lem.~3.10]{cavalletti2020}.  Arguing as for \autoref{Pr:ii to iii}, invoking $\smash{\wTCD_p^e(K,N')}$ for the strongly timelike $p$-dualizable  pair $\smash{(\mu_0^{ij}, \mu_1^{ij})} \in \smash{(\scrP_\comp(\mms)\cap\Dom(\Ent_\meas))^2}$, employing the uniqueness both of $\bdpi$ as well as the involved timelike $p$-dualizing coupling $\pi^{ij}\in\smash{\Pi_\ll(\mu_0^{ij},\mu_1^{ij})}$ by \autoref{Re:From TNB to TENB}, and taking logarithms on both sides of the resulting ``pasted inequality'' we infer that
\begin{align*}
&-\frac{\lambda_{ij}^{-1}}{N'}\int_{G_{ij}}\log \rho_t(\gamma_t)\d\bdpi(\gamma)\\
&\qquad\qquad \geq \rmG_t\Big[\!-\!\frac{\lambda_{ij}^{-1}}{N'}\int_{G_{ij}} \log\rho_0(\gamma_0)\d\bdpi(\gamma),-\frac{\lambda_{ij}^{-1}}{N'}\int_{G_{ij}} \log\rho_1(\gamma_1)\d\bdpi(\gamma),\\
&\qquad\qquad\qquad\qquad \lambda_{ij}^{-1}\,\frac{K}{N'}\int_{G_{ij}}\tsep^2(\gamma_0,\gamma_1)\d\bdpi(\gamma)\Big]\\
&\qquad\qquad\geq \lambda_{ij}^{-1}\int_{G_{ij}} \rmG_t\Big[\!-\!\frac{1}{N'}\log\rho_0(\gamma_0), -\frac{1}{N'}\log\rho_1(\gamma_1), \frac{K}{N'}\,\tsep^2(\gamma_0,\gamma_1)\Big]\d\bdpi(\gamma).
\end{align*}
Here the deduction that $\smash{(\eval_t)_\push\bdpi = \rho_t\,\meas\in\scrP^\ac(\mms,\meas)}$ comes directly from the definition of the functional $\scrU_N$, and the last inequality follows from joint convexity of $\rmG_t$ and Jensen's inequality. Since $i,j\in\{1,\dots,2^n\}$ and $n\in\N$ were arbitrary, the claim follows by taking exponentials of both integrands, respectively.
\end{proof}

By evident adaptations of the above arguments, we also obtain the following  result for the conditions from \autoref{Def:TCD}.

\begin{theorem}\label{Th:Equiv TCD with geo} The  following statements are equivalent for every given $K\in\R$ and $N\in[1,\infty)$.
\begin{enumerate}[label=\textnormal{\textcolor{black}{(}\roman*\textcolor{black}{)}}]
\item The condition $\smash{\TCD_p(K,N)}$ holds.
\item The condition $\smash{\wTCD_p(K,N)}$ holds.
\item For every timelike $p$-dualizable  $(\mu_0,\mu_1) = (\rho_0\,\meas,\rho_1\,\meas)\in\scrP^\ac(\mms,\meas)^2$ there exists some timelike $\smash{\ell_p}$-optimal geodesic plan $\bdpi\in\smash{\OptTGeo_{\ell_p}^\tsep(\mu_0,\mu_1)}$ such that  for every $t\in [0,1]$, we have $\smash{(\eval_t)_\push\bdpi =\rho_t\, \meas\in\scrP^\ac(\mms,\meas)}$, and for every $N'\geq N$, the inequality
\begin{align*}
\rho_t(\gamma_t)^{-1/N'}&\geq \tau_{K,N'}^{(1-t)}(\tsep(\gamma_0,\gamma_1))\,\rho_0(\gamma_0)^{-1/N'}  + \tau_{K,N'}^{(t)}(\tsep(\gamma_0,\gamma_1))\,\rho_1(\gamma_1)^{-1/N'}
\end{align*}
holds for $\bdpi$-a.e.~$\gamma\in\TGeo^\tsep(\mms)$.
\end{enumerate}
\end{theorem}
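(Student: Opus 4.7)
The implications (i) $\Rightarrow$ (ii) and (iii) $\Rightarrow$ (i) are essentially immediate. The first is definitional: strong timelike $p$-dualizability implies timelike $p$-dualizability. For the second, given a timelike $p$-dualizable pair $(\mu_0,\mu_1) = (\rho_0\,\meas,\rho_1\,\meas) \in \scrP_\comp^\ac(\mms,\meas)^2$ and a plan $\bdpi$ as in (iii), set $\mu_t := (\eval_t)_\push\bdpi$ and $\pi := (\eval_0,\eval_1)_\push\bdpi$. Since $\bdpi$ is concentrated on $\TGeo^\tsep(\mms)$, the coupling $\pi \in \Pi_\ll(\mu_0,\mu_1)$ is chronological and $\ell_p$-optimal, hence timelike $p$-dualizing. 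Integrating the pointwise bound against $\bdpi$ yields
\begin{align*}
\scrS_{N'}(\mu_t) = -\int_{\TGeo^\tsep(\mms)}\rho_t(\gamma_t)^{-1/N'}\d\bdpi(\gamma) \leq \scrT_{K,N'}^{(t)}(\pi),
\end{align*}
which is precisely the $\TCD_p(K,N)$ bound.

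The bulk of the work is (ii) $\Rightarrow$ (iii), and the strategy is to mimic \autoref{Pr:ii to iii} line-by-line, substituting $\tau_{K,N'}$ for $\sigma_{K,N'}$ throughout. First, I would reduce to the case of strongly timelike $p$-dualizable endpoints with compactly supported, $\Ell^\infty$-bounded densities. This is achieved by the same two-step exhaustion argument: cover $\supp\pi \cap \mms_\ll^2$ by countably many relatively compact rectangles $A_i\times B_i$ with $\bar A_i \times \bar B_i \subset \mms_\ll^2$, disintegrate an $\ell_p$-optimal timelike $p$-dualizing coupling accordingly, and then truncate the densities via the level sets $\{\rho_0 < i\}$, $\{\rho_1 < j\}$, appealing to restriction \cite[Lem.~2.10]{cavalletti2020} and \autoref{Re:Strong timelike} to preserve strong timelike $p$-dualizability at each stage.

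In the reduced setting, fix a countable $\cap$-stable Borel generator $\{M_n\}_{n\in\N}$ and for each $n$ partition $\mms$ into the $2^n$ Boolean atoms $L_1,\dots,L_{2^n}$ of the algebra generated by $M_1,\dots,M_n$. Conditioning $\pi$ on each $L_i\times L_j$ with $\lambda_{ij} := \pi[L_i\times L_j]>0$ produces strongly timelike $p$-dualizable pairs $(\mu_0^{ij},\mu_1^{ij})$, and applying $\wTCD_p(K,N)$ (together with its consistency in $N$ from \autoref{Pr:Consistency TCD}) yields plans $\bdpi^{ij} \in \OptTGeo_{\ell_p}^\tsep(\mu_0^{ij},\mu_1^{ij})$ satisfying the Rényi inequality with $\tau$-coefficients, for every $N' \geq N$. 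The key pasting step uses timelike $p$-essential nonbranching combined with uniqueness of chronological $\ell_p$-optimal couplings and of timelike $\ell_p$-optimal geodesic plans (\autoref{Th:Uniqueness couplings}, \autoref{Th:Uniqueness geodesics}, available here via $\wTCD_p(K,N) \Rightarrow \TMCP_p(K,N)$ from \autoref{Pr:TMCP to TCD}) to conclude that the original plan $\bdpi$ satisfies $\bdpi = \sum_{i,j}\lambda_{ij}\,\bdpi^{ij}$. Mutual singularity of the marginals $\mu_r^{ij}$ in $(i,j)$ and \autoref{Le:Mutually singular} give mutual singularity of $(\eval_t)_\push\bdpi^{ij}$ for every $t\in(0,1)$, so the pointwise bounds on each $\bdpi^{ij}$ combine into a pointwise bound on $\bdpi$ restricted to each cylinder $G_{ij} := (\eval_0,\eval_1)^{-1}(L_i\times L_j)$. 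Since $\{G_{ij} : i,j,n\}$ generates the Borel $\sigma$-algebra of $\Cont([0,1];\mms)$ (via $(\eval_0,\eval_1)$), letting $n\to\infty$ delivers the required inequality for $\bdpi$-a.e.~$\gamma\in\TGeo^\tsep(\mms)$. Finally, the reductions of the first step are undone by two exhaustion arguments, following Step 4 of \autoref{Pr:ii to iii} verbatim.

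The main (indeed sole) obstacle is securing the uniqueness of $\pi$ and $\bdpi$ in the reduced setting — everything else in the argument only uses continuity, positivity, and monotonicity of the distortion coefficients, properties shared by $\sigma_{K,N'}$ and $\tau_{K,N'}$, so the machinery of \autoref{Pr:ii to iii} transfers without modification. In particular, unlike the equivalence with the entropic $\TCD^e$ condition, no joint convexity of a logarithmic surrogate is needed, and the analogue of \autoref{Pr:iii to iv} is unnecessary since (iii) $\Rightarrow$ (i) is a direct integration.
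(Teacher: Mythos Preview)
Your proposal is correct and follows essentially the same route as the paper, which simply states that the result is obtained ``by evident adaptations of the above arguments'' (namely \autoref{Pr:ii to iii}) with $\tau_{K,N'}$ in place of $\sigma_{K,N'}$. Your observation that the passage (iii) $\Rightarrow$ (i) is direct integration --- so that no analogue of \autoref{Pr:iii to iv} or \autoref{Pr:v to iii} is needed --- is precisely why this theorem has three items rather than five.
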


\subsection{From local to global}\label{Sub:Local global} The goal of this section is to establish the equivalence of $\smash{\TCD_p^*(K,N)}$ to its corresponding local version in \autoref{Def:TCD loc} below, i.e.~to establish a Lorentzian analogue of the local-to-global property from \cite[Ch.~5]{bacher2010}. 

To this aim, in addition to our standing assumptions on the base space, suppose again $\scrX$ to be timelike $p$-essentially nonbranching for  $p\in (0,1)$.


\begin{definition}\label{Def:TCD loc} Let $K\in\R$ and $N\in[1,\infty)$. 
\begin{enumerate}[label=\textnormal{\alph*.}]
\item We say that $\scrX$ satisfies $\smash{\TCD_p^*(K,N)}$ \emph{locally}, briefly $\smash{\TCD_{p,\loc}^*(K,N)}$, if there exists an open cover $(\mms_i)_{i\in I}$ of $\mms$ with the following property. For every $i\in I$ and every pair $(\mu_0,\mu_1)=(\rho_0\,\meas,\rho_1\,\meas)\in\smash{\scrP_\comp^\ac(\mms_i,\meas)^2}$ such that $\smash{\supp\mu_0\times\supp\mu_1\subset (\mms_i)_\ll^2}$, there is a timelike proper-time parametrized $\smash{\ell_p}$-geodesic $(\mu_t)_{t\in[0,1]}$ in  $\scrP(\mms)$ from $\mu_0$ to $\mu_1$ and an $\smash{\ell_p}$-optimal coupling $\pi\in\Pi_\ll(\mu_0,\mu_1)$ such that for every $t\in[0,1]$ and every $N'\geq N$,
\begin{align}\label{Eq:RED}
\begin{split}
\scrS_{N'}(\mu_t) &\leq -\int_{\mms^2} \sigma_{K,N'}^{(1-t)}(\tsep(x^0,x^1))\,\rho_0(x^0)^{-1/N'}\d\pi(x^0,x^1)\\
&\qquad\qquad -\int_{\mms^2}\sigma_{K,N'}^{(t)}(\tsep(x^0,x^1))\,\rho_1(x^1)^{-1/N'}\d\pi(x^0,x^1).
\end{split}
\end{align}
\item If the previous statement holds for $\smash{\sigma_{K,N'}^{(1-t)}}$ and $\smash{\sigma_{K,N'}^{(t)}}$ replaced by $\smash{\tau_{K,N'}^{(1-t)}}$ and $\smash{\tau_{K,N'}^{(t)}}$, respectively, $\scrX$ is termed to satisfy $\smash{\TCD_p(K,N)}$ \emph{locally}, briefly $\smash{\TCD_{p,\loc}(K,N)}$.
\end{enumerate} 
\end{definition}

\begin{proposition}\label{Pr:MDPTS} Given any $K\in\R$ and $N\in[1,\infty)$, the property $\smash{\TCD_p^*(K,N)}$ holds if and only if the following does. For every $\smash{\mu_0,\mu_1\in\scrP_\comp^\ac(\mms,\meas)}$ with $\supp\mu_0\times\supp\mu_1 \subset\smash{\mms_\ll^2}$ there  exists $\smash{\bdpi\in\OptTGeo_{\ell_p}^\tsep(\mu_0,\mu_1)}$ such that for every $N'\geq N$,
\begin{align*}
\scrS_{N'}((\eval_{1/2})_\push\bdpi) \leq \sigma_{K,N'}^{(1/2)}(\theta)\,\scrS_{N'}(\mu_0) + \sigma_{K,N'}^{(1/2)}(\theta)\,\scrS_{N'}(\mu_1),
\end{align*}
where
\begin{align*}
\theta := \begin{cases} \sup \tsep(\supp\mu_0\times\supp\mu_1) & \textnormal{if }K<0,\\
\inf \tsep(\supp\mu_0\times\supp\mu_1) & \textnormal{otherwise}.
\end{cases}
\end{align*}
\end{proposition}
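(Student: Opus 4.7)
The forward implication is straightforward. Given $\mu_0,\mu_1\in\scrP_\comp^\ac(\mms,\meas)$ with $\supp\mu_0\times\supp\mu_1\subset \mms_\ll^2$, the pair is strongly timelike $p$-dualizable by \autoref{Re:Strong timelike}, whence $\TCD_p^*(K,N)$ supplies a timelike proper-time parametrized $\smash{\ell_p}$-geodesic $(\mu_t)_{t\in[0,1]}$ and a timelike $p$-dualizing coupling $\pi\in\Pi_\ll(\mu_0,\mu_1)$ obeying the defining inequality at $t=1/2$. By \autoref{Def:Lorentzian geodesic} this geodesic is represented by some $\smash{\bdpi\in\OptTGeo_{\ell_p}^\tsep(\mu_0,\mu_1)}$, so $(\eval_{1/2})_\push\bdpi=\mu_{1/2}$. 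A direct computation using $\sin(\vartheta)=2\sin(\vartheta/2)\cos(\vartheta/2)$ and its hyperbolic analogue shows that $\vartheta\mapsto \sigma_{K,N'}^{(1/2)}(\vartheta)$ equals $1/(2\cos(\!\sqrt{K/N'}\vartheta/2))$ when $K>0$ and $1/(2\cosh(\!\sqrt{-K/N'}\vartheta/2))$ when $K<0$; hence it is nondecreasing in $\vartheta$ when $K\geq 0$ and nonincreasing when $K<0$. Therefore on $\supp\pi\subset \supp\mu_0\times\supp\mu_1$ we have $\sigma_{K,N'}^{(1/2)}(\tsep(x^0,x^1))\geq \sigma_{K,N'}^{(1/2)}(\theta)$ pointwise. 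Multiplying by $-\rho_i(x^i)^{-1/N'}\leq 0$, integrating in $\pi$ and using that $\pi$ has marginals $\mu_0,\mu_1$ so that $\int\rho_i(x^i)^{-1/N'}\d\pi=-\scrS_{N'}(\mu_i)$, one obtains the desired midpoint inequality.

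For the reverse implication the plan is to bootstrap the midpoint inequality dyadically and identify a single geodesic plan fulfilling the pathwise criterion \ref{joh} of \autoref{Th:Equivalence TCD* and TCDe}, whereupon that theorem yields $\TCD_p^*(K,N)$. First, the reduction procedure from Steps 1, 4.1, and 4.2 in the proof of \autoref{Pr:ii to iii} shows that it suffices to verify the pathwise criterion on timelike $p$-dualizable pairs $(\mu_0,\mu_1)=(\rho_0\,\meas,\rho_1\,\meas)\in\scrP_\comp^\ac(\mms,\meas)^2$ with $\supp\mu_0\times\supp\mu_1\subset\mms_\ll^2$ and $\rho_0,\rho_1\in\Ell^\infty(\mms,\meas)$. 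Fix such a pair, together with the unique $\ell_p$-optimal coupling $\pi$ granted by \autoref{Re:Strong timelike} and the discussion in \autoref{Re:Uniq}.

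The construction is recursive on the dyadic level $n\in\N$. At level $n$ one has a $\bdpi^n\in\OptTGeo_{\ell_p}^\tsep(\mu_0,\mu_1)$ with $(\eval_{k/2^n})_\push\bdpi^n\in\scrP^\ac(\mms,\meas)$ for every $k\in\{0,\dots,2^n\}$. For each pair of adjacent dyadic indices, restrict and rescale via $\smash{\Restr_{k/2^n}^{(k+1)/2^n}}$, disintegrate, and apply the midpoint hypothesis to the resulting pieces (whose supports are compact and pairwise chronologically related thanks to proper-time parametrization and openness of $\ll$). Timelike $p$-essential nonbranching guarantees that the inherent uniqueness of the intermediate marginals lets these midpoint plans glue consistently to produce $\bdpi^{n+1}\in\OptTGeo_{\ell_p}^\tsep(\mu_0,\mu_1)$ with the next dyadic level of absolutely continuous marginals, and density bounds propagate through the midpoint estimate combined with \autoref{Le:Stu}. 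A diagonal extraction along the compactness given by \autoref{Le:Villani lemma for geodesic} yields a limit plan $\bdpi\in\OptTGeo_{\ell_p}^\tsep(\mu_0,\mu_1)$.

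The main obstacle is converting these iterated midpoint inequalities into the full pathwise estimate of \autoref{Th:Equivalence TCD* and TCDe}\ref{joh}. The key observation is that along $\bdpi$-a.e.~$\gamma\in\TGeo^\tsep(\mms)$, proper-time parametrization forces $\tsep(\gamma_s,\gamma_t)=(t-s)\,\tsep(\gamma_0,\gamma_1)$, so the quantity $\theta$ relevant for the subsegment $[s,t]$ scales linearly in $t-s$ after a suitable localization of the marginals; and the distortion coefficients $\sigma_{K,N'}^{(\cdot)}$ satisfy precisely the dyadic functional identity
\[\sigma_{K,N'}^{(1/2)}(\vartheta)\,\big[\sigma_{K,N'}^{(1-2s)}(\vartheta)+\sigma_{K,N'}^{(2s)}(\vartheta)\big]=\sigma_{K,N'}^{(1/2-s)}(\vartheta)+\sigma_{K,N'}^{(1/2+s)}(\vartheta),\]
so that iterating the localized midpoint inequalities yields, for every dyadic $t\in[0,1]$ and every $N'\geq N$, the pathwise bound
\[\rho_t(\gamma_t)^{-1/N'}\geq \sigma_{K,N'}^{(1-t)}(\tsep(\gamma_0,\gamma_1))\,\rho_0(\gamma_0)^{-1/N'}+\sigma_{K,N'}^{(t)}(\tsep(\gamma_0,\gamma_1))\,\rho_1(\gamma_1)^{-1/N'}\]
for $\bdpi$-a.e.~$\gamma$, where $(\eval_t)_\push\bdpi=\rho_t\,\meas$. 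Weak continuity of $t\mapsto (\eval_t)_\push\bdpi$ and joint lower semicontinuity of $\scrS_{N'}$ extend the estimate from dyadic to all $t\in[0,1]$ in its integrated form, which together with the pathwise form at a dense set of times suffices to invoke \autoref{Th:Equivalence TCD* and TCDe} and conclude $\TCD_p^*(K,N)$.
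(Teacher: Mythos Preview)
Your forward implication is correct and matches the paper. The reverse direction, however, has a genuine gap and a concrete error.

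The circularity: you invoke uniqueness of chronological $\ell_p$-optimal couplings and geodesic plans via \autoref{Re:Uniq} at the outset, but those uniqueness results require $\TMCP_p^*(K,N)$, which in the paper is obtained \emph{from} $\wTCD_p^*(K,N)$ --- precisely what you are trying to prove. At this stage you have only the midpoint hypothesis, so you cannot appeal to \autoref{Th:Uniqueness couplings} or \autoref{Th:Uniqueness geodesics}. Relatedly, your passage from iterated midpoint \emph{entropy} inequalities directly to a \emph{pathwise density} bound is not justified; the hypothesis controls Rényi entropies, not pointwise densities, and no mechanism is supplied to make that jump.

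The functional identity you state for $\sigma$ is also wrong. Testing it at $s=0$ already fails: the left side becomes $\sigma_{K,N'}^{(1/2)}(\vartheta)\big[\sigma_{K,N'}^{(1)}(\vartheta)+\sigma_{K,N'}^{(0)}(\vartheta)\big]=\sigma_{K,N'}^{(1/2)}(\vartheta)$, while the right side is $2\sigma_{K,N'}^{(1/2)}(\vartheta)$. The correct recursion, used by the paper (and due to \cite[Lem.~3.1]{rajala2012b}), is
\[
\sigma_{K,N'}^{(1/2)}\big((t-s)\theta\big)\,\big[\sigma_{K,N'}^{(s)}(\theta)+\sigma_{K,N'}^{(t)}(\theta)\big]=\sigma_{K,N'}^{((s+t)/2)}(\theta),
\]
where the first factor has argument $(t-s)\theta$, not $\theta$. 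This is exactly what makes the bisection close up at the level of the \emph{integrated} entropy inequality with a single global $\theta$. The paper's route is to run the bisection entirely at this integrated level (which needs only gluing of plans and the compactness from \autoref{Le:Villani lemma for geodesic}, no uniqueness), obtaining
\[
\scrS_{N'}(\mu_t)\leq \sigma_{K,N'}^{(1-t)}(\theta)\,\scrS_{N'}(\mu_0)+\sigma_{K,N'}^{(t)}(\theta)\,\scrS_{N'}(\mu_1)
\]
for every $t\in[0,1]$; only \emph{after} this does one localize $\theta$ via the partition argument of \cite[Prop.~2.8]{bacher2010} (using \autoref{Le:Stu} for absolute continuity of the slices and \autoref{Le:Mutually singular} for mutual singularity of the pieces) to reach $\TCD_p^*(K,N)$. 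At that point the integrated inequality already yields enough (e.g.\ $\TMCP_p^*(K,N)$ by the approximation in \autoref{Pr:TMCP to TCD}) to unlock any uniqueness that the partition step might still require.
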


\begin{proof} We only outline the main differences of the proof to its counterpart from \cite[Prop.~2.10]{bacher2010}. The forward implication is clear. Conversely, similarly to the proof of \autoref{Th:Good geos TCD}, by successively gluing plans corresponding to ``midpoints'' for which the Rényi entropy obeys the given inequality as in the proof of \cite[Prop.~2.10]{bacher2010}, we inductively construct a sequence $\smash{(\bdalpha^n)_{n\in\N}}$ in $\smash{\OptTGeo_{\ell_p}^\tsep(\mu_0,\mu_1)}$ such that for every $n\in\N$ and every odd $k\in\{1,\dots,2^n-1\}$,
\begin{align*}
\scrS_{N'}((\eval_{k2^{-n}})_\push\bdalpha^n) &\leq \sigma_{K,N'}^{(1/2)}(2^{-n+1}\,\theta)\,\scrS_{N'}((\eval_{(k-1)2^{-n}})_\push\bdalpha^n)\\
&\qquad\qquad + \sigma_{K,N'}^{(1/2)}(2^{-n+1}\,\theta)\,\scrS_{N'}((\eval_{(k+1)2^{-n}})_\push\bdalpha^n)\\
&\leq \sigma_{K,N'}^{(1/2)}(2^{-n+1}\,\theta)\,\sigma_{K,N'}^{(1-(k-1)2^{-n})}(\theta)\,\scrS_{N'}(\mu_0)\\
&\qquad\qquad\qquad\qquad + \sigma_{K,N'}^{(1/2)}(2^{-n+1}\,\theta)\,\sigma_{K,N'}^{((k-1)2^{-n})}(\theta)\,\scrS_{N'}(\mu_1)\\
&\qquad\qquad + \sigma_{K,N'}^{(1/2)}(2^{-n+1}\,\theta)\,\sigma_{K,N'}^{(1-(k+1)2^{-n})}(\theta)\,\scrS_{N'}(\mu_0)\\
&\qquad\qquad\qquad\qquad + \sigma_{K,N'}^{(1/2)}(2^{-n+1}\,\theta)\,\sigma_{K,N'}^{((k+1)2^{-n})}(\theta)\,\scrS_{N'}(\mu_1)\\
&= \sigma_{K,N'}^{(1-k2^{-n})}(\theta)\,\scrS_{N'}(\mu_0) + \sigma_{K,N'}^{(k2^{-n})}(\theta)\,\scrS_{N'}(\mu_1).
\end{align*}
holds for every $N'\geq N$. The latter equality is a crucial property of the distortion coefficients $\smash{\sigma_{K,N'}^{(r)}}$ \cite[Lem.~3.1]{rajala2012b} which does not hold for $\smash{\tau_{K,N'}^{(r)}}$. Moreover, note that to proceed by induction in this strategy, it is important to know that the respective ``midpoints'' come from timelike $\smash{\ell_p}$-optimal geodesic plans, which ensures that all chronology re\-lations are preserved through this process.

As in the proof of \autoref{Th:Good geos TCD}, $\smash{(\bdalpha^n)_{n\in\N}}$ admits an accumulation point $\bdalpha \in \smash{\OptTGeo_{\ell_p}^\tsep(\mu_0,\mu_1)}$, giving rise to a timelike proper-time parametrized $\smash{\ell_p}$-geodesic $(\mu_t)_{t\in[0,1]}$ from $\mu_0$ to $\mu_1$ via $\mu_t := (\eval_t)_\push\bdalpha$. Thanks to the weak lower semicontinuity of the Rényi entropy for probability measures with uniformly compact support, the latter satisfies, for every $t\in[0,1]$ and every $N'\geq N$,
\begin{align*}
\scrS_{N'}(\mu_t)\leq \sigma_{K,N'}^{(1-t)}(\theta)\,\scrS_{N'}(\mu_0) + \sigma_{K,N'}^{(t)}(\theta)\,\scrS_{N'}(\mu_1).
\end{align*}

As for \cite[Prop.~2.8]{bacher2010}, where the replacement of \cite[Rem.~2.9]{bacher2010} is \autoref{Le:Stu},  and \autoref{Pr:ii to iii}, and using \autoref{Th:Equivalence TCD* and TCDe} we obtain $\smash{\TCD_p^*(K,N)}$.
\end{proof}

\begin{definition}\label{Def:Chron lp geod} We term $\smash{\scrP_\comp^\ac(\mms,\meas)}$ \emph{chronologically $\smash{\ell_p}$-geodesic} if  every $\mu_0,\mu_1\in\smash{\scrP_\comp^\ac(\mms,\meas)}$ with $\supp\mu_0\times\supp\mu_1\subset\smash{\mms_\ll^2}$ are joined by a timelike proper-time parame\-tri\-zed $\smash{\ell_p}$-geodesic $(\mu_r)_{r\in[0,1]}$ which consists of $\meas$-absolutely continuous measures.
\end{definition}

\begin{theorem}\label{Th:Local to global} Let $K\in\R$ and $N\in[1,\infty)$. Then $\smash{\TCD_p^*(K,N)}$ holds if and only if $\smash{\TCD_{p,\loc}^*(K,N)}$ is satisfied and $\smash{\scrP_\comp^\ac(\mms,\meas)}$ is chronologically $\smash{\ell_p}$-geodesic.
\end{theorem}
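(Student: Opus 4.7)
\smallskip

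\noindent\textbf{Proof proposal.}
The forward implication is immediate. Assuming $\TCD_p^*(K,N)$, the local condition $\TCD_{p,\loc}^*(K,N)$ holds by taking the trivial open cover $\{\mms\}$. Moreover, \autoref{Def:TCD*} combined with \autoref{Le:Stu} shows that, for any chronologically related pair $(\mu_0,\mu_1)\in\scrP_\comp^\ac(\mms,\meas)^2$ with $\supp\mu_0\times\supp\mu_1\subset\mms_\ll^2$, the witnessing timelike proper-time parametrized $\ell_p$-geodesic consists of $\meas$-absolutely continuous measures, so $\scrP_\comp^\ac(\mms,\meas)$ is chronologically $\ell_p$-geodesic in the sense of \autoref{Def:Chron lp geod}.

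For the nontrivial direction, the plan is to verify the midpoint criterion from \autoref{Pr:MDPTS}. Fix $(\mu_0,\mu_1)\in\scrP_\comp^\ac(\mms,\meas)^2$ with $\supp\mu_0\times\supp\mu_1\subset\mms_\ll^2$. By the chronologically $\ell_p$-geodesic hypothesis applied to $(\mu_0,\mu_1)$ --- and, for later refinement, also to its dyadic ``sub-pairs'' along the geodesic --- we obtain a plan $\bdpi\in\OptTGeo_{\ell_p}^\tsep(\mu_0,\mu_1)$ whose evaluations $(\eval_t)_\push\bdpi$ are all in $\scrP^\ac(\mms,\meas)$. Because of $\scrK$-global hyperbolicity and non-total imprisonment, $\bdpi$ is concentrated on a compact subset of $\TGeo^\tsep(\mms)$ with equicontinuous images lying in the compact diamond $J(\supp\mu_0,\supp\mu_1)$. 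Hence this family can be covered by finitely many open sets $\mms_{i_1},\dots,\mms_{i_m}$ from the cover given by $\TCD_{p,\loc}^*(K,N)$, and by equicontinuity there exists $n\in\N$ such that for every $\gamma\in\supp\bdpi$ and every $k\in\{0,\dots,2^n-1\}$ the restriction $\gamma\big\vert_{[k2^{-n},(k+1)2^{-n}]}$ is contained in some single $\mms_{i(\gamma,k)}$.

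Partition $\bdpi$ into finitely many mutually singular pieces $\bdpi^\sigma$, indexed by multi-indices $\sigma$ assigning to each dyadic interval one local chart, and normalize. On each interval $[k2^{-n},(k+1)2^{-n}]$ the rescaled marginals $(\eval_{k2^{-n}})_\push\bdpi^\sigma$ and $(\eval_{(k+1)2^{-n}})_\push\bdpi^\sigma$ are compactly supported, chronologically related, $\meas$-absolutely continuous, and supported in $\mms_{i_{\sigma,k}}$; applying $\TCD_{p,\loc}^*(K,N)$ on that chart and using the uniqueness of the timelike $\ell_p$-optimal geodesic plan granted by timelike $p$-essential nonbranching (via \autoref{Pr:TMCP to TCD}, \autoref{Th:Uniqueness geodesics}, and the discussion in \autoref{Re:Uniq}) forces the associated local geodesics to coincide with $(\Restr_{k2^{-n}}^{(k+1)2^{-n}})_\push\bdpi^\sigma$ up to rescaling. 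This yields a midpoint inequality for $\bdpi^\sigma$ on each dyadic piece. Iterating the bisection argument that underlies \autoref{Pr:MDPTS}, i.e.~combining the local midpoint bounds through the identity $\sigma_{K,N'}^{(1/2)}(2\theta)\,\sigma_{K,N'}^{(r)}(\theta)$ $= \sigma_{K,N'}^{(r/2)}(2\theta)$ (cf.~\cite[Lem.~3.1]{rajala2012b}) that is specific to the $\sigma$-coefficients, we upgrade from scale $2^{-n}$ to scale $1$ a midpoint inequality along the pieces $\bdpi^\sigma$. Finally, recombining the pieces --- invoking timelike $p$-essential nonbranching and \autoref{Le:Mutually singular} to certify that the interpolants $(\eval_{1/2})_\push\bdpi^\sigma$ are mutually singular, so that $N'$-power means add correctly --- produces the midpoint inequality required by \autoref{Pr:MDPTS}, giving $\TCD_p^*(K,N)$.

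The main obstacle I anticipate is the bookkeeping that synchronizes the three ingredients: (i) the dyadic space-time covering so that no geodesic escapes its assigned local chart; (ii) the iterated bisection that relies on the special algebraic identity for $\sigma_{K,N'}^{(r)}$ (and which explains why the same argument fails for $\TCD_p(K,N)$); and (iii) the repeated use of timelike essential nonbranching plus \autoref{Le:Mutually singular} to propagate mutual singularity of the restrictions through all levels of refinement, which is precisely what makes the combinatorial gluing of density estimates compatible with the Rényi entropy bound.
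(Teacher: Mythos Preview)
Your overall architecture is right --- reduce to the midpoint criterion of \autoref{Pr:MDPTS}, apply $\TCD_{p,\loc}^*(K,N)$ on short dyadic intervals, and glue via \autoref{Le:Mutually singular} --- and the forward implication is fine. The gap is in the sentence ``Iterating the bisection argument that underlies \autoref{Pr:MDPTS} \dots\ we upgrade from scale $2^{-n}$ to scale $1$.'' That bisection runs in the \emph{wrong direction}: \autoref{Pr:MDPTS} starts from a global midpoint bound and propagates it \emph{down} to all dyadic times via the $\sigma$-identity. Here you have midpoint bounds only at scale $2^{-n}$ and must propagate them \emph{up} to scale $1$, which is a genuinely different and harder step. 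A single application of the identity $\sigma_{K,N'}^{(1/2)}(2\theta)\,\sigma_{K,N'}^{(r)}(\theta)=\sigma_{K,N'}^{(r/2)}(2\theta)$ does not do this: combining two adjacent scale-$2^{-n}$ midpoint inequalities gives a bound on $\scrS_{N'}(\mu_{s+2^{-n}})$ in terms of $\scrS_{N'}(\mu_s)$, $\scrS_{N'}(\mu_t)$, \emph{and} $\scrS_{N'}(\mu_{s+2^{-n}})$ itself, not a closed inequality at scale $2^{1-n}$.

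The paper handles this by introducing an auxiliary property $\rmP_m(C)$ (midpoint bounds hold on \emph{every} interval of length $2^{-m}$ along \emph{every} admissible geodesic) and proving $\rmP_m\Rightarrow\rmP_{m-1}$ via a separate lemma. The mechanism is an infinite iteration: on an interval of length $2^{1-m}$ one repeatedly replaces the two half-intervals and the central one by their local midpoint witnesses, producing a sequence of plans whose entropies at the center satisfy a contractive recursion; passing to a weak limit (via \autoref{Le:Villani lemma for geodesic}) and summing the geometric series yields the scale-$2^{1-m}$ midpoint bound. This limit construction --- not present in your sketch --- is the load-bearing step, and it is where the property that the $\sigma$-coefficients (unlike the $\tau$-coefficients) satisfy the exact multiplicative identity is actually used. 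Your partition by multi-indices and the appeal to uniqueness are not wrong, but they do not substitute for this iteration; you still need a mechanism to pass from many small-scale midpoint inequalities to a single large-scale one.
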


By following the argument for \cite[Cor.~5.4]{cavalletti2017}, in fact we expect the assumption of chronological $\smash{\ell_{p}}$-geodesy to be redundant. We retain it for simplicity and outsource this slight generalization to future work.

We will prove \autoref{Th:Local to global} only in the case $K\geq 0$, the other situation follows  by analogous computations. To this aim, we formulate the following property in  \autoref{Def:Cn} for which, given any $m\in\N_0$, we define
\begin{align*}
I_m := \{k\,2^{-m} : k\in\{0,\dots,2^m\}\}.
\end{align*}
Moreover, given a collection $\mu:= (\mu_t)_{t\in[0,1]}$ in $\scrP(\mms)$, let $\smash{G_\mu^m}$ denote the set of all $\gamma\in\TGeo^\tsep(\mms)$ with $\gamma_t\in\supp\mu_t$ for every $t\in I_m$.

\begin{definition}\label{Def:Cn} Let $C\subset\mms$ be a compact set.  Given $m\in\N_0$, we say that the property $\rmP_m(C)$ is satisfied if for every timelike proper-time parametrized $\smash{\ell_p}$-geodesic $\mu:=(\mu_r)_{r\in[0,1]}$ consisting of $\meas$-absolutely continuous measures with $\mu_0,\mu_1\in\smash{\scrP^\ac(\mms,\meas)}$ with $\smash{\supp\mu_0\times\supp\mu_1\subset\mms_\ll^2\cap C^2}$, and every $s,t\in I_n$ with $t-s =2^{-m}$ there is a timelike $\smash{\ell_p}$-optimal geodesic plan $\smash{\bdpi_{s,t}\in\OptTGeo_{\ell_p}^\tsep(\mu_s,\mu_t)}$ with
\begin{align*}
\scrS_{N'}((\eval_{1/2})_\push\bdpi_{s,t}) \leq \sigma_{K,N'}^{(1/2)}(\theta_{s,t}^m)\,\scrS_{N'}(\mu_s) + \sigma_{K,N'}^{(1/2)}(\theta_{s,t}^m)\,\scrS_{N'}(\mu_t),
\end{align*}
where
\begin{align}\label{Eq:Thetat}
\theta_{s,t}^m := \inf\{\tsep(\gamma_s,\gamma_t) : \gamma\in G_\mu^m\}.
\end{align}
\end{definition}

\begin{lemma}\label{Le:PnC} Retaining the notation of \autoref{Def:Cn}, $\rmP_m(C)$ implies $\rmP_{m-1}(C)$ for every $m\in\N$.
\end{lemma}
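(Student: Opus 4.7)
The plan is to adapt Bacher and Sturm's local-to-global induction for $\CD^*(K,N)$ on essentially nonbranching metric measure spaces (\cite[Lem.~5.3]{bacher2010}) to the Lorentzian setting, exploiting that both $\rmP_m(C)$ and $\rmP_{m-1}(C)$ concern midpoint estimates along the \emph{same} underlying proper-time parametrized $\ell_p$-geodesic $\mu$. Given $\mu$ as in $\rmP_{m-1}(C)$ and $s,t\in I_{m-1}$ with $t-s=2^{-(m-1)}$, let $u:=(s+t)/2$ be the unique element of $I_m\setminus I_{m-1}$ strictly between $s$ and $t$. First I will apply $\rmP_m(C)$ to $\mu$ at the $I_m$-adjacent pairs $(s,u)$ and $(u,t)$, producing plans $\bdpi_{s,u}\in\OptTGeo_{\ell_p}^\tsep(\mu_s,\mu_u)$ and $\bdpi_{u,t}\in\OptTGeo_{\ell_p}^\tsep(\mu_u,\mu_t)$ together with midpoint Rényi-entropy estimates controlled by $\sigma_{K,N'}^{(1/2)}(\theta_{s,u}^m)$ and $\sigma_{K,N'}^{(1/2)}(\theta_{u,t}^m)$, respectively.

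Next I will produce the candidate $\bdpi_{s,t}\in\OptTGeo_{\ell_p}^\tsep(\mu_s,\mu_t)$ by a measure-theoretic gluing: disintegrate $\bdpi_{s,u}$ and $\bdpi_{u,t}$ with respect to their common $\mu_u$-marginal, concatenate the resulting conditional measures on $\Cont([0,1];\mms)$, and reparametrize in time so the concatenated curves remain proper-time parametrized on $[0,1]$. Its membership in $\OptTGeo_{\ell_p}^\tsep$ will follow from the reverse triangle inequality \eqref{Eq:Reverse triangle lp} together with $\ell_p$-optimality of the two components, which jointly force $\tsep$-additivity along $\bdpi_{s,t}$-a.e.~curve, while the timelike $p$-essential nonbranching assumption ensures the gluing is well-defined as a measurable operation. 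By design, $(\eval_{1/2})_\push\bdpi_{s,t}=\mu_u$. To relate the distortion scales, I will use proper-time parametrization: for every $\gamma\in G_\mu^m$, $\tsep(\gamma_s,\gamma_t)=\tsep(\gamma_s,\gamma_u)+\tsep(\gamma_u,\gamma_t)$, and since $G_\mu^m\subset G_\mu^{m-1}$, taking infima yields the key inequality $\theta_{s,t}^{m-1}\leq\theta_{s,u}^m+\theta_{u,t}^m$.

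Finally I will combine the two midpoint inequalities from $\rmP_m(C)$ with the scale relation above to conclude
\begin{align*}
\scrS_{N'}(\mu_u) \leq \sigma_{K,N'}^{(1/2)}(\theta_{s,t}^{m-1})\,\big[\scrS_{N'}(\mu_s) + \scrS_{N'}(\mu_t)\big]
\end{align*}
for every $N'\geq N$, which is exactly the $\rmP_{m-1}(C)$ midpoint estimate for $\bdpi_{s,t}$. The hard part will be this final combination: because both scale-$2^{-m}$ estimates from $\rmP_m(C)$ carry the unknown $\scrS_{N'}(\mu_u)$ on their right-hand sides, a naive summation cannot isolate an upper bound for $\scrS_{N'}(\mu_u)$. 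Instead, one must exploit the characteristic multiplicative/semigroup identity of the reduced distortion coefficients $\sigma_{K,N'}^{(r)}$ (cf.~\cite[Lem.~3.1]{rajala2012b} and the analogous use in the proof of \autoref{Pr:MDPTS}) in order to decompose $\sigma_{K,N'}^{(1/2)}(\theta_{s,t}^{m-1})$ into an appropriate combination involving $\sigma_{K,N'}^{(1/2)}(\theta_{s,u}^m)$ and $\sigma_{K,N'}^{(1/2)}(\theta_{u,t}^m)$. This identity is precisely the algebraic feature distinguishing the reduced coefficients from $\tau_{K,N'}^{(r)}$ and the reason the induction closes cleanly for $\TCD_p^*$ while being much more delicate for $\TCD_p$.
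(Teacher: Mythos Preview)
Your plan has a genuine gap in the ``final combination'' step, and the semigroup identity for $\sigma_{K,N'}^{(r)}$ alone cannot close it. After applying $\rmP_m(C)$ on the two halves $(s,u)$ and $(u,t)$, the midpoint estimates you obtain are for $\scrS_{N'}$ at the \emph{quarter points} $s+2^{-m-1}$ and $s+3\cdot 2^{-m-1}$, and each carries $\scrS_{N'}(\mu_u)$ on the \emph{right-hand side}. No algebraic manipulation of these two inequalities yields an upper bound on $\scrS_{N'}(\mu_u)$ itself: the unknown appears only as an input, never as an output. The identity from \cite[Lem.~3.1]{rajala2012b} that you invoke lets one propagate midpoint bounds to finer dyadic times along a fixed geodesic (as in \autoref{Pr:MDPTS}), which is the opposite direction from what you need here.

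The paper's proof proceeds by an \emph{iterative} construction that your one-shot approach misses. Starting from $\mu^0:=\mu$, one alternately (Step~1) replaces the geodesic on $[s,u]$ and $[u,t]$ using $\rmP_m(C)$ to get quarter-point bounds, then (Step~2) replaces it on the middle interval $[s+2^{-m-1},s+3\cdot 2^{-m-1}]$ to get a new midpoint bound at $u$ involving the quarter-point values just estimated. This produces a recursion of the form
\[
\scrS_{N'}(\mu^{2i+2}_{u}) \leq 2\,\sigma_{K,N'}^{(1/2)}(\theta/2)^2\,\scrS_{N'}(\mu^{2i}_{u}) + \sigma_{K,N'}^{(1/2)}(\theta/2)^2\,\big[\scrS_{N'}(\mu_s)+\scrS_{N'}(\mu_t)\big],
\]
which after iteration and passage to a weak limit (via \autoref{Le:Villani lemma for geodesic}) yields the desired inequality; the identity $\sum_{k\geq 1} 2^{k-1}\sigma_{K,N'}^{(1/2)}(\theta/2)^{2k}=\sigma_{K,N'}^{(1/2)}(\theta)$ from \cite[Clm.~5.2]{bacher2010} is what makes the series sum correctly. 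Thus the iterative modification of the interpolating geodesic, not a single gluing, is the essential mechanism.
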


\begin{proof} Let a timelike proper-time parametrized $\smash{\ell_p}$-geodesic $\smash{\mu:=(\mu_r)_{r\in[0,1]}}$ according to  $\rmP_n(C)$ be given. Let $s,t \in I_{m-1}$ such that $\smash{t-s = 2^{1-m}}$, and let $\smash{\theta:=\theta_{s,t}^{m-1}}$ be defined with respect to $\mu$. Inductively, we build a sequence $\smash{(\mu^i)_{i\in\N_0}}$ of timelike proper-time parametrized $\smash{\ell_p}$-geodesics $\smash{\mu^i := (\mu_r^i)_{r\in[0,1]}}$ obeying $\smash{\mu^i_r = \mu_r}$ for every $i\in\N_0$ and every $r\in[0,s]\cup[t,1]$ as follows. Initially, set $\smash{\mu^0 := \mu}$. 

\textbf{Step 1.} \textit{Construction for odd $i\in\N_0$.} If the element $\smash{\mu^{2i}}$ has been constructed for a given $i\in\N_0$, as in the proof of \cite[Thm.~2.10]{ambrosiogigli2013} and using that the respective ``midpoints'' result from a timelike $\smash{\ell_p}$-optimal geodesic plan interpolating its endpoints --- see also \autoref{Le:Concat} --- we construct a timelike proper-time parametrized $\smash{\ell_p}$-geodesic $\smash{\mu^{2i+1}:= (\mu_r^{2i+1})_{r\in[0,1]}}$ which has the subsequent properties. For every $r\in[0,s]\cup[t,1]$, we have $\smash{\mu_r^{2i+1}=\mu_r}$, and moreover, for every $N'\geq N$, 
\begin{align*}
\scrS_{N'}(\mu_{s+2^{-m-1}}^{2i+1}) &\leq \sigma_{K,N'}^{(1/2)}\Big[\frac{\theta}{2}\Big]\,\scrS_{N'}(\mu_s) + \sigma_{K,N'}^{(1/2)}\Big[\frac{\theta}{2}\Big]\,\scrS_{N'}(\mu_{s+2^{-m}}^{2i}),\\
\scrS_{N'}(\mu_{s+3\times 2^{-m-1}}^{2i+1}) &\leq \sigma_{K,N'}^{(1/2)}\Big[\frac{\theta}{2}\Big]\,\scrS_{N'}(\mu_{s+2^{-m}}^{2i}) + \sigma_{K,N'}^{(1/2)}\Big[\frac{\theta}{2}\Big]\,\scrS_{N'}(\mu_t).
\end{align*}
Here we have used the property $\rmP_m(C)$, the inequalities
\begin{align*}
2\,\theta_{s,s+2^{-m}}^m &\geq \theta,\\
2\, \theta_{s+2^{-m},t}^m &\geq \theta, 
\end{align*}
where both quantities on the left-hand sides are defined with respect to $\smash{\mu^{2i}}$, and nonincreasingness of $\smash{\sigma_{K,N'}^{(1/2)}(\vartheta)}$ in $\vartheta\geq 0$ by our assumption $K\geq 0$.

\textbf{Step 2.} \textit{Construction for even $i\in\N_0$.} Given $\smash{\mu^{2i+1}}$ exhibited in Step 1 above, similarly to this step we construct a timelike proper-time parametrized $\smash{\ell_p}$-geodesic $\smash{\mu^{2i+2} := (\mu_r^{2i+2})_{r\in[0,1]}}$ with the property that $\smash{\mu_r^{2i+2} = \mu_r^{2i+1}}$ for every $r\in[0,s+2^{-m-1}]\cup\smash{[s+3\times 2^{-m-1},1]}$ such that for every $N'\geq N$,
\begin{align*}
\scrS_{N'}(\mu_{s+2^{-m}}^{2i+2}) \leq \sigma_{K,N'}^{(1/2)}\Big[\frac{\theta}{2}\Big]\,\scrS_{N'}(\mu_{s+2^{-m-1}}^{2i+1}) + \sigma_{K,N'}^{(1/2)}\Big[\frac{\theta}{2}\Big]\,\scrS_{N'}(\mu_{s+3\times 2^{-m-1}}^{2i+1}).
\end{align*}

\textbf{Step 3.} \textit{Conclusion.} Pasting together the inequalities from Step 1 and Step 2 yields, for every $i\in\N_0$ and every $N'\geq N$,
\begin{align*}
\scrS_{N'}(\mu^{2i+2}_{s+2^{-m}}) &\leq 2\,\sigma_{K,N'}^{(1/2)}\Big[\frac{\theta}{2}\Big]^2\,\scrS_{N'}(\mu_{s+2^{-m}}^{2i})\\
&\qquad\qquad + \sigma_{K,N'}^{(1/2)}\Big[\frac{\theta}{2}\Big]^2\,\scrS_{N'}(\mu_s) + \sigma_{K,N'}^{(1/2)}\Big[\frac{\theta}{2}\Big]^2\,\scrS_{N'}(\mu_t). 
\end{align*}
Iterating this inequality gives
\begin{align*}
\scrS_{N'}(\mu_{s+2^{-m}}^{2i})&\leq 2^i\,\sigma_{K,N'}^{(1/2)}\Big[\frac{\theta}{2}\Big]^2\,\scrS_{N'}(\mu_{s+2^{-m}})\\
&\qquad\qquad + \frac{1}{2}\sum_{k=1}^i 2^k\,\sigma_{K,N'}^{(1/2)}\Big[\frac{\theta}{2}\Big]^{2k}\,\big[\scrS_{N'}(\mu_s) + \scrS_{N'}(\mu_t)\big].
\end{align*}

As $\smash{\mu_0^{2i} = \mu_0}$ and $\smash{\mu_1^{2i}= \mu_1}$ for every $i\in\N_0$ by construction, using \autoref{Le:Villani lemma for geodesic} we get  existence of a sequence $\smash{(\bdpi^{2i})_{i\in\N_0}}$ in $\smash{\OptTGeo_{\ell_p}^\tsep(\mu_0,\mu_1)}$ with $\smash{\mu_r^{2i} = (\eval_r)_\push\bdpi^{2i}}$ for every $r\in[0,1]$ which  converges weakly, up to a nonrelabeled subsequence, to a timelike $\smash{\ell_p}$-optimal geodesic plan $\smash{\bdpi\in\OptTGeo_{\ell_p}^\tsep(\mu_0,\mu_1)}$. By $\scrK$-global hyperbolicity, lower semicontinuity of $\scrS_{N'}$ on $\scrP(J(\mu_0,\mu_1))$, and the same computations as for  \cite[Clm.~5.2]{bacher2010} for the distortion coefficients on the right-hand side of the above bound, setting $\smash{\bdpi_{s,t} := (\Restr_s^t)_\push\bdpi\in\OptTGeo_{\ell_p}^\tsep(\mu_s,\mu_t)}$ we get
\begin{align*}
\scrS_{N'}((\eval_{1/2})_\push\bdpi_{s,t}) \leq \sigma_{K,N'}^{(1/2)}(\theta)\,\scrS_{N'}(\mu_s) + \sigma_{K,N'}^{(1/2)}(\theta)\,\scrS_{N'}(\mu_t).
\end{align*}
Here $\smash{\Restr_s^t}$ is defined in \eqref{Eq:Restr def}. This accomplishes $\rmP_{n-1}(C)$.
\end{proof}

\begin{proof}[Proof of \autoref{Th:Local to global}] The forward implication is a consequence of \autoref{Le:Stu}. 

Concerning the backward implication, by the partition  procedure in the proof of \autoref{Pr:ii to iii} it suffices to prove the desired inequality defining the condition $\smash{\TCD_p^*(K,N)}$ for every $\smash{\mu_0,\mu_1\in\scrP_\comp^\ac(\mms,\meas)}$ with $\supp\mu_0\times\supp\mu_1\subset\smash{\mms_\ll^2}$. 

Thanks to $\scrK$-global hyperbolicity, every member of any timelike proper-time parametrized $\smash{\ell_p}$-geodesic from $\mu_0$ to $\mu_1$ has support in $J(\mu_0,\mu_1)$. By $\smash{\TCD_{p,\loc}^*(K,N)}$ and by compactness of $J(\mu_0,\mu_1)$, there exist $\delta > 0$, a disjoint cover $L_1,\dots,L_n\subset\mms$ of $J(\mu_0,\mu_1)$, where $n\in\N$, and closed sets $\smash{C_k \subset\mms}$ containing $\smash{\sfB^\met(L_k,\delta)}$, $k\in\{1,\dots,n\}$, such that every two measures in $\smash{\scrP_\comp^\ac(C_k,\meas)}$ are joined by a timelike proper-time parametrized $\smash{\ell_p}$-geodesic obeying \eqref{Eq:RED}. Let $m\in \N$ with
\begin{align}\label{Eq:m choice}
2^{-m} \leq \delta.
\end{align}

We claim  $\smash{\rmP_m(J(\mu_0,\mu_1))}$. Let $(\mu_r)_{r\in[0,1]}$ be a timelike proper-time parametrized $\smash{\ell_p}$-geodesic from $\mu_0$ to $\mu_1$ consisting of $\meas$-absolutely continuous measures. Let $\bdpi\in\smash{\OptTGeo_{\ell_p}^\tsep(\mu_0,\mu_1)}$ represent  $(\mu_r)_{r\in[0,1]}$, and define $\smash{\varpi\in\scrP(\mms^{2^m+1})}$ by
\begin{align}\label{Eq:PROJ}
\varpi := (\eval_0,\eval_{2^{-m}}, \eval_{2^{-m+1}},\dots,\eval_{(2^m-1)2^{-m}},\eval_1)_\push\bdpi.
\end{align}
Let $s,t\in I_m$ with $t-s = 2^{-m}$. 
For $k\in\{1,\dots,n\}$, define $\smash{\nu_s^k,\nu_t^k\in\scrP_\comp^\ac(\mms,\meas)}$ by
\begin{align*}
\nu_s^k &:= \alpha_k^{-1}\,(\pr_{2^ms})_\push\varpi\mres L_k,\\
\nu_t^k &:= \alpha_k^{-1}\,(\pr_{2^mt})_\push\big[(\pr_{2^ms},\pr_{2^mt})_\push\bdpi\mres (L_k\times \mms)\big],
\end{align*}
provided $\smash{\alpha_k := \mu_s[L_k]>0}$. Note that $\smash{\supp\nu_s^k\subset \bar{L}_k}$ and $\smash{\supp\nu_s^k\times\supp\nu_t^k\subset\mms_\ll^2}$. Since $r:=\inf\tsep(\supp\mu_0\times\supp\mu_1) > 0$, every curve in $\TGeo^\tsep(\mms)$ starting in $\supp\mu_0$ and ending in $\supp\mu_1$ belongs to the set $G_r$ defined in \autoref{Cor:Cptness and equicty}, which is uniformly equi\-continuous.  Hence, using \eqref{Eq:m choice} and \eqref{Eq:PROJ} we get  $\smash{\supp\nu_s^k, \supp\nu_t^k \subset \bar{\sfB}^\met(L_k,\delta)\subset C_k}$. By $\smash{\TCD_{p,\loc}^*(K,N)}$ in the form described above, there is some timelike $\smash{\ell_p}$-optimal geodesic plan $\smash{\bdpi_{s,t}^k\in\OptTGeo_{\ell_p}^\tsep(\nu_s^k,\nu_t^k)}$ such that
\begin{align}\label{Eq:Mrg}
\scrS_{N'}((\eval_{1/2})_\push\bdpi_{s,t}^k)\leq \sigma_{K,N'}^{(1/2)}(\theta_{s,t}^m)\,\scrS_{N'}(\nu_s^k) + \sigma_{K,N}^{(1/2)}(\theta_{s,t}^m)\,\scrS_{N'}(\nu_t^k)
\end{align}
for every $N'\geq N$, where $\theta_{s,t}^m$ is defined as in \eqref{Eq:Thetat} with respect to $\mu := (\mu_r)_{r\in[0,1]}$.

Define $\smash{\bdpi_{s,t}\in\OptTGeo_{\ell_p}^\tsep(\mu_s,\mu_t)}$ by
\begin{align*}
\bdpi_{s,t} := \alpha_1\,\bdpi_{s,t}^1 + \dots + \alpha_n\,\bdpi_{s,t}^n.
\end{align*}
By construction and \autoref{Le:Mutually singular}, respectively, $\smash{(\eval_r)_\push\bdpi_{s,t}^1,\dots,(\eval_r)_\push\bdpi_{s,t}^n}$ are mutually singular for every fixed $r\in\{0,1/2\}$, whence
\begin{align*}
\scrS_{N'}(\mu_s) &= \alpha_1^{1-1/N'}\,\scrS_{N'}(\nu_s^1) + \dots + \alpha_1^{1-1/N'}\,\scrS_{N'}(\nu_s^n),\\
\scrS_{N'}((\eval_{1/2})_\push\bdpi_{s,t}) &= \alpha_1^{1-1/N'}\,\scrS_{N'}((\eval_{1/2})_\push\bdpi_{s,t}^1) + \dots + \alpha_n^{1-1/N'}\,\scrS_{N'}((\eval_{1/2})_\push\bdpi_{s,t}^n).
\end{align*} 
On the other hand, since mutual singularity of $\smash{\nu_t^1,\dots,\nu_t^n}$ may fail, 
\begin{align*}
\scrS_{N'}(\mu_t) \geq \alpha_1^{1-1/N'}\,\scrS_{N'}(\nu_t^1) + \dots + \alpha_n^{1-1/N'}\,\scrS_{N'}(\nu_t^n).
\end{align*}
Merging these inequalities with \eqref{Eq:Mrg} yields
\begin{align*}
\scrS_{N'}((\eval_{1/2})_\push\bdpi_{s,t}) \leq \sigma_{K,N'}^{(1/2)}(\theta_{s,t}^m)\,\scrS_{N'}(\mu_s) + \sigma_{K,N'}^{(1/2)}(\theta_{s,t}^m)\,\scrS_{N'}(\mu_t),
\end{align*}
and the desired property $\rmP_m(J(\mu_0,\mu_1))$ follows.

To close the proof of \autoref{Th:Local to global}, let $(\mu_r)_{r\in[0,1]}$ be a timelike proper-time parametrized $\smash{\ell_p}$-geodesic from $\mu_0$ to $\mu_1$ which consists of $\meas$-absolutely continuous measures, as hypothesized. Iteratively, the property $\rmP_m(J(\mu_0,\mu_1))$ proven above implies $\rmP_0(J(\mu_0,\mu_1))$ by \autoref{Le:PnC}. By \autoref{Propos}, this already implies the condition $\smash{\TCD_p^*(K,N)}$.
\end{proof}

The following \autoref{Cor:K-} easily follows from \autoref{Th:Local to global} and the stability result from \autoref{Th:Stability TCD}. \autoref{Pr:Pr} is proven along the same lines as \cite[Prop.~5.5]{bacher2010} by using uniform continuity of $\tsep$ on compact subsets of $\smash{\mms^2}$; the point  is that for small $\vartheta\geq 0$, the quantities  $\smash{\sigma_{K,N}^{(r)}(\vartheta)}$ and $\smash{\tau_{K,N}^{(r)}(\vartheta)}$ ``almost coincide''. See also the computations in \cite{deng}.

\begin{corollary}\label{Cor:K-} Let $K\in\R$ and $N\in[1,\infty)$. Then chronological $\smash{\ell_p}$-geodesy of $\smash{\scrP_\comp^\ac(\mms,\meas)}$ and the condition $\smash{\TCD_{p,\loc}^*(K',N)}$ hold for every $K' < K$ if and only if $\smash{\TCD_p^*(K,N)}$ is satisfied.
\end{corollary}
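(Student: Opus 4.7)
The plan is to reduce both directions to \autoref{Th:Local to global} combined with a short stability argument using \autoref{Th:Stability TCD}, together with the fact that $\scrX$ is timelike $p$-essentially nonbranching by the standing assumption of \autoref{Sub:Local global}.

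\textbf{Forward direction.} If $\TCD_p^*(K,N)$ holds, then \autoref{Th:Local to global} immediately yields both chronological $\ell_p$-geodesy of $\scrP_\comp^\ac(\mms,\meas)$ and the local condition $\TCD_{p,\loc}^*(K,N)$. Since $\sigma_{K,N'}^{(r)}(\vartheta)$ is nondecreasing in $K$ for fixed $r\in[0,1]$, $N'\in[1,\infty)$ and $\vartheta\geq 0$, the defining inequality of $\TCD_{p,\loc}^*(K,N)$ trivially implies the corresponding inequality at any $K'<K$; hence $\TCD_{p,\loc}^*(K',N)$ holds for every such $K'$.

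\textbf{Backward direction.} Assume chronological $\ell_p$-geodesy of $\scrP_\comp^\ac(\mms,\meas)$ together with $\TCD_{p,\loc}^*(K',N)$ for every $K'<K$. First, I would apply \autoref{Th:Local to global} to each $K'<K$ individually, which upgrades each local statement to the global condition $\TCD_p^*(K',N)$. To pass from this family to the sharp bound, I would then pick a sequence $(K_k)_{k\in\N}$ in $\R$ with $K_k<K$ and $K_k\nearrow K$, and consider the trivially constant sequence $\scrX_k := \scrX$ of measured Lorentzian spaces with each Lorentzian isometric embedding taken to be the identity. This sequence obviously converges to $\scrX$ in the sense of \autoref{Def:Convergence}, the parameter sequence $(K_k,N)_{k\in\N}$ converges to $(K,N)$ in $\R\times[1,\infty)$, and each $\scrX_k$ satisfies $\TCD_p^*(K_k,N)$. \autoref{Th:Stability TCD} therefore yields $\wTCD_p^*(K,N)$ for $\scrX$. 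Finally, since $\scrX$ is timelike $p$-essentially nonbranching, the equivalence \autoref{Th:Equivalence TCD* and TCDe} upgrades this to $\TCD_p^*(K,N)$.

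\textbf{Main obstacle.} All the substantive work is already packaged into \autoref{Th:Local to global}, \autoref{Th:Stability TCD}, and \autoref{Th:Equivalence TCD* and TCDe}, so the only things to verify explicitly are that a trivially constant sequence of measured Lorentzian spaces really does fit the framework of \autoref{Def:Convergence} and that continuity of the distortion coefficients in the curvature parameter suffices to absorb the gap between the hypothesis (available only for $K'<K$) and the sharp bound at $K$. The delicate conceptual point is that \autoref{Th:Stability TCD} only outputs the weak form $\wTCD_p^*$, so timelike $p$-essential nonbranching is unavoidable in order to close the loop via \autoref{Th:Equivalence TCD* and TCDe}; without it, the statement as formulated would not be accessible by this strategy.
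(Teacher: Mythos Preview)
Your proof is correct and follows essentially the same route the paper indicates: the paper simply states that the corollary ``easily follows from \autoref{Th:Local to global} and the stability result from \autoref{Th:Stability TCD}.'' You have spelled out the details accurately, including the step the paper leaves implicit---namely that \autoref{Th:Stability TCD} only yields $\wTCD_p^*(K,N)$, and one must invoke the standing timelike $p$-essential nonbranching assumption of \autoref{Sub:Local global} together with \autoref{Th:Equivalence TCD* and TCDe} to upgrade to $\TCD_p^*(K,N)$.
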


\begin{proposition}\label{Pr:Pr} Assume $\smash{\scrP_\comp^\ac(\mms,\meas)}$ to be chronologically $\smash{\ell_p}$-geodesic. Given any $K\in\R$ and $N\in[1,\infty)$,  $\smash{\TCD_{p,\loc}^*(K',N)}$ holds for every $K'<K$ if and only if $\smash{\TCD_{p,\loc}(K',N)}$ is satisfied for every $K'<K$.
\end{proposition}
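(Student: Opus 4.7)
The proof follows \cite[Prop.~5.5]{bacher2010}. The reverse direction is immediate: since $\tau_{K,N'}^{(r)}(\vartheta)\geq\sigma_{K,N'}^{(r)}(\vartheta)$ for every admissible $K$, $N'$, $r$, and $\vartheta$ (cf.~the proof of \autoref{Pr:TCD and TCD*}), the defining inequality of $\TCD_{p,\loc}(K',N)$ directly implies that of $\TCD_{p,\loc}^*(K',N)$ with the same cover, the same timelike proper-time parametrized $\ell_p$-geodesic, and the same chronological $\ell_p$-optimal coupling, for every $K'<K$.

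For the forward direction, the main ingredient is the following comparison of distortion coefficients: given any $K''<K'$ and $N\in[1,\infty)$, there exists $\delta_0=\delta_0(K',K'',N)>0$ such that
\begin{align*}
\tau_{K'',N'}^{(r)}(\vartheta)\leq\sigma_{K',N'}^{(r)}(\vartheta)\quad\text{for every }r\in[0,1],\ \vartheta\in[0,\delta_0],\ N'\geq N.
\end{align*}
I would prove this by Taylor-expanding both sides at $\vartheta=0$: a direct computation yields
\begin{align*}
\sigma_{K',N'}^{(r)}(\vartheta)-\tau_{K'',N'}^{(r)}(\vartheta)=\frac{K'-K''}{6N'}\,r(1-r^2)\,\vartheta^2+O(\vartheta^4),
\end{align*}
so the strictly positive leading second-order term dominates the quartic remainder uniformly in $r\in[0,1]$ and $N'\geq N$ as soon as $\vartheta\leq\delta_0$ with $\delta_0$ chosen small enough.

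With this comparison in hand, suppose $\TCD_{p,\loc}^*(K',N)$ holds for every $K'<K$, and fix $K''<K$. Pick $K'\in(K'',K)$, let $(\mms_i)_{i\in I}$ be an open cover witnessing $\TCD_{p,\loc}^*(K',N)$, and refine it as follows: using continuity of $\tsep$, for every $x\in\mms$ select $i(x)\in I$ with $x\in\mms_{i(x)}$ and an open neighborhood $V_x\subset\mms_{i(x)}$ of $x$ with $\tsep(V_x\times V_x)\subset[0,\delta_0]$. The family $(V_x)_{x\in\mms}$ is an open cover of $\mms$. Given any pair $(\mu_0,\mu_1)=(\rho_0\,\meas,\rho_1\,\meas)\in\scrP_\comp^\ac(V_x,\meas)^2$ with $\supp\mu_0\times\supp\mu_1\subset (V_x)_\ll^2$, applying $\TCD_{p,\loc}^*(K',N)$ provides a timelike proper-time parametrized $\ell_p$-geodesic $(\mu_t)_{t\in[0,1]}$ and a chronological $\ell_p$-optimal coupling $\pi\in\Pi_\ll(\mu_0,\mu_1)$ obeying \eqref{Eq:RED} with constant $K'$. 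Since $\supp\pi\subset V_x\times V_x$, we have $\tsep(x^0,x^1)\leq\delta_0$ for $\pi$-a.e.~$(x^0,x^1)$, whence the above comparison upgrades \eqref{Eq:RED} to the corresponding $\tau_{K'',N'}$-bound after multiplication by the nonnegative factors $\rho_0^{-1/N'}$ and $\rho_1^{-1/N'}$ and integration against $\pi$. The refined cover $(V_x)_{x\in\mms}$ thus witnesses $\TCD_{p,\loc}(K'',N)$.

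The main obstacle will be establishing uniformity in $N'\geq N$ of the threshold $\delta_0$ in the Taylor comparison, which requires a careful accounting of how the quartic remainders of $\sigma_{K',N'}^{(r)}$ and $\tau_{K'',N'}^{(r)}$ scale with $N'$; the degenerate case $N=N'=1$ must be treated separately from the explicit formula $\tau_{K,1}^{(t)}(\vartheta)=t$. The chronological $\ell_p$-geodesy assumption is not directly used in the construction, but it ensures that the local inequalities have nontrivial content on both sides of the equivalence.
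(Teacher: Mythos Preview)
Your approach is precisely the one the paper points to: it cites \cite[Prop.~5.5]{bacher2010} and \cite{deng}, and hints that for small $\vartheta$ the coefficients $\sigma_{K,N}^{(r)}(\vartheta)$ and $\tau_{K,N}^{(r)}(\vartheta)$ ``almost coincide'', with the refinement of the cover achieved via uniform continuity of $\tsep$ on compact subsets of $\mms^2$. Your Taylor expansion makes this comparison explicit, and you have correctly flagged both the uniformity in $N'\geq N$ (exactly the point of the Deng--Sturm correction) and the degenerate case $N'=1$ as the places where care is needed; your observation that the chronological $\ell_p$-geodesy hypothesis is not invoked in the argument itself is also accurate.
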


\begin{remark}\label{Eq:Blurt} We do not know whether \autoref{Th:Local to global} holds for the stronger conditions $\smash{\TCD_{p,\loc}(K,N)}$ and $\smash{\TCD_p(K,N)}$, respectively; compare with \cite[Rem.~5.6]{bacher2010}. It would follow from \autoref{Th:Local to global} if we succeeded in proving that $\smash{\TCD_p^*(K,N)}$ and $\smash{\TCD_p(K,N)}$ are \emph{equivalent} on timelike $p$-essentially nonbranching spaces, similarly to the metric measure result \cite[Thm.~1.1]{cavallettimilman2021}; recall \autoref{Pr:TCD and TCD*}. 
\end{remark}

\subsection{Universal cover} Now we discuss an application of \autoref{Th:Local to global} to the universal cover of a Lorentzian pre-length space, assuming that the latter exists. By construction of the Lorentzian structure on the universal cover, the latter inherits the curvature properties of its base space locally, and hence globally if $\scrX$ is timelike $p$-essentially nonbranching; see \autoref{Th:TCD universal cover} below for details. The universal cover plays a key role in the smooth \emph{Gannon--Lee singularity theorems} \cite{gannon1975, gannon1976, lee1976}. (Admittedly, these assume the null energy condition instead of a timelike energy condition. See also the recent work \cite{schinnerl2021} for spacetimes with $\smash{\Cont^1}$-metrics, and the references therein.) More elementary, our discussion also creates further examples of spaces satisfying the $\smash{\TCD_p^*(K,N)}$ condition.

\subsubsection{Covering spaces}  We start by recalling the following basic definition  due to \cite[p.~62, p.~80]{spanier1966} (given for general topological spaces therein).

\begin{definition} Let $(\mms,\met)$ be a metric space. 
\begin{enumerate}[label=\textnormal{\alph*.}]
\item Given a topological space $\smash{\hat{\mms}}$, a map $\smash{\cov\colon\hat{\mms}\to\mms}$ is called \emph{covering map} if every point in  $\mms$ has an open neighborhood $U\subset\mms$ such that $\smash{\cov^{-1}(U)}$ is the disjoint union of open subsets of $\smash{\hat{\mms}}$ on each of which $\cov$ restricts to a homeomorphism.
\item A topological space $\smash{\hat{\mms}}$ is called \emph{covering space} for $\mms$ if there exists a covering map $\smash{\cov\colon\hat{\mms}\to\mms}$.
\item A topological space $\smash{\hat{\mms}}$ is called \emph{universal cover} for $\mms$ if it is a covering space for $\mms$, and for every covering space $\smash{\check{\mms}}$ for $\mms$ \textnormal{(}with covering map $\smash{\textnormal{\texttt{q}}}$\textnormal{)} there exists a continuous map $\smash{F \colon \hat{\mms}\to \check{\mms}}$ such that
\begin{align*}
\cov = F\circ \textnormal{\texttt{q}}.
\end{align*}
\end{enumerate}
\end{definition}

\begin{remark} A universal cover might fail to exist in general \cite[Ex.~2.5.17]{spanier1966}, but if it exists, it is unique up to homeomorphism. If $\mms$ is connected, locally path-connected \cite[p.~65]{spanier1966}, and semi-locally simply connected \cite[p.~78]{spanier1966}, then it admits a simply connected universal cover \cite[Cor.~2.5.14]{spanier1966}. If one drops semi-local simple  connectedness, a universal cover might still exist, but fail to be simply connected \cite[Ex.~2.5.18]{spanier1966}.
\end{remark}

Every covering projection $\smash{\cov\colon\hat{\mms}\to\mms}$ is a local homeomorphism \cite[Lem.~2.1.8]{spanier1966}. Hence, defining $\smash{\hat{\met}\colon\hat{\mms}^2\to[0,\infty)}$ by 
\begin{align}\label{Eq:d hut}
\hat{\met}(\hat{x},\hat{y}) := \met(\cov(\hat{x}),\cov(\hat{y})),
\end{align}
the  covering space $\smash{\hat{\mms}}$ of a metric space is automatically a metric space with respect to $\smash{\hat{\met}}$, and $\smash{\hat{\met}}$ induces the topology of $\smash{\hat{\mms}}$. 

\begin{remark}\label{Re:Bruh} By definition of a covering map and \eqref{Eq:d hut}, for every compact $C\subset\mms$ there exists an open subset $U\subset\mms$ containing $C$ such that $\smash{\cov\big\vert_{\cov^{-1}(U)}}$ is an isometric isomorphism,  in particular bi-Lipschitz continuous. This implies that properness of $\met$ lifts to $\smash{\hat{\met}}$. Moreover, a curve $\smash{\hat{\gamma}}\colon[a,b]\to\smash{\hat{\mms}}$, $a,b\in\R$ with $a<b$, is $\smash{\hat{\met}}$-rectifiable if and only if $\smash{\gamma := \cov\circ\hat{\gamma}}$ is $\met$-rectifiable, and in either case $\smash{\Len_{\hat{\met}}(\hat{\gamma}) = \Len_\met(\gamma)}$.
\end{remark}

\subsubsection{Lift of Lorentzian structures} In this subsection, we fix a Lorentzian pre-length space $(\mms,\met,\ll,\leq,\tsep)$. In the remainder of this section, we assume the existence of a universal cover $\smash{\hat{\mms}}$ for $\mms$ with covering map $\cov$. 

We introduce a natural lifted Lorentzian pre-length structure on $\smash{\hat{\mms}}$;  various  Lorentzian properties will lift to $\smash{\hat{\mms}}$ as well. We confine ourselves to those introduced in \autoref{Sub:Lorentzian nonsmooth} and needed for our work, although we expect  more of the vast list of properties from  \cite{cavalletti2020,kunzinger2018} to lift to $\smash{\hat{\mms}}$.

The following \autoref{Pr:Lift} essentially follows from \autoref{Re:Bruh} and the way $\smash{\hll}$, $\smash{\hleq}$, and $\smash{\hat{\tsep}}$ below are defined. (In the language of \autoref{Def:Lor isometry}, $\cov$ is a Lorentzian isometric embedding.) In particular, a curve $\smash{\hat{\gamma}\colon[a,b]\to\hat{\mms}}$, $a,b\in\R$ with $a<b$, is causal or timelike if and only if $\gamma:= \cov\circ\hat{\gamma}$ is causal or timelike, respectively, and in either case $\smash{\Len_{\hat{\tsep}}(\hat{\gamma}) = \Len_\tsep(\gamma)}$. Moreover, the former is a geodesic if and only if the latter is. We omit the straightforward proof.

\begin{proposition}\label{Pr:Lift} Define two relations $\smash{\hat{\leq}}$ and $\smash{\hat{\ll}}$ on $\smash{\hat{\mms}}$ by $\hat{x}\hleq \hat{y}$ provided $\cov(\hat{x})\leq \cov(\hat{y})$, and $\hat{x} \hll\hat{y}$ provided $\cov(\hat{x})\ll \cov(\hat{y})$. Moreover, define  $\smash{\hat{\tsep}\colon\hat{\mms}^2\to [0,\infty]}$ by 
\begin{align*}
\hat{\tsep}(\hat{x},\hat{y}) &:= \tsep(\cov(\hat{x}),\cov(\hat{y})).
\end{align*}
Then $\smash{(\hat{\mms},\hat{\met},\hat{\ll},\hat{\leq},\hat{\tsep})}$ is a Lorentzian pre-length space. If $(\mms,\met,\ll,\leq,\tsep)$ is
\begin{enumerate}[label=\textnormal{\textcolor{black}{(}\roman*\textcolor{black}{)}}]
\item\label{La:Un} causally closed,
\item\label{La:Deux} locally causally closed,
\item\label{La:Trois} causally path-connected,
\item\label{La:NT} non-totally imprisoning,
\item\label{La:Quatre} localizable,
\item regular,
\item\label{La:Six} globally hyperbolic,
\item\label{La:Sept} $\scrK$-globally hyperbolic, 
\item\label{La:Len} length, or
\item\label{La:Geo} geodesic,
\end{enumerate}
then the respective property holds for $\smash{(\hat{\mms},\hat{\met},\hat{\ll},\hat{\leq},\hat{\tsep})}$ as well.
\end{proposition}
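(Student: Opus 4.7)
The plan is to verify each axiom by systematic use of three tools: by construction, $\hat{\leq}$, $\hat{\ll}$ and $\hat{\tsep}$ are the pullbacks under the continuous surjection $\cov\colon\hat{\mms}\to\mms$ of their namesakes on $\mms$; by \eqref{Eq:d hut}, $\cov$ is a local $\met$-isometry and in particular a local homeomorphism; and, by \autoref{Re:Bruh}, any continuous lift of an $\met$-rectifiable curve preserves $\met$-length, $\tsep$-length and causal character.

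First I would verify that $(\hat{\mms},\hat{\met},\hat{\ll},\hat{\leq},\hat{\tsep})$ itself is a Lorentzian pre-length space: the preorder axioms for $\hat{\leq}$, the transitivity of $\hat{\ll}$, and the inclusion $\hat{\ll}\subset\hat{\leq}$ all transfer pointwise along $\cov$; the three axioms of a time separation function follow from $\hat{\tsep}=\tsep\circ(\cov\times\cov)$, with lower semicontinuity inherited from continuity of $\cov$. Items \ref{La:Un} and \ref{La:NT} then follow immediately: $\hat{\mms}_{\hat{\leq}}^2=(\cov\times\cov)^{-1}(\mms_{\leq}^2)$ is closed in $\hat{\mms}^2$, and for compact $\hat{C}\subset\hat{\mms}$ the image $\cov(\hat{C})$ is compact in $\mms$ and the $\hat{\met}$-length of a causal curve in $\hat{C}$ coincides with the $\met$-length of its projection by \autoref{Re:Bruh}, so the bound from non-total imprisonment on $\mms$ transfers.

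For \ref{La:Deux}, \ref{La:Quatre} and regularity I would use a single local argument. Given $\hat{x}\in\hat{\mms}$, shrink an evenly covered open neighborhood of $\cov(\hat{x})$ inside the localizing neighborhood $\Omega_{\cov(\hat{x})}$ provided by \autoref{Def:Local reg}, let $\hat{\Omega}_{\hat{x}}$ be the sheet through $\hat{x}$, and set $\hat{\omega}_{\hat{x}}:=\omega_{\cov(\hat{x})}\circ(\cov\times\cov)$. Because $\cov\big\vert_{\hat{\Omega}_{\hat{x}}}$ is a homeomorphism respecting $\hat{\leq}$, $\hat{\ll}$ and $\hat{\tsep}$, all clauses of \autoref{Def:Local reg} --- including the geodesic and regularity axioms --- transfer verbatim, and the relative closedness required by \ref{La:Deux} follows from the preimage description. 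The path-type properties \ref{La:Trois}, \ref{La:Len} and \ref{La:Geo} are handled by the unique path-lifting property of $\cov$: a (Lipschitz) causal curve in $\mms$ with prescribed lift of its initial point has a unique continuous lift to $\hat{\mms}$ which is $\hat{\met}$-Lipschitz of the same causal character and $\tsep$-length, which together with \autoref{Re:Bruh} and the just-established localizability gives the length axiom of \autoref{Re:TRRE} and the conclusion of \autoref{Def:Geod base space}.

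The main obstacle will be global hyperbolicity \ref{La:Six} and its $\scrK$-variant \ref{La:Sept}. Unwinding definitions, $\hat{J}(\hat{x},\hat{y})=\cov^{-1}(J(\cov(\hat{x}),\cov(\hat{y})))$, which is closed in the proper space $(\hat{\mms},\hat{\met})$ but \emph{a priori} has infinitely many components (e.g.\ the preimage of a point under $\bbR\to\bbS^1$ is $\bbZ$). The plan is to combine non-total imprisonment with the causal path-connectedness on the cover already furnished under \ref{La:Trois}: any $\hat{z}\in\hat{J}(\hat{x},\hat{y})$ is an endpoint of a causal curve from $\hat{x}$ in $\hat{\mms}$ whose projection lies in the compact diamond $J(\cov(\hat{x}),\cov(\hat{y}))\subset\mms$, so by non-total imprisonment on $\mms$ and \autoref{Re:Bruh} its $\hat{\met}$-length is uniformly bounded. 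Hence $\hat{J}(\hat{x},\hat{y})$ is $\hat{\met}$-bounded and closed in the proper metric space $\hat{\mms}$, thus compact; replacing the singletons $\{\hat{x}\}$ and $\{\hat{y}\}$ by the compact hulls of $\hat{C}_0$ and $\hat{C}_1$ gives \ref{La:Sept}.
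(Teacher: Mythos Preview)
Your overall strategy matches what the paper intends (it omits the proof, pointing only to the pull-back definitions and \autoref{Re:Bruh}), and the purely pointwise and local items \ref{La:Un}, \ref{La:Deux}, \ref{La:NT}, \ref{La:Quatre} and regularity are handled correctly. There is, however, a gap in the path-lifting step underlying \ref{La:Trois}, \ref{La:Len}, \ref{La:Geo}, which then propagates to \ref{La:Six} and \ref{La:Sept}.

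Given $\hat{x}\hleq\hat{y}$ with $\hat{x}\neq\hat{y}$, you lift a causal curve from $\cov(\hat{x})$ to $\cov(\hat{y})$ starting at $\hat{x}$; but the unique lift terminates at \emph{some} point of the fiber $\cov^{-1}(\{\cov(\hat{y})\})$, not necessarily at $\hat{y}$ itself. Concretely, distinct $\hat{x},\hat{y}$ in the same fiber already satisfy $\hat{x}\hleq\hat{y}$ (by reflexivity of $\leq$ downstairs), yet any causal curve between them would project to a nonconstant closed causal curve in $\mms$, which global hyperbolicity forbids; so causal path-connectedness on $\hat{\mms}$ cannot be obtained this way. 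For the same pair one has $\hat{J}(\hat{x},\hat{y})=\cov^{-1}(\{\cov(\hat{x})\})$, the full fiber, and your argument for \ref{La:Six}--\ref{La:Sept} breaks: there is no causal curve from $\hat{x}$ to a generic fiber point, and the fallback ``$\hat{\met}$-bounded plus closed in a proper space'' is unavailable since the pull-back formula \eqref{Eq:d hut} makes $\hat{\met}$ only a pseudometric (all fiber distances vanish). The underlying difficulty is with the pure pull-back definitions of $\hleq$, $\hll$, $\hat{\tsep}$ and $\hat{\met}$; with intrinsic lifts (causality defined via lifted causal curves, $\hat{\met}$ the induced length metric) your arguments would go through essentially as written.
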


\subsubsection{Lift of curvature properties} Given a  measured Lorentzian pre-length space $\scrX$, using the covering map $\cov$ the Lorentzian structure on the universal cover from \autoref{Pr:Lift} can be naturally endowed with a reference measure $\smash{\hat{\meas}}$ as follows. Let $\smash{\hat{\Sigma}}$ be  the family of all sets $\smash{\hat{B}\subset \hat{\mms}}$ such that $\smash{\cov\big\vert_{\hat{B}}}$ is a local isometry with Borel image $\smash{\cov(\hat{B})\subset\mms}$. This family is closed under any number of intersections, and since $\smash{\hat{\met}}$ is proper by \autoref{Re:Bruh}, the $\sigma$-algebra $\smash{\sigma(\hat{\Sigma})}$ coincides with the Borel $\sigma$-algebra on $\smash{\hat{\mms}}$. The definition $\smash{\hat{\meas}[\hat{B}] := \meas[\cov(\hat{B})]}$, $\smash{B\in\hat{\Sigma}}$, thus evidently gives rise to the desired (lifted) Borel measure $\smash{\hat{\meas}}$ on $\smash{\hat{\mms}}$.

For the subsequent theorem, we add the hypothesis of timelike $p$-essential nonbranching, $p\in (0,1)$, to our standing assumptions on $\scrX$.

\begin{theorem}\label{Th:TCD universal cover} Let $p\in (0,1)$, $K\in\R$, and $N\in[1,\infty)$. Assume $\smash{\TCD_p^*(K,N)}$ for $\scrX$. Then the causally closed, globally hyperbolic measured Lorentzian geodesic space $\smash{\hat{\scrX} := (\hat{\mms},\hat{\met},\hat{\meas},\hll,\hleq,\hat{\tsep})}$ according to \autoref{Pr:Lift} is timelike $p$-essentially nonbranching, and it obeys $\smash{\TCD_p^*(K,N)}$.
\end{theorem}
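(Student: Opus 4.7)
The plan is to deduce both assertions of \autoref{Th:TCD universal cover} from the local-to-global principle, \autoref{Th:Local to global}. By \autoref{Pr:Lift} and \autoref{Re:Bruh}, $\hat{\scrX}$ inherits causal closedness, $\scrK$-global hyperbolicity, regularity, geodesy, and properness of $\hat{\met}$ from $\scrX$, while $\hat{\meas}$ is Radon with full support by construction. It thus suffices to establish on $\hat{\scrX}$ the three ingredients: (i) the local condition $\TCD_{p,\loc}^*(K,N)$, (ii) timelike $p$-essential nonbranching, and (iii) chronological $\ell_p$-geodesy of $\scrP_\comp^\ac(\hat{\mms},\hat{\meas})$.

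The driving observation for (i) and (ii) is that every $\hat{x} \in \hat{\mms}$ admits an open neighborhood $\hat{U}$ on which $\cov\big\vert_{\hat{U}}$ is simultaneously a Lorentzian isometric embedding onto $U := \cov(\hat{U})$ (preserving $\met$, $\leq$, $\ll$, and $\tsep$) and measure preserving from $\hat{\meas}$ to $\meas$; such neighborhoods form an open cover of $\hat{\mms}$. For (i), any pair $(\hat{\mu}_0,\hat{\mu}_1) \in \scrP_\comp^\ac(\hat{U},\hat{\meas})^2$ with $\supp\hat{\mu}_0 \times \supp\hat{\mu}_1 \subset (\hat{U})_{\hll}^2$ projects to a strongly timelike $p$-dualizable pair in $\scrP_\comp^\ac(U,\meas)^2$ by \autoref{Re:Strong timelike}; applying $\TCD_p^*(K,N)$ on $\scrX$ together with \autoref{Th:Equivalence TCD* and TCDe} yields a plan $\bdpi \in \OptTGeo_{\ell_p}^\tsep$ downstairs witnessing the pathwise $\sigma_{K,N'}^{(\cdot)}$-bound, and unique path lifting of $\cov$, initiated at the preimage in $\hat{U}$ of each starting point, produces a plan $\hat{\bdpi}$ upstairs whose Rényi-entropy inequality transfers via local measure preservation of $\cov$. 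For (ii), local injectivity of $\cov$ ensures that any branching upstairs projects to branching downstairs; splitting a timelike $\hat{\ell}_p$-optimal plan $\hat{\bdpi}$ between compactly supported $\hat{\meas}$-absolutely continuous marginals via $\Restr_s^t$ into pieces whose supporting curves lie in single evenly covered neighborhoods, transporting each piece to $\mms$ via the corresponding local isometry, and applying essential nonbranching of $\scrX$ to the transported pieces, shows that $\hat{\bdpi}$ is concentrated on the preimage $\{\hat{\gamma}\in\TGeo^\tsep(\hat{\mms}) : \cov\circ\hat{\gamma}\in G\}$ of a suitable downstairs timelike nonbranching set $G$, which is itself timelike nonbranching.

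For (iii), given $\hat{\mu}_0,\hat{\mu}_1 \in \scrP_\comp^\ac(\hat{\mms},\hat{\meas})$ with $\supp\hat{\mu}_0 \times \supp\hat{\mu}_1 \subset \hat{\mms}_{\hll}^2$, strong timelike $p$-dualizability upstairs (\autoref{Re:Strong timelike}) together with geodesy and regularity of $\hat{\scrX}$ (\autoref{Pr:Lift}) combined with a measurable selection argument yield some $\hat{\bdpi} \in \OptTGeo_{\hat{\ell}_p}^{\hat{\tsep}}(\hat{\mu}_0,\hat{\mu}_1)$. The crux is verifying that $(\eval_t)_\push\hat{\bdpi}$ is $\hat{\meas}$-absolutely continuous for every $t \in [0,1]$: this is accomplished by decomposing $\hat{\mu}_0$ and $\hat{\mu}_1$ according to a finite cover of $\supp\hat{\mu}_0 \cup \supp\hat{\mu}_1$ by evenly covered open sets, pushing each resulting piece-pair down to $\mms$ where the intermediate marginals of the corresponding timelike $\ell_p$-optimal plan are $\meas$-absolutely continuous by $\TCD_p^*(K,N)$ and \autoref{Th:Equivalence TCD* and TCDe}, and lifting back via unique path lifting from the selected fibers. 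The main obstacle is that the individual lifts must reassemble into a single $\hat{\ell}_p$-optimal plan whose terminal marginal is exactly $\hat{\mu}_1$ rather than some redistribution over the fibers of $\cov(\supp\hat{\mu}_1)$; this is handled by first fixing a chronological $\hat{\ell}_p$-optimal coupling $\hat{\pi} \in \Pi_{\hll}(\hat{\mu}_0,\hat{\mu}_1)$ upstairs and selecting lifts compatible with $\hat{\pi}$, with the uniqueness of downstairs timelike proper-time parametrized $\ell_p$-geodesics under essential nonbranching (\autoref{Th:Uniqueness geodesics}) guaranteeing that the assembly is consistent. Once (i)--(iii) are in place, \autoref{Th:Local to global} delivers $\TCD_p^*(K,N)$ on $\hat{\scrX}$, completing the proof.
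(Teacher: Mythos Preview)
Your overall architecture coincides with the paper's: verify $\TCD_{p,\loc}^*(K,N)$, timelike $p$-essential nonbranching, and chronological $\ell_p$-geodesy on $\hat{\scrX}$, then invoke \autoref{Th:Local to global}. Items (i) and (iii) are handled essentially as in the paper, and your extra care in (iii) about the terminal marginal of the lift being exactly $\hat{\mu}_1$ is a legitimate point the paper glosses over.

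There is, however, a genuine gap in your treatment of (ii). You propose to split $\hat{\bdpi}$ via $\Restr_s^t$ into short pieces lying in evenly covered neighborhoods and then apply essential nonbranching of $\scrX$ to each transported piece. But essential nonbranching on $\scrX$ applies only to plans whose \emph{endpoint} marginals lie in $\scrP_\comp^\ac(\mms,\meas)$, and at this stage you have no information that $(\eval_s)_\push\hat{\bdpi}$ or $(\eval_t)_\push\hat{\bdpi}$ are $\hat{\meas}$-absolutely continuous for $0<s<t<1$ --- indeed, that is part of what you are ultimately trying to prove. Moreover, even if each restricted piece were concentrated on a nonbranching set, piecewise nonbranching does not glue to global nonbranching: two curves agreeing only at a junction point $t_i$ are not covered by the nonbranching hypothesis on the next piece.

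The paper avoids both problems by pushing forward the \emph{entire} plan at once. Since $\hat{\tsep} = \tsep\circ(\cov\times\cov)$ and the causal relations are pulled back, every causal coupling of $\cov_\push\hat{\mu}_0$ and $\cov_\push\hat{\mu}_1$ lifts (via fiberwise disintegration) to a causal coupling of $\hat{\mu}_0$ and $\hat{\mu}_1$ with identical cost; hence $\bdpi := \cov_\push\hat{\bdpi}$ is itself $\ell_p$-optimal between the compactly supported, $\meas$-absolutely continuous measures $\cov_\push\hat{\mu}_0$ and $\cov_\push\hat{\mu}_1$. Essential nonbranching of $\scrX$ then yields a single nonbranching $G\subset\TGeo^\tsep(\mms)$ with $\bdpi[G]=1$, and unique path lifting --- which is precisely your observation that ``branching upstairs projects to branching downstairs'' --- shows that $\cov^{-1}(G)$ is timelike nonbranching with $\hat{\bdpi}[\cov^{-1}(G)]=1$. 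No localization is needed for this step.
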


\begin{proof} Composition with the covering map $\cov$ yields a canonical nonrelabeled and continuous map $\smash{\cov\colon \TGeo^\tsep(\hat{\mms})\to \TGeo^\tsep(\mms)}$. In an evident notation, given any timelike $\smash{\hat{\ell}_p}$-optimal geodesic plan $\smash{\hat{\bdpi}\in\scrP(\TGeo^\tsep(\hat{\mms}))}$, the plan $\smash{\bdpi := \cov_\push\hat{\bdpi}}$ lies in $\scrP(\TGeo^\tsep(\mms))$. Given a timelike nonbranching set $\smash{G\subset\TGeo^\tsep(\mms)}$ with $\bdpi[G]=1$, the set $\smash{\hat{G}} := \smash{\cov^{-1}(G)\subset\TGeo^\tsep(\hat{\mms})}$ is timelike nonbranching as well, and $\smash{\hat{\bdpi}[\hat{G}]=1}$. Hence, timelike $p$-essential nonbranching lifts to the Lorentzian structure from \autoref{Pr:Lift}.

Thanks to the construction given in \autoref{Pr:Lift} as well as \autoref{Re:Bruh}, the spaces $\smash{\hat{\scrX}}$ and $\smash{\scrX}$ are locally isometric as metric and Lorentzian pre-length spaces; recall \autoref{Sec:Stability TCD}. Therefore, the condition $\smash{\TCD_p^*(K,N)}$ lifts and makes $\smash{\hat{\scrX}}$ obey $\smash{\TCD_p^*(K,N)}$ locally. By lifting of timelike proper-time parametrized $\smash{\ell_p}$-geodesics witnessing the inequality defining $\smash{\TCD_p^*(K,N)}$ for $\scrX$, $\smash{\scrP_\comp^\ac(\hat{\mms},\hat{\meas})}$ is chronologically $\smash{\ell_p}$-geodesic according to \autoref{Def:Chron lp geod}. Using \autoref{Pr:Lift} again with \autoref{Th:Local to global},  $\smash{\TCD_p^*(K,N)}$ thus holds globally on the universal cover.
\end{proof}

\section{Timelike measure-contraction property}\label{Ch:TMCP}

In this chapter, we introduce and study Lorentzian analogues of the mea\-sure-contraction property for metric measure spaces  \cite{ohta2007,sturm2006b}. An entropic version has recently been introduced in \cite{cavalletti2020}, cf.~\autoref{Def:TMCPe} below. Again, for smooth space\-times the respective conditions are tightly linked to timelike lower Ricci curvature bounds, cf.~\autoref{Th:TMCP Smooth}.

Since the timelike measure contraction properties from \autoref{Def:TMCP} are independent of any transport exponent, cf.~\autoref{Re:Indep transp exp}, all statements in this chapter requiring such exponent will hold for \emph{every} $p\in (0,1)$.

\subsection{Definition and basic properties} 

\begin{definition}\label{Def:TMCP} Let $K\in\R$, and $N\in[1,\infty)$. 
\begin{enumerate}[label=\textnormal{\alph*.}]
\item We say that $\scrX$ obeys the \emph{reduced timelike measure-contraction property} $\smash{\TMCP^*(K,N)}$ if for every $\smash{\mu_0\in\scrP_\comp^\ac(\mms,\meas)}$ and every $x_1\in I^+(\mu_0)$ there is a timelike proper-time parametrized $\smash{\ell_{1/2}}$-geodesic $(\mu_t)_{t\in[0,1]}$ from $\mu_0$ to $\smash{\mu_1 := \delta_{x_1}}$ such that for every $t\in[0,1)$ and every $N'\geq N$,
\begin{align*}
\scrS_{N'}(\mu_t) \leq -\int_\mms \sigma_{K,N'}^{(1-t)}(\tsep(x^0, x_1))\,\rho_0(x^0)^{-1/N'}\d\mu_0(x^0).
\end{align*}
\item If the previous statement holds for $\smash{\tau_{K,N'}^{(1-t)}}$ instead of $\smash{\sigma_{K,N'}^{(1-t)}}$, $\scrX$ is said to satisfy the \emph{timelike measure-contraction property} $\smash{\TMCP(K,N)}$.
\end{enumerate}
\end{definition}

\begin{remark} As in \autoref{Re:TCD 0}, the conditions  $\smash{\TMCP^*(0,N)}$ and $\smash{\TMCP(0,N)}$ co\-incide for every $N\in[1,\infty)$.
\end{remark}

\begin{remark}[Compare also with {\cite[Rem.~2.4]{cavalletti2022}}]\label{Re:Indep transp exp} Every curve $(\mu_t)_{t\in [0,1]}$ between measures $\mu_0$ and $\mu_1$ as in \autoref{Def:TMCP} is a proper-time parametrized $\smash{\ell_{1/2}}$-geodesic if and only if it is a proper-time parametrized $\smash{\ell_p}$-geodesic for some $p\in (0,1)$ (and consequently, for \emph{every} $p\in (0,1)$). Indeed, the product measure $\mu_0\otimes\mu_1$ is only one coupling of $\mu_0$ and $\mu_1$, and it is chronological, hence $\smash{\ell_p}$-optimal for every $p$ as above. In particular, the timelike measure-contraction properties from \autoref{Def:TMCP} do not depend on the transport exponent.
\end{remark}

\begin{remark}\label{Re:mae x1} Under our standing assumptions on $\scrX$, $\smash{\TMCP^*(K,N)}$ holds if and only if the statement from \autoref{Def:TMCP} is  merely satisfied for $\meas$-a.e. $x_1\in \mms$ thanks to  \autoref{Le:Villani lemma for geodesic} and Fatou's lemma. The analogous statement holds for $\smash{\TMCP(K,N)}$ as well.
\end{remark}

As for \autoref{Pr:TCD and TCD*} and \autoref{Pr:Consistency TCD}, the statements from \autoref{Pr:TMCP to TMCP*} and  \autoref{Pr:Consistency TMCP} below are readily established (in all generality, the proofs work for general measured Lorentzian pre-length spaces). 

\begin{proposition}\label{Pr:TMCP to TMCP*} The following hold for every $K\in\R$, and $N\in[1,\infty)$; as in \autoref{Pr:TCD and TCD*}, set $K^* := K(N-1)/N$.
\begin{enumerate}[label=\textnormal{(\roman*)}]
\item  The condition $\smash{\TMCP(K,N)}$ implies $\smash{\TMCP^*(K,N)}$.
\item If $K>0$, the condition $\smash{\TMCP^*(K,N)}$ implies $\smash{\TMCP(K^*,N)}$.
\end{enumerate}
\end{proposition}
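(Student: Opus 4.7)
The plan is to mimic, line by line, the proof of \autoref{Pr:TCD and TCD*}. Indeed, the semiconvexity inequalities defining $\TMCP_p^*(K,N)$ and $\TMCP_p(K,N)$ differ from those defining $\TCD_p^*(K,N)$ and $\TCD_p(K,N)$ only in that the target marginal is forced to be the Dirac mass $\mu_1 = \delta_{x_1}$. This annihilates the $\rho_1^{-1/N'}$-summand and leaves the unique (automatically chronological, since $x_1 \in I^+(\mu_0)$) coupling $\pi = \mu_0 \otimes \delta_{x_1}$. Hence no new transport construction is needed, and the same pointwise comparisons between distortion coefficients that drove the proof of \autoref{Pr:TCD and TCD*} will suffice.

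Concretely, I will recall the two pointwise estimates
\begin{align*}
\tau_{K,N'}^{(r)}(\vartheta) \geq \sigma_{K,N'}^{(r)}(\vartheta), \qquad \tau_{K^*,N'}^{(r)}(\vartheta) \leq \sigma_{K,N'}^{(r)}(\vartheta),
\end{align*}
valid for every $r \in [0,1]$, every $N' \geq N$, and every $\vartheta \geq 0$, the second one under the extra assumption $K > 0$ with $K^* = K(N-1)/N$. These are exactly the pointwise inequalities cited in the proof of \autoref{Pr:TCD and TCD*}, available from \cite[Prop.~2.5]{bacher2010}.

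For item~(i), I will fix $\mu_0 \in \scrP_\comp^\ac(\mms,\meas)$ and $x_1 \in I^+(\mu_0)$, and apply $\TMCP_p(K,N)$ to produce a timelike proper-time parametrized $\ell_p$-geodesic $(\mu_t)_{t\in[0,1]}$ from $\mu_0$ to $\delta_{x_1}$ for which the $\tau_{K,N'}^{(1-t)}$-inequality of \autoref{Def:TMCP} holds. Since $\rho_0(x^0)^{-1/N'} \geq 0$ everywhere, integrating the first comparison above against $\rho_0^{-1/N'}\d\mu_0$ upgrades this $\tau$-inequality to the $\sigma_{K,N'}^{(1-t)}$-inequality defining $\TMCP_p^*(K,N)$, with the very same interpolant serving as witness. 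For item~(ii), I will proceed symmetrically: starting from the $\sigma_{K,N'}^{(1-t)}$-inequality furnished by $\TMCP_p^*(K,N)$ and using the second comparison together with the same nonnegativity, I obtain the $\tau_{K^*,N'}^{(1-t)}$-inequality required by $\TMCP_p(K^*,N)$, once more preserving the witness geodesic.

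I do not anticipate any genuine obstacle: the argument is purely pointwise, requires neither the construction of new interpolants nor any essential nonbranching, duality, or stability hypothesis, and is a direct transcription of the reasoning for the curvature-dimension condition to the degenerate endpoint configuration where the terminal distribution is a Dirac measure.
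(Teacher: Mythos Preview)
Your proposal is correct and follows exactly the paper's approach: the paper does not even spell out a separate proof for \autoref{Pr:TMCP to TMCP*}, but simply states that it is readily established ``as for \autoref{Pr:TCD and TCD*}'', which is precisely the pointwise comparison of distortion coefficients you have written out. Your observation that the Dirac target collapses the second summand and fixes the coupling, so that the same witness geodesic works and only the coefficient inequalities $\tau_{K,N'}^{(r)}\geq \sigma_{K,N'}^{(r)}$ and (for $K>0$) $\tau_{K^*,N'}^{(r)}\leq \sigma_{K,N'}^{(r)}$ are needed, is exactly the intended reduction.
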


\begin{proposition}\label{Pr:Consistency TMCP} Assume $\smash{\TMCP^*(K,N)}$ for $K\in\R$, and $N\in[1,\infty)$. Then the following statements hold.
\begin{enumerate}[label=\textnormal{(\roman*)}]
\item \textnormal{\textbf{Consistency.}} The condition $\smash{\TMCP^*(K',N')}$ holds for every $K'\leq K$ and every $N'\geq N$.
\item \textnormal{\textbf{Scaling.}} Given any $a,b,\theta>0$, the rescaled measured Lorentzian pre-length space $(\mms, a\met, b\meas,\ll,\leq, \theta\tsep)$ satisfies $\smash{\TMCP^*(K/\theta^2,N)}$.
\end{enumerate}

Analogous statements hold for the $\smash{\TMCP(K,N)}$ condition.
\end{proposition}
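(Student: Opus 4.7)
I would follow the proof of Proposition~\ref{Pr:Consistency TCD} essentially verbatim, adapted to the one-point-target transport setup from \autoref{Def:TMCP}. In both items, the same witnessing timelike proper-time parametrized $\ell_p$-geodesic $(\mu_t)_{t\in[0,1]}$ and endpoint $x_1$ that realize $\smash{\TMCP_p^*(K,N)}$ (respectively $\smash{\TMCP_p(K,N)}$) on $\scrX$ will witness the rescaled condition; only the distortion coefficients and normalizations need adjustment.

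For \textbf{(i)}, consistency in $K$ is immediate from nondecreasingness of $\sigma_{K,N'}^{(r)}(\vartheta)$ (respectively $\tau_{K,N'}^{(r)}(\vartheta)$) in $K$ for fixed $r\in[0,1]$, $N'\in[1,\infty)$, and $\vartheta\geq 0$: lowering $K$ to $K'\leq K$ can only decrease the right-hand side of the defining inequality, so the same witness works. Consistency in $N$ is automatic because the defining inequality is required to hold for every $N''\geq N$, hence in particular for every $N''\geq N'\geq N$.

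For \textbf{(ii)}, the rescaling $(\met,\meas,\tsep)\leadsto(a\met,b\meas,\theta\tsep)$ preserves the topology, the chronological and causal relations, and merely rescales $\ell_p$ by $\theta$; since relation \eqref{Eq:Proper time par} is homogeneous in $\tsep$, any timelike proper-time parametrized $\ell_p$-geodesic for $\tsep$ remains one for $\theta\tsep$. A short bookkeeping computation gives $\scrS_{N'}^{b\meas}(\mu) = b^{1/N'}\,\scrS_{N'}^{\meas}(\mu)$ and, for a density $\rho_0$ with respect to $\meas$, the density with respect to $b\meas$ is $\rho_0/b$ with $(\rho_0/b)^{-1/N'}=b^{1/N'}\,\rho_0^{-1/N'}$. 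Combining this with the scaling identities
\begin{align*}
\sigma_{K/\theta^2,N'}^{(r)}(\theta\vartheta) &= \sigma_{K,N'}^{(r)}(\vartheta), \\
\tau_{K/\theta^2,N'}^{(r)}(\theta\vartheta) &= \tau_{K,N'}^{(r)}(\vartheta),
\end{align*}
both sides of the $\smash{\TMCP_p^*(K/\theta^2,N)}$ (respectively $\smash{\TMCP_p(K/\theta^2,N)}$) inequality on the rescaled space pick up a common factor $b^{1/N'}$ and reduce to the corresponding inequality for $\scrX$.

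The argument is purely formal; the only point that requires a moment's care is the invariance of the Rényi entropy under the change of reference measure, which is handled by the above $b^{1/N'}$ bookkeeping.
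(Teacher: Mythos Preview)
Your proposal is correct and follows exactly the paper's approach: the paper simply states that the proof is ``readily established'' as for \autoref{Pr:Consistency TCD}, and your argument is precisely that proof transported to the $\TMCP$ setting, with the same monotonicity-in-$K$ observation, the same ``for every $N''\geq N$'' trick, and the same scaling identities for the distortion coefficients. The explicit $b^{1/N'}$ bookkeeping you include is more detail than the paper offers but is entirely in line with its intended argument.
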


\begin{remark}\label{Re:Unlikely} An analogue of \autoref{Le:Stu} is unlikely to hold under $\smash{\TMCP^*(K,N)}$ or $\smash{\TMCP(K,N)}$. In particular, we do not know a priori whether the measures $\mu_t$, $t\in[0,1)$, of a timelike proper-time parametrized $\smash{\ell_p}$-geodesic as in \autoref{Def:TMCP} are always --- or can be chosen  to be --- $\meas$-absolutely continuous. Instead, we establish the existence of timelike proper-time parametrized $\smash{\ell_p}$-geodesics consisting of $\meas$-absolutely continuous measures whose densities are uniformly $\Ell^\infty$ in time under $\smash{\TMCP^*(K,N)}$, at least if $\smash{\mu_0\in\scrP_\comp^\ac(\mms,\meas)}$ has $\meas$-essentially bounded density, by a variational method following \cite{braun2022}, see also \cite{cavalletti2017,rajala2012a,rajala2012b} in \autoref{Sub:Good}; cf.~\autoref{Th:Good TMCP}. By uniqueness of timelike proper-time parametrized $\smash{\ell_p}$-geodesics under timelike essential nonbranching conditions, see \autoref{Th:Uniqueness geodesics}, this  suffices to show an equivalence result analogous to \autoref{Th:Equivalence TCD* and TCDe} for our timelike measure-contraction property, see \autoref{Th:Equivalence TMCP* and TMCPe} and \autoref{Th:Equivalence TMCP}.
\end{remark}

The link of $\smash{\TMCP^*(K,N)}$ and $\smash{\TMCP(K,N)}$ to their respective $\TCD$ counterparts is more subtle. The proof of the corresponding \autoref{Pr:TMCP to TCD} follows  \cite[Prop.~3.11]{cavalletti2020}. We need the following result whose proof is a straigtforward adaptation of the argument for \cite[Lem.~3.3]{sturm2006b}, recall also \autoref{Le:Const perturb}. The difference to \cite{sturm2006b} is that we only require the marginal whose density is contained in the respective integrand to be constant.

\begin{lemma}\label{Le:USC lemma} Let $K\in\R$ and $N\in[1,\infty)$. Let  $\rho\colon\mms\to [0,\infty)$ be a Borel function with $\mu:= \rho\,\meas\in\scrP(\mms)$. Moreover, let $(\pi_n)_{n\in\N}$ be a sequence in $\scrP(\mms^2)$ converging weakly to  $\pi\in\scrP(\mms^2)$  such that for some $i\in\{1,2\}$, we have 
\begin{align*}
(\pr_i)_\push\pi_n = \mu
\end{align*}
for every $n\in\N$. Then for every $r\in[0,1]$,
\begin{align*}
&\int_{\mms^2} \tau_{K,N}^{(r)}(\tsep(x^0,x^1))\,(\rho\circ\pr_i)(x^0,x^1)^{-1/N} \d\pi(x^0,x^1)\\
&\qquad\qquad\leq \liminf_{n\to\infty}\int_{\mms^2} \tau_{K,N}^{(r)}(\tsep(x^0,x^1))\,(\rho\circ\pr_i)(x^0,x^1)^{-1/N}\d\pi_n(x^0,x^1).
\end{align*}

An analogous assertion holds with $\smash{\tau_{K,N}^{(r)}}$ replaced by $\smash{\sigma_{K,N}^{(r)}}$.
\end{lemma}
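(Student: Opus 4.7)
Write $\phi := \tau_{K,N}^{(r)}\!\circ\tsep\colon \mms^2\to[0,\infty]$, which is continuous (or, in the cases where $\tau_{K,N}^{(r)}$ takes the value $\infty$, lower semicontinuous) and nonnegative, and $f := \rho^{-1/N}\colon\mms\to[0,\infty]$, which is Borel nonnegative. The first observation is that, since $\pr_i$ is continuous and push-forward by a continuous map preserves weak convergence, the hypothesis $(\pr_i)_\push\pi_n = \mu$ for every $n\in\N$ upgrades in the limit to $(\pr_i)_\push\pi = \mu$. Disintegrating $\pi_n$ and $\pi$ along $\pr_i$ with respect to $\mu$ thus gives, for any nonnegative Borel $g\colon\mms\to[0,\infty]$,
\begin{align*}
\int_{\mms^2}\phi\cdot(g\circ\pr_i)\d\pi_n &= \int_\mms g(x)\,\Phi_n(x)\d\mu(x),\\
\int_{\mms^2}\phi\cdot(g\circ\pr_i)\d\pi &= \int_\mms g(x)\,\Phi(x)\d\mu(x),
\end{align*}
for the Borel functions $\Phi_n(x) := \int \phi\d\pi_n^x$ and $\Phi(x) := \int \phi\d\pi^x$. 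In particular, if $g\leq f$ holds $\mu$-a.e.~then $(g\circ\pr_i) \leq (f\circ\pr_i)$ holds $\pi_n$-a.e.~and $\pi$-a.e.

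The main step is to take a fixed bounded continuous $g\colon\mms\to[0,\infty)$ with $g\leq f$ $\mu$-a.e.~and a level $L>0$, and apply weak convergence to the truncated bounded continuous integrand $\phi_L\cdot(g\circ\pr_i)$, where $\phi_L := \min\{\phi,L\}$. This yields
\begin{align*}
\int_{\mms^2}\phi_L\cdot(g\circ\pr_i)\d\pi &= \lim_{n\to\infty}\int_{\mms^2}\phi_L\cdot(g\circ\pr_i)\d\pi_n\\
&\leq \liminf_{n\to\infty}\int_{\mms^2}\phi\cdot(f\circ\pr_i)\d\pi_n,
\end{align*}
using $\phi_L\leq \phi$ and $g\leq f$ $\mu$-a.e.~(hence $\pi_n$-a.e.~for all $n$). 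Letting $L\to\infty$ and invoking monotone convergence on the left yields
\begin{align*}
\int_{\mms^2}\phi\cdot(g\circ\pr_i)\d\pi \leq \liminf_{n\to\infty}\int_{\mms^2}\phi\cdot(f\circ\pr_i)\d\pi_n.
\end{align*}

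The final step is to exhaust $f$ from below by such $g$'s. Since $\mu$ is a Radon probability measure and $f$ is Borel nonnegative, standard approximation (truncation combined with Lusin's theorem, or the Vitali--Carathéodory theorem) yields a sequence $(g_k)_{k\in\N}$ of bounded nonnegative continuous functions with $g_k\nearrow f$ $\mu$-a.e. By monotone convergence applied to the $\mu$-integral representation from the first step,
\begin{align*}
\sup_{k\in\N}\int_{\mms^2}\phi\cdot(g_k\circ\pr_i)\d\pi = \sup_{k\in\N}\int_\mms g_k\,\Phi\d\mu = \int_\mms f\,\Phi\d\mu = \int_{\mms^2}\phi\cdot(f\circ\pr_i)\d\pi,
\end{align*}
which combined with the previous display proves the claim. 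The $\sigma_{K,N}^{(r)}$ variant is identical, noting that $\sigma_{K,N}^{(r)}\!\circ\tsep$ is lower semicontinuous as the composition of a nondecreasing lower semicontinuous function with a continuous one, so the same truncation $\phi_L\nearrow \phi$ (by continuous bounded functions approximating the bounded LSC $\min\{\sigma_{K,N}^{(r)}\!\circ\tsep,L\}$ from below) makes the weak-convergence step go through. The only technical point worth flagging is the monotone approximation of the Borel function $\rho^{-1/N}$ by bounded continuous functions $g_k$ from below $\mu$-a.e.; everything else is a clean exploitation of the hypothesis $(\pr_i)_\push\pi_n = \mu$, which allows us to trade pointwise dominations for $\mu$-a.e.~ones.
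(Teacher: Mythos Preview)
Your overall strategy is sound, but the final approximation step has a genuine gap. The claim that a Borel nonnegative function $f=\rho^{-1/N}$ admits bounded \emph{continuous} $g_k$ with $g_k\nearrow f$ $\mu$-a.e.\ is false in general: take $\rho$ to be a two-valued density whose larger level set is the complement of a fat Cantor set $A\subset[0,1]$, so that $f$ is small on the open dense set $A^{\sfc}$ and large on $A$; any continuous $g$ with $g\le f$ $\mu$-a.e.\ must satisfy $g\le\inf_{A^{\sfc}}f$ on the dense set $A^{\sfc}$, hence everywhere, and cannot recover the mass of $f$ on $A$. Vitali--Carath\'eodory only produces \emph{upper} semicontinuous minorants, which go the wrong way under weak convergence for the inequality you need.

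The fix --- and this is essentially what Sturm's argument in \cite[Lem.~3.3]{sturm2006b}, to which the paper defers, actually does --- is to exploit the fixed marginal more directly. Use Lusin to find, for each $\varepsilon>0$, a compact $C\subset\mms$ with $\mu[C^{\sfc}]<\varepsilon$ on which $\rho$ (hence $f$) is continuous; extend $f\big\vert_C$ to a bounded continuous $\tilde f$ on $\mms$ via Tietze. Since $(\pr_i)_\push\pi_n=\mu$ for every $n$, the contribution from $\pr_i^{-1}(C^{\sfc})$ satisfies $\pi_n[\pr_i^{-1}(C^{\sfc})]=\mu[C^{\sfc}]<\varepsilon$ \emph{uniformly in $n$}, and on this set both $\int\phi_L\,(f\circ\pr_i)\,\rmd\pi_n$ and $\int\phi_L\,(\tilde f\circ\pr_i)\,\rmd\pi_n$ are controlled by $L\!\int_{C^{\sfc}}\!f\,\rmd\mu$ and $L\,\Vert\tilde f\Vert_\infty\,\varepsilon$ respectively (the former via H\"older, as $\int_{C^{\sfc}}\rho^{1-1/N}\,\rmd\meas\le\meas[\mms]^{1/N}\varepsilon^{1-1/N}$). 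On $\pr_i^{-1}(C)$ the integrands agree, so weak convergence applies to $\phi_L\,(\tilde f\circ\pr_i)$. Send $\varepsilon\to0$, then $L\to\infty$. Equivalently, you can keep your truncation $\phi_L\le L$ and simply approximate $f$ in $\Ell^1(\mms,\mu)$ by continuous $g_k$ without any pointwise ordering: the error $\big\vert\int\phi_L\,((f-g_k)\circ\pr_i)\,\rmd\pi_n\big\vert\le L\,\Vert f-g_k\Vert_{\Ell^1(\mms,\mu)}$ is again uniform in $n$ by the fixed-marginal hypothesis, and this closes the argument cleanly.
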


\begin{proposition}\label{Pr:TMCP to TCD}  The following  hold for every $p\in (0,1)$, $K\in\R$, and $N\in[1,\infty)$.
\begin{enumerate}[label=\textnormal{\textcolor{black}{(}\roman*\textcolor{black}{)}}]
\item\label{LART} The condition $\smash{\wTCD_p(K,N)}$ implies $\smash{\TMCP(K,N)}$.
\item\label{LARTT} The condition $\smash{\wTCD_p^*(K,N)}$ implies $\smash{\TMCP^*(K,N)}$.
\end{enumerate}
\end{proposition}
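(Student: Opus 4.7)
My plan is an approximation argument: replace the Dirac mass $\delta_{x_1}$ by a sequence of $\meas$-absolutely continuous probability measures $\mu_{1,n}$ converging weakly to $\delta_{x_1}$, apply the weak timelike curvature-dimension condition between $\mu_0$ and $\mu_{1,n}$, and pass to the limit as $n \to \infty$. Items \textnormal{(i)} and \textnormal{(ii)} follow from the same argument, so I focus on \textnormal{(i)}; the reduced case is obtained by substituting $\sigma$ for $\tau$ throughout. Throughout, I interpret $x_1 \in I^+(\mu_0)$ as $\supp\mu_0 \subset I^-(x_1)$, since this is the only case in which the required $\TMCP_p(K,N)$ inequality is nontrivial.

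By openness of $\mms_\ll^2$ and compactness of $\supp\mu_0$, fix $r > 0$ with $\supp\mu_0 \times \bar{\sfB}^\met(x_1, r) \subset \mms_\ll^2$. Set $B_n := \bar{\sfB}^\met(x_1, r/n)$ and $\mu_{1,n} := \meas[B_n]^{-1}\,\meas\mres B_n \in \scrP_\comp^\ac(\mms,\meas)$, which is well-defined since $\meas$ has full support. The pair $(\mu_0, \mu_{1,n})$ is then strongly timelike $p$-dualizable by \autoref{Re:Strong timelike}, and $(\mu_{1,n})_{n \in \N}$ converges weakly to $\delta_{x_1}$. Invoking $\wTCD_p(K,N)$ provides, for each $n \in \N$, a timelike $\ell_p$-optimal geodesic plan $\bdpi_n \in \OptTGeo_{\ell_p}^\tsep(\mu_0, \mu_{1,n})$ representing a timelike proper-time parametrized $\ell_p$-geodesic $(\mu_{n,t})_{t \in [0,1]}$, as well as a timelike $p$-dualizing coupling $\pi_n := (\eval_0, \eval_1)_\push \bdpi_n$ witnessing the semiconvexity inequality defining $\TCD_p(K,N)$.

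By $\scrK$-global hyperbolicity, all curves in $\supp\bdpi_n$ have endpoints in the fixed compact causal diamond $J(\supp\mu_0, B_1)$, so the compactness statement from \autoref{Le:Villani lemma for geodesic} in \autoref{App:B} extracts a subsequence $(\bdpi_{n_k})_{k \in \N}$ converging weakly to some $\bdpi_\infty \in \OptTGeo_{\ell_p}^\tsep(\mu_0, \delta_{x_1})$; set $\mu_t := (\eval_t)_\push \bdpi_\infty$. Passing to the limit in the $\TCD_p(K,N)$ inequality, the left-hand side is controlled by joint weak lower semicontinuity of $\scrS_{N'}$ on $\scrP(J(\supp\mu_0, B_1))$, yielding $\scrS_{N'}(\mu_t) \leq \liminf_k \scrS_{N'}(\mu_{n_k,t})$. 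For the right-hand side, the first marginal of each $\pi_n$ is the fixed measure $\mu_0$, and the weak convergence $\pi_{n_k} \to \mu_0 \otimes \delta_{x_1}$ (the unique coupling of $\mu_0$ and $\delta_{x_1}$) follows from continuity of $(\eval_0, \eval_1)$. Hence \autoref{Le:USC lemma} yields
\begin{align*}
&\int_{\mms^2} \tau_{K,N'}^{(1-t)}(\tsep(x^0, x_1))\,\rho_0(x^0)^{-1/N'}\d\mu_0(x^0)\\
&\qquad \leq \liminf_{k \to \infty}\int_{\mms^2}\tau_{K,N'}^{(1-t)}(\tsep(x^0, x^1))\,\rho_0(x^0)^{-1/N'}\d\pi_{n_k}(x^0, x^1),
\end{align*}
while the second integral in the $\TCD_p(K,N)$ inequality, which involves $\rho_{1,n_k}^{-1/N'}$, is non-negative and may simply be discarded in the $\liminf$. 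Assembling these bounds produces the required $\TMCP_p(K,N)$ inequality for $(\mu_t)_{t \in [0,1]}$ and $\delta_{x_1}$.

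The principal technical obstacle is confirming that $\bdpi_\infty \in \OptTGeo_{\ell_p}^\tsep(\mu_0, \delta_{x_1})$, i.e., that $\ell_p$-optimality and chronological support persist under the weak subsequential limit; this is precisely the content of \autoref{App:B}, which relies on the uniform Lipschitz bounds for timelike geodesics staying in a compact set granted by non-total imprisonment under \autoref{Ass:ASS}.
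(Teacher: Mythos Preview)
Your approach is essentially the paper's: approximate $\delta_{x_1}$ by uniform measures on shrinking balls, invoke the weak curvature-dimension condition, drop the second integral, and pass to the limit using the compactness lemma for timelike $\ell_p$-optimal plans together with \autoref{Le:USC lemma}. Two points deserve attention.

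First, the identification $\pi_n := (\eval_0,\eval_1)_\push\bdpi_n$ is not justified: the definition of $\wTCD_p(K,N)$ supplies a geodesic (represented by some $\bdpi_n$) and a timelike $p$-dualizing coupling $\pi_n$, but these are a priori unrelated (cf.\ \autoref{Re:Uniq}). The fix is painless: simply let $\pi_n$ be the coupling provided by $\wTCD_p(K,N)$ and observe that, since its first marginal is the fixed $\mu_0$ and its second marginal $\mu_{1,n}$ converges weakly to $\delta_{x_1}$, tightness forces every subsequential weak limit of $(\pi_n)$ to be the unique coupling $\mu_0\otimes\delta_{x_1}$. Then \autoref{Le:USC lemma} applies exactly as you describe.

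Second, your reading of ``$x_1\in I^+(\mu_0)$'' as $\supp\mu_0\subset I^-(x_1)$ is slightly too strong. The paper only uses that $\mu_0$-a.e.\ point lies in $I^-(x_1)$ and introduces an extra truncation: it first restricts $\mu_0$ to a compact $C_\varepsilon\subset I^-(x_1)$ with $\mu_0[C_\varepsilon]\geq 1-\varepsilon$, runs your argument for $\mu_0^\varepsilon:=\mu_0[C_\varepsilon]^{-1}\mu_0\mres C_\varepsilon$, and then sends $\varepsilon\to 0$ via Fatou's lemma. This additional step covers the case where $\supp\mu_0$ touches the boundary of $I^-(x_1)$, in which your product $\supp\mu_0\times\bar{\sfB}^\met(x_1,r)\subset\mms_\ll^2$ need not be available.
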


\begin{proof} We only prove \ref{LART}, the proof of \ref{LARTT} is similar. 

\textbf{Step 1.} \textit{Approximation of $\mu_0$ and $\mu_1$.}  
Given any $\varepsilon\in (0,1)$, let $C_\varepsilon \subset\mms$ be a compact set with $\mu_0[C_\varepsilon] \geq 1-\varepsilon$ and $\smash{C_\varepsilon \times \{x_1\}\subset\mms_\ll^2}$. Define $\smash{\mu_0^\varepsilon\in\scrP_\comp^\ac(\mms,\meas)}$ by
\begin{align*}
\mu_0^\varepsilon := \mu_0[C_\varepsilon]^{-1}\,\mu_0\mres C_\varepsilon = \rho_0^\varepsilon\,\meas
\end{align*}
Moreover, since $\smash{\mms_\ll^2}$ is open while $C_\varepsilon\times\{x_1\}$ is compact, there exists $\eta > 0$ such that  $\smash{C_\varepsilon\times \sfB^\met(x_1,\eta)\subset\mms_\ll^2}$. For $\delta\in (0,\eta)$, we set 
\begin{align*}
\mu_1^\delta := \meas\big[\sfB^\met(x_1,\delta)\big]^{-1}\,\meas\mres \sfB^\met(x_1,\delta) = \rho_1^\delta\,\meas.
\end{align*}

By \autoref{Re:Strong timelike}, the pair $\smash{(\mu_0^\varepsilon,\mu_1^\delta)}$ is strongly  timelike $p$-dualizable, and using $\smash{\wTCD_p(K,N)}$ we find a timelike proper-time parametrized $\smash{\ell_p}$-geodesic $\smash{(\mu_t^{\varepsilon,\delta})_{t\in[0,1]}}$ connecting $\smash{\mu_0^\varepsilon}$ to $\smash{\mu_1^\delta}$ as well as a timelike $p$-dualizing $\smash{\pi^{\varepsilon,\delta}\in\Pi_\ll(\mu_0^\varepsilon,\mu_1^\delta)}$, with support in $\smash{C_\varepsilon\times \sfB^\met(x_1,\eta)}$, such that for every $t\in[0,1]$ and every $N'\geq N$,
\begin{align*}
\scrS_{N'}(\mu_t^{\varepsilon,\delta}) &\leq -\int_{\mms^2} \tau_{K,N'}^{(1-t)}(\tsep(x^0,x^1))\,\rho_0^\varepsilon(x^0)^{-1/N'}\d\pi^{\varepsilon,\delta}(x^0,x^1)\\
&\qquad\qquad -\int_{\mms^2}\tau_{K,N'}^{(1-t)}(\tsep(x^0,x^1))\,\rho_1^\delta(x^1)^{-1/N'}\d\pi^{\varepsilon,\delta}(x^0,x^1)\\
&\leq -\int_{\mms^2}\tau_{K,N'}^{(1-t)}(\tsep(x^0,x^1))\,\rho_0^\varepsilon(x^0)^{-1/N'}\d\pi^{\varepsilon,\delta}(x^0,x^1).
\end{align*}

\textbf{Step 2.} \textit{Sending $\delta \to 0$.} Given any $\varepsilon > 0$ and a fixed sequence $(\delta_n)_{n\in\N}$ decreasing to $0$, from $\smash{(\mu_t^{\varepsilon,\delta_n})_{t\in[0,1]}}$, $n\in\N$, we construct a timelike proper-time parametrized $\smash{\ell_p}$-geodesic from $\mu_0^\varepsilon$ to $\mu_1$ as follows. Let $\smash{\bdpi^{\varepsilon,\delta_n}\in\OptTGeo_{\ell_p}^\tsep(\mu_0^\varepsilon,\mu_1^{\delta_n})}$ represent $\smash{(\mu_t^{\varepsilon,\delta_n})_{t\in[0,1]}}$. As $\smash{(\mu_1^{\delta_n})_{n\in\N}}$  converges weakly to $\mu_1$, this sequence is tight, and so is $\smash{(\bdpi^{\varepsilon,\delta_n})_{n\in\N}}$ by \autoref{Le:Villani lemma for geodesic}. Let $\smash{\bdpi^\varepsilon\in \OptTGeo_{\ell_p}^\tsep(\mu_0^\varepsilon,\mu_1)}$ be a weak limit of a nonrelabeled subsequence. Then the assignment $\smash{\mu_t^\varepsilon := (\eval_t)_\push\bdpi^\varepsilon}$, $t\in[0,1]$, gives rise to a timelike proper-time parametrized $\smash{\ell_p}$-geodesic connecting $\smash{\mu_0^\varepsilon}$ to $\mu_1$.


Every weak limit point of $\smash{(\pi^{\varepsilon,\delta_n})_{n\in\N}}$ equals $\smash{\mu_0^\varepsilon\otimes\mu_1 = \mu_0^\varepsilon\otimes\delta_{x_1}}$. 
Therefore, $\scrK$-global hyperbolicity,  weak lower semicontinuity of $\smash{\scrS_{N'}}$ on measures with uniformly bounded support, and \autoref{Le:USC lemma} yield, for every such $t$ and every $N'\geq N$, 
\begin{align*}
\scrS_{N'}(\mu_t^\varepsilon) &\leq \limsup_{n\to\infty} \scrS_{N'}(\mu_t^{\varepsilon,\delta_n})\\
&\leq -\liminf_{n\to\infty}\int_{\mms^2}\tau_{K,N'}^{(1-t)}(\tsep(x^0,x^1))\,\rho_0^\varepsilon(x^0)^{-1/N'}\d\pi^{\varepsilon,\delta_n}(x^0,x^1)\\
&\leq -\int_{\mms^2} \tau_{K,N'}^{(1-t)}(\tsep(x^0,x_1))\,\rho_0^\varepsilon(x^0)^{-1/N'}\d\mu_0^\varepsilon(x^0)
\end{align*}

\textbf{Step 3.} \textit{Sending $\varepsilon \to 0$.} Given a sequence $(\varepsilon_n)_{n\in\N}$ decreasing to $0$, note that $\smash{(\mu_0^{\varepsilon_n})_{n\in\N}}$ converges weakly to $\mu_0$.  Similarly to  Step 2, from $\smash{(\mu_t^{\varepsilon_n})_{n\in\N}}$ we construct a timelike proper-time parametrized $\smash{\ell_p}$-geodesic $(\mu_t)_{t\in[0,1]}$ connecting $\mu_0$ to $\mu_1$. As in Step 2 and using Fatou's lemma we get, for every $t\in[0,1]$ and every $N'\geq N$,
\begin{align*}
\scrS_{N'}(\mu_t) &\leq \limsup_{n\to\infty} \scrS_{N'}(\mu_t^{\varepsilon_n})\\
&\leq -\liminf_{n\to\infty}\int_{\mms^2}\tau_{K,N'}^{(1-t)}(\tsep(x^0,x_1))\,\rho_0^{\varepsilon_n}(x^0)^{-1/N'}\d\mu_0^{\varepsilon_n}(x^0)\\
&\leq -\int_{\mms^2}\tau_{K,N'}^{(1-t)}(\tsep(x^0,x_1))\,\rho_0(x^0)^{-1/N'}\d\mu_0(x^0).
\end{align*}
The claim follows from \autoref{Re:Indep transp exp}.
\end{proof}

\begin{remark}\label{Re:Geom inequ TMCP} With essentially identical proofs, the respective versions of the timelike Bonnet--Myers inequality, \autoref{Cor:Bonnet-Myers} and \autoref{Cor:Reduced BM}, and the timelike Bishop--Gromov inequality, \autoref{Th:BG} and \autoref{Th:Reduced BG}, hold as well under $\smash{\TMCP(K,N)}$ and $\smash{\TMCP^*(K,N)}$. In the timelike Brunn--Minkowski inequalities, of course the set $A_1$ is just a singleton, and $\meas[A_1] := 0$.
\end{remark}

\subsection{Stability}  Next, we discuss the stability of the notions from \autoref{Def:TMCP}. In contrast to the weak stability from \autoref{Th:Stability TCD}, both timelike measure-contraction properties are stable under the convergence introduced in \autoref{Def:Convergence}.

\begin{theorem}\label{Th:Stability TMCP} Assume the convergence of $(\scrX_k)_{k\in\N}$ to $\scrX_\infty$ as in  \autoref{Def:Convergence}. Moreover, let $(K_k,N_k)_{k\in\N}$ be a sequence in $\R\times [1,\infty)$ converging to $(K_\infty,N_\infty)\in\R\times[1,\infty)$. Suppose $\scrX_k$ obeys $\TMCP(K_k,N_k)$ for every $k\in\N$. 
Then $\smash{\scrX_\infty}$ satisfies $\smash{\TMCP(K_\infty,N_\infty)}$. 

The analogous statement in which  $\smash{\TMCP(K_k,N_k)}$ is respectively replaced by $\smash{\TMCP^*(K_k,N_k)}$, $k\in\N_\infty$, holds as well.
\end{theorem}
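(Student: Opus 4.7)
The plan is to mimic the proof of \autoref{Th:Stability TCD} while exploiting the substantial simplification that one endpoint is a Dirac mass; this is exactly what allows \emph{full} stability rather than merely weak stability. The main technical difficulty of the $\TCD$ case---that chronology of $\ell_p$-optimal couplings need not pass to weak limits---disappears here, because $x_1 \in I^+(\mu_{\infty, 0})$ together with compactness of $\supp \mu_{\infty, 0}$ and openness of $\mms_\ll^2$ automatically yields chronological approximations $(\mu_{k, 0}, \delta_{x_{k, 1}})$ for any sufficiently close $x_{k, 1}$.

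First I would reduce, as in Step 1 of the proof of \autoref{Th:Stability TCD}, to the case where $\mms$ is compact and each $\meas_k$ is a probability measure, and initially assume $\rho_{\infty, 0} \in \Ell^\infty(\mms, \meas_\infty)$; the general case follows by a truncation argument analogous to Step 8 of that proof. Using $W_2$-optimal couplings $\mathfrak{q}_k$ of $\meas_k$ and $\meas_\infty$ disintegrated as in Step 3.1 of the cited proof, set $\mu_{k, 0} := \mathfrak{p}^k(\mu_{\infty, 0}) = \rho_{k, 0}\, \meas_k$, so that $\mu_{k, 0} \to \mu_{\infty, 0}$ weakly with a uniform $\Ell^\infty$-bound on $\rho_{k, 0}$ up to a factor converging to $1$. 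Since $\supp \mu_{\infty, 0} \times \{x_1\}$ is a compact subset of the open set $\mms_\ll^2$, for some $\eta > 0$ the closed $\eta$-neighborhood of this product also lies in $\mms_\ll^2$; picking $x_{k, 1}$ in the image of $\iota_k$ with $x_{k, 1} \to x_1$ then guarantees $\supp \mu_{k, 0} \times \{x_{k, 1}\} \subset \mms_\ll^2$, hence $x_{k, 1} \in I^+_k(\mu_{k, 0})$ in $\scrX_k$, for all sufficiently large $k$.

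By $\TMCP_p(K_k, N_k)$, choose $\bdpi_k \in \OptTGeo_{\ell_p}^\tsep(\mu_{k, 0}, \delta_{x_{k, 1}})$ representing a timelike proper-time parametrized $\ell_p$-geodesic $(\mu_{k, t})_{t \in [0, 1]}$ with
\[
\scrS_{N'}^k(\mu_{k, t}) \leq -\int_\mms \tau_{K_k, N'}^{(1-t)}(\tsep(x^0, x_{k, 1}))\, \rho_{k, 0}(x^0)^{-1/N'} \d\mu_{k, 0}(x^0)
\]
for every $t \in [0, 1)$ and $N' \geq N_k$. By \autoref{Le:Villani lemma for geodesic}, a subsequence of $(\bdpi_k)_{k \in \N}$ converges weakly to some $\bdpi_\infty \in \OptTGeo_{\ell_p}^\tsep(\mu_{\infty, 0}, \delta_{x_1})$, producing via $\mu_{\infty, t} := (\eval_t)_\push \bdpi_\infty$ the desired limit curve from $\mu_{\infty, 0}$ to $\delta_{x_1}$. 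Joint weak lower semicontinuity of the Rényi entropy (exactly as in Step 6.3 of the proof of \autoref{Th:Stability TCD}) controls the left-hand side; for the right-hand side I adapt the perturbation argument of Step 5 of that proof, replacing $\rho_{k, 0}$ by $\varrho_{k, 0} := (1 + b_k)^{-1}(\rho_{k, 0} + b_k)$ for a sequence $b_k \downarrow 0$ dominating $W_2(\meas_k, \meas_\infty)$, converting densities from $\meas_k$ to $\meas_\infty$ via Jensen's inequality against the disintegration map $\mathfrak{p}^\infty$, and invoking uniform continuity of $\tau_{K, N'}^{(1-t)} \circ \tsep$ on the compact set $\mms^2$ together with the convergence $x_{k, 1} \to x_1$. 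Letting $K < K_\infty$ and $N > N_\infty$ approach $K_\infty$ and $N_\infty$, respectively, and using continuity of the distortion coefficients in both parameters then yields $\TMCP_p(K_\infty, N_\infty)$ for $\scrX_\infty$.

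The main obstacle will be this last limit passage on the right-hand side, which requires simultaneously managing the varying reference measures $\meas_k$, the varying densities $\rho_{k, 0}$, the varying target points $x_{k, 1}$, and the varying distortion coefficients $\tau_{K_k, N_k}^{(1-t)}$; however, having only one non-Dirac marginal makes this considerably cleaner than the analogous step in the proof of \autoref{Th:Stability TCD}, since one does not need to build ``recovery couplings'' tracking chronology. The $\TMCP_p^*$ version follows by the identical argument with $\tau_{K_k, N_k}^{(1-t)}$ replaced by $\sigma_{K_k, N_k}^{(1-t)}$ throughout.
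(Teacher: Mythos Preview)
Your overall strategy matches the paper's: reduce to a compact ambient space, push $\mu_{\infty,0}$ to $\scrX_k$ via the disintegration kernel $\mathfrak{p}^k$ of a $W_2$-optimal coupling, apply $\TMCP_p(K_k,N_k)$, and pass to the limit using joint lower semicontinuity of the Rényi entropy and a Jensen/uniform-continuity estimate on the right-hand side. The only substantive gap is your claim that ``$\supp\mu_{k,0}\times\{x_{k,1}\}\subset\mms_\ll^2$ for all sufficiently large $k$''. This is not justified: with $\mu_{k,0}:=\mathfrak{p}^k(\mu_{\infty,0})$, the support of $\mu_{k,0}$ is governed by the support of the $W_2$-optimal coupling $\mathfrak{q}_k$, and $W_2(\meas_k,\meas_\infty)\to 0$ gives only an $L^2$-type control, not a uniform one. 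There is no reason the support of $\mu_{k,0}$ should lie in the $\eta$-neighbourhood of $\supp\mu_{\infty,0}$; in general it can be all of $\supp\meas_k$. Consequently you cannot directly apply $\TMCP_p$ to the pair $(\mu_{k,0},\delta_{x_{k,1}})$, because $\mu_{k,0}$ may carry mass outside $I^-(x_{k,1})$ and hence admit no chronological coupling to $\delta_{x_{k,1}}$.

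The paper fixes exactly this point: it first sets $\tilde{\mu}_{k,0}:=\mathfrak{p}^k(\mu_{\infty,0})$, then \emph{restricts} to $I^-(x_{k,1})$ to obtain the actual recovery measure $\mu_{k,0}$, and uses Portmanteau to show the normalisation defect $\delta_k:=\tilde{\mu}_{k,0}[I^-(x_{k,1})^\sfc]$ tends to zero. This defect is then absorbed into the perturbation $\varrho_{k,0}=(1+\delta_k)^{-1}(\tilde{\rho}_{k,0}+\delta_k)$ alongside the $W_2$-error. Your $\eta$-neighbourhood observation actually makes the Portmanteau step work for \emph{any} sequence $x_{k,1}\to x_1$ (via $\tilde{\mu}_{k,0}[I^-(x_{k,1})]\geq\tilde{\mu}_{k,0}[V]\to 1$ for the open $\eta$-neighbourhood $V$ of $\supp\mu_{\infty,0}$), which is simpler than the paper's more elaborate ``approach from the future'' construction of $x_{k,1}$ in its Step~7.1; but you still need the restriction itself.
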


\begin{proof} It suffices to prove the first statement, the second  is argued analogously. 

In this proof, we often adopt notations from the proof of \autoref{Th:Stability TCD} without explicit notice. In particular, we again identify $\mms_k$ with its image $\iota_k(\mms_k)$ in $\mms$ and $\meas_k$ with its push-forward $(\iota_k)_\push\meas_k$ for every $k\in\N_\infty$. 

\textbf{Step 1.} \textit{Reduction to compact $\mms$.} Owing to \autoref{Re:mae x1}, given any $\mu_{\infty,0} = \rho_{\infty,0}\,\meas_\infty\in \scrP_\comp^\ac(\mms,\meas_\infty)$ and  $x_{\infty,1}\in I^+(\mu_{\infty,0}) \cap \supp\meas_\infty$, as for \autoref{Th:Stability TCD} we use the compactness of $\supp\mu_{\infty,0}$ and $\supp\mu_{\infty,1}$, where $\smash{\mu_{\infty,1} := \delta_{x_{\infty,1}}}$, to assume without restriction that $\mms$ is compact, and that $\meas_k\in\scrP(\mms)$ for every $k\in\N_\infty$. All  measures considered below will thus be compactly supported. Moreover, we may and will suppose that $W_2(\meas_k,\meas_\infty)\to 0$ as $k\to\infty$.

\textbf{Step 2.} \textit{Restriction of the assumptions on $\mu_{\infty,0}$ and $\mu_{\infty,1}$.} We will first assume that $\tau(\cdot,x_{\infty,1})$ is bounded away from zero on $\supp\mu_{\infty,0}$, that $\rho_{\infty,0}\in\Ell^\infty(\mms,\meas_\infty)$, and that $x_{\infty,1}$ can be approximated with respect to $\met$ by a sequence $(x_{k,1})_{k\in\N}$ of points $x_{k,1}\in \supp\meas_k$ such that $x_{\infty,1} \in I^-(x_{k,1})$ for every $k\in\N$. The general case is discussed in Step 7 below; we note for now that this conclusion will not conflict with our reductions from Step 1.

\textbf{Step 3.} \textit{Construction of a chronological recovery sequence.} 
In this step, given the sequence $(x_{k,1})_{k\in\N}$ from Step 2 we construct a sequence $(\mu_{k,0})_{k\in\N}$ of measures $\smash{\mu_{k,0} = \rho_{k,0}\,\meas_k\in\scrP^\ac(\mms,\meas_k)}$, $k\in\N$, such that $\mu_{k,0}\to\mu_{\infty,0}$ weakly as $k\to \infty$ possibly up to extracting a subsequence, and  $x_{k,1}\in I^+(\mu_{k,0})$ for every $k\in\N$. The constructed sequence will allow for the correct behavior of all  functionals under consideration, cf.~Step 5 below.

Given any $k\in\N$, let $\mathfrak{q}_k\in\scrP(\mms^2)$ be a $W_2$-optimal coupling of $\meas_k$ and $\meas_\infty$. We disintegrate $\mathfrak{q}_k$ with respect to $\pr_1$, writing
\begin{align*}
\rmd \mathfrak{q}_k(x,y) = \rmd\mathfrak{p}_x^k(y)\d\meas_k(x).
\end{align*}
Let $\smash{\mathfrak{p}^k\colon \scrP^\ac(\mms,\meas_\infty)\to\scrP^\ac(\mms,\meas_k)}$ denote the canonically induced map.

Given any $k\in\N$, define $\tilde{\mu}_{k,0}\in\scrP^\ac(\mms,\meas_k)$ by
\begin{align*}
\tilde{\mu}_{k,0} := \mathfrak{p}^k(\mu_{\infty,0}) = \tilde{\rho}_{k,0}\,\meas_k.
\end{align*}
Note that the measure $\mathfrak{r}_k := (\rho_{\infty,0}\circ\pr_2)\,\mathfrak{q}_k\in\scrP(\mms^2)$  constitutes a coupling of $\tilde{\mu}_{k,0}$ and $\mu_{\infty,0}$. With the $W_2$-optimality of $\mathfrak{q}_k$,  this  implies
\begin{align*}
W_2(\tilde{\mu}_{k,0},\mu_{\infty,0}) \leq  \big\Vert\rho_{\infty,0}\big\Vert_{\Ell^\infty(\mms,\meas_\infty)}^{1/2}\,W_2(\meas_k,\meas_\infty),
\end{align*}
and consequently $\smash{\tilde{\mu}_{k,0}\to \mu_{\infty,0}}$ weakly as $k\to\infty$. Hence, by our assumption on the se\-quence $(x_{k,1})_{k\in\N}$ from Step 2 and Portmanteau's theorem,
\begin{align*}
\liminf_{k\to\infty} \tilde{\mu}_{k,0}[I^-(x_{k,1})] \geq \liminf_{k\to\infty} \tilde{\mu}_{k,0}[I^-(x_{\infty,1})] \geq \mu_{\infty,0}[I^-(x_{\infty,1})] =1.
\end{align*}
Up to passing to a subsequence we may and will thus assume $\smash{\tilde{\mu}_{k,0}[I^-(x_{k,1})] >0}$ for every $k\in\N$. We then define $\smash{\mu_{k,0}\in\scrP^\ac(\mms,\meas_k)}$ by
\begin{align*}
\mu_{k,0} := \tilde{\mu}_{k,0}[I^-(x_{k,1})]^{-1} \tilde{\mu}_{k,0} \mres I^-(x_{k,1}).
\end{align*}
By construction, the sequence $(\mu_{k,0})_{k\in\N}$ converges to $\mu_{\infty,0}$ weakly as $k\to\infty$, and we have $x_{k,1}\in I^+(\mu_{k,0})$ for every $k\in\N$, as desired.

\textbf{Step 4.} \textit{Invoking the $\TMCP$ condition.} Fix $K\in\R$ and $N\in (1,\infty)$ such that $K< K_\infty$ and $N>N_\infty$. Up to passing to a subsequence if necessary, we may and will thus assume that $K<K_k$ and $N>N_k$ for every $k\in\N$.

By \autoref{Pr:Consistency TMCP}, for every $k\in\N$ there exists a timelike proper-time para\-metrized $\smash{\ell_{1/2}}$-geodesic $(\mu_{k,t})_{t\in[0,1]}$ connecting $\mu_{k,0}$ and $\smash{\mu_{k,1} = \delta_{x_{k,1}}}$ such that for every $t\in [0,1)$ and every $N'\geq N$,
\begin{align}\label{Eq:TMCP cond}
\scrS_{N'}^k(\mu_{k,t}) \leq -\int_{\mms} \tau_{K,N'}^{(1-t)}(\tsep(x^0,x_{k,1}))\,\rho_{k,0}(x^0)^{-1/N'} \d\mu_{k,0}(x^0).
\end{align}

\textbf{Step 5.} \textit{Estimating the previous right-hand side.} We estimate the negative of the right-hand side of \eqref{Eq:TMCP cond} from below, up to errors which become arbitrarily small.  By construction of $\mu_{k,0}$, we find a sequence $(a_k)_{k\in\N}$ of normalization constants converging to $1$ such that $\smash{\rho_{k,0} \leq a_k\,\tilde{\rho}_{k,0}}$ $\meas_k$-a.e.~for every $k\in\N$.

\textbf{Step 5.1.} First, we argue that $x_{k,1}$ can be replaced $x_{\infty,1}$ up to a small error. By our choice of $K$ and $N$ and the timelike Bonnet--Myers inequality from  \autoref{Re:Geom inequ TMCP},
\begin{align*}
c := \sup\tau_{K,N'}^{(1-t)}\circ\tsep(\mms^2)
\end{align*}
is finite. Furthermore, the involved function is jointly uniformly continuous. Thus, given any $\varepsilon> 0$, by uniform continuity we obtain, for sufficiently large $k\in\N$,
\begin{align*}
&\int_\mms \tau_{K,N'}^{(1-t)}(\tsep(x^0,x_{k,1}))\,\rho_{k,0}(x^0)^{-1/N'}\d\mu_{k,0}(x^0)\\
&\qquad\qquad \geq \int_\mms \tau_{K,N'}^{(1-t)}(\tsep(x^0,x_{\infty,1}))\,\rho_{k,0}(x^0)^{-1/N'}\d\mu_{k,0}(x^0)-\varepsilon.
\end{align*}

\textbf{Step 5.2.} We modify $\smash{\tilde{\mu}_{k,0}}$ into the measure
\begin{align*}
\nu_{k,0} := (1+\delta_k)^{-1}\,(\tilde{\rho}_{k,0}+\delta_k)\,\meas_k =\varrho_{k,0}\,\meas_k,
\end{align*} 
where $\delta_k\in[0,1]$ is defined by
\begin{align*}
\delta_k = \tilde{\mu}_{k,0}[I^-(x_{k,1})^\sfc].
\end{align*}
Note that $(\delta_k)_{k\in\N}$ converges to $0$. Employing the inequality $\rho_{k,0} \leq a_k\,\varrho_{k,0}$ $\meas_k$-a.e.,  the definition of $\mu_{k,0}$, and a similar notation as in the proof of \autoref{Th:Stability TCD}, 
\begin{align}\label{Eq:Integral rechnung}
&\int_{\mms} \tau_{K,N'}^{(1-t)}(\tsep(x^0,x_{\infty,1}))\,\rho_{k,0}(x^0)^{-1/N'} \d\mu_{k,0}(x^0)\nonumber\\
&\qquad\qquad \geq_k\int_{\mms}\tau_{K,N'}^{(1-t)}(\tsep(x^0,x_{\infty,1}))\,\varrho_{k,0}(x^0)^{-1/N'}\d\mu_{k,0}(x^0)\nonumber\\
&\qquad\qquad \geq_k\int_{\mms}\tau_{K,N'}^{(1-t)}(\tsep(x^0,x_{\infty,1}))\,\varrho_{k,0}(x^0)^{-1/N'}\d\nu_{k,0}(x^0)\\
&\qquad\qquad\qquad\qquad - \int_{I^-(x_{k,1})^\sfc} \tau_{K,N'}^{(1-t)}(\tsep(x^0,x_{\infty,1}))\,\varrho_{k,0}(x^0)^{-1/N'}\d\tilde{\mu}_{k,0}(x^0)\nonumber\\
&\qquad\qquad\qquad\qquad - \delta_k\int_{\mms}\tau_{K,N'}^{(1-t)}(\tsep(x^0,x_{\infty,1}))\,\varrho_{k,0}(x^0)^{-1/N'}\d\meas_k(x^0).\nonumber
\end{align}

\textbf{Step 5.3.} Possibly invoking the timelike Bishop--Gromov inequality outlined in \autoref{Re:Geom inequ TMCP}, we obtain the estimates
\begin{align*}
c\,\delta_k^{1-1/N'} &\geq \int_{I^-(x_{k,1})^\sfc}\tau_{K,N'}^{(1-t)}(\tsep(x^0,x_{\infty,1}))\,\varrho_{k,0}(x^0)^{-1/N'}\d\tilde{\mu}_{k,0}(x^0)\\
c\,\delta_k^{1-1/N'} &\geq \delta_k\int_{\mms} \tau_{K,N'}^{(1-t)}(\tsep(x^0,x_{\infty,1}))\,\varrho_{k,0}(x^0)^{-1/N'}\d\meas_k(x^0).
\end{align*}
Hence, our main task is to estimate the integral in \eqref{Eq:Integral rechnung} from below. To this aim, by  definition of $\nu_{k,0}$ and Jensen's inequality, we obtain
\begin{align*}
&\int_\mms \tau_{K,N'}^{(1-t)}(\tsep(x^0,x_{\infty,1}))\,\varrho_{k,0}(x^0)^{-1/N'}\d\nu_{k,0}(x^0)\\
&\qquad\qquad = \int_\mms \tau_{K,N'}^{(1-t)}(\tsep(x^0,x_{\infty,1}))\,\Big[\!\int_\mms (\rho_{\infty,0}(y^0)+\delta_k)\d\mathfrak{p}_{x^0}^k(y^0)\Big]^{1-1/N'}\!\d\meas_k(x^0)\\
&\qquad\qquad \geq \int_{\mms^2} \tau_{K,N'}^{(1-t)}(\tsep(x^0,x_{\infty,1}))\,\rho_{\infty,0}(y^0)^{1-1/N'}\d\mathfrak{q}_k(x^0,y^0).
\end{align*}
Let $(\phi_i)_{i\in\N}$ be a sequence in $\Cont_\bounded(\mms)$ such that
\begin{align*}
\big\Vert \phi_i- \rho_{\infty,0}^{1-1/N'}\big\Vert_{\Ell^1(\mms,\meas_\infty)} \leq 2^{-i}
\end{align*}
as well as $\sup\phi_i(\mms) \leq \Vert\rho_{\infty,0} \Vert_{\Ell^\infty(\mms,\meas_\infty)}$ for every $i\in\N$. Then we get
\begin{align*}
&\int_\mms \tau_{K,N'}^{(1-t)}(\tsep(x^0,x_{\infty,1}))\,\varrho_{k,0}(x^0)^{-1/N'}\d\nu_{k,0}(x^0)\\
&\qquad\qquad \geq \int_{\mms^2} \tau_{K,N'}^{(1-t)}(\tsep(x^0,x_{\infty,1}))\,\phi_i(y^0)\d\mathfrak{q}_k(x^0,y^0) - 2^{-i}\,c.
\end{align*}

\textbf{Step 5.4.} By tightness and stability of $W_2$-optimal couplings \cite[Lem.~4.3, Lem. 4.4]{villani2009}, $(\mathfrak{q}_k)_{k\in\N}$ converges weakly to the dia\-gonal coupling of $\meas_\infty$ and $\meas_\infty$ along a nonrelabeled subsequence. In particular, by Lebesgue's theorem,
\begin{align*}
&\liminf_{k\to\infty} \int_\mms \tau_{K,N'}^{(1-t)}(\tsep(x^0,x_{\infty,1}))\,\varrho_{k,0}(x^0)^{-1/N'}\d\nu_{k,0}(x^0)\\
&\qquad\qquad \geq \liminf_{i\to\infty} \int_\mms \tau_{K,N'}^{(1-t)}(\tsep(x^0,x_{\infty,1}))\,\phi_i(x^0)\d\meas_\infty(x^0) - c\,\limsup_{i\to\infty}2^{-i}\\
&\qquad\qquad =\int_\mms \tau_{K,N'}^{(1-t)}(\tsep(x^0,x_{\infty,1}))\,\rho_{\infty,0}(x^0)^{-1/N'}\d\mu_{\infty,0}(x^0).
\end{align*}

\textbf{Step 6.} \textit{Conclusion.} Let $(\bdpi_k)_{k\in\N}$ be a sequence of $\smash{\bdpi_k\in\OptTGeo_{\ell_{1/2}}^\tsep(\mu_{k,0},\mu_{k,1})}$ representing the $\smash{\ell_{1/2}}$-geodesic $(\mu_{k,t})_{t\in[0,1]}$ in Step 4, $k\in\N$. By compactness of $\mms$ and \autoref{Le:Villani lemma for geodesic}, this sequence converges to a timelike $\smash{\ell_{1/2}}$-optimal geodesic plan $\smash{\bdpi_\infty\in\OptTGeo_{\ell_{1/2}}^\tsep(\mu_{\infty,0},\mu_{\infty,1})}$ along a nonrelabeled subsequence. In particular, the assignment $\smash{\mu_{\infty,t} := (\eval_t)_\push\bdpi_\infty}$ gives rise to a timelike proper-time parametrized $\smash{\ell_{1/2}}$-geodesic $(\mu_{\infty,t})_{t\in[0,1]}$ from $\mu_{\infty,0}$ to $\mu_{\infty,1}$; note that every optimal coupling of $\mu_{\infty,0}$ and $\mu_{\infty,1}$ is concentrated on $\smash{\mms_\ll^2}$. By joint weak lower semicontinuity of the Rényi entropy and \eqref{Eq:TMCP cond}, given any $\varepsilon > 0$, $t\in[0,1)$, and $N'\geq N$,
\begin{align*}
\scrS_{N'}^\infty(\mu_{\infty,t}) &\leq \limsup_{k\to\infty} \scrS_{N'}^k(\mu_{k,t})\\
&\leq -\liminf_{k\to\infty} \int_\mms \tau_{K,N'}^{(1-t)}(\tsep(x^0,x_{k,1}))\,\rho_{k,0}(x^0)^{-1/N'}\d\mu_{k,0}(x^0)\\
&\leq \varepsilon - \int_\mms \tau_{K,N'}^{(1-t)}(\tsep(x^0,x_{\infty,1}))\,\rho_{\infty,0}(x^0)^{-1/N'}\d\mu_{\infty,0}(x^0).
\end{align*}



\textbf{Step 7.} \textit{Relaxation of the assumptions on $\mu_\infty,0$ and $\mu_{\infty,1}$.} First, we argue how to construct the hypothesized sequence $(x_{k,1})_{k\in\N}$ from Step 2. The idea is to approximate points which lie ``in between'' $\supp\mu_{\infty,0}$ and $x_{\infty,1}$ from the future. For these points, the above discussion applies, and we will be able to conclude the desired $\TMCP$ property by a tightness argument.

\textbf{Step 7.1.} As $\inf\tau(\supp\mu_{\infty,0}, x_{\infty,1})>0$, given any $i\in \N$ we may and will fix a sequence $\smash{(x_{\infty,1}^i)_{i\in\N}}$ of points $\smash{x_{\infty,1}^i}\in I(\mu_{\infty,0},\mu_{\infty,1})\cap\supp\meas_\infty$ converging to $\smash{x_{\infty,1}}$. Since $I(x_{\infty,1}^i,x_{\infty,1})$ is open and nonempty,  for every $i\in\N$ we  construct a sequence $\smash{(x_{\infty,1}^{i,j})_{j\in\N}}$ such that $\smash{x_{\infty,1}^{i,j} \in I(x_{\infty,1}^i, x_{\infty,1})\cap\supp\meas_\infty}$ and
\begin{align*}
x_{\infty,1}^{i,j} \ll x_{\infty,1}^{i,j-1}
\end{align*}
for every $j\in\N$. Since $\smash{x_{\infty,1}^{i,j}\in\supp\meas_\infty}$, the weak convergence of $(\meas_k)_{k\in\N}$ implies the existence of a sequence $\smash{(x_{k,1}^{i,j})_{k\in\N}}$ of points $\smash{x_{k,1}^{i,j}\in\supp\meas_k}$ which converges to $\smash{x_{\infty,1}^{i,j}}$. In particular, for a sufficiently large integer $k_j\in \N$, we have 
\begin{align*}
x_{\infty,1}^i\ll x_{\infty,1}^{i,j} \ll x_{k_j,1}^{i,j}
\end{align*}
for every $i,j\in\N$, as well as $\smash{\met(x_{k_j}^{i,j}, x_{\infty,1}^i) \to 0}$ as $j\to\infty$ for every $i\in\N$. 

The above arguments applied for a fixed $i\in\N$ (along a suitable subsequence in $k$ which depends on $i$) yield the existence of a timelike proper-time parametrized $\smash{\ell_{1/2}}$-geodesic $\smash{(\mu_{\infty,t}^i)_{t\in[0,1]}}$ from $\mu_{\infty,0}$ to $\smash{\mu_{\infty,1}^i:= \delta_{x_{\infty,1}^i}}$ such that for every $t\in[0,1)$ and every $N'\geq N_\infty$,
\begin{align*}
\scrS_{N'}^\infty(\mu_{\infty,t}^i) \leq -\int_\mms \tau_{K,N'}^{(1-t)}(\tsep(x^0,x_{\infty,1}^i))\,\rho_{\infty,0}(x^0)^{-1/N'}\d\mu_{\infty,0}(x^0).
\end{align*}
Similarly to Step 6 and using lower semicontinuity of Rényi's entropy and Fatou's lemma, we find a timelike proper-time parametrized $\smash{\ell_p}$-geodesic $\smash{(\mu_{\infty,t})_{t\in[0,1]}}$ connecting $\smash{\mu_{\infty,0}}$ to $\smash{\mu_{\infty,1}}$ such that the previous inequality holds for $\smash{\mu_{\infty,t}^i}$ and $\smash{x_{\infty,1}^i}$ replaced by $\smash{\mu_{\infty,t}}$ and $\smash{x_{\infty,1}}$,  respectively, for every $t\in[0,1)$ and every $N'\geq N_\infty$.

\textbf{Step 7.2.} Finally, we remove the two assumptions on $\mu_{\infty,0}$ from Step 2. Let $\mu_{\infty,0}\in\smash{\scrP^\ac(\mms,\meas_\infty)}$ and $\smash{x_{\infty,1}\in I^+(\mu_{\infty,0})\cap\supp\meas_\infty}$. Given  sufficiently large $n,m\in\N$, define $\smash{\mu_{\infty,0}^n,\mu_{\infty,0}^{n,m}\in\scrP^\ac(\mms,\meas_\infty)}$ by
\begin{align*}
\mu_{\infty,0}^n &:= \mu_{\infty,0}\big[\{\tsep(\cdot, x_{\infty,1}) \geq 2^{-n}\}\big]^{-1}\,\mu_{\infty,0} \mres \{\tsep(\cdot, x_{\infty,1}) \geq 2^{-n}\} = \rho_{\infty,0}^n\,\meas_\infty,\\
\mu_{\infty,0}^{n,m} &:= \big\Vert \!\min\{\rho_{\infty,0}^n,m\}\big\Vert_{\Ell^1(\mms,\meas_\infty)}^{-1}\,\min\{\rho_{\infty,0}^n,m\}\,\meas_\infty.
\end{align*}
By continuity of $\tsep$, $\smash{\mu_{\infty,0}^{n,m}}$ obeys  the hypotheses from Step 2 for every $n,m\in\N$. A diagonal procedure gives rise to a map $m\colon \N\to\N$ such that  $\smash{(\tilde{\mu}_{\infty,0}^n)_{n\in\N}}$, where 
\begin{align*}
\tilde{\mu}_{\infty,0}^n := \mu_{\infty,0}^{n,m_n},
\end{align*}
converges weakly to $\smash{\mu_{\infty,0}}$ as $n\to\infty$.

The above discussion thus yields the desired inequality along some timelike proper-time parametrized $\smash{\ell_{1/2}}$-geodesic connecting $\smash{\tilde{\mu}_{\infty,0}^n}$ to $\smash{\mu_{\infty,1} := \delta_{x_{\infty,1}}}$. The usual tightness and lower semicontinuity argument directly gives the claim; note that the convergence of the corresponding right-hand sides is granted by Levi's theorem.

\textbf{Step 8.} \textit{Passage from $K$ and $N$ to $K_\infty$ and $N_\infty$.} According to Step 4, all in all we thus infer $\smash{\TMCP(K,N)}$ for $\scrX_\infty$ for every $K<K_\infty$ and every $N>N_\infty$, and we deduce  $\smash{\TMCP(K_\infty,N_\infty)}$ as in Step 9 in the proof of \autoref{Th:Stability TCD}.
\end{proof}

\subsection{Good geodesics}\label{Sub:Good} Now we prove an analogue of \autoref{Th:Good geos TCD}, assuming the $\smash{\TMCP^*(K,N)}$ condition. As indicated in \autoref{Re:Unlikely}, this result will be used in the next \autoref{Sec:Equiv TMCP's} to prove the equivalence of the latter condition with the entropic timelike measure-contraction property from \cite{cavalletti2020}. To this aim, not only a uniform $\smash{\Ell^\infty}$-bound for the interpolating densities, but also a qualitative entropy bound has to be established; cf.~the proof of \autoref{Pr:More than TMCP} and \eqref{Eq:mm}.

Throughout this section, let $p\in (0,1)$ be arbitrary yet fixed.

As in \autoref{Sub:Good TCD}, we only outline the proof and refer to \cite[Sec.~4.3]{braun2022}, see also \cite{cavalletti2017}, for a similar discussion in the entropic case.

The following result is proven similarly to \autoref{Le:Lalelu}.

\begin{lemma}\label{Le:Lululu} Assume the $\smash{\TMCP^*(K,N)}$ condition for some  $K\in\R$, and $N\in[1,\infty)$. Let $\smash{\mu_0 = \rho_0\,\meas\in\scrP_\comp^\ac(\mms,\meas)}$, and let $x_1\in I^+(\mu_0)$. Finally, let $D$ be any real number no smaller than $\sup\tsep(\supp\mu_0\times\{x_1\})$. Then there is a timelike proper-time parametrized $\smash{\ell_p}$-geodesic $(\mu_t)_{t\in[0,1]}$ from $\mu_0$ to $\smash{\mu_1 := \delta_{x_1}}$ such that for every $t\in [0,1)$, denoting by $\rho_t$ the $\meas$-ab\-solutely continuous part of $\mu_t$,
\begin{align*}
\meas\big[\{\rho_t>0\}\big] \geq (1-t)^N\,\rme^{-tD\sqrt{K^-N}}\,\big\Vert\rho_0\big\Vert_{\Ell^\infty(\mms,\meas)}^{-1}.
\end{align*}
\end{lemma}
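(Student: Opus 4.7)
The plan mirrors the proof of \autoref{Le:Lalelu} but routes through the $\TMCP$ counterpart, since the endpoint $\mu_1=\delta_{x_1}$ is not $\meas$-absolutely continuous and so $\wTCD_p^*(K,N)$ cannot be invoked directly on the pair $(\mu_0,\mu_1)$. The idea is to apply \autoref{Pr:TMCP to TCD}\ref{LARTT} to get $\TMCP_p^*(K,N)$, then use consistency (\autoref{Pr:Consistency TMCP}) to pass to $\TMCP_p^*(-K^-,N)$, and finally plug the pointwise lower bound on distortion coefficients from \autoref{Re:Lower bounds sigma} into the defining inequality.

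More precisely, I would first fix some timelike proper-time parametrized $\ell_p$-geodesic $(\mu_t)_{t\in[0,1]}$ from $\mu_0$ to $\mu_1=\delta_{x_1}$ furnished by $\TMCP_p^*(-K^-,N)$, so that for every $t\in[0,1)$,
\begin{align*}
\scrS_N(\mu_t) \leq -\int_\mms \sigma_{-K^-,N}^{(1-t)}(\tsep(x^0,x_1))\,\rho_0(x^0)^{-1/N}\d\mu_0(x^0).
\end{align*}
Then I would invoke \autoref{Re:Lower bounds sigma} with $K$ replaced by $-K^-$ and $t$ replaced by $1-t$, which gives $\sigma_{-K^-,N}^{(1-t)}(\vartheta) \geq (1-t)\,\rme^{-t\vartheta\sqrt{K^-/N}}$; since $\tsep(x^0,x_1)\leq D$ for $\mu_0$-a.e.\ $x^0$, the right-hand side above is bounded from above by
\begin{align*}
-(1-t)\,\rme^{-tD\sqrt{K^-/N}}\int_\mms \rho_0(x^0)^{1-1/N}\d\meas(x^0).
\end{align*}
Using $\rho_0 \leq \Vert\rho_0\Vert_{\Ell^\infty(\mms,\meas)}$ $\meas$-a.e.\ to estimate $\rho_0^{1-1/N} \geq \Vert\rho_0\Vert_{\Ell^\infty(\mms,\meas)}^{-1/N}\,\rho_0$ and integrating yields
\begin{align*}
\scrS_N(\mu_t) \leq -(1-t)\,\rme^{-tD\sqrt{K^-/N}}\,\Vert\rho_0\Vert_{\Ell^\infty(\mms,\meas)}^{-1/N}.
\end{align*}

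On the other hand, writing $\mu_t = \rho_t\,\meas + \mu_{t,\perp}$ in Lebesgue decomposition, Jensen's inequality applied to the $\meas$-absolutely continuous part gives $\scrS_N(\mu_t) \geq -\meas[\{\rho_t>0\}]^{1/N}$. Combining this with the previous display and raising to the $N$-th power produces
\begin{align*}
\meas\big[\{\rho_t>0\}\big] \geq (1-t)^N\,\rme^{-tD\sqrt{K^-N}}\,\Vert\rho_0\Vert_{\Ell^\infty(\mms,\meas)}^{-1},
\end{align*}
which is exactly the claimed estimate. The only mildly delicate point is the detour through the $\TMCP$ property (so that the Dirac endpoint is admissible); after that, the argument is a direct adaptation of \autoref{Le:Lalelu} with the asymmetric distortion bound from \autoref{Re:Lower bounds sigma} applied at $1-t$ rather than $1/2$.
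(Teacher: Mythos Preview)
Your proof is correct and follows essentially the same route the paper intends. The paper merely states that \autoref{Le:Lululu} is ``proven similarly to \autoref{Le:Lalelu}''; since the endpoint $\mu_1=\delta_{x_1}$ is not $\meas$-absolutely continuous, the analogue of the $\wTCD_p^*$ entropy inequality that one can invoke is precisely the $\TMCP_p^*$ one, and your detour through \autoref{Pr:TMCP to TCD}\ref{LARTT} is the natural way to obtain it. After that, your use of consistency, \autoref{Re:Lower bounds sigma} at $1-t$, the $\Ell^\infty$ bound on $\rho_0$, and Jensen's inequality exactly parallels the proof of \autoref{Le:Lalelu}.
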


For the proof of \autoref{Th:Good TMCP}, we introduce some further terminology. Following \cite[Sec.~1.1]{cavalletti2020}, the \emph{causally reversed} Lorentzian structure $\smash{(\mms,\met,\ll^\leftarrow,\leq^\leftarrow,\tsep^\leftarrow)}$ corresponding to $(\mms,\met,\ll,\leq,\tsep)$ is given by
\begin{itemize}
\item $\smash{x\ll^\leftarrow y}$ if and only if $y\ll x$,
\item $\smash{x\leq^\leftarrow y}$ if and only if $y\leq x$, and
\item $\smash{\tsep^\leftarrow(x,y) := \tsep(y,x)}$.
\end{itemize}
Let $\smash{\ell_p^\leftarrow}$ be the total transport cost function \eqref{Eq:Totalcost} induced by $\smash{\tsep^\leftarrow}$, $p\in (0,1]$.

\begin{theorem}\label{Th:Good TMCP} Under the hypotheses of \autoref{Le:Lululu}, there exists a timelike proper-time parametrized $\smash{\ell_p}$-geodesic $(\mu_t)_{t\in[0,1]}$ from $\mu_0$ to $\smash{\mu_1:= \delta_{x_1}}$ such that for every $t\in[0,1]$, $\smash{\mu_t = \rho_t\,\meas\in\scrP^\ac(\mms,\meas)}$ and
\begin{align}\label{Eq:556}
\Vert\rho_t\Vert_{\Ell^\infty(\mms,\meas)} \leq \frac{1}{(1-t)^N}\,\rme^{Dt\sqrt{K^-N}}\,\Vert\rho_0\Vert_{\Ell^\infty(\mms,\meas)}
\end{align}
with the entropy inequality
\begin{align}\label{Eq:557}
\scrS_N(\mu_t) \leq (1-t)\,\rme^{-Dt\sqrt{K^-/N}}\,\scrS_N(\mu_0),
\end{align}
where
\begin{align*}
D:= \sup\tsep(\supp\mu_0\times\{x_1\}).
\end{align*}
\end{theorem}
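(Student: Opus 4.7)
The approach parallels \autoref{Th:Good geos TCD}, adapted to the fact that the endpoint $\mu_1 := \delta_{x_1}$ is a Dirac mass. The entropy estimate \eqref{Eq:557} will follow directly from the $\smash{\TMCP_p^*(K,N)}$ condition, which holds by \autoref{Pr:TMCP to TCD} applied to $\smash{\wTCD_p^*(K,N)}$. The density estimate \eqref{Eq:556} requires a dyadic bisection argument in which \autoref{Le:Lululu} plays the role that \autoref{Le:Lalelu} plays in the proof of \autoref{Th:Good geos TCD}.

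For the entropy bound, $\smash{\TMCP_p^*(K,N)}$ furnishes a timelike proper-time parametrized $\smash{\ell_p}$-geodesic $(\mu_t)_{t\in[0,1]}$ from $\mu_0$ to $\mu_1$ satisfying, for every $t\in[0,1)$,
\begin{align*}
\scrS_N(\mu_t)\leq -\!\int_\mms \sigma_{K,N}^{(1-t)}(\tsep(x^0, x_1))\,\rho_0(x^0)^{-1/N}\d\mu_0(x^0).
\end{align*}
Since $\tsep(\cdot, x_1)\leq D$ on $\supp\mu_0$, \autoref{Re:Lower bounds sigma} yields $\smash{\sigma_{K,N}^{(1-t)}(\tsep(x^0,x_1))\geq (1-t)\,\rme^{-tD\sqrt{K^-/N}}}$ pointwise $\mu_0$-a.e., and \eqref{Eq:557} follows upon integration against $\rho_0^{-1/N}\d\mu_0$.

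For \eqref{Eq:556}, setting $c_t := (1-t)^{-N}\,\rme^{tD\sqrt{K^-N}}\,\Vert\rho_0\Vert_{\Ell^\infty(\mms,\meas)}$, I would inductively construct plans $\bdalpha^n \in \smash{\OptTGeo_{\ell_p}^\tsep(\mu_0,\mu_1)}$ such that for every $n\in\N_0$ and every $k\in\{0,\dots,2^n-1\}$, the marginal $(\eval_{k/2^n})_\push\bdalpha^n$ is $\meas$-absolutely continuous with density bounded by $c_{k/2^n}$. At each bisection step and for each odd $k\in\{1,\dots,2^{n+1}-1\}$, I would restrict $\bdalpha^n$ to the segment between the dyadic times $(k-1)/2^{n+1}$ and $(k+1)/2^{n+1}$. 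When $(k+1)/2^{n+1}<1$, both segment endpoints are $\meas$-absolutely continuous with bounded densities and the variational scheme of \autoref{Th:Good geos TCD} applies verbatim via \autoref{Le:Lalelu}. When $(k+1)/2^{n+1}=1$, the segment touches $\delta_{x_1}$; here \autoref{Le:Lululu}, applied to the rescaled sub-segment from $(\eval_{1-2^{-n}})_\push\bdalpha^n$ to $\delta_{x_1}$, supplies the analogous support lower bound which, inserted into the minimization of the functional $\smash{\scrF_{c_{(2^{n+1}-1)/2^{n+1}}}}$ from \eqref{Eq:Functional FC}, yields the missing midpoint density estimate. Gluing the midpoint plans together produces $\bdalpha^{n+1}$.

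By \autoref{Le:Villani lemma for geodesic}, a subsequence of $(\bdalpha^n)_{n\in\N}$ converges weakly to some $\bdalpha\in\smash{\OptTGeo_{\ell_p}^\tsep(\mu_0,\mu_1)}$, and weak lower semicontinuity of $\smash{\scrF_{c_t}}$ on $\scrP(J(\mu_0,\mu_1))$ together with weak continuity of $t\mapsto (\eval_t)_\push\bdalpha$ and continuity of $t\mapsto c_t$ on $[0,1)$ propagates the density bound to every $t\in[0,1)$. The principal technical obstacle is ensuring that the bisection-constructed geodesic simultaneously witnesses the $\smash{\TMCP_p^*(K,N)}$ inequality so that \eqref{Eq:557} applies to it. I expect this to be handled by initializing $\bdalpha^0$ as a $\smash{\TMCP_p^*}$-witnessing plan and making all midpoint selections on the segment touching $\delta_{x_1}$ compatibly with that witness, so that the reverse triangle inequality \eqref{Eq:Reverse triangle lp} combined with the bisection guarantees the TMCP inequality survives in the weak limit.
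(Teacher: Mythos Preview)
Your dyadic bisection has a genuine gap on the interior segments. When $(k+1)/2^{n+1}<1$ and you invoke \autoref{Le:Lalelu} on the segment between $\mu_{(k-1)/2^{n+1}}$ and $\mu_{(k+1)/2^{n+1}}$, the resulting threshold for the midpoint is governed by $\max\{\Vert\rho_{(k-1)/2^{n+1}}\Vert_{\Ell^\infty(\mms,\meas)},\Vert\rho_{(k+1)/2^{n+1}}\Vert_{\Ell^\infty(\mms,\meas)}\}$. Since $t\mapsto c_t$ is strictly increasing, this maximum is $c_{(k+1)/2^{n+1}}$, which already exceeds the target $c_{k/2^{n+1}}$; the bound you obtain is therefore too large, and the defect does not vanish with $n$. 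Concretely, at level two your scheme yields at $t=1/4$ the bound $\rme^{D\sqrt{K^-N}/4}\,c_{1/2}=2^N\rme^{3D\sqrt{K^-N}/4}\Vert\rho_0\Vert_{\Ell^\infty(\mms,\meas)}$, whereas $c_{1/4}=(4/3)^N\rme^{D\sqrt{K^-N}/4}\Vert\rho_0\Vert_{\Ell^\infty(\mms,\meas)}$. The paper in fact opens its proof by noting that the bisection of \autoref{Th:Good geos TCD} does not carry over.

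The paper's remedy is to abandon dyadic subdivision and instead iterate \autoref{Le:Lululu} alone: for each fixed $n$ one repeatedly transports from the current absolutely continuous slice toward $\delta_{x_1}$ at the fixed fraction $2^{-n}$, producing slices at the geometric times $s_n^k=(1-2^{-n})^k$ in the causally reversed parametrization. Because \autoref{Le:Lululu} depends only on the density at the \emph{initial} slice, the step factors $(1-2^{-n})^{-N}\rme^{2^{-n}s_n^kD\sqrt{K^-N}}$ telescope multiplicatively to exactly $c_t$. The entropy inequality \eqref{Eq:557} is not recovered a posteriori from $\smash{\TMCP_p^*}$ along the limit geodesic --- your acknowledged obstacle of matching the two constructions --- but is built into each step: the plan $\smash{\bdpi_n^{k+1}}$ is chosen to minimize $\scrS_N\circ(\eval_{2^{-n}})_\push$ over $\smash{\OptTGeo_{\ell_p}^\tsep((\eval_{s_n^k})_\push\bdbeta_n^k,\mu_1)}$, and this minimizer is then shown to also achieve $\scrF_{c_n^{k+1}}=0$, so both \eqref{Eq:556} and \eqref{Eq:557} hold simultaneously along the iteration and pass to the weak limit.
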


\begin{proof} As for \cite[Thm.~4.18]{braun2022}, we first observe that the bisection argument from the proof of  \autoref{Th:Good geos TCD} does not apply here since $\smash{\mu_1\notin\scrP^\ac(\mms,\meas)}$, while we aim to establish that $\smash{\mu_t \in\scrP^\ac(\mms,\meas)}$ for every $t\in[0,1)$.

Given $n\in\N$ and $k\in\N_0$, we set $\smash{s_n^k := (1-2^{-n})^k}$. Fix $n\in\N$, and assume that a measure $\smash{\bdbeta_n^k\in \OptTGeo_{\ell_p^\leftarrow}^{\tsep^\leftarrow}(\mu_1,\mu_0)}$ has already been defined in such a way that for every $i\in\{0,\dots,k\}$, we have $\smash{(\eval_{s_n^i})_\push\bdbeta_n^k \in\scrP_\comp^\ac(\mms,\meas)}$ and
\begin{align*}
\sup\tsep^\leftarrow(\{x_1\}\times\supp\,(\eval_{s_n^i})_\push\bdbeta_n^k)\leq 2^{-n}\,s_n^{i-1}\,D
\end{align*}
for which \eqref{Eq:556} and \eqref{Eq:557} hold in the evident form.

\textbf{Step 1.} \textit{Minimization of an appropriate functional.} Arguing as in \cite[Prop.~3.14, Cor.~3.15]{braun2022} and using \autoref{Le:Lululu},
\begin{align}\label{Eq:BLUBBB}
\min\{ \scrF_{c_n^{k+1}}((\eval_{2^{-n}})_\push\bdpi ) : \bdpi \in \OptTGeo_{\ell_p}^\tsep((\eval_{s_n^k})_\push\bdbeta_n^k,\mu_1)\}= 0,
\end{align}
where we set 
\begin{align*}
c_n^{k+1} := \frac{1}{(1-2^{-n})^N}\,\rme^{2^{-n}s_n^kD\sqrt{K^-N}}\,\Vert\rho_0\Vert_{\Ell^\infty(\mms,\meas)}.
\end{align*}
Note that \autoref{Le:Lululu} does \emph{not}  assert absolute $\meas$-continuity of the $t$-slice of the timelike proper-time parametrized $\smash{\ell_p}$-geodesic under consideration. However, this does not affect the proof of \cite[Prop.~3.14]{braun2022} as a possible further  $\meas$-singular part gives an upper bound for the difference  $\scrE_{c'}(\bdpi) - \scrE_{c'}(\bdalpha)$ in Step 3 therein. 

Furthermore, as in \cite[Lem.~3.5]{braun2022} the functional $\scrS_N \circ \eval_{2^{-n}}$ admits a minimizer $\smash{\bdpi_n^{k+1}\in\OptTGeo_{\ell_p}^\tsep((\eval_{s_n^k})_\push\bdbeta_n^k,\mu_1)}$. Let $\bdsigma_{k+1}^n\in\smash{\OptTGeo_{\ell_p^\leftarrow}^\tsep(\mu_1,(\eval_{s_n^k})_\push\bdbeta_n^k)}$ denote the timelike $\smash{\ell_p^\leftarrow}$-optimal geodesic plan that is  obtained by ``time-reversal'' of $\smash{\bdpi_n^{k+1}}$. By gluing, we build $\smash{\bdbeta_n^{k+1}\in\OptTGeo_{\ell_p^\leftarrow}^{\tsep^\leftarrow}(\mu_1,\mu_0)}$ with
\begin{align*}
(\Restr_0^{s_n^k})_\push\bdbeta_n^{k+1} &= \bdpi_n^{k+1},\\
(\Restr_{s_n^k}^1)_\push\bdbeta_n^{k+1} &= (\Restr_{s_n^k}^1)_\push\bdbeta_n^k.
\end{align*}
Here, given any $s,t\in[0,1]$ with $s<t$, the map $\smash{\Restr_s^t}$ is defined in \eqref{Eq:Restr def}. Moreover, 
\begin{align*}
\sup \tsep^\leftarrow(\{x_1\}\times \supp\,(\eval_{s_n^{k+1}})_\push\bdbeta_n^{k+1})\leq 2^{-n}\,s_n^k\,D.
\end{align*}

\textbf{Step 2.} \textit{Uniform density bound.} From \eqref{Eq:BLUBBB} and arguing as for \cite[Prop.~3.16]{braun2022}, the functional $\smash{\scrF_{c_n^{k+1}}\circ(\eval_{2^{-n}})}$ vanishes at $\smash{\bdpi_n^{k+1}}$. As a consequence of our construction, $\smash{(\eval_{s_n^{k+1}})_\push\bdbeta_n^{k+1} = \rho_{s_n^{k+1}}\,\meas\in\scrP_\comp^\ac(\mms,\meas)}$ and, by geometric summation,
\begin{align}\label{Eq:Blubbbb}
\big\Vert\rho_{s_n^{k+1}}\big\Vert_{\Ell^\infty(\mms,\meas)} \leq \frac{1}{(s_n^{k+1})^N}\,\rme^{(1-s_n^{k+1})D\sqrt{K^-N}}\,\Vert\rho_0\Vert_{\Ell^\infty(\mms,\meas)}.
\end{align}
Moreover, arguing as in Step 6 in the proof of \cite[Thm.~3.1]{cavalletti2017},
\begin{align}\label{Eq:Ent bc}
\scrS_N((\eval_{s_n^{k+1}})_\push\bdbeta_n^{k+1}) \leq \scrS_N(\mu_0)\,s_n^{k+1}\,\rme^{-(1-s_n^{k+1})D\sqrt{K^-/N}}.
\end{align}

\textbf{Step 3.} \textit{Completion.} Iterating the procedure in Step 1 and Step 2, we construct a family $\smash{\{\bdbeta_n^k : n \in\N,\, k\in\N_0\}\subset\OptTGeo_{\ell_p^\leftarrow}^{\tsep^\leftarrow}(\mu_1,\mu_0)}$. Let $(\bdbeta^n)_{n\in\N}$ be an enumeration of the elements of the former set. By \autoref{Le:Villani lemma for geodesic} --- observe that $\smash{(\mms,\met,\ll^\leftarrow,\leq^\leftarrow,\tsep^\leftarrow)}$ inherits the regularity properties of its causally reversed structure --- the latter admits an accumulation point $\smash{\bdbeta\in\OptTGeo_{\ell_p^\leftarrow}^{\tsep^\leftarrow}(\mu_1,\mu_0)}$. Let $\smash{\bdalpha\in\OptTGeo_{\ell_p}^\tsep(\mu_0,\mu_1)}$ be the ``time-reversal'' of $\bdbeta$, which induces a timelike proper-time parametrized $\smash{\ell_p}$-geodesic $(\mu_t)_{t\in[0,1]}$ from $\mu_0$ to $\mu_1$ by definition. By weak lower semicontinuity of $\scrF_c$ in $\scrP(J(\mu_0,\mu_1))$ for appropriate values $c>0$, recall \eqref{Eq:BLUBBB} and \eqref{Eq:Blubbbb}, we get $\mu_t = \rho_t\,\meas\in\scrP_\comp^\ac(\mms,\meas)$, and $\Vert\rho_t\Vert_{\Ell^\infty(\mms,\meas)}$ obeys the desired upper bound for every $t\in[0,1)$. Also \eqref{Eq:Ent bc} passes to the limit by weak lower semicontinuity of $\scrS_N$ on uniformly compactly supported probability measures.
\end{proof}

\subsection{Uniqueness of $\ell_p$-optimal couplings and $\ell_p$-geodesics}\label{Sub:Uniqueneess} In this section, we prove uniqueness of \emph{chronological} $\smash{\ell_p}$-optimal couplings (if existent), \autoref{Th:Uniqueness couplings}, as well as of timelike $\smash{\ell_p}$-optimal geodesic plans,  \autoref{Th:Uniqueness geodesics}. We follow \cite[Sec.~3.4]{cavalletti2020}, see also \cite{cavalletti2017}. A byproduct of our discussion is an extension of  the results from \cite{cavalletti2020} from the timelike nonbranching to the timelike \emph{essential}  nonbranching case, cf.~\autoref{Re:From TNB to TENB}.

In this section, in addition to our standing assumptions, let $\scrX$ be timelike $p$-essentially nonbranching for some fixed $p\in (0,1)$. 

\begin{lemma}\label{Le:Uniqueness Diracs} Assume $\smash{\TMCP^*(K,N)}$ for some  $K\in\R$ and $N\in [1,\infty)$. Let $\mu_0 = \rho_0\,\meas\in\smash{\scrP_\comp^\ac(\mms,\meas)}$, and define $\mu_1 := \lambda_1\,\delta_{x_1} + \dots + \lambda_n\,\delta_{x_n}$ for  $\lambda_1,\dots,\lambda_n\in (0,1]$ with $\lambda_1+\dots+\lambda_n=1$ and pairwise distinct $x_1,\dots,x_n\in\mms$. Let $\pi\in\Pi_\ll(\mu_0,\mu_1)$ be an  $\smash{\ell_p}$-optimal coupling. Then there is a $\mu_0$-measurable map $T\colon \supp\mu_0\to\mms$ with 
\begin{align*}
\pi = (\Id,T)_\push\mu_0,
\end{align*}
and consequently,
\begin{align*}
\ell_p(\mu_0,\mu_1)^p = \int_\mms \tsep(x,T(x))^p\d\mu_0(x).
\end{align*}

In particular, $\pi$ is the unique chronological $\smash{\ell_p}$-optimal coupling of $\mu_0$ and $\mu_1$.
\end{lemma}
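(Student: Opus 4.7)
Since $\mu_1$ is a convex combination of Diracs, the natural first step is to disintegrate $\pi$ against its second marginal, obtaining
\[
\pi=\sum_{i=1}^n\lambda_i\,\nu_i\otimes\delta_{x_i}
\]
for uniquely determined probability measures $\nu_i\in\scrP(\mms)$ concentrated on $I^-(x_i)$; the marginal identity $\sum_i\lambda_i\nu_i=\mu_0$ forces $\nu_i\ll\meas$. The desired graph structure $\pi=(\Id,T)_\push\mu_0$ is then equivalent to the pairwise mutual singularity $\nu_i\perp\nu_j$ for $i\neq j$, since one may subsequently define $T$ to be $x_i$ on a Borel version of the support of $\rmd\nu_i/\rmd\meas$. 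Uniqueness of $\pi$ among chronological $\ell_p$-optimal couplings then follows: any two such couplings would both be of graph form, and convexity of the set of $\ell_p$-optimal couplings, combined with the graph form of the resulting convex combination, would force the underlying maps to coincide $\mu_0$-a.e.

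Next I would build structured timelike $\ell_p$-optimal geodesic plans by invoking $\TMCP_p^*(K,N)$. By restriction \cite[Lem.~2.10]{cavalletti2020}, each factor $\nu_i\otimes\delta_{x_i}$ is $\ell_p$-optimal between its marginals. Decomposing $\mu_0$ into countably many disjoint Borel pieces of bounded density and arguing on each piece separately reduces matters to the case where $\rmd\nu_i/\rmd\meas\in\Ell^\infty(\mms,\meas)$. In that setting \autoref{Th:Good TMCP} supplies, for every $i$, a timelike $\ell_p$-optimal geodesic plan $\bdpi^i\in\OptTGeo_{\ell_p}^\tsep(\nu_i,\delta_{x_i})$ whose interior marginals $(\eval_t)_\push\bdpi^i$ are $\meas$-absolutely continuous with uniformly bounded density for every $t\in[0,1)$. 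Gluing, $\bdpi:=\sum_i\lambda_i\bdpi^i\in\OptTGeo_{\ell_p}^\tsep(\mu_0,\mu_1)$, since the endpoint projection $(\eval_0,\eval_1)_\push\bdpi=\pi$ is $\ell_p$-optimal by hypothesis.

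The heart of the proof would be the essential-nonbranching argument. By timelike $p$-essential nonbranching, $\bdpi$ is concentrated on a timelike nonbranching set $G\subset\TGeo^\tsep(\mms)$, on which $\Restr_0^r$ is injective for every $r\in(0,1)$. Each $\bdpi^i$ is concentrated on the disjoint subset $G\cap\eval_1^{-1}(\{x_i\})$, so the $\bdpi^i$ are pairwise mutually singular as measures on $\TGeo^\tsep(\mms)$; indeed no two curves $\gamma\in\supp\bdpi^i\cap G$ and $\eta\in\supp\bdpi^j\cap G$ with $i\neq j$ can share an initial segment $[0,r]$, since injectivity of $\Restr_0^r$ would then force $\gamma=\eta$ and hence $x_i=x_j$. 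Fixing $t_1\in(0,1)$ close enough to $1$ that the supports of $(\eval_{t_1})_\push\bdpi^i$ lie in disjoint neighborhoods of the distinct $x_i$, the truncated plans $\tilde\bdpi^i:=(\Restr_0^{t_1})_\push\bdpi^i$ remain pairwise mutually singular (because $\Restr_0^{t_1}|_G$ is injective) and have $\meas$-absolutely continuous marginals at every $r\in\{0,s,1\}$, for every $s\in(0,1)$, thanks to \autoref{Th:Good TMCP}. Applying \autoref{Le:Mutually singular} to the decomposition $\sum_i\lambda_i\tilde\bdpi^i$ then yields mutual singularity of $(\eval_{st_1})_\push\bdpi^i$ for every $s\in(0,1)$; a limiting argument as $s\to 0^+$, exploiting the uniform $\Ell^\infty$-density control from \autoref{Th:Good TMCP} to upgrade the weak convergence of the interior marginals to $\nu_i$, finally transfers the mutual singularity to the initial time, giving $\nu_i\perp\nu_j$. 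The cost identity $\ell_p(\mu_0,\mu_1)^p=\int_\mms\tsep(x,T(x))^p\d\mu_0(x)$ is then immediate from the graph form of $\pi$.

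\textbf{Main obstacle.} The central technical hurdle is that \autoref{Le:Mutually singular} cannot be applied directly with $r=1$, because $(\eval_1)_\push\bdpi^i=\delta_{x_i}$ is not $\meas$-absolutely continuous. The truncation to $[0,t_1]$ with $t_1<1$ circumvents this, but only delivers mutual singularity at interior times $st_1\in(0,t_1)$; transferring this back to $t=0$ is the delicate step, and it crucially leverages both the uniform $\Ell^\infty$-density bound from \autoref{Th:Good TMCP} and the injectivity of $\Restr_0^r|_G$ on the nonbranching set to preserve disjointness of supports in the limit.
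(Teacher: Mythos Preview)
Your decomposition $\pi=\sum_i\lambda_i\,\nu_i\otimes\delta_{x_i}$ and the reduction to proving $\nu_i\perp\nu_j$ is a clean reformulation, and your use of \autoref{Th:Good TMCP} together with the truncation to $[0,t_1]$ so that \autoref{Le:Mutually singular} becomes applicable is correct up to that point: you do obtain $(\eval_{st_1})_\push\bdpi^i\perp(\eval_{st_1})_\push\bdpi^j$ for every $s\in(0,1)$.

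The gap is the final step, where you attempt to pass mutual singularity to $s=0$. Mutual singularity is \emph{not} preserved under weak limits, and a uniform $\Ell^\infty$ bound on the densities does not help: it only yields weak-$*$ compactness, not strong $\Ell^1$ convergence. Nor does the injectivity of $\Restr_0^r\big\vert_G$ rescue this, since what you would need is injectivity of $\eval_0$ on $G$, which nonbranching does not provide. So as written, the argument does not close.

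The paper avoids this obstacle by recasting the problem as a \emph{volume} contradiction rather than a direct singularity transfer. One assumes the map structure fails on a set $E$ of positive $\mu_0$-measure, reduces to $\nu_0:=\meas[E]^{-1}\meas\mres E$, and runs \autoref{Th:Good TMCP} from this \emph{same} initial measure toward two distinct Diracs $\delta_{z_1},\delta_{z_2}$. The entropy inequality \eqref{Eq:557}, combined with Jensen's inequality, yields
\[
\liminf_{t\to 0}\meas\big[\{\varrho_t^i>0\}\big]\geq\meas[E]
\]
for each $i\in\{1,2\}$. Since the two supports are disjoint for $t>0$ by \autoref{Le:Mutually singular}, their union has measure close to $2\,\meas[E]$ for small $t$; but both are trapped inside a $\tsep$-neighborhood $G_{\delta_0}$ of $E$ along the geodesic flow, whose measure can be arranged to be at most $\tfrac{3}{2}\,\meas[E]$. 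This is the contradiction. The key ingredient you are missing is precisely this quantitative lower volume bound coming from the Rényi entropy estimate in \autoref{Th:Good TMCP}, which replaces the ill-behaved limit of mutual singularity by a robust measure comparison.
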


\begin{proof} Without restriction, we assume $N>1$.

By a standard argument, cf.~the proof of \cite[Lem.~3.17]{cavalletti2020}, it suffices to prove the existence of a $\mu_0$-measurable map $T\colon \supp\mu_0\to\mms$ such that $\pi=(\Id,T)_\push\mu_0$. To this aim, define the compact set $E\subset\supp\mu_0$ by
\begin{align*}
E := \{x\in \mms : \#[(\{x\}\times\mms)\cap \supp\pi] \geq 2\}.
\end{align*}
We claim that $\mu_0[E] = 0$, which directly gives the desired $T$.

Assume to the contrapositive that $\mu_0[E]>0$. We first reduce the discussion to the uniform distribution on some subset of $\mms$ as follows. Up to shrinking $E$, we assume without restriction that $\varepsilon\leq \rho_0\leq 1/\varepsilon$ $\meas$-a.e.~on $E$ for some $\varepsilon > 0$. A further possible shrinking of $E$ entails  the existence of  distinct points $z_1,z_2\in\{x_1,\dots,x_n\}$ and well-defined maps $T_1,T_2\colon E\to \mms$ with $T_1(x) = z_1$ and $T_2(x)=z_2$ for every $x\in E$. We may and will additionally assume that $\smash{E\times\{z_1\}, E\times\{z_2\}\subset \mms_\ll^2}$. By restric\-tion \cite[Lem.~2.10]{cavalletti2020}, the couplings
\begin{align*}
\pi_1 &:= \meas[E]^{-1}\,\One_{E\times\{z_1\}}\,(\rho_0\circ\pr_1)^{-1}\,\pi,\\
\pi_2 &:= \meas[E]^{-1}\,\One_{E\times\{z_2\}}\,(\rho_0\circ\pr_1)^{-1}\,\pi
\end{align*}
are $\smash{\ell_p}$-optimal with $\smash{\pi_1\in \Pi_\ll(\nu_0,\delta_{z_1})}$ and $\smash{\pi_2\in\Pi_\ll(\nu_0,\delta_{z_2})}$, respectively, where
\begin{align*}
\nu_0 := \meas[E]^{-1}\,\meas\mres E = \varrho_0\,\meas.
\end{align*}

From now on, we will work with $\nu_0$ instead of $\mu_0$. Let $\smash{\bdpi_1\in\OptTGeo_{\ell_p}^\tsep(\nu_0,\delta_{z_1})}$ and $\smash{\bdpi_2\in\OptTGeo_{\ell_p}^\tsep(\nu_0,\delta_{z_2})}$ be timelike $\smash{\ell_p}$-optimal geodesic plans which represent timelike proper-time parametrized $\smash{\ell_p}$-geodesics as given by \autoref{Th:Good TMCP}. Note that $(\eval_0)_\push\bdpi_1$ and $(\eval_0)_\push\bdpi_2$ are mutually singular since
\begin{align*}
\bdpi_i[\{\gamma\in\TGeo^\tsep(\mms) : \gamma_1 = z_i\}] = 1
\end{align*}
for every $i\in\{1,2\}$, and $z_1\neq z_2$. Note that $(\eval_t)_\push\bdpi_1$ and $(\eval_t)_\push\bdpi_2$ are $\meas$-absolutely continuous for every $t\in[0,1)$. Restriction of $\bdpi_1$ and $\bdpi_2$ to $[0,1-\delta]$ for arbitrary $\delta > 0$, cf.~\autoref{Le:Villani lemma for geodesic}, and  \autoref{Le:Mutually singular} imply
\begin{align}\label{Eq:Mutual singularity}
(\eval_t)_\push\bdpi_1 \perp (\eval_t)_\push\bdpi_2
\end{align}
for every $t\in(0,1]$. Given $i\in\{1,2\}$, let $\smash{\varrho_t^i}$ denote the density of the $\meas$-absolutely continuous part of $(\eval_t)_\push\bdpi_1$, which is nontrivial by $\smash{\TMCP^*(K,N)}$. Then
\begin{align*}
\int_\mms (\varrho_t^i)^{1-1/N}\d\meas &\geq (1-t)\,\rme^{-tD\sqrt{K^-/N}}\int_\mms \varrho_0^{1-1/N}\d\meas\\
&= (1-t)\,\rme^{-tD\sqrt{K^-/N}}\,\meas[E]^{1/N}.
\end{align*}
On the other hand, by Jensen's inequality,
\begin{align*}
\int_\mms (\varrho_t^i)^{1-1/N}\d\meas &\leq \meas\big[\{\varrho_t^i > 0\}\big]\,\meas\big[\{\varrho_t^i > 0\}\big]^{-1}\int_{\{\varrho_t^i > 0\}} (\varrho_t^i)^{1-1/N}\d\meas\\
&\leq \meas\big[\{\varrho_t^i > 0\}\big]\,\Big[\meas\big[\{\varrho_t^i > 0\}\big]^{-1}\int_{\{\varrho_t^i>0\}}\varrho_t^i\d\meas\Big]^{1-1/N}\\
&\leq \meas\big[\{\varrho_t^i > 0\}\big]^{1/N}.
\end{align*}
These  inequalities imply
\begin{align}\label{Eq:mm}
\liminf_{t\to 0} \meas\big[\{\varrho_t^i > 0\}\big] \geq \meas[E]= \meas\big[\{\varrho_0^i > 0\}\big].
\end{align}

By $\scrK$-global hyperbolicity, the sets
\begin{align*}
F &:= \{x\in \mms : x\in\supp\,(\eval_t)_\push\bdpi_i\textnormal{ for some }t\in[0,1],\, i\in\{1,2\}\},\\
G_\delta &:= \{x\in F : \tsep(y,x) \leq \delta\textnormal{ for some }y\in E\}
\end{align*}
are compact for every $\delta>0$. Since $\meas[G_\delta]\to \meas[E]$ as $\delta\to 0$ by Lebesgue's theorem, there exists some $\delta_0 \in (0,1)$ such that
\begin{align}\label{Eq:mmm}
\meas[G_{\delta_0}] \leq \frac{3}{2}\,\meas[E].
\end{align}
By construction, for every $i\in\{1,2\}$, every $t\in (0,1)$, and $(\eval_t)_\push\bdpi_i$-a.e.~$x\in \mms$ there exists $\gamma\in\TGeo^\tsep(\mms)$ such that $x=\gamma_t$, $\gamma_0\in E$, and $\smash{\gamma_1 = z_i}$. By definition of $\smash{G_{\delta_0}}$, we get $\supp(\eval_t)_\push\bdpi_i\subset G_{\delta_0}$ for every $t\in[0,\delta_0]$. Hence, \eqref{Eq:mm}, \eqref{Eq:Mutual singularity}, and \eqref{Eq:mmm} imply
\begin{align*}
\frac{3}{2}\,\meas[E] &< \meas[\{\rho_s^0>0\}] + \meas[\{\rho_s^1 >0\}]\\
&= \meas[\{\rho_s^0>0\} \cup \{\rho_s^1 > 0\}]\\ 
&\leq \meas[G_{\delta_0}]\\ 
&\leq \frac{3}{2}\,\meas[E]
\end{align*}
for some $s\in (0,\delta_0)$, a contradiction.
\end{proof}

\begin{lemma}\label{Pr:More than TMCP} Let $\smash{\TMCP^*(K,N)}$ hold for some $K\in\R$ and $N\in[1,\infty)$. Let $\smash{\mu_0 = \rho_0\,\meas \in\scrP_\comp^\ac(\mms,\meas)}$ and $\mu_1\in\scrP_\comp(\mms)$ with $\smash{\supp\mu_0\times\supp\mu_1\subset\mms_\ll^2}$. Then there exist a timelike proper-time parametrized $\smash{\ell_p}$-geodesic $(\mu_t)_{t\in[0,1]}$ connecting $\mu_0$ to $\mu_1$ as well as an $\smash{\ell_p}$-optimal coupling $\pi\in\Pi_\ll(\mu_0,\mu_1)$ such that for every $t\in [0,1)$ and every $N'\geq N$, we have $\mu_t\in\Dom(\Ent_\meas)$ and
\begin{align}\label{Eq:Claim conv TMCP}
\scrS_{N'}(\mu_t) \leq (1-t)\,\rme^{-tD\sqrt{K^-/N}}\,\scrS_N(\mu_0),
\end{align}
where
\begin{align*}
D := \sup\tsep(\supp\mu_0\times\supp\mu_1).
\end{align*}
\end{lemma}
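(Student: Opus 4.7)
The plan is to reduce the general case to \autoref{Th:Good TMCP}, which handles the special case when $\mu_1$ is a Dirac mass, via a discrete approximation of $\mu_1$ coupled with the transport-map structure granted by \autoref{Le:Uniqueness Diracs} and a gluing procedure stabilized by \autoref{Le:Mutually singular}.

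First, I would choose a sequence $(\mu_1^n)_{n\in\N}$ of finitely-supported probability measures $\mu_1^n = \sum_{i=1}^{k_n} \lambda_i^n\, \delta_{x_i^n}$ with $x_i^n\in\supp\mu_1$ converging weakly to $\mu_1$. This ensures $\supp\mu_0\times\supp\mu_1^n\subset\mms_\ll^2$ (so timelike $p$-dualizability is automatic by \autoref{Re:Strong timelike}), and $D_n:=\sup\tsep(\supp\mu_0\times\supp\mu_1^n)\leq D$ by continuity of $\tsep$. Next, any chronological $\ell_p$-optimal coupling $\pi^n\in\Pi_\ll(\mu_0,\mu_1^n)$ is, by \autoref{Le:Uniqueness Diracs}, of the form $\pi^n = (\Id,T_n)_\push\mu_0$ for a Borel map $T_n\colon \supp\mu_0\to\{x_1^n,\dots,x_{k_n}^n\}$. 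Setting $A_i^n := T_n^{-1}(x_i^n)$ and $\alpha_i^n := \mu_0[A_i^n]$, this gives a mutually singular decomposition $\mu_0 = \sum_i \alpha_i^n\,\mu_0^{n,i}$ with $\mu_0^{n,i} := (\alpha_i^n)^{-1}\,\mu_0\mres A_i^n\in\scrP_\comp^\ac(\mms,\meas)$ still having essentially bounded density.

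Applying \autoref{Th:Good TMCP} to each pair $(\mu_0^{n,i},\delta_{x_i^n})$ produces a timelike $\ell_p$-optimal geodesic plan $\bdpi^{n,i}\in\OptTGeo_{\ell_p}^\tsep(\mu_0^{n,i},\delta_{x_i^n})$ whose intermediate marginals $\mu_t^{n,i}=(\eval_t)_\push\bdpi^{n,i}=\rho_t^{n,i}\,\meas$ obey the $\Ell^\infty$-bound \eqref{Eq:556} and the entropy bound \eqref{Eq:557}, with $D$ replaced by $D_n$. Gluing yields the plan $\bdpi^n := \sum_i \alpha_i^n\,\bdpi^{n,i}$, which is a timelike $\ell_p$-optimal geodesic plan in $\OptTGeo_{\ell_p}^\tsep(\mu_0,\mu_1^n)$ thanks to the $\ell_p$-optimality of the transport induced by $T_n$. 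Its evaluation gives a timelike proper-time parametrized $\ell_p$-geodesic $(\mu_t^n)_{t\in[0,1]}$. Since the $\bdpi^{n,i}$ have disjoint terminal supports $\{x_i^n\}$ they are mutually singular, so timelike $p$-essential nonbranching combined with \autoref{Le:Mutually singular} forces mutual singularity of $\mu_t^{n,i}$ for every $t\in[0,1)$. A direct computation using this mutual singularity and the $\meas$-absolute continuity of $\mu_0$ gives, for every $N'\geq N$,
\begin{align*}
\scrS_{N'}(\mu_t^n) = \sum_i (\alpha_i^n)^{1-1/N'}\,\scrS_{N'}(\mu_t^{n,i}), \qquad \scrS_{N'}(\mu_0) = \sum_i (\alpha_i^n)^{1-1/N'}\,\scrS_{N'}(\mu_0^{n,i}),
\end{align*}
so summing the per-piece bound from \autoref{Th:Good TMCP} (applied for the exponent $N'$, which is legitimate by the consistency part of \autoref{Pr:Consistency TMCP}) yields an inequality of the desired shape along $(\mu_t^n)$, with $D_n$ in place of $D$.

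Finally, I would pass to the limit $n\to\infty$. By $\scrK$-global hyperbolicity every $\supp\mu_t^n$ is contained in the compact set $J(\mu_0,\mu_1)$, so \autoref{Le:Villani lemma for geodesic} supplies a subsequence along which $\bdpi^n\to\bdpi\in\OptTGeo_{\ell_p}^\tsep(\mu_0,\mu_1)$ weakly; its evaluation yields the sought curve $(\mu_t)_{t\in[0,1]}$, and $\pi:=(\eval_0,\eval_1)_\push\bdpi\in\Pi_\ll(\mu_0,\mu_1)$ is the required $\ell_p$-optimal coupling. Joint weak lower semicontinuity of $\scrS_{N'}$ on $\scrP(J(\mu_0,\mu_1))$ then yields the stated bound in the limit, with $D$ replacing $D_n$. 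The $\Ell^\infty$-bounds \eqref{Eq:556} pass to the limit via weak lower semicontinuity of the functional $\scrF_c$ from \eqref{Eq:Functional FC} (as in Step~3 of the proof of \autoref{Th:Good TMCP}), which in particular delivers $\mu_t\in\Dom(\Ent_\meas)$ for $t\in[0,1)$. The main obstacle I anticipate is matching the quantifier structure: the natural outcome of summing per-piece bounds gives $\scrS_{N'}$ on both sides of the inequality, while the statement mixes $\scrS_{N'}$ on the left and $\scrS_N$ on the right; bridging this requires careful use of the consistency $\TMCP_p^*(K,N)\Rightarrow\TMCP_p^*(K,N')$, the monotonicity $\sqrt{K^-/N'}\leq \sqrt{K^-/N}$, and the fact that both $\scrS_{N'}$ quantities appearing are nonpositive, so that one can trade one exponential factor for the other without destroying the bound before letting $n\to\infty$.
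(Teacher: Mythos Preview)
Your proposal is correct and follows the same approach as the paper: discretize $\mu_1$, use \autoref{Le:Uniqueness Diracs} to obtain a map-induced partition of $\mu_0$, apply \autoref{Th:Good TMCP} piecewise, glue via \autoref{Le:Mutually singular}, and pass to the limit (the paper invokes \cite[Thm.~2.16]{cavalletti2020} for this last step where you use \autoref{Le:Villani lemma for geodesic}; both are valid here since $\supp\mu_0\times\supp\mu_1\subset\mms_\ll^2$ forces strong timelike $p$-dualizability). Regarding your worry about the $\scrS_{N'}$/$\scrS_N$ mismatch: the paper's own proof in fact only establishes the case $N'=N$, which is all that is used downstream in \autoref{Th:Uniqueness couplings}, so do not invest effort in bridging the general quantifier.
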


\begin{proof} \textbf{Step 1.} \textit{Approximation of $\mu_1$.} Let $B_1,\dots,B_n\subset\mms$, $n\in\N$, be a given Borel partition of $\supp\mu_1$ with $\mu_1[B_i]>0$ for every $i\in\{1,\dots,n\}$. Given such an $i$, fix $\smash{x_1^i\in\supp\mu_1}$ and set $\lambda_i := \mu_1[B_i]$ as well as
\begin{align*}
\mu_1^n := \lambda_1\,\delta_{x_1^1} + \dots + \lambda_n\,\delta_{x_1^n}.
\end{align*}
Since every coupling of $\mu_0$ and $\smash{\mu_1^n}$ is chronological, an $\smash{\ell_p}$-optimal coupling of these exists uniquely by \autoref{Le:Uniqueness Diracs}. Let $\smash{\pi_n\in \Pi_\ll(\mu_0,\mu_1^n)}$ be this coupling, given by a $\mu_0$-measurable map $T_n\colon \supp\mu_0\to\mms$. For $i\in\{1,\dots,n\}$, we define 
\begin{align*}
A_i := T_n^{-1}(\{x_1^i\})\times\{x_1^i\},
\end{align*}
and observe that $A_1,\dots,A_n\subset \mms^2$ constitutes a Borel partition of $\supp\pi_n$. 

Define $\smash{\pi_n^i\in\scrP(\mms^2)}$, $\smash{\nu_0^i\in \scrP_\comp^\ac(\mms,\meas)}$, and $\smash{\nu_1^i\in\scrP_\comp(\mms)}$ by
\begin{align*}
\pi_n^i &:= \lambda_i^{-1}\,\pi_n\mres A_i,\\
\nu_0^i &:= (\pr_1)_\push\pi_n^i = \varrho_0^i\,\meas,\\
\nu_1^i &:= (\pr_2)_\push\pi_n^i = \delta_{x_1^i}.
\end{align*}
By construction, we have $\mu_0 = \lambda_1\,\nu_0^i + \dots + \lambda_n\,\nu_0^n$ and thus
\begin{align*}
\rho_0 = \lambda_1\,\varrho_0^i + \dots + \lambda_n\,\varrho_0^n\quad\meas\textnormal{-a.e.}
\end{align*}
Moreover $\smash{\supp\nu_0^i \cap \supp\nu_0^j = \emptyset}$ for every $i,j\in\{1,\dots,n\}$ with $i\neq j$, whence
\begin{align}\label{Eq:MUT SING}
\nu_0^i\perp\nu_0^j.
\end{align}

\textbf{Step 2.} \textit{Invoking the $\TMCP$ condition.} As $\smash{x_1^i\in I^+(\nu_0^i)}$ for every $i\in\{1,\dots,n\}$, using \autoref{Th:Good TMCP} there exists a timelike proper-time parametrized $\smash{\ell_p}$-geodesic $\smash{(\nu_t^i)_{t\in[0,1]}}$ from $\smash{\nu_0^i}$ to $\smash{\nu_1^i}$ such that for every $t\in[0,1)$,
\begin{align}\label{Eq:Interm TMCP}
\scrS_N(\nu_t^i) \leq (1-t)\,\rme^{-tD\sqrt{K^-/N}}\,\scrS_N(\mu_0).
\end{align}

With this information, we now build a timelike proper-time parametrized $\smash{\ell_p}$-geodesic $\smash{(\mu_t^n)_{t\in [0,1]}}$ from $\mu_0$ to $\smash{\mu_1^n}$ for which \eqref{Eq:Claim conv TMCP} holds with $\pi$ replaced by $\pi_n$. By definition, $\smash{(\nu_t^i)_{t\in[0,1]}}$ is represented by some $\smash{\bdalpha^i\in\OptTGeo_{\ell_p}^\tsep(\nu_0^i,\nu_1^i)}$. Thanks to  \eqref{Eq:MUT SING}, \autoref{Th:Good TMCP}, and \autoref{Le:Mutually singular}, we obtain
\begin{align}\label{Eq:MUT SING II}
(\eval_t)_\push\bdalpha^i \perp (\eval_t)_\push\bdalpha^j
\end{align}
for every $t\in(0,1)$ and every $i,j\in\{1,\dots,n\}$ with $i\neq j$. Then
\begin{align*}
\bdpi_n := \lambda_1\,\bdalpha^1 + \dots + \lambda_n\,\bdalpha^n
\end{align*}
is a timelike $\smash{\ell_p}$-optimal geodesic plan connecting $\mu_0$ to $\smash{\mu_1^n}$, and the assignment $\smash{\mu_t^n := (\eval_t)_\push\bdpi_n}$ gives rise to a timelike proper-time parametrized $\smash{\ell_p}$-geodesic connecting these probability measures. Let $\smash{\varrho_t^i}$ denote the nontrivial density of the $\meas$-absolutely continuous part of $\smash{(\eval_t)_\push\bdalpha^i}$. Then we obtain that
\begin{align}\label{oppppü}
\rho_t^n\,\meas := \big[\lambda_1\,\varrho_t^1 + \dots + \lambda_n\,\varrho_t^n\big]\,\meas
\end{align}
is the $\meas$-absolutely continuous part of $\mu_t^n$, $t\in[0,1]$. Now \eqref{Eq:MUT SING II} ensures 
\begin{align}\label{This ensures}
\meas\big[\{\varrho_t^i > 0\} \cap \{\varrho_t^j > 0\}\big] = 0.
\end{align}
for every $t\in [0,1]$ and every $i,j\in\{1,\dots,n\}$ with $i\neq j$.  Hence,  using  \eqref{Eq:Interm TMCP} and \eqref{Eq:MUT SING} we get, for every $t\in[0,1)$,
\begin{align*}
\scrS_{N}(\mu_t^n) &= \sum_{i=1}^n \lambda_i^{1-1/N}\,\scrS_{N}(\mu_t^i)\\ 
&\leq (1-t)\,\rme^{-tD\sqrt{K^-/N}}\sum_{i=1}^n \lambda_i^{1-1/N}\,\scrS_N(\mu_0^i)\\
&= (1-t)\,\rme^{-tD\sqrt{K^-/N}}\,\scrS_N(\mu_0).
\end{align*}

Lastly, \eqref{oppppü}, \autoref{Th:Good TMCP}, and \eqref{This ensures} imply $\sup\{\Vert\rho_t^n\Vert_{\Ell^\infty(\mms,\meas)}:n\in\N\} < \infty$ for every $t\in [0,1)$. By a standard convexity argument, this yields $\smash{\mu_t^n\in\Dom(\Ent_\meas)}$ and $\smash{\sup\{\vert\! \Ent_\meas(\mu_t^n)\vert : n\in\N\} < \infty}$ for every such $t$. 

\textbf{Step 3.} \textit{Conclusion.} If $\supp\mu_1$ consists of finitely many points, the claim simply follows by choosing $n\in\N$ such that $\smash{\mu_1 = \mu_1^n}$. 

Therefore, only the case $\#\supp\mu_1=\infty$ remains to be studied. By \cite[Thm.~2.16]{cavalletti2020}, there exists a sequence $\smash{(\bar{\mu}_1^n)_{n\in\N}}$ in $\scrP_\comp(\mms)$ which converges weakly to $\mu_1$ such that $\smash{\bar{\mu}_1^n\in\scrP^\ac(\mms,\mu_1^n)}$ for every $n\in\N$, and for every sequence $(\pi_n)_{n\in\N}$ of $\smash{\ell_p}$-optimal couplings $\smash{\pi_n \in\Pi_\leq(\mu_0,\bar{\mu}_1^n)}$, every weak limit of any subsequence of the latter is $\smash{\ell_p}$-optimal for $\mu_0$ and $\mu_1$. In particular, the previous discussion applies to $\smash{\bar{\mu}_1^n}$ in place of $\smash{\mu_1^n}$, $n\in\N$, and yields sequences $\smash{(\bdpi_n)_{n\in\N}}$ and $(\pi_n)_{n\in\N}$ of timelike $\smash{\ell_p}$-optimal geodesic plans $\smash{\bdpi_n\in\OptTGeo_{\ell_p}^\tsep(\mu_0,\bar{\mu}_1^n)}$ and $\smash{\ell_p}$-optimal couplings $\smash{\pi_n\in\Pi_\ll(\mu_0,\mu_1^n)}$ as in Step 2, respectively. By $\scrK$-global hyperbolicity, Prokhorov's theorem, and  \autoref{Le:Villani lemma for geodesic} these sequences converge weakly to some $\smash{\bdpi\in\OptTGeo_{\ell_p}^\tsep(\mu_0,\mu_1)}$ and an $\smash{\ell_p}$-optimal coupling $\pi\in\Pi_\ll(\mu_0,\mu_1)$, respectively, up to a nonrelabeled subsequence. Define a timelike proper-time parametrized $\smash{\ell_p}$-geodesic $(\mu_t)_{t\in[0,1]}$ from $\mu_0$ to $\mu_1$ by $\mu_t := (\eval_t)_\push\bdpi$. By weak lower semicontinuity of $\Ent_\meas$ on probability measures with uniformly bounded support, we have $\mu_t\in\Dom(\Ent_\meas)$ for every $t\in [0,1)$, in particular $\smash{\mu_t\in\scrP_\comp^\ac(\mms,\meas)}$. By analogous semicontinuity properties of $\scrS_N$ on measures with uniformly bounded support, with $\smash{\mu_t^n := (\eval_t)_\push\bdpi_n}$ we obtain
\begin{align*}
\scrS_N(\mu_t) &\leq \limsup_{n\to\infty} \scrS_N(\mu_t^n) \leq (1-t)\,\rme^{-tD\sqrt{K^-/N}}\,\scrS_N(\mu_0).\qedhere
\end{align*}
\end{proof}

The proof of the following \autoref{Th:Uniqueness couplings} follows the lines of \autoref{Le:Uniqueness Diracs} modulo some modifications we briefly discuss.

\begin{theorem}\label{Th:Uniqueness couplings} Assume $\smash{\TMCP^*(K,N)}$ for $K\in\R$ and $N\in [1,\infty)$. Suppose the pair $(\mu_0,\mu_1)\in\scrP_\comp^\ac(\mms,\meas)\times\scrP_\comp(\mms)$ to be timelike $p$-dualizable by $\smash{\pi\in\Pi_\ll(\mu_0,\mu_1)}$. Then there exists a $\mu_0$-measurable map $T\colon \mms \to \mms$ such that
\begin{align*}
\pi = (\Id,T)_\push\mu_0,
\end{align*}
and consequently,
\begin{align*}
\ell_p(\mu_0,\mu_1)^p = \int_\mms \tsep(x,T(x))^p\d\mu_0(x).
\end{align*}

In particular, $\pi$ is the unique chronological $\ell_p$-optimal coupling of $\mu_0$ and $\mu_1$.
\end{theorem}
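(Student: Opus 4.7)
My plan is to follow the strategy of \autoref{Le:Uniqueness Diracs} but with \autoref{Th:Good TMCP} replaced by its extension \autoref{Pr:More than TMCP}, since $\mu_1$ is no longer supported on finitely many points. First I would reduce to showing that the set
\begin{align*}
E := \{x \in \mms : \#[(\{x\}\times\mms)\cap \supp\pi] \geq 2\}
\end{align*}
is $\mu_0$-negligible, so that the disintegration of $\pi$ with respect to $\pr_1$ consists of Dirac masses $\mu_0$-a.e., which directly yields the desired transport map $T$. The rest of the theorem follows as in \autoref{Le:Uniqueness Diracs}. Arguing by contradiction, I assume $\mu_0[E]>0$.

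The crucial step is to split $\pi \mres (E \times \mms)$ into two pieces with mutually singular second marginals while retaining a common first marginal that is $\meas$-absolutely continuous. Using the Borel structure of $\supp\pi$, measurable selection, and possibly a Lusin-type argument applied to the disintegration $\smash{\{\pi_x\}_{x\in\mms}}$ with respect to $\pr_1$, I would find disjoint Borel sets $\smash{B_1,B_2\subset\mms}$ such that the set $\smash{F := \{x \in E : \pi_x[B_1]\,\pi_x[B_2]>0\}}$ has $\mu_0[F]>0$. After shrinking $F$, I may further assume that $\varepsilon \leq \rho_0\leq 1/\varepsilon$ $\meas$-a.e.~on $F$ for some $\varepsilon>0$, and that the maps $x\mapsto \pi_x[B_i]$ are bounded away from zero on $F$ for $i\in\{1,2\}$. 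Setting $\nu_0 := \meas[F]^{-1}\meas\mres F = \varrho_0\,\meas$, the restriction lemma \cite[Lem.~2.10]{cavalletti2020} produces, up to normalization, $\smash{\ell_p}$-optimal couplings $\pi_1, \pi_2 \in\Pi_\ll(\nu_0,\nu_1^{(1)}),\Pi_\ll(\nu_0,\nu_1^{(2)})$ obtained by conditioning $\pi$ on $E\times B_1$ and $E\times B_2$, respectively, with compact supports, $\smash{\nu_1^{(i)}}$ concentrated on $B_i$, and $\smash{\supp \nu_0\times\supp\nu_1^{(i)}\subset\mms_\ll^2}$.

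Next, I would apply \autoref{Pr:More than TMCP} to obtain, for each $i\in\{1,2\}$, a timelike proper-time parametrized $\smash{\ell_p}$-geodesic represented by a plan $\smash{\bdpi_i \in \OptTGeo_{\ell_p}^\tsep(\nu_0,\nu_1^{(i)})}$ whose $t$-slices $\smash{(\eval_t)_\push\bdpi_i = \varrho_t^i\,\meas}$ are $\meas$-absolutely continuous, with the quantitative Rényi bound
\begin{align*}
\scrS_N((\eval_t)_\push\bdpi_i) \leq (1-t)\,\rme^{-tD\sqrt{K^-/N}}\,\scrS_N(\nu_0)
\end{align*}
for every $t\in[0,1)$, where $D$ is a uniform bound on $\tsep$ over the relevant supports. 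By timelike $p$-essential nonbranching, \autoref{Le:Mutually singular} applied to the convex combination $\lambda\,\bdpi_1 + (1-\lambda)\,\bdpi_2$ (after restricting to $[0, 1-\delta]$ for $\delta>0$, as in \autoref{Le:Uniqueness Diracs}, to secure $\meas$-absolute continuity of the second endpoint) yields mutual singularity $(\eval_t)_\push\bdpi_1 \perp (\eval_t)_\push\bdpi_2$ for every $t\in(0,1)$; disjointness of $B_1$ and $B_2$ ensures the hypothesis of mutual singularity of $\bdpi_1$ and $\bdpi_2$ needed in that lemma.

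Finally, Jensen's inequality applied to the entropy bound gives the volume lower bound
\begin{align*}
\meas\big[\{\varrho_t^i > 0\}\big] \geq (1-t)^N\,\rme^{-tD\sqrt{K^- N}}\,\meas[F]
\end{align*}
for each $i\in\{1,2\}$, exactly as in the last part of \autoref{Le:Uniqueness Diracs}. By $\scrK$-global hyperbolicity, both supports of $(\eval_t)_\push\bdpi_i$ lie in a fixed compact $\tsep$-neighbourhood $G_{\delta_0}$ of $F$ for small $t$, which can be chosen so that $\meas[G_{\delta_0}] \leq (3/2)\,\meas[F]$. For $t\in(0,1)$ sufficiently small, the mutual singularity and the two volume lower bounds then force
\begin{align*}
\tfrac{3}{2}\,\meas[F] < \meas\big[\{\varrho_t^1>0\}\big] + \meas\big[\{\varrho_t^2>0\}\big] \leq \meas[G_{\delta_0}] \leq \tfrac{3}{2}\,\meas[F],
\end{align*}
a contradiction. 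The main obstacle I anticipate is the Borel-measurable construction of the splitting sets $B_1$ and $B_2$: unlike in \autoref{Le:Uniqueness Diracs}, where $\mu_1$ is a finite sum of Diracs and so $B_i$ can be taken to be singletons, here I must carefully invoke measurable selection on the fibers $\pi_x$ to guarantee the positivity of $\mu_0[F]$ and the uniform lower bounds on $\pi_x[B_i]$, while preserving enough support structure to apply \autoref{Pr:More than TMCP}.
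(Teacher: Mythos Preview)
Your overall strategy is correct and matches the paper's: reduce to showing the ``multi-valued'' set is $\mu_0$-null, then contradict via \autoref{Pr:More than TMCP} and the volume argument from \autoref{Le:Uniqueness Diracs}. The paper, however, resolves your anticipated ``main obstacle'' by a different and cleaner route. Rather than working with the disintegration $\{\pi_x\}$ and searching for abstract Borel sets $B_1,B_2$, the paper passes to an $l^p$-cyclically monotone Borel set $\Gamma\subset\mms_\ll^2$ with $\pi[\Gamma]=1$, defines $E$ via fibers of $\Gamma$, and invokes the von Neumann selection theorem to produce two $\mu_0$-measurable maps $T_1,T_2\colon E\to\mms$ with graphs in $\Gamma$ and $T_1(x)\neq T_2(x)$ everywhere on $E$. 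A Lusin argument then makes $T_1,T_2$ continuous on a compact $C\subset E$ of positive $\mu_0$-measure, so $\min_{x\in C}\met(T_1(x),T_2(x))>0$; hence one finds $z_1,z_2\in\mms$ and $r>0$ with $\met(z_1,z_2)>r$ and a further compact $C'\subset C$ such that $T_i(C')\subset\sfB^\met(z_i,r/2)$ and $C'\times\bar\sfB^\met(z_i,r/2)\subset\mms_\ll^2$. With $\nu_0:=\meas[C']^{-1}\meas\mres C'$ and $\mu_1^i:=(T_i)_\push\nu_0$, both pairs $(\nu_0,\mu_1^i)$ are strongly timelike $p$-dualizable by \autoref{Re:Strong timelike}, so \autoref{Pr:More than TMCP} applies directly and the endgame is verbatim that of \autoref{Le:Uniqueness Diracs}.

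This buys two things your proposal lacks. First, it bypasses the measurable-selection issue on disintegrations by producing concrete selections. Second --- and this is where your sketch has a genuine gap --- it automatically delivers the chronology hypothesis $\supp\nu_0\times\supp\nu_1^{(i)}\subset\mms_\ll^2$ required by \autoref{Pr:More than TMCP}, which you assert without justification. In your approach, even after finding $B_1,B_2$ with $\pi_x[B_i]>0$ on $F$, nothing prevents $\supp\nu_1^{(i)}$ from being large enough that some pair in $\supp\nu_0\times\supp\nu_1^{(i)}$ fails chronology; fixing this forces you to shrink $B_i$ and $F$ to small diameter, which essentially reproduces the Lusin/ball picture anyway. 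A smaller issue: conditioning $\pi$ on $F\times B_i$ yields first marginal proportional to $\pi_x[B_i]\,\rho_0(x)\,\meas\mres F$, not $\nu_0$, so an explicit density adjustment (as in the proof of \autoref{Le:Uniqueness Diracs}) is needed before both couplings share $\nu_0$ as first marginal.
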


\begin{proof} As in the proof of \autoref{Le:Uniqueness Diracs}, it suffices to prove the existence of $T$. Let $\Gamma\subset\mms_\ll^2$ be an $\smash{\ell_p}$-cyclically monotone set with $\pi[\Gamma]=1$ \cite[Prop.~2.8]{cavalletti2020}, and set
\begin{align*}
E := \{x\in\mms :  \pr_2[\Gamma \cap (\{x\}\times \mms)] \geq 2\},
\end{align*}
which is a Suslin set. We claim that $\mu_0[E]=0$, which directly gives the desired $T$.

Assume to the contrapositive that $\mu_0[E]>0$. By the von Neumann selection theorem \cite[Thm.~9.1.3]{bogachev2007b}, there exist $\mu_0$-measurable maps $T_1,T_2\colon E\to \mms$ whose graphs are both contained in $\Gamma$ and such that $T_1(x) \neq T_2(x)$ for every $x\in E$. By Lusin's theorem, there exists a compact set $C\subset E$ with $\mu_0[C]>0$ such that the restrictions $\smash{T_1\big\vert_C}$ and $\smash{T_2\big\vert_C}$ are continuous. This yields
\begin{align*}
\min\{\met(T_1(x),T_2(x)) : x\in C\} > 0.
\end{align*}
In particular, there exist $z_1,z_2\in\mms$ and $r>0$ with $\met(z_1,z_2)>r$, and a compact set $C'\subset C$ with $\mu_0[C]>0$ as well as $\smash{T_1(C')\subset\sfB^\met(z_1,r/2)}$ and $\smash{T_2(C')\subset\sfB^\met(z_2,r/2)}$. Up to possibly shrinking the radius $r$, we may and will assume without restriction that  $\smash{C'\times [\bar{\sfB}^\met(z_1,r/2)\cup \bar{\sfB}^\met(z_2,r/2)] \subset\mms_\ll^2}$.

As in the proof of \autoref{Le:Uniqueness Diracs}, we may  further shrink $C'$, hence assume that $\rho_0$ is bounded and bounded away from zero on $C'$. Consider $\smash{\nu_0\in\scrP_\comp^\ac(\mms,\meas)}$ with
\begin{align*}
\nu_0 := \meas[C']^{-1}\,\meas\mres C'.
\end{align*}
Furthermore, define $\smash{\mu_1^1,\mu_1^2\in\scrP_\comp(\mms)}$ by
\begin{align*}
\mu_1^1 &:= (T_1)_\push\mu_0,\\
\mu_1^2 &:= (T_2)_\push\mu_0.
\end{align*}
By construction $\smash{\supp\mu_1^1 \cap \supp\mu_1^2 = \emptyset}$, and the pairs $\smash{(\nu_0,\mu_1^1)}$ and $\smash{(\nu_0,\mu_1^2)}$ are both strongly $p$-timelike dualizable by \autoref{Re:Strong timelike}.  \autoref{Pr:More than TMCP} applies to both pairs; following the proof of \autoref{Le:Uniqueness Diracs} from now on gives the claim.
\end{proof}

\begin{theorem}\label{Th:Uniqueness geodesics} Assume $\smash{\TMCP^*(K,N)}$ for some $K\in\R$ and $N\in[1,\infty)$. Suppose the pair $(\mu_0,\mu_1)\in\scrP_\comp^\ac(\mms,\meas)\times\scrP_\comp(\mms)$ to be timelike $p$-dualizable. Then for every $\bdpi\in\smash{\OptTGeo_{\ell_p}^\tsep(\mu_0,\mu_1)}$ there is a $\mu_0$-measurable map $\smash{\mathfrak{T} \colon \supp\mu_0\to \TGeo^\tsep(\mms)}$ with
\begin{align*}
\bdpi = \mathfrak{T}_\push\mu_0.
\end{align*}

In particular, the set $\smash{\OptTGeo_{\ell_p}^\tsep(\mu_0,\mu_1)}$ is a singleton.
\end{theorem}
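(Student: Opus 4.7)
My plan is to reduce everything to \autoref{Th:Uniqueness couplings}, applied not just to the endpoints $(\mu_0,\mu_1)$ but at every intermediate time. Concretely, for every $t\in(0,1]$ I will argue that $(\eval_0,\eval_t)_\push\bdpi$ is a chronological $\ell_p$-optimal coupling of the timelike $p$-dualizable pair $(\mu_0,\mu_t)$, where $\mu_t:=(\eval_t)_\push\bdpi$. Uniqueness of chronological $\ell_p$-optimal couplings then forces $(\eval_0,\eval_t)_\push\bdpi=(\Id,T_t)_\push\mu_0$ for some $\mu_0$-measurable map $T_t\colon\supp\mu_0\to\mms$. Intersecting these constraints over a countable dense set of times will force the disintegration of $\bdpi$ over $\mu_0$ to consist of Dirac masses.

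\textbf{Intermediate couplings and dualizability.} Since $\bdpi\in\OptTGeo_{\ell_p}^\tsep(\mu_0,\mu_1)$ is concentrated on $\TGeo^\tsep(\mms)$, for every $s,t\in[0,1]$ with $s<t$ and $\bdpi$-a.e.~$\gamma$ one has $\gamma_s\ll\gamma_t$ and $\tsep(\gamma_s,\gamma_t)=(t-s)\,\tsep(\gamma_0,\gamma_1)$. The reverse triangle inequality \eqref{Eq:Reverse triangle lp} combined with $\ell_p$-optimality of $(\eval_0,\eval_1)_\push\bdpi$ then yields the identity $\ell_p(\mu_s,\mu_t)=(t-s)\,\ell_p(\mu_0,\mu_1)$ and $\ell_p$-optimality of $(\eval_s,\eval_t)_\push\bdpi$, which is automatically chronological. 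For the timelike $p$-dualizability of $(\mu_0,\mu_t)$: we have $\mu_t\in\scrP_\comp(\mms)$ since its support lies in the compact set $J(\mu_0,\mu_1)$ by $\scrK$-global hyperbolicity; and because $l^p$ equals $-\infty$ off of $\mms_\leq^2$ and is bounded above on the compact set $\supp\mu_0\times\supp\mu_t$ by continuity of $\tsep$, a sufficiently large constant provides a bound $l^p\leq a\oplus b$ as demanded by \autoref{Def:TL DUAL}.

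\textbf{Application of \autoref{Th:Uniqueness couplings}.} For each $t\in(0,1]$, \autoref{Th:Uniqueness couplings} now provides a $\mu_0$-measurable map $T_t$ with $(\eval_0,\eval_t)_\push\bdpi=(\Id,T_t)_\push\mu_0$. Disintegrating $\bdpi=\int_\mms\bdpi_x\,d\mu_0(x)$ with respect to $\eval_0$, this identity means that for $\mu_0$-a.e.~$x$, the probability measure $\bdpi_x$ is concentrated on $\{\gamma\in\TGeo^\tsep(\mms):\gamma_0=x,\,\gamma_t=T_t(x)\}$. Fix a countable dense set $D\subset[0,1]$ containing $0$ and $1$; intersecting the corresponding full-$\mu_0$-measure sets across $t\in D$ yields a Borel set $\Omega\subset\supp\mu_0$ of full $\mu_0$-measure such that, for every $x\in\Omega$, $\bdpi_x$-a.e.~$\gamma$ passes through $T_t(x)$ at every $t\in D$. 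Since elements of $\TGeo^\tsep(\mms)$ are continuous curves and $D$ is dense in $[0,1]$, any two such $\gamma$ must coincide, forcing $\bdpi_x=\delta_{\mathfrak{T}(x)}$ for a unique curve $\mathfrak{T}(x)\in\TGeo^\tsep(\mms)$. This yields $\bdpi=\mathfrak{T}_\push\mu_0$; uniqueness of each $T_t$ from \autoref{Th:Uniqueness couplings} propagates to uniqueness of $\mathfrak{T}$ and hence to the singleton statement for $\OptTGeo_{\ell_p}^\tsep(\mu_0,\mu_1)$.

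\textbf{Main obstacle.} The most delicate step is verifying Borel measurability of the selected map $\mathfrak{T}$ into the Polish space $\Cont([0,1];\mms)$. This reduces to the measurability of the ``rational snapshots'' $T_t$, $t\in D$, already established, because the Borel $\sigma$-algebra on $\Cont([0,1];\mms)$ is generated by the evaluation maps $\{\eval_t:t\in D\}$; the map $x\mapsto\mathfrak{T}(x)$ can then be realized as the unique continuous extension of $(T_t(x))_{t\in D}$, defined arbitrarily outside $\Omega$, and equals $\bdpi$-a.e.~the disintegration-selected curve by construction.
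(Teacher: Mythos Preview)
Your argument that every $\bdpi$ is induced by a map is correct and rests on the same key idea as the paper's proof: apply \autoref{Th:Uniqueness couplings} to the intermediate pair $(\mu_0,\mu_t)$. The paper organizes this as a contradiction at a single well-chosen rational time $s$ (locating a set of positive $\mu_0$-measure where $(\eval_s)_\push\bdpi_x$ is not a Dirac, then restricting), whereas you argue directly over a countable dense set of times and intersect. Both are valid; the paper's version avoids the explicit measurability discussion for $\mathfrak{T}$ since the existence of the selecting map follows once the disintegration is known to consist of Diracs.

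There is one genuine gap in your last sentence. You assert that ``uniqueness of each $T_t$ from \autoref{Th:Uniqueness couplings} propagates to uniqueness of $\mathfrak{T}$ and hence to the singleton statement'', but the map $T_t$ you obtain depends on $\mu_t=(\eval_t)_\push\bdpi$, which in turn depends on $\bdpi$. Two distinct plans $\bdpi,\bdpi'\in\OptTGeo_{\ell_p}^\tsep(\mu_0,\mu_1)$ could a priori produce different intermediate marginals $\mu_t\neq\mu_t'$ (only $\mu_0$ and $\mu_1$ are fixed), so the uniqueness of $T_t$ \emph{relative to the pair} $(\mu_0,\mu_t)$ does not by itself force $\mathfrak{T}=\mathfrak{T}'$. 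The standard fix, which the paper invokes as ``the usual argument'' from the proof of \autoref{Le:Uniqueness Diracs}, is to use convexity of $\OptTGeo_{\ell_p}^\tsep(\mu_0,\mu_1)$: the average $(\bdpi+\bdpi')/2$ lies in this set and must therefore also be induced by a map, which forces $\mathfrak{T}(x)=\mathfrak{T}'(x)$ for $\mu_0$-a.e.~$x$.
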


\begin{proof} By the usual argument outlined in the proof of \autoref{Le:Uniqueness Diracs}, it suffices to prove that every $\smash{\bdpi\in\OptTGeo_{\ell_p}^\tsep(\mu_0,\mu_1)}$ is induced by a map $\mathfrak{T}$ as above.

Assume to the contrapositive that one such $\bdpi$, henceforth fixed, is not given by a map. We disintegrate $\bdpi$ with respect to $\eval_0$, writing, with some abuse of notation,
\begin{align*}
\rmd\bdpi(\gamma) = \rmd\bdpi_x(\gamma)\d\mu_0(x)
\end{align*}
for a $\mu_0$-measurable map $\bdpi\colon \supp\mu_0\to \scrP(\TGeo^\tsep(\mms))$. By assumption on $\bdpi$, there is a compact set $C\subset\supp\mu_0$ such that $\mu_0[C]>0$ and $\#\supp \bdpi_x \geq 2$ for every $x\in C$. 
Hence, for every $x\in C$ there exists  $t_x\in (0,1)$ such that $\#\supp\,(\eval_t)_\push\bdpi_x \geq 2$; since $\smash{\TGeo^\tsep(\mms)\subset\Cont([0,1];\mms)}$ and $\smash{\supp\bdpi_x \subset \TGeo^\tsep(\mms)}$, there exists an open interval $I_x\subset (0,1)$ with $t_x\in I_x$ such that $\#\supp\,(\eval_t)_\push\bdpi_x  \geq 2$ for every $t\in I_x$.

Given any $t\in \Q\cap (0,1)$, we define
\begin{align*}
C_t &:= \{x \in C : \#\supp\,(\eval_t)_\push\bdpi_x\geq 2\}.
\end{align*}
Since $C_t$ unites to $C$ when ranging over $t\in \Q\cup (0,1)$, we find $s\in \Q\cap (0,1)$ with $\smash{\mu_0[C_s] >0}$. Finally, we define $\smash{\bdalpha\in\scrP(\TGeo^\tsep(\mms))}$ by
\begin{align*}
\rmd\bdalpha(\gamma) = \mu_0[C_s]^{-1}\,\One_{C_s}(x)\d\bdpi_x(\gamma)\d\mu_0(x).
\end{align*}
By restriction \cite[Lem.~2.10]{cavalletti2020}, $(\eval_0,\eval_1)_\push\bdalpha$ is an $\smash{\ell_p}$-optimal coupling of its marginals, the first one of which is $\smash{\mu_0[C_s]^{-1}\,\mu_0\mres C_s\in\scrP_\comp^\ac(\mms,\meas)}$. Since $\bdalpha$ is concentrated on $\TGeo^\tsep(\mms)$, a standard argument using the reverse triangle inequality \eqref{Eq:Reverse tau} for $\tsep$ implies that $(\eval_0,\eval_s)_\push\bdalpha$ is a chronological $\smash{\ell_p}$-optimal coupling of its marginals. By definition of $C_s$, it is not given by a map, which contradicts  \autoref{Th:Uniqueness couplings}.
\end{proof}

\begin{remark}\label{Re:From TNB to TENB} The above arguments --- which, as mentioned, are based on \cite{cavalletti2020} --- show that the uniqueness  results of \cite[Thm.~3.19, Thm.~3.20]{cavalletti2020} remain true if the timelike nonbranching assumption therein is replaced by the hypothesis of timelike $p$-essential nonbranching for the considered exponent $p\in (0,1)$.
\end{remark}

\subsection{Equivalence with the entropic TMCP condition}\label{Sec:Equiv TMCP's} In this section, additionally to our standing assumptions we suppose $\scrX$ to be timelike $p$-essentially nonbranching for some  fixed $p\in (0,1)$. Building upon the results in \autoref{Sub:Good} and \autoref{Sub:Uniqueneess}, we establish the equivalence of $\smash{\TMCP^*(K,N)}$ with the entropic timelike measure-contraction property from \cite[Def.~3.7]{cavalletti2020}, restated in \autoref{Def:TMCPe}. A byproduct of our argumentation is a pathwise characterization of $\smash{\TMCP^*(K,N)}$ and $\smash{\TMCP^e(K,N)}$, cf.~\autoref{Th:Equivalence TMCP* and TMCPe}, which is also available for $\smash{\TMCP(K,N)}$ according to \autoref{Th:Equivalence TMCP}.

Recall the definition \eqref{Eq:Expo Boltzmann} of the exponentiated Boltzmann entropy $\scrU_N$.

\begin{definition}\label{Def:TMCPe} Let $K\in\R$ and $N\in (0,\infty)$. We say that $\scrX$ satis\-fies the \emph{entropic timelike measure-contraction property} $\smash{\TMCP^e(K,N)}$ if for every $\mu_0\in\scrP_\comp^\ac(\mms,\meas)$ and every $x_1\in I^+(\mu_0)$, there exists an $\smash{\ell_{1/2}}$-geo\-desic $(\mu_t)_{t\in [0,1]}$ connecting $\mu_0$ and $\smash{\mu_1:= \delta_{x_1}}$ such that for every $t\in [0,1]$,
\begin{align*}
\scrU_N(\mu_t) \geq \sigma_{K,N}^{(1-t)}\big[\Vert\tsep\Vert_{\Ell^2(\mms^2,\mu_0\otimes\mu_1)}\big]\,\scrU_N(\mu_0).
\end{align*}
\end{definition}

This definition is independent of any transport exponent \cite[Rem.~2.4]{cavalletti2022}.

\begin{theorem}\label{Th:Equivalence TMCP* and TMCPe} The following statements are equivalent for every given $K\in\R$ and $N\in[1,\infty)$.
\begin{enumerate}[label=\textnormal{\textcolor{black}{(}\roman*\textcolor{black}{)}}]
\item\label{LAAA} The condition $\smash{\TMCP^*(K,N)}$ holds.
\item\label{LAAAA} For every $\mu_0 = \rho_0\,\meas \in\scrP^\ac(\mms,\meas)$ and every $x_1\in I^+(\mu_0)$ there is a timelike $\smash{\ell_p}$-optimal geodesic plan $\smash{\bdpi\in\OptTGeo_{\ell_p}^\tsep(\mu_0,\mu_1)}$, where $\smash{\mu_1 := \delta_{x_1}}$, such that for every $t\in[0,1)$, we have $\smash{(\eval_t)_\push\bdpi = \rho_t\,\meas \in\scrP^\ac(\mms,\meas)}$, and for every $N'\geq N$, the inequality
\begin{align*}
\rho_t(\gamma_t)^{-1/N'} \geq \sigma_{K,N'}^{(1-t)}(\tsep(\gamma_0,x_1))\,\rho_0(\gamma_0)^{-1/N'}
\end{align*}
holds for $\bdpi$-a.e.~$\gamma\in\TGeo^\tsep(\mms)$.
\item\label{LAAAAA} The condition $\smash{\TMCP^e(K,N)}$ holds.
\end{enumerate}
\end{theorem}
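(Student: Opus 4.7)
My plan is to take the pathwise condition \ref{LAAAA} as the pivot and establish the circle of implications \ref{LAAAA}~$\Rightarrow$~\ref{LAAA}, \ref{LAAAA}~$\Rightarrow$~\ref{LAAAAA}, \ref{LAAA}~$\Rightarrow$~\ref{LAAAA}, and \ref{LAAAAA}~$\Rightarrow$~\ref{LAAAA}, paralleling the strategy of \autoref{Th:Equivalence TCD* and TCDe}. Since $\mu_1 = \delta_{x_1}$, the situation is actually simpler than the $\TCD$ case: only the initial endpoint contributes in the integrand, so the convexity machinery will involve $\rmH_t$ rather than $\rmG_t$ from \eqref{Eq:GtHt}.

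First I would tackle the implications out of \ref{LAAAA}. For \ref{LAAAA}~$\Rightarrow$~\ref{LAAA}, integrating the pointwise inequality against $\bdpi$ and using $\gamma_1 = x_1$ for $\bdpi$-a.e.~$\gamma$ yields exactly the semiconvexity inequality defining $\smash{\TMCP_p^*(K,N)}$. For \ref{LAAAA}~$\Rightarrow$~\ref{LAAAAA}, I would take logarithms of the pointwise bound to rewrite it as
\begin{align*}
-\frac{1}{N}\log\rho_t(\gamma_t) \geq \rmH_t\Big[\!-\!\frac{1}{N}\log\rho_0(\gamma_0),\frac{K}{N}\,\tsep^2(\gamma_0,x_1)\Big]
\end{align*}
for $\bdpi$-a.e.~$\gamma$, then integrate using Jensen's inequality together with joint convexity of $\rmH_t$, and finally exponentiate to recover the entropic inequality of \autoref{Def:TMCPe}.

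For the two harder directions into \ref{LAAAA}, I would imitate the partition procedure of \autoref{Pr:ii to iii}. The setup: first reduce to $\mu_0 = \rho_0\,\meas \in \scrP_\comp^\ac(\mms,\meas)$ with $\rho_0 \in \Ell^\infty(\mms,\meas)$ and $\supp\mu_0 \times \{x_1\} \subset \mms_\ll^2$; the general case follows by the restriction/exhaustion arguments of Steps~4.1--4.2 of the proof of \autoref{Pr:ii to iii}. Fix a countable $\cap$-stable generator $\{M_n\}$ of the Borel $\sigma$-algebra of $\supp\mu_0$ and decompose $\mu_0 = \sum_{i=1}^{2^n} \lambda_i\, \mu_0^i$ with $\mu_0^i = \varrho_0^i\,\meas$ concentrated on the $i$-th atom $L_i$ of the generated partition. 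Applying either $\smash{\TMCP_p^*(K,N)}$ or $\smash{\TMCP_p^e(K,N)}$ to the strongly $p$-dualizable pair $(\mu_0^i,\delta_{x_1})$ yields plans $\smash{\bdpi^i \in \OptTGeo_{\ell_p}^\tsep(\mu_0^i,\delta_{x_1})}$ obeying the respective inequality. The key point is that by \autoref{Th:Uniqueness geodesics} --- which is available thanks to \autoref{Re:Uniq} in the $\smash{\TMCP_p^*}$ case, and thanks to \autoref{Re:From TNB to TENB} in the $\smash{\TMCP_p^e}$ case --- the unique $\smash{\bdpi \in \OptTGeo_{\ell_p}^\tsep(\mu_0,\delta_{x_1})}$ must satisfy $\bdpi = \sum_i \lambda_i\,\bdpi^i$. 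Combining \autoref{Th:Good TMCP} (or its entropic analogue via the argument of \autoref{Pr:v to iii}) for absolute continuity of intermediate slices with \autoref{Le:Mutually singular} then gives mutual singularity of $(\eval_t)_\push\bdpi^1,\dots,(\eval_t)_\push\bdpi^{2^n}$ for every $t\in(0,1)$. Restricting the resulting inequality to each product cylinder $G_{ij} = (\eval_0,\eval_1)^{-1}(L_i\times L_j)$ and varying over all generators then produces the pointwise inequality of \ref{LAAAA}.

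The main obstacle will be the handling of absolute continuity along the geodesic, which in the $\TCD$ setting was immediate from \autoref{Le:Stu} but is much more delicate under a pure measure-contraction hypothesis, cf.~\autoref{Re:Unlikely}. For \ref{LAAA}~$\Rightarrow$~\ref{LAAAA} this is resolved by applying \autoref{Th:Good TMCP} to each stratum $(\mu_0^i,\delta_{x_1})$, which supplies uniformly $\Ell^\infty$-bounded densities on $[0,1-\delta]$ for any $\delta>0$; for \ref{LAAAAA}~$\Rightarrow$~\ref{LAAAA} one needs an entropic counterpart of \autoref{Th:Good TMCP}, or alternatively to argue directly using the entropy inequality in \autoref{Def:TMCPe} combined with the uniqueness-based pasting, exactly as in \autoref{Pr:v to iii}. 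Once absolute continuity at intermediate times is secured, the rest of the argument proceeds essentially verbatim as in the $\TCD$ case.
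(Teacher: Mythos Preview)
Your proposal is correct and follows essentially the same route as the paper: pivot on the pathwise condition \ref{LAAAA}, obtain \ref{LAAA} by integration and \ref{LAAAAA} via Jensen's inequality for $\rmH_t$, and derive \ref{LAAA}$\Rightarrow$\ref{LAAAA} and \ref{LAAAAA}$\Rightarrow$\ref{LAAAA} by the partition-and-uniqueness argument of \autoref{Pr:ii to iii} and \autoref{Pr:v to iii}, with \autoref{Th:Good TMCP} plus \autoref{Th:Uniqueness geodesics} supplying the crucial absolute continuity of intermediate slices. One cosmetic slip: since $\mu_1=\delta_{x_1}$, the partition is only over $\supp\mu_0$, so the cylinders are $G_i=(\eval_0,\eval_1)^{-1}(L_i\times\{x_1\})$ rather than $G_{ij}$.
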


\begin{proof} We outline the necessary adaptations of the arguments in \autoref{Sub:Equiv TCDs} and leave the details to the reader. 

By integration, \ref{LAAAA} implies \ref{LAAA}. The implication of \ref{LAAA} to \ref{LAAAA} is argued analogously to \autoref{Pr:ii to iii}. Note that to obtain the asserted  absolute continuity of $(\eval_t)_\push\bdpi$ with respect to $\meas$ for every $t\in[0,1)$, we have combined \autoref{Th:Good TMCP} with the uniqueness \autoref{Th:Uniqueness geodesics}. The step from \ref{LAAAA} to \ref{LAAAAA} follows since
\begin{align*}
-\frac{1}{N}\log\rho_t(\gamma_t) \geq \rmH_t\Big[\!-\!\frac{1}{N}\log\rho_0(\gamma_0),\frac{K}{N}\,\tsep^2(\gamma_0,x_1)\Big]
\end{align*}
for $\bdpi$-a.e.~$\gamma\in\TGeo^\tsep(\mms)$, using that the function $\rmH_t$ defined in \eqref{Eq:GtHt} has similar convexity properties as the function $\rmG_t$ used to derive  \autoref{Pr:iii to iv}, and then arguing as in the proof of the latter result. Similarly, the implication from \ref{LAAAAA} to \ref{LAAAA} is shown analogously to the proof of \autoref{Pr:v to iii}.
\end{proof}

Evident adaptations of the previous arguments give the following.

\begin{theorem}\label{Th:Equivalence TMCP} The following statements are equivalent for every given $K\in\R$ and $N\in[1,\infty)$.
\begin{enumerate}[label=\textnormal{\textcolor{black}{(}\roman*\textcolor{black}{)}}]
\item The condition $\smash{\TMCP(K,N)}$ holds.
\item For every $\mu_0 =\rho_0\,\meas\in\scrP^\ac(\mms,\meas)$ and every $x_1\in I^+(\mu_0)$ there is a timelike $\smash{\ell_p}$-optimal geodesic plan $\smash{\bdpi\in\OptTGeo_{\ell_p}^\tsep(\mu_0,\mu_1)}$, where $\smash{\mu_1 := \delta_{x_1}}$, such that for every $t\in[0,1)$, we have $\smash{(\eval_t)_\push\bdpi = \rho_t\,\meas\in\scrP^\ac(\mms,\meas)}$, and for every $N'\geq N$, the inequality
\begin{align*}
\rho_t(\gamma_t)^{-1/N'} \geq \tau_{K,N'}^{(1-t)}(\tsep(\gamma_0,x_1))\,\rho_0(\gamma_0)^{-1/N'}
\end{align*}
holds for $\bdpi$-a.e.~$\gamma\in\TGeo^\tsep(\mms)$.
\end{enumerate}
\end{theorem}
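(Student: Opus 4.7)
The plan is to mirror the proof of \autoref{Th:Equivalence TMCP* and TMCPe}, substituting $\sigma_{K,N'}^{(1-t)}$ by $\tau_{K,N'}^{(1-t)}$ throughout. The direction from (ii) to (i) is immediate: integrate the pathwise density inequality against a plan $\bdpi$ representing the timelike proper-time parametrized $\ell_p$-geodesic, using $(\eval_0)_\push\bdpi = \mu_0$ and that $\gamma_1 = x_1$ for $\bdpi$-a.e.~$\gamma$; since $\tau_{K,N'}^{(0)} \equiv 0$, this yields the integral inequality defining $\TMCP_p(K,N)$.

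The substance lies in the implication (i) to (ii). First, \autoref{Pr:TMCP to TMCP*} gives $\TMCP_p^*(K,N)$, so the good-geodesic construction of \autoref{Th:Good TMCP} and the uniqueness of the timelike $\ell_p$-optimal geodesic plan from \autoref{Th:Uniqueness geodesics} together with \autoref{Re:From TNB to TENB} are available for every pair $(\mu_0,\delta_{x_1})$ with $\mu_0\in\scrP_\comp^\ac(\mms,\meas)$ and $x_1\in I^+(\mu_0)$. I first treat the case where $\mu_0$ is compactly supported, $\rho_0\in\Ell^\infty(\mms,\meas)$, and $\supp\mu_0\times\{x_1\}\subset\mms_\ll^2$. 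Exactly as in \autoref{Pr:ii to iii}, fix a $\cap$-stable generator $\{M_n:n\in\N\}$ of the Borel $\sigma$-field of $\mms$, form the associated disjoint atoms $L_1,\dots,L_{2^n}$, and set $\mu_0^i:=\lambda_i^{-1}\mu_0\mres L_i = \varrho_0^i\,\meas$ with $\lambda_i:=\mu_0[L_i]$. Each pair $(\mu_0^i,\delta_{x_1})$ is strongly timelike $p$-dualizable by \autoref{Re:Strong timelike}, and $\TMCP_p(K,N)$ combined with \autoref{Pr:Consistency TMCP} provides a plan $\bdpi^i\in\OptTGeo_{\ell_p}^\tsep(\mu_0^i,\delta_{x_1})$ satisfying the integrated inequality for every $N'\ge N$. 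By uniqueness, $\bdpi:=\sum_i\lambda_i\,\bdpi^i$ is the unique element of $\OptTGeo_{\ell_p}^\tsep(\mu_0,\delta_{x_1})$. The $\mu_0^i$ are pairwise mutually singular, so \autoref{Le:Mutually singular} ensures $(\eval_t)_\push\bdpi^i\perp(\eval_t)_\push\bdpi^j$ for every $t\in(0,1)$ and $i\ne j$. Writing $\varrho_t^i$ for the density of $(\eval_t)_\push\bdpi^i$, the density of $(\eval_t)_\push\bdpi$ equals $\rho_t=\sum_i\lambda_i\,\varrho_t^i$ on pairwise $\meas$-disjoint carriers. Restricting the integrated inequality to $G_i:=\eval_0^{-1}(L_i)$ yields
\begin{align*}
\int_{G_i}\rho_t(\gamma_t)^{-1/N'}\d\bdpi(\gamma) \ge \int_{G_i}\tau_{K,N'}^{(1-t)}(\tsep(\gamma_0,x_1))\,\rho_0(\gamma_0)^{-1/N'}\d\bdpi(\gamma),
\end{align*}
and arbitrariness of the atoms upgrades this to the desired $\bdpi$-almost everywhere pathwise inequality.

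To remove the restrictions on $\mu_0$, I would follow Step 4 of \autoref{Pr:ii to iii} in two stages. First, truncate by level sets $E_i := \{i-1\le\rho_0<i\}\cap I^-(x_1)$, apply the construction above to each normalized restriction, and paste the resulting plans via pairwise singularity of the $\mu_0\mres E_i$ and \autoref{Le:Mutually singular}. Second, if $\mu_0$ is not compactly supported, exhaust it by restrictions to the sets $\{\tsep(\cdot,x_1)\ge 2^{-n}\}$ and pass to the limit using tightness (\autoref{Le:Villani lemma for geodesic}), Fatou, and weak lower semicontinuity of the Rényi entropy on measures with uniformly bounded support, as in the concluding steps of \autoref{Th:Stability TMCP}. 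The main obstacle is the shared endpoint $\delta_{x_1}$: it precludes mutual singularity of $(\eval_1)_\push\bdpi^i$, so the pairwise-disjointness bookkeeping must be confined to interior times where \autoref{Le:Mutually singular} applies, and the global uniqueness from \autoref{Th:Uniqueness geodesics} must be invoked at each stage to ensure that the pasted plans are identified with the unique element of $\OptTGeo_{\ell_p}^\tsep(\mu_0,\delta_{x_1})$. The quantification over $N'\ge N$ is handled by applying $\TMCP_p(K,N')$ to each restriction, itself a consequence of $\TMCP_p(K,N)$ by \autoref{Pr:Consistency TMCP}.
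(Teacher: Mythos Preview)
Your approach mirrors the paper's, which simply records that \autoref{Th:Equivalence TMCP} follows by evident adaptations of the proof of \autoref{Th:Equivalence TMCP* and TMCPe}. However, your handling of the Dirac-endpoint obstacle is not quite right. The issue is not that mutual singularity of $(\eval_1)_\push\bdpi^i$ fails; rather, the \emph{hypothesis} of \autoref{Le:Mutually singular} requires $(\eval_r)_\push\bdpi^i\in\scrP^\ac(\mms,\meas)$ for every $r\in\{0,t,1\}$, and $(\eval_1)_\push\bdpi^i=\delta_{x_1}$ violates this regardless of which interior $t$ you target. So ``confining to interior times'' does not by itself make the lemma applicable.

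The correct fix, already used in the proof of \autoref{Le:Uniqueness Diracs}, is to apply \autoref{Le:Mutually singular} to the restricted plans $(\Restr_0^{1-\delta})_\push\bdpi^i$ for arbitrary $\delta\in(0,1)$: their time-$1$ marginals equal $(\eval_{1-\delta})_\push\bdpi^i$, which are $\meas$-absolutely continuous by \autoref{Th:Good TMCP} combined with the uniqueness in \autoref{Th:Uniqueness geodesics}. The lemma then yields $(\eval_t)_\push\bdpi^i\perp(\eval_t)_\push\bdpi^j$ for all $t\in(0,1-\delta)$, hence for all $t\in(0,1)$ by letting $\delta\downarrow 0$. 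With this adjustment your partition-and-paste argument goes through exactly as in \autoref{Pr:ii to iii}. (The aside about $\tau_{K,N'}^{(0)}\equiv 0$ is a red herring: the defining $\TMCP_p$ inequality has only one term to begin with, so nothing needs to be dropped.)
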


\appendix

\section{Smooth Lorentzian spacetimes}\label{App:Smooth}

In this appendix, we study the properties from \autoref{Def:TCD*} and \autoref{Def:TMCP} on smooth Lorentzian manifolds. In \autoref{Th:Equiv}, 
we prove the equivalence of all notions from the former to upper dimension and timelike lower Ricci curvature bounds, thereby justifying the terminology ``curvature-dimension condition'' for the inherent notions. A similar equivalence for the timelike measure-contraction property is due to \autoref{Th:TMCP Smooth}.

We follow the conventions of \cite{mccann2020}. See \cite{oneill1983} for an exhaustive account on semi-Riemannian geometry, and \cite{eckstein2017, kellsuhr2020, mccann2020,  mondinosuhr2018, suhr2018} for details about optimal transport on smooth Lorentzian spacetimes.

In this chapter, let $(\mms,\langle\cdot,\cdot\rangle)$ be a \emph{smooth Lorentzian spacetime} of dimension $\smash{n\in\N_{\geq 2}}$, that is, $\mms$ is a smooth, connected, time-oriented manifold with smooth metric tensor $\langle\cdot,\cdot\rangle$ of signature $+,-,\dots,-$. We   always assume $\mms$ to be globally hyperbolic \cite[p.~411]{oneill1983}. By \autoref{Ex:Low reg spt}, this spacetime canonically induces a measured Lorentzian structure obeying \autoref{Ass:ASS}. Hence, all our previously obtained results apply to the setting in this chapter.

\subsection{Smooth timelike curvature-dimension condition} 

For $N > n$ and arbitrary $\smash{V\in\Cont^2(\mms)}$ we define the \emph{$N$-Bakry--Émery Ricci tensor}
\begin{align}\label{Eq:Weighted Ric}
\Ric^{N,V} := \Ric + \Hess V - \frac{1}{N-n}\,\rmD V\otimes\rmD  V.
\end{align}
In the borderline case $N=n$, we conventionally assume $V$ to be constant, in which case $\smash{\Ric^{N,V} := \Ric}$. Given any  $K\in\R$, we say that ``$\smash{\Ric^{N,V}\geq K}$ in every timelike direction'' if $\Ric^{N,V}(\xi,\xi) \geq K\,\vert \xi\vert^2$ for every $x\in \mms$ and every $\xi\in T_x\mms$ with $\vert \xi\vert > 0$. 

Set $\smash{\vol^V = \rme^V\,\vol}$, and note that $\smash{\scrP^\ac(\mms,\vol^V) = \scrP^\ac(\mms,\vol)}$.

In the following main theorem of this section, the conditions in \ref{La:V} and \ref{La:VI} are understood in terms of the $N$-Rényi entropy $\smash{\scrS_N^V}$ with respect to $\smash{\vol^V}$; the same applies to \autoref{Th:TMCP Smooth} below.

\begin{theorem}\label{Th:Equiv} For every given $p\in (0,1)$, $K\in\R$, and $N\in[n,\infty)$, the subsequent state\-ments are equivalent.
\begin{enumerate}[label=\textnormal{\textcolor{black}{(}\roman*\textcolor{black}{)}}]
\item\label{La:II} $\smash{\Ric^{N,V} \geq K}$ in every timelike direction.
\item\label{La:V}  The condition $\smash{\TCD_p^*(K,N)}$ holds.
\item\label{La:VI} The condition $\smash{\wTCD_p^*(K,N)}$.
\end{enumerate} 

In particular, if any of the conditions in \ref{La:V} or  \ref{La:VI} holds for some $p\in (0,1)$, then these statements are satisfied for every $p\in (0,1)$.
\end{theorem}


The only implication that still has to be established is \ref{La:II} implying  \ref{La:VI}. 
Indeed, the equivalence of 
\ref{La:V} and \ref{La:VI} holds by 
\autoref{Th:Equivalence TCD* and TCDe} and \autoref{Re:reversed}. This also gives the equivalence of \ref{La:V} and \ref{La:VI} to $\smash{\TCD_p^e(K,N)}$, which in turn implies   \ref{La:II} by the result \cite[Thm.~8.5]{mccann2020}. 

Let us now prove the remaining implication in  \autoref{Pr:From 1 to 4}. It follows from \autoref{Th:Equivalence TCD* and TCDe}, but we prefer to give an independent proof by using the smooth tools provided by  \cite{mccann2020}.  \autoref{Pr:From 1 to 4} is indeed sufficient to derive the $\smash{\wTCD_p^*(K,N)}$ condition by an argument as for \autoref{Pr:ii to iii}, see also \cite[Thm.~7.5]{mccann2020}.

\begin{proposition}\label{Pr:From 1 to 4} Let $K\in\R$ and $N\in[n,\infty)$. Assume the inequality $\smash{\Ric^{N,V}\geq K}$ in every timelike direction. Then for every $p\in (0,1)$, every $\mu_0,\mu_1\in\smash{\scrP_\comp^\ac(\mms,\vol^V)}$ such that $\smash{\supp\mu_0\times\supp\mu_1\subset\mms_\ll^2}$ and $\smash{\rho_0,\rho_1\in\Ell^\infty(\mms,\vol^V)}$, there is a timelike proper-time parametrized $\smash{\ell_p}$-geodesic $(\mu_t)_{t\in[0,1]}$ from $\mu_0$ to $\mu_1$ and a timelike $p$-dualizing coupling $\pi\in\Pi_\ll(\mu_0,\mu_1)$ such that for every $t\in[0,1]$ and every $N'\geq N$,
\begin{align*}
\scrS_{N'}^V(\mu_t) &\leq -\int_{\mms^2} \sigma_{K,N'}^{(1-t)}(\tsep(x^0,x^1))\,\rho_0(x^0)^{-1/N'}\d\pi(x^0,x^1)\\
&\qquad\qquad -\int_{\mms^2} \sigma_{K,N'}^{(t)}(\tsep(x^0,x^1))\,\rho_1(x^1)^{-1/N'}\d\pi(x^0,x^1).
\end{align*}
\end{proposition}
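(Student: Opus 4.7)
The plan is to execute the standard smooth Monge--Ampère proof, adapted to the Lorentzian setting following \cite{mccann2020,mondinosuhr2018}. Since $(\mms,\langle\cdot,\cdot\rangle)$ is smooth, globally hyperbolic, and $\supp\mu_0\times\supp\mu_1\subset\mms_\ll^2$, \autoref{Re:Strong timelike} ensures that $(\mu_0,\mu_1)$ is strongly timelike $p$-dualizable. The smooth Lorentzian optimal transport theory then yields a $c$-concave Kantorovich potential and a Borel map $T\colon\mms\to\mms$ such that the unique $\ell_p$-optimal coupling is $\pi=(\Id,T)_\push\mu_0\in\Pi_\ll(\mu_0,\mu_1)$, and such that for $\mu_0$-a.e.\ $x$ there is a unique proper-time parametrized timelike geodesic $s\mapsto F_s(x)$ from $x$ to $T(x)$. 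Setting $\mu_t:=(F_t)_\push\mu_0$, a standard argument using the reverse triangle inequality \eqref{Eq:Reverse tau} shows that $(\mu_t)_{t\in[0,1]}$ is a timelike proper-time parametrized $\ell_p$-geodesic connecting $\mu_0$ to $\mu_1$.

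Second, I would record the density of $\mu_t$ via the Jacobian formula. Writing $\mu_t=\rho_t\,\vol^V$ and defining the weighted Jacobian
\begin{align*}
\tilde J_t(x) := \rme^{V(F_t(x))-V(x)}\,\vert\!\det\rmD F_t(x)\vert,
\end{align*}
the change-of-variables formula gives $\rho_t(F_t(x))=\rho_0(x)/\tilde J_t(x)$ for $\mu_0$-a.e.~$x$, so that $\rho_t(F_t(x))^{-1/N'}=\tilde J_t(x)^{1/N'}\,\rho_0(x)^{-1/N'}$ with the convention $\infty\cdot 0=0$.

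The analytic heart of the argument, and the expected main obstacle, is the pointwise Jacobian comparison inequality
\begin{align*}
\tilde J_t(x)^{1/N'}\geq \tau_{K,N'}^{(1-t)}(\vartheta) + \tau_{K,N'}^{(t)}(\vartheta)\,\tilde J_1(x)^{1/N'}, \qquad \vartheta:=\tsep(x,T(x)),
\end{align*}
for all $t\in[0,1]$, all $N'\geq N$, and $\mu_0$-a.e.\ $x$. Following \cite{mccann2020}, this is obtained from a Riccati-type ODE analysis along the geodesic $s\mapsto F_s(x)$: tracing the Riccati equation satisfied by the second fundamental form of the timelike equidistant hypersurfaces produces an inequality for the logarithmic derivative $u(s) := \partial_s \log\tilde J_s(x)^{1/N'}$, and the extra term $-\frac{1}{N-n}\rmD V\otimes\rmD V$ in the $N$-Bakry--Émery Ricci tensor \eqref{Eq:Weighted Ric} precisely compensates the Cauchy--Schwarz defect coming from the trace-free part, yielding $u' + u^2 \leq -K/N'$ along the geodesic. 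A standard ODE comparison against solutions of $f''+(K/N')f=0$ with the correct boundary behavior then produces the distortion coefficients $\tau_{K,N'}^{(\cdot)}(\vartheta)$. The delicate point, on top of the comparison lemma itself, is that the analysis takes place away from the timelike cut locus; however, the hypothesis $\supp\mu_0\times\supp\mu_1\subset\mms_\ll^2$, global hyperbolicity, and the boundedness of $\rho_0,\rho_1$ guarantee that $x$ lies outside the cut locus of $T(x)$ for $\mu_0$-a.e.\ $x$, so this Riccati analysis is valid $\mu_0$-a.e.

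Finally, combining the density identity with the Jacobian comparison and integrating against $\mu_0$ gives, for every $t\in[0,1]$ and every $N'\geq N$,
\begin{align*}
\scrS^V_{N'}(\mu_t) &= -\int_\mms \rho_t(F_t(x))^{-1/N'}\d\mu_0(x)\\
&\leq -\int_\mms \tau_{K,N'}^{(1-t)}(\tsep(x,T(x)))\,\rho_0(x)^{-1/N'}\d\mu_0(x) \\
&\qquad\qquad -\int_\mms \tau_{K,N'}^{(t)}(\tsep(x,T(x)))\,\rho_1(T(x))^{-1/N'}\d\mu_0(x),
\end{align*}
which under $\pi=(\Id,T)_\push\mu_0$ rewrites as the required inequality. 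The case $N=n$ proceeds identically with $V$ constant, and for general $N'\geq N$ one either re-runs the Riccati argument verbatim with $N$ replaced by $N'$, or notes that $\Ric^{N,V}\geq K$ implies $\Ric^{N',V}\geq K$ in all timelike directions.
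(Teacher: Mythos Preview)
Your overall architecture is correct and matches the paper, but there is a genuine gap at the ``analytic heart''. The scalar Riccati inequality you write, $u' + u^2 \leq -K/N'$ for $u = \partial_s \log \tilde J_s^{1/N'}$, is exactly what the paper obtains in its Step~2 from $\Ric^{N,V}\geq K$ via Cauchy--Schwarz; but comparison with $f'' + (K\vartheta^2/N')f = 0$ yields only the $\sigma$-coefficients, i.e.
\[
\tilde J_t^{1/N'} \geq \sigma_{K,N'}^{(1-t)}(\vartheta)\,\tilde J_0^{1/N'} + \sigma_{K,N'}^{(t)}(\vartheta)\,\tilde J_1^{1/N'}.
\]
This proves $\TCD_p^*(K,N)$, not $\TCD_p(K,N)$. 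The passage from $\sigma_{K,N'}$ to $\tau_{K,N'}^{(t)} = t^{1/N'}\,\sigma_{K,N'-1}^{(t)}(\vartheta)^{1-1/N'}$ does \emph{not} follow from any ``standard ODE comparison'' applied to a single scalar quantity.

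What is missing is the tangential/transversal splitting of the Jacobian along the geodesic direction $e_1 := \dot T_t$. The paper (following \cite{sturm2006b}) factors $\jmath_t = a_t\, d_t$, where $a_t$ captures the $(1,1)$-entry of the shape operator $B_t = \dot A_t A_t^{-1}$ and $d_t$ the remaining $(n-1)\times(n-1)$ block $W_t$. Because $\rmR(e_1,\dot T_t)\dot T_t = 0$, the Riccati equation forces $\ddot a_t \leq 0$, whence $a_t \geq (1-t)a_0 + t a_1$; and for the transversal part one gets $\ddot\delta_t + (N'-1)^{-1}\dot\delta_t^2 \leq -K\vartheta^2$ with $\delta_t = \log d_t$, giving a $\sigma_{K,N'-1}$-comparison for $d_t^{1/(N'-1)}$. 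Combining the two via H\"older's inequality for the exponents $1/N'$ and $(N'-1)/N'$ produces precisely $\tau_{K,N'}$. Without this decomposition your argument stops one step short of the claimed inequality.
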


\begin{proof} \textbf{Step 1.} \textit{Setup.} We start with a short digression  about Lorentzian optimal transport from \cite{mccann2020}. Since $\mu_0$ and $\mu_1$ have compact support and $\smash{\supp\mu_0\times\supp\mu_1}$ is chronological, the pair $(\mu_0,\mu_1)$ is $p$-separated according to \cite[Def.~4.1]{mccann2020} by \cite[Lem.~4.4]{mccann2020}. Since also $\smash{\mu_0 \ll \vol}$,  $\mu_0$ and $\mu_1$ are both joined by a unique timelike proper-time parametrized $\smash{\ell_p}$-geodesic $(\mu_t)_{t\in[0,1]}$, cf.~\cite[Cor.~5.9]{mccann2020} and \autoref{Re:Compat}, and  coupled by a unique  $\smash{\ell_p}$-optimal coupling $\pi\in\Pi_\ll(\mu_0,\mu_1)$ \cite[Thm.~5.8]{mccann2020}. Additionally, these results yield the existence of some sufficiently regular vector field $X$ on $\mms$ such that, defining $T\colon [0,1]\times\mms\to \mms$ by 
\begin{align*}
T_t(x):= \exp_x t X(x),
\end{align*}
for every $t\in[0,1]$ we have
\begin{align}\label{Eq:Repr Lorentzian}
\begin{split}
\mu_t &= (T_t)_\push\mu_0,\\
\pi &= (\Id, T_1)_\push\mu_0.
\end{split}
\end{align}
Thanks to \cite[Prop.~6.1]{mccann2020}, the approximate derivative $A_t := \smash{\tilde{\rmD}T_t \colon T\mms \to (T_t)_*T\mms}$ as introduced e.g.~in \cite[Def.~3.8]{mccann2020}   exists, is invertible, and depends smoothly on $t\in[0,1]$ $\smash{\vol}$-a.e. Given any $x\in\mms$ and $t\in[0,1]$, define
\begin{align*}
\jmath_t(x) &:= \vert\! \det A_t(x)\vert\,\rme^{V(T_t(x))},\\
\varphi_t(x) &:= \log \jmath_t(x) = \log \vert\!\det A_t(x)\vert + V(T_t(x)),\\
B_t(x) &:= \dot{A}_t(x)\,A_t^{-1}(x).
\end{align*}
The function $\jmath_t$ should be thought of as a Jacobian, cf.~Step 4 below.

\textbf{Step 2.} \textit{Invoking the timelike Ricci curvature bound.} We henceforth suppose that $N'\geq N > n$; in the case $N'\geq N=n$ and  recalling our convention of $V$ being constant in this situation, the arguments to follow, e.g.~the inequalities in \eqref{Eq:In part}, are somewhat easier. To relax notation, the dependency of the point $x\in\mms$ at which the following computations may and will be  performed is not made explicit.

By \cite[Prop.~6.1]{mccann2020}, we know the formulas
\begin{align}\label{Eq:The formulas}
\begin{split}
\dot{\varphi}_t &= \tr B_t + \big\langle\nabla V,\dot{T}_t\big\rangle\circ T_t,\\
\ddot{\varphi}_t &= - \Ric(\dot{T}_t,\dot{T}_t) - \Hess V(\dot{T}_t,\dot{T}_t) - \tr B_t^2.
\end{split}
\end{align}
Since $(\tr B_t)^2 \leq n\,\tr B_t^2$ by the Cauchy--Schwarz inequality, 
\begin{align*}
\frac{1}{N}\,\dot{\varphi}_t^2 &\leq \frac{1+\varepsilon}{N}\,(\tr B_t)^2 + \frac{1+\varepsilon^{-1}}{N}\,\big\langle\nabla  V,\dot{T}_t\big\rangle^2\circ T_t\\
&\leq \frac{n(1+\varepsilon)}{N}\,\tr B_t^2 + \frac{1+\varepsilon^{-1}}{N}\,\big\langle\nabla V,\dot{T}_t\big\rangle^2\circ T_t\\
&= \tr B_t^2 + \frac{1}{N-n}\,\big\langle \nabla V,\dot{T}_t\big\rangle^2\circ T_t
\end{align*}
by choosing $\varepsilon := (N-n)/n > 0$. This entails
\begin{align}\label{Eq:In part}
\ddot{\varphi}_t + \frac{1}{N'}\,\dot{\varphi}_t^2 \leq \ddot{\varphi}_t + \frac{1}{N}\,\dot{\varphi}_t^2 &\leq -\Ric^{N,V}(\dot{T}_t,\dot{T}_t) \leq -K\,\vartheta^2,
\end{align}
where $\vartheta := \vert\dot{T}_t\vert = \tsep(\cdot,T_1)>0$ by geodesy \cite[Thm.~6.4]{mccann2020}. This easily implies
\begin{align*}
\frac{\partial^2}{\partial t^2} \jmath_t^{1/N'}\leq -\frac{K\vartheta^2}{N'}\,\jmath_t^{1/N'}
\end{align*}
which in turn entails
\begin{align*}
\jmath_t^{1/N'} \geq \sigma_{K,N'}^{(1-t)}(\vartheta)\,\jmath_0^{1/N'} + \sigma_{K,N'}^{(t)}(\vartheta)\,\jmath_1^{1/N'}.
\end{align*}

\textbf{Step 3.} \textit{Conclusion.} Writing $\smash{\mu_t = \rho_t\,\vol^V = \rho_t\,\rme^V\,\vol}$, $t\in[0,1]$, the change of variables formula with respect to $\vol$ gives $\smash{(\rho_t\circ T_t)\,\jmath_t = \rho_0\,\rme^V}$ $\vol$-a.e. Combining this with the previous inequality and the identity  $\vartheta = \tsep(\cdot,T_1)$ implies
\begin{align*}
\scrS_{N'}^V(\mu_t) &= -\int_\mms \rho_t^{1-1/N'}\d\vol^V\\
&= -\int_\mms (\rho_t\circ T_t)^{1-1/N'}\,\jmath_t\d\vol\\
&= -\int_\mms \rho_0^{1-1/N'}\,\rme^{-V/N'}\,\jmath_t^{1/N'}\d\vol^V\\
&\leq -\int_\mms \sigma_{K,N'}^{(1-t)}(\tsep\circ(\Id,T_1))\,\jmath_0^{1/N'}\,\rho_0^{1-1/N'}\,\rme^{-V/N'}\d\vol^V\\
&\qquad\qquad -\int_\mms  \sigma_{K,N'}^{(t)}(\tsep\circ(\Id,T_1))\,\jmath_1^{1/N'}\,\rho_0^{1-1/N'}\,\rme^{-V/N'}\d\vol^V\\
&= -\int_\mms \sigma_{K,N'}^{(1-t)}(\tsep\circ(\Id,T_1))\,\rho_0^{-1/N'}\d\mu_0\\
&\qquad\qquad -\int_\mms\sigma_{K,N'}^{(t)}(\tsep\circ(\Id,T_1))\,(\rho_1\circ T_1)^{-1/N'}\d\mu_0
\end{align*}
for every $t\in[0,1]$; employing \eqref{Eq:Repr Lorentzian} gives the desired result.
\end{proof}

\begin{remark} We made the preceding assumption $\smash{\rho_0,\rho_1\in\Ell^\infty(\mms,\vol^V)}$ only for cosmetic reasons in view of \autoref{Pr:ii to iii} and possible adaptations of \autoref{Pr:From 1 to 4} to other settings. It is not used in the above proof.
\end{remark}

\subsection{Smooth timelike measure-contraction property} With all results shown thus far, it is now straightforward to relate the timelike measure-contraction properties from \autoref{Def:TMCP} to timelike lower Ricci curvature bounds.

Recall our convention of $V$ being constant when $N=n$ in \eqref{Eq:Weighted Ric}.

\begin{theorem}\label{Th:TMCP Smooth} For every $K\in\R$, the subsequent statements are equivalent.
\begin{enumerate}[label=\textnormal{\textcolor{black}{(}\roman*\textcolor{black}{)}}]
\item\label{La:Die Eins} $\Ric\geq K$ in every timelike direction.
\item\label{La:Die Drei} The condition $\smash{\TMCP^*(K,n)}$ holds.
\end{enumerate}
\end{theorem}

\begin{proof} Assuming \ref{La:Die Eins}, we know from \autoref{Th:Equiv} that $\smash{\TCD_p^*(K,n)}$ holds, which implies \ref{La:Die Drei} thanks to \autoref{Pr:TMCP to TCD}. Furthermore, item \ref{La:Die Drei} implies $\smash{\TMCP^e(K,n)}$ according to \autoref{Th:Equivalence TMCP* and TMCPe}, whence \cite[Thm.~A.1]{cavalletti2020}  entails \ref{La:Die Eins}.
\end{proof}

\begin{remark} For general $K\in\R$ and $N\in[1,\infty)$, even for smooth spacetimes the condition $\smash{\TMCP^*(K,N)}$ is strictly weaker than $\smash{\TCD^*(K,N)}$, cf.~\cite[Rem.~A.3]{cavalletti2020}, \autoref{Th:Equivalence TCD* and TCDe}, and \autoref{Th:Equivalence TMCP* and TMCPe}.
\end{remark}

\section{Geodesics of probability measures}\label{App:B}

Here, we develop the theory of $\smash{\ell_p}$-geodesics in $\scrP(\mms)$, $p\in(0,1]$. As indicated in \autoref{Sub:GEO} and \autoref{Sub:Geodesics}, it is slightly different from the approaches taken in \cite{cavalletti2020, mccann2020}.

\subsection{Measurability}\label{Sub:Measura} For evident technical reasons, we first discuss measurability of the sets $\Geo(\mms)$ and $\TGeo(\mms)$ as introduced in \autoref{Sub:GEO}.

As usual, we will endow the space $\Cont([0,1];\mms)$ with the topology induced by the uniform distance $\met_\infty$ coming from the background metric $\met$.

\begin{lemma}\label{Le:Borel} $\smash{\Geo(\mms)}$ is closed,  $\smash{\TGeo(\mms)}$ is a Borel subset of $\Cont([0,1];\mms)$.
\end{lemma}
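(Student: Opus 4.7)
The plan is to establish the two assertions in turn, exploiting the regularity properties guaranteed by \autoref{Ass:ASS}.

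\textbf{Closedness of $\Geo(\mms)$.} Let $(\eta_n)_{n\in\N}$ be a sequence in $\Geo(\mms)$ converging uniformly to $\eta\in\Cont([0,1];\mms)$; I need to verify that $\eta$ is a (constant $\met$-speed) causal curve and that $\Len_\tsep(\eta)=\tsep(\eta_0,\eta_1)$. Each $\eta_n$ has $\met$-speed $v_n:=\met(\eta_{n,0},\eta_{n,1})$, and by continuity of $\met$ together with uniform convergence one obtains $\met(\eta_s,\eta_t)=\lim_n v_n\,|t-s|=\met(\eta_0,\eta_1)\,|t-s|$ for every $s,t\in[0,1]$, so $\eta$ is  parametrized by constant $\met$-speed. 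Causality is preserved because $\eta_{n,s}\leq \eta_{n,t}$ for $s<t$ and causal closedness of $\leq$ from \autoref{Ass:ASS} yield $\eta_s\leq\eta_t$. Finally, by the reverse triangle inequality \eqref{Eq:Reverse tau}, $\Len_\tsep(\eta_n)\leq \tsep(\eta_{n,0},\eta_{n,1})$ with equality since $\eta_n\in\Geo(\mms)$; passing to the limit, continuity of $\tsep$ (\autoref{Re:TRRE}) gives $\tsep(\eta_{n,0},\eta_{n,1})\to\tsep(\eta_0,\eta_1)$, and upper semicontinuity of $\Len_\tsep$ along uniform limits of causal curves on locally causally closed, strongly causal, localizable spaces (the second item of \autoref{Re:Conse}) provides $\limsup_n\Len_\tsep(\eta_n)\leq \Len_\tsep(\eta)$. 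Combined with the general  bound $\Len_\tsep(\eta)\leq \tsep(\eta_0,\eta_1)$ from the reverse triangle inequality, we conclude $\Len_\tsep(\eta)=\tsep(\eta_0,\eta_1)$, whence $\eta\in\Geo(\mms)$.

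\textbf{Borel character of $\TGeo(\mms)$.} Under regularity, the dichotomy from the third item of \autoref{Re:Conse} asserts that a geodesic is either timelike or null, and that $\gamma\in\TGeo(\mms)$ is equivalent to $\tsep(\gamma_0,\gamma_1)>0$. Hence
\begin{align*}
\TGeo(\mms)=\Geo(\mms)\cap (\eval_0,\eval_1)^{-1}\{(x,y)\in\mms^2 : \tsep(x,y)>0\}.
\end{align*}
The evaluation maps $\eval_0,\eval_1\colon\Cont([0,1];\mms)\to\mms$ are continuous, $\tsep$ is continuous by \autoref{Re:TRRE}, and the set $\{\tsep>0\}$ is open in $\mms^2$. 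Therefore $\TGeo(\mms)$ is the intersection of a closed set with an open set in $\Cont([0,1];\mms)$, hence a Borel (in fact $F_\sigma$) subset.

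The only genuinely nontrivial input is the upper semicontinuity of $\Len_\tsep$ along uniform limits of causal curves from \cite[Prop.~3.17]{kunzinger2018}, which  is supplied by our localizability and local causal closedness assumptions; everything else reduces to elementary continuity and closedness properties of $\tsep$ and $\leq$.
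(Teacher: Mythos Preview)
Your treatment of the Borel character of $\TGeo(\mms)$ is correct and essentially identical to the paper's: both write $\TGeo(\mms)=\Geo(\mms)\cap(\eval_0,\eval_1)^{-1}(\{\tsep>0\})$ using the regularity dichotomy from \autoref{Re:Conse}.

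For the closedness of $\Geo(\mms)$, however, there is a gap in your constant-speed step. You set $v_n:=\met(\eta_{n,0},\eta_{n,1})$ and assert $\met(\eta_{n,s},\eta_{n,t})=v_n\,|t-s|$. This conflates two different things: ``constant $\met$-speed'' in the paper's sense means $\Len_\met(\eta_n\vert_{[s,t]})=v_n\,|t-s|$ with $v_n=\Len_\met(\eta_n)$, not that $\eta_n$ is a $\met$-geodesic with $\met(\eta_{n,s},\eta_{n,t})=v_n\,|t-s|$. Causal curves need not be geodesics of the auxiliary metric $\met$, so the identity you use is false in general. Worse, even with the correct notion, constant $\met$-speed does not pass to uniform limits: a sequence of constant-speed curves can lose $\met$-length in the limit, so from $\met(\eta_{n,s},\eta_{n,t})\leq v_n|t-s|$ you only obtain that $\eta$ is Lipschitz, not that it has constant speed. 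Since the paper's definition of ``causal curve'' (\autoref{Sub:Length curves}) includes the constant-speed parametrization, you have not shown $\eta\in\Geo(\mms)$.

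The paper sidesteps this by first confining the sequence to the compact diamond $J(C_0,C_1)$ via $\scrK$-global hyperbolicity and then invoking the limit curve theorem \cite[Thm.~3.7]{kunzinger2018}, which packages the Arzel\`a--Ascoli and reparametrization issues and delivers a causal limit directly. Your argument for maximality via upper semicontinuity of $\Len_\tsep$ and continuity of $\tsep$ is fine and is essentially what happens once the limit is known to be causal; the missing ingredient is precisely the parametrization control that the limit curve theorem supplies.
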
 

\begin{proof} The sets of initial and final points, respectively, of any $\met_\infty$-conver\-gent sequence $\smash{(\gamma^k)_{k\in\N}}$ in $\Geo(\mms)$ are contained in compact sets, say $C_0,C_1\subset\mms$. Consequently, $\gamma^k([0,1])\subset J(C_0,C_1)$ for every $k\in\N$. The claim follows from $\scrK$-global hyperbolicity and the limit curve theorem \cite[Thm.~3.7]{kunzinger2018}.

For the second claim, note that $(\eval_0,\eval_1)^{-1}(\tsep((0,\infty)))$ is open. Regularity implies $\smash{\TGeo(\mms) = \Geo(\mms)\cap (\eval_0,\eval_1)^{-1}(\tsep((0,\infty)))}$, which terminates the proof.
\end{proof}

\begin{remark} One can easily see that there are more flexible ways, which do not rely on regularity for instance, to show that $\Geo(\mms)$ and $\TGeo(\mms)$ are Borel subsets of $\Cont([0,1];\mms)$, using the identities
\begin{align*}
\Geo(\mms) &= \{\gamma\in\Geo(\mms) : \Lip\,\gamma\leq n\textnormal{ for some }n\in\N\},\\
\TGeo(\mms) &= \{\gamma\in\Geo(\mms) : \textnormal{for every }s,t\in[0,1]\cap \Q\textnormal{ with } s<t\\
&\qquad\qquad \textnormal{there exists }\delta\in(0,1]\cap\Q \textnormal{ with }\tsep(\gamma_s,\gamma_t)\geq \delta\}.
\end{align*}
\end{remark}

\subsection{Causal and timelike geodesics minimizing a Lagrangian action} From now on, in addition to \autoref{Ass:ASS}, as in \autoref{Sub:Geodesics} it will not be restrictive to assume the compactness of $\mms$ for notational convenience. More precisely, $\mms$ is typically understood as a causal diamond between two compact subsets of an ambient space satisfying \autoref{Ass:ASS}.

We first recall parts of \autoref{Def:Lorentzian geodesic}, defining
\begin{align*}
\OptGeo_{\ell_p}(\mu_0,\mu_1) &:= \{\bdpi\in\scrP(\Geo(\mms)) : (\eval_0,\eval_1)_\push\bdpi\in\Pi_\leq(\mu_0,\mu_1)\\
&\qquad\qquad \textnormal{is } \ell_p\textnormal{-optimal} \}.
\end{align*}

\begin{definition}\label{Def:Lorentzian geodesic II} We term a weakly continuous path $(\mu_t)_{t\in[0,1]}$ in $\scrP(\mms)$ a \emph{causal $\ell_p$-geodesic} if it is represented by some $\smash{\bdpi\in\OptGeo_{\ell_p}(\mu_0,\mu_1)}$.
\end{definition}

In \autoref{Pr:Max}, we examine causal $\smash{\ell_p}$-geodesics to precisely minimize some \emph{Lagrangian action} according to the general theory developed in \cite[Ch.~7]{villani2009}. (The \emph{minimization} theory from \cite{villani2009} can easily be linked to our present Lorentzian setting which \emph{maximizes} costs by flipping signs.) Analogously to the metric Wasserstein situation, cf.~\cite[Thm.~2.10, Rem. 2.14]{ambrosiogigli2013} and \cite[Thm.~6.18]{villani2009}, this is an instance of informally saying that $\scrP(\mms)$ is a Lorentzian geodesic space with respect to $\smash{\ell_p}$ if and only if $\mms$ is so with respect to $\tsep$.

We consider a family $\boldsymbol{\scrA}$ of functionals $\scrA^{s,t}\colon\Cont([s,t];\mms)\to(-\infty,\infty]$, $s,t\in [0,1]$ with $s<t$, defined by
\begin{align*}
\scrA^{s,t}(\gamma) := \begin{cases} - \Len_\tsep(\gamma)^p & \textnormal{if }\gamma\textnormal{ is causal in the sense of \autoref{Sub:Length curves}},\\
\infty & \textnormal{otherwise}.
\end{cases}
\end{align*}
In particular, by non-total imprisonment there exists $c>0$ such that $\scrA^{s,t}(\gamma) < \infty$ only on curves $\gamma\colon [s,t]\to\mms$ with $\Lip\,\gamma \leq c$. Analogously, we define a family $\boldsymbol{\rmc}$ of cost functions $\smash{\rmc^{s,t}\colon \mms^2\to (-\infty,\infty]}$, $s,t\in[0,1]$ with $s<t$, by
\begin{align*}
\rmc^{s,t}(x,y) &:= \begin{cases} - \tsep(x,y)^p & \textnormal{if }x\leq y,\\
\infty & \textnormal{otherwise}.
\end{cases}
\end{align*}

\begin{lemma}\label{Le:Coerc Lagr} The family $\boldsymbol{\scrA}$ together with $\boldsymbol{\rmc}$ is a coercive Lagrangian action according to \textnormal{\cite[Def.~7.11, Def.~7.13]{villani2009}}.
\end{lemma}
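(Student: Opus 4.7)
The claim is that the family $\boldsymbol{\scrA}$, together with $\boldsymbol{\rmc}$, satisfies the axioms of Villani's coercive Lagrangian action. Concretely, I need to verify three things: (a) lower semicontinuity of each $\scrA^{s,t}$ under uniform convergence; (b) compatibility with $\boldsymbol{\rmc}$, namely $\rmc^{s,t}(x,y)=\inf\{\scrA^{s,t}(\gamma):\gamma_s=x,\,\gamma_t=y\}$, with the infimum attained whenever finite; and (c) the coercivity property: for every pair of compact sets $K_0,K_1\subset\mms$, the collection of minimizers with $\gamma_s\in K_0$ and $\gamma_t\in K_1$ is compact in $\Cont([s,t];\mms)$.

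First I would tackle (a). Let $(\gamma^k)_{k\in\N}\subset \Cont([s,t];\mms)$ converge uniformly to $\gamma$ with $\liminf_k \scrA^{s,t}(\gamma^k)<\infty$; after extracting a subsequence we may assume each $\gamma^k$ is causal in the sense of \autoref{Sub:Length curves}. Because $\leq$ is closed under \autoref{Ass:ASS} and each $\gamma^k$ is reparametrized to constant $\met$-speed, the uniform limit $\gamma$ is causal. Upper semicontinuity of $\Len_\tsep$ along uniformly convergent sequences of causal curves, which holds under localizability, local causal closedness, and strong causality and is thus available by \autoref{Ass:ASS}, cf.~\autoref{Re:Conse}, yields $\Len_\tsep(\gamma)\geq \limsup_k \Len_\tsep(\gamma^k)$. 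Since $x\mapsto -x^p$ is nonincreasing on $[0,\infty)$, this gives $\scrA^{s,t}(\gamma)\leq \liminf_k \scrA^{s,t}(\gamma^k)$.

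Item (b) is essentially a restatement of our structural assumptions. For $x\leq y$, regularity combined with the length-space identity $\tsep(x,y)=\sup\Len_\tsep$ over all causal curves connecting $x$ to $y$, recorded in \autoref{Re:TRRE}, gives $\inf_\gamma(-\Len_\tsep(\gamma)^p)=-\tsep(x,y)^p=\rmc^{s,t}(x,y)$; the geodesy assumption from \autoref{Def:Geod base space} furnishes a maximizing geodesic, which after constant-$\met$-speed reparametrization on $[s,t]$ attains the infimum. When $x\not\leq y$, both sides equal $\infty$ by definition, so the identity is trivial.

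The main obstacle is (c), the coercivity. Any minimizer $\gamma\in \Cont([s,t];\mms)$ with $\gamma_s\in K_0$ and $\gamma_t\in K_1$ is a causal geodesic taking values in the causal diamond $J(K_0,K_1)$, which is compact by $\scrK$-global hyperbolicity. Non-total imprisonment then provides a constant $c>0$, depending only on $K_0$ and $K_1$, bounding the $\met$-arclength of any causal curve with image in $J(K_0,K_1)$; since each such $\gamma$ is parametrized to constant $\met$-speed, its Lipschitz constant is bounded by $c/(t-s)$ uniformly. Arzelà--Ascoli gives relative compactness of the family of minimizers in $\Cont([s,t];\mms)$, and closedness follows from the lower semicontinuity proved in (a) and the closedness of $\Geo(\mms)$ established in \autoref{Le:Borel} (applied after the obvious affine rescaling to $[0,1]$). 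The most delicate point in the whole argument is ensuring that the upper semicontinuity of $\Len_\tsep$ from \cite[Prop.~3.17]{kunzinger2018} is compatible with our constant-$\met$-speed convention for causal curves; once this is verified via \autoref{Re:Conse} and the reparametrization invariance in \cite[Lem.~2.27, Lem.~2.28]{kunzinger2018}, the rest of the proof is routine.
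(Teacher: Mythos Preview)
Your approach and ingredients are the same as the paper's: lower semicontinuity via causal closedness and upper semicontinuity of $\Len_\tsep$ \cite[Prop.~3.17]{kunzinger2018}; the cost--action identity and attainment via the Lorentzian length/geodesic structure; and coercivity via non-total imprisonment, $\scrK$-global hyperbolicity, and Arzel\`a--Ascoli.

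However, your list of ``three things to verify'' misses two axioms that the paper explicitly checks. First, property~(i) of \cite[Def.~7.11]{villani2009}, the additivity-under-restriction axiom for $\boldsymbol{\scrA}$: the paper derives this from additivity of $\Len_\tsep$ \cite[Lem.~2.25]{kunzinger2018} together with the observation that a curve fails to be causal if and only if some restriction does. Second, property~(i) of \cite[Def.~7.13]{villani2009}, uniform lower boundedness of $\boldsymbol{\scrA}$, which the paper obtains from boundedness of $\tsep$ on the compact space $\mms$. You also do not mention lower semicontinuity of $\rmc^{s,t}$ itself (immediate from causal closedness and continuity of $\tsep$), which the paper records alongside that of $\scrA^{s,t}$. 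None of these is deep, but your plan as written does not account for them.
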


\begin{proof} We only outline the argument. For every $s,t\in[0,1]$ with $s< t$, $\scrA^{s,t}$ and $\rmc^{s,t}$ are lower semicontinuous (with respect to the uniform topology and the natural one on $\mms^2$, respectively) by causal closedness, \cite[Prop.~3.17, Thm.~3.26]{kunzinger2018}, and continuity of $\tsep$. Property (i) in \cite[Def.~7.11]{villani2009} follows from additivity of $\Len_\tsep$ \cite[Lem.~2.25]{kunzinger2018} and from $\gamma$ failing to be causal if and only if this property fails on some subinterval of its domain of definition. The properties (ii) and (iii) in \cite[Def.~7.11]{villani2009} follow from $\scrX$ being a Lorentzian geodesic space. Boundedness of $\tsep$ on $\mms$ ensures uniform lower boundedness of $\boldsymbol{\scrA}$ according to (i) in \cite[Def.~7.13]{villani2009}. The compactness property (ii) from \cite[Def.~7.13]{villani2009} follows from non-total imprisonment, compactness of $\mms$, and Arzelà--Ascoli's theorem, and by existence of causal geodesics implicit in the definition of a Lorentzian geodesic space, cf.~\autoref{Def:Geod base space}.
\end{proof}

The following follows directly from \autoref{Le:Coerc Lagr}, cf.~\cite[Thm.~7.21]{villani2009}; for further consequences we do not need, we refer to \cite[Prop.~7.16, Thm.~7.30]{villani2009}.

Given $s,t\in [0,1]$ with $s< t$ as well as $\mu,\nu\in\scrP(\mms)$, set
\begin{align*}
\rmC^{s,t}(\mu,\nu) := \inf\!\Big\lbrace\!\int_{\mms^2}\rmc^{s,t}(x,y) \d\pi(x,y) : \pi\in\Pi(\mu,\nu)\Big\rbrace.
\end{align*}

\begin{proposition}\label{Pr:Max} For a weakly continuous path $(\mu_t)_{t\in[0,1]}$ in $\scrP(\mms)$, the following properties are equivalent.
\begin{enumerate}[label=\textnormal{(\roman*)}]
\item The curve $(\mu_t)_{t\in[0,1]}$ is a causal $\smash{\ell_p}$-geodesic according to \autoref{Def:Lorentzian geodesic II}.
\item For every $r,s,t\in[0,1]$ with $r< s < t$,
\begin{align*}
\rmC^{r,s}(\mu_r,\mu_s) + \rmC^{s,t}(\mu_s,\mu_t) = \rmC^{r,t}(\mu_r,\mu_t).
\end{align*}
\item For every $s,t\in[0,1]$ with $s<t$, the curve $(\mu_r)_{t\in[s,t]}$ is a minimizer of the following coercive action $\smash{\boldsymbol{\mathrm{A}}^{s,t}}$ defined by
\begin{align*}
\boldsymbol{\mathrm{A}}^{s,t}(\nu) := \sup\{ \rmC^{t_0,t_1}(\nu_{t_0},\nu_{t_1}) + \rmC^{t_1,t_2}(\nu_{t_1},\nu_{t_2}) + \dots + \rmC^{t_{n-1},t_n}(\nu_{t_{n-1}},\nu_{t_n})\}
\end{align*}
for curves $\nu := (\nu_r)_{r\in[s,t]}$ in $\scrP(\mms)$, where the supremum is taken over all $n\in\N$ and all $t_0,\dots,t_n\in[s,t]$ with $t_i < t_{i+1}$ for every $i\in\{0,\dots,n-1\}$.
\end{enumerate}

Given any $\mu_0,\mu_1\in\scrP(\mms)$ with $\Pi_\leq(\mu_0,\mu_1)\neq \emptyset$, there exists at least one curve $(\mu_t)_{t\in[0,1]}$ satisfying either of the above properties. 

Finally, if $C_0,C_1\subset\scrP(\mms)$ are weakly compact subsets with $\smash{\rmC^{0,1}(C_0\times C_1)\subset\R}$, then the set $\smash{\OptGeo_{\ell_p}(C_0\times C_1)}$ is weakly compact. 
\end{proposition}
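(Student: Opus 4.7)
The plan is to deduce all three assertions from the general theory of coercive Lagrangian actions developed in \cite[Ch.~7]{villani2009}, via \autoref{Le:Coerc Lagr}. First I would observe that a measure $\bdpi\in\scrP(\Cont([0,1];\mms))$ minimizing $\int \scrA^{0,1}\d\bdpi$ subject to fixed marginals $(\eval_0,\eval_1)_\push\bdpi = \pi$ is forced to be concentrated on causal curves (by finiteness of the integral) of maximal $\tsep$-length $\tsep(\gamma_0,\gamma_1)$, i.e.~on $\Geo(\mms)$; dually, $\int\rmc^{0,1}\d\pi$ is finite only when $\pi\in\Pi_\leq(\mu_0,\mu_1)$, and $-\int\rmc^{0,1}\d\pi = \Vert\tsep\Vert_{\Ell^p(\mms^2,\pi)}^p$. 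Hence a $\rmC^{0,1}$-minimizing $\pi$ is exactly an $\smash{\ell_p}$-optimal coupling, and a causal $\smash{\ell_p}$-geodesic in the sense of \autoref{Def:Lorentzian geodesic II} coincides with an action-minimizing curve induced by $(\boldsymbol{\scrA},\boldsymbol{\rmc})$ in the sense of \cite[Def.~7.20]{villani2009}.

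With this identification, the equivalence of the three properties is an immediate consequence of \cite[Thm.~7.21]{villani2009}, applied to the coercive Lagrangian action $(\boldsymbol{\scrA},\boldsymbol{\rmc})$ supplied by \autoref{Le:Coerc Lagr}. Existence then follows from the same source: under the hypothesis $\Pi_\leq(\mu_0,\mu_1)\neq\emptyset$ an $\smash{\ell_p}$-optimal coupling $\pi\in\Pi_\leq(\mu_0,\mu_1)$ exists by \cite[Prop.~2.3]{cavalletti2020}, and geodesy of $\scrX$ together with closedness of $\Geo(\mms)$ (\autoref{Le:Borel}) allows one to lift $\pi$ to a plan $\bdpi\in\scrP(\Geo(\mms))$ via a standard measurable selection of endpoint-joining geodesics, since the multifunction $(x,y)\mapsto\{\gamma\in\Geo(\mms) : \gamma_0=x,\,\gamma_1=y\}$ has closed graph by \autoref{Le:Borel}. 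The assignment $\mu_t := (\eval_t)_\push\bdpi$ then produces the sought curve, weak continuity being automatic from $\met$-continuity of each curve in $\supp\bdpi$.

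The main technical work lies in the final compactness statement, and I would split it into tightness plus closedness. For tightness of $\OptGeo_{\ell_p}(C_0\times C_1)$: by weak compactness and hence tightness (\autoref{Th:ZWE}) of $C_0$ and $C_1$, one finds a compact set $K_0,K_1\subset\mms$ containing essentially all of $\supp C_0$ and $\supp C_1$; by $\scrK$-global hyperbolicity, $J(K_0,K_1)$ is compact, and non-total imprisonment then yields a uniform Lipschitz constant $c>0$ on all causal curves with endpoints in $K_0\times K_1$. Hence an $\varepsilon$-tight family of plans is concentrated on the $\met_\infty$-compact (by Arzelà--Ascoli) set of $c$-Lipschitz curves valued in $J(K_0,K_1)$, giving tightness of $\OptGeo_{\ell_p}(C_0\times C_1)$ in $\scrP(\Cont([0,1];\mms))$ and hence, via Prokhorov (\autoref{Th:ZWE}), relative weak compactness.

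For closedness, given a weakly convergent sequence $\bdpi_n\to\bdpi_\infty$ in $\OptGeo_{\ell_p}(C_0\times C_1)$, weak compactness of $C_0$ and $C_1$ together with closedness of $\Geo(\mms)$ (via the Portmanteau theorem, \autoref{Th:EIN}) imply $(\eval_0)_\push\bdpi_\infty\in C_0$, $(\eval_1)_\push\bdpi_\infty\in C_1$, and $\bdpi_\infty\in\scrP(\Geo(\mms))$; $\ell_p$-optimality of $(\eval_0,\eval_1)_\push\bdpi_\infty$ passes to the limit thanks to stability under weak convergence of marginals (\cite[Thm.~2.14]{cavalletti2020}), using the finiteness hypothesis $\rmC^{0,1}(C_0\times C_1)\subset\R$ to rule out degeneracies. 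The principal obstacle I anticipate is bookkeeping with the sign convention — verifying carefully that Villani's minimization framework applies verbatim once ``minimizing $\scrA$'' is translated into ``maximizing the $\smash{\ell_p}$-cost'' — and ensuring uniformity in the Lipschitz bounds transfers correctly from the classes of endpoints $C_0,C_1$ to the path level.
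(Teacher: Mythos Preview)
Your approach is essentially the same as the paper's: both deduce the proposition directly from \autoref{Le:Coerc Lagr} by invoking \cite[Thm.~7.21]{villani2009} for the equivalence, with existence and compactness coming from the general machinery in \cite[Prop.~7.16, Thm.~7.30]{villani2009}. You spell out the existence (via measurable selection) and compactness (via tightness and closedness) arguments in more detail than the paper does, but the route is identical; note also that the paper assumes $\mms$ compact in this appendix, which simplifies your tightness step considerably.
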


\subsection{Proper-time parametrizations} Now we use   \autoref{Pr:Max} and \cite[Thm. 7.30]{villani2009} to establish properties of timelike (proper-time parametrized) $\smash{\ell_p}$-geodesics. To set up these notions, let $\met_\infty$ denote the uniform metric on $\Cont([0,1];\mms)$ induced by $\met$. As in \autoref{Sub:GEO}, for $\eta\in\TGeo(\mms)$ define $\psi_\eta\in\Cont([0,1]; [0,1])$ by
\begin{align*}
\psi_\eta(t) := \frac{\Len_\tsep(\eta\big\vert_{[0,t]})}{\tsep(\eta_0,\eta_1)} = \frac{\tsep(\eta_0,\eta_t)}{\tsep(\eta_0,\eta_1)}.
\end{align*}
This function is strictly increasing \cite[Prop.~2.34]{kunzinger2018} and hence admits an inverse $\psi_\eta^{-1}\in\Cont([0,1];[0,1])$. Define the map $\sfr\colon \TGeo(\mms)\to\Cont([0,1];\mms)$ by
\begin{align*}
(\sfr\circ\eta)_t := \eta_{\psi_\eta^{-1}(t)}.
\end{align*}
Recall from \cite[Cor.~3.35]{kunzinger2018} that the map $\sfr$ parametrizes $\eta\in\TGeo(\mms)$ proportional to $\tsep$-arclength, i.e.~the identity \eqref{Eq:Proper time par} holds for $\gamma := \sfr\circ\eta$.

We point out that \autoref{Le:Cty reparametrization} and \autoref{Cor:WEAK CONT r} do not require the  compactness of $\mms$ which has been assumed above.

\begin{lemma}\label{Le:Cty reparametrization} The map $\sfr$ is continuous with respect to $\met_\infty$.
\end{lemma}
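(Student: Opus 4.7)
Let $(\eta^n)_{n\in\N}$ be a sequence in $\TGeo(\mms)$ converging to some $\eta\in\TGeo(\mms)$ with respect to $\met_\infty$. The plan is to show that $\sfr\circ\eta^n\to\sfr\circ\eta$ uniformly on $[0,1]$ in four steps, reducing to the uniform convergence of the $\tsep$-arclength reparametrizations $\psi_{\eta^n}$ and of their inverses.

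First, I would use the continuity of $\tsep$ together with the uniform convergence $\eta^n\to\eta$ to obtain $\tsep(\eta^n_0,\eta^n_t)\to \tsep(\eta_0,\eta_t)$ pointwise in $t\in[0,1]$. Since $\eta\in\TGeo(\mms)$ implies $\eta_0\ll \eta_1$ and hence $\tsep(\eta_0,\eta_1)>0$, the normalizations $\tsep(\eta^n_0,\eta^n_1)$ are eventually bounded away from zero, so $\psi_{\eta^n}(t)\to\psi_\eta(t)$ pointwise on $[0,1]$, with all functions continuous, strictly increasing, and mapping $[0,1]$ onto itself.

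Second, I would upgrade this pointwise convergence to uniform convergence $\psi_{\eta^n}\to\psi_\eta$ using the standard fact that a pointwise convergent sequence of monotone functions on a compact interval, with a continuous monotone limit, converges uniformly (a Dini-type argument). Third, from uniform convergence of the continuous increasing bijections $\psi_{\eta^n}\colon [0,1]\to[0,1]$ to the continuous increasing bijection $\psi_\eta$, one deduces uniform convergence $\psi_{\eta^n}^{-1}\to\psi_\eta^{-1}$: given $\varepsilon>0$, pick $\delta>0$ such that $|s-s'|\leq\delta$ implies $|\psi_\eta^{-1}(s)-\psi_\eta^{-1}(s')|\leq\varepsilon$ (uniform continuity of $\psi_\eta^{-1}$), and observe that for large $n$, $\|\psi_{\eta^n}\circ\psi_\eta^{-1}-\Id\|_\infty\leq\delta$; applying $\psi_{\eta^n}^{-1}$ and using monotonicity then yields $\|\psi_{\eta^n}^{-1}-\psi_\eta^{-1}\|_\infty\leq\varepsilon$.

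Finally, I would write, for every $t\in[0,1]$,
\begin{align*}
\met((\sfr\circ\eta^n)_t,(\sfr\circ\eta)_t)
&\leq \met\bigl(\eta^n_{\psi_{\eta^n}^{-1}(t)},\eta_{\psi_{\eta^n}^{-1}(t)}\bigr) + \met\bigl(\eta_{\psi_{\eta^n}^{-1}(t)},\eta_{\psi_\eta^{-1}(t)}\bigr).
\end{align*}
The first term is bounded by $\met_\infty(\eta^n,\eta)\to 0$ uniformly in $t$; the second tends to zero uniformly in $t$ by combining the uniform convergence $\psi_{\eta^n}^{-1}\to\psi_\eta^{-1}$ with the uniform continuity of $\eta$ on $[0,1]$ (which is even $\met$-Lipschitz since $\eta\in\TGeo(\mms)$). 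Taking suprema in $t$ gives $\met_\infty(\sfr\circ\eta^n,\sfr\circ\eta)\to 0$, which is the claim.

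The main obstacle is the second step: turning pointwise into uniform convergence of the reparametrizing functions, since the $\psi_{\eta^n}$ are defined through $\tsep$ rather than a distance and are not a priori equi-Lipschitz. I would handle it by exploiting strict monotonicity and continuity of the limit $\psi_\eta$; everything else is routine triangle-inequality bookkeeping.
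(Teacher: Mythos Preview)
Your approach is essentially the same as the paper's: both hinge on the uniform convergence $\psi_{\eta^n}\to\psi_\eta$ together with the uniform continuity of $\psi_\eta^{-1}$, and finish with a triangle inequality using the Lipschitz continuity of $\eta$. The paper is slightly more economical in two places. First, it obtains $\sup_s|\psi_\alpha(s)-\psi_\eta(s)|\to 0$ directly from the uniform continuity of $\tsep$ on compact sets, bypassing your Dini/P\'olya step. Second, rather than proving $\psi_{\eta^n}^{-1}\to\psi_\eta^{-1}$ uniformly as a separate lemma, the paper substitutes $t=\psi_\alpha(s)$ and bounds $|\psi_\eta^{-1}(\psi_\alpha(s))-s|$ directly via the uniform continuity of $\psi_\eta^{-1}$, which is exactly your Step~3 re-expressed.

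One small glitch: in Step~3 you compose in the wrong order. From $\|\psi_{\eta^n}\circ\psi_\eta^{-1}-\Id\|_\infty\le\delta$ you cannot conclude the bound by ``applying $\psi_{\eta^n}^{-1}$'', since it is the modulus of $\psi_{\eta^n}^{-1}$ you would need, not that of $\psi_\eta^{-1}$. The correct move is to note that $\|\psi_{\eta^n}-\psi_\eta\|_\infty\le\delta$ equally yields $\|\psi_\eta\circ\psi_{\eta^n}^{-1}-\Id\|_\infty\le\delta$ (substitute $u=\psi_{\eta^n}^{-1}(s)$), and \emph{then} apply $\psi_\eta^{-1}$ together with its uniform continuity to get $\|\psi_{\eta^n}^{-1}-\psi_\eta^{-1}\|_\infty\le\varepsilon$. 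With this fix your argument is complete.
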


\begin{proof} Let $\varepsilon > 0$ and $\eta\in\TGeo(\mms)$, 
and note that for every $\alpha\in\TGeo(\mms)$,
\begin{align*}
\met_\infty(\sfr\circ\eta,\sfr\circ\alpha) &=\sup_{t\in[0,1]} \met(\eta_{\psi_\eta^{-1}(t)}, \alpha_{\psi_\alpha^{-1}(t)})\\
&= \sup_{s\in[0,1]} \met(\eta_{\psi_\eta^{-1}(\psi_\alpha(s))},\alpha_s)\\
&\leq \sup_{s\in[0,1]} \met(\eta_{\psi_\eta^{-1}(\psi_\alpha(s))},\eta_s) + \met_\infty(\eta,\alpha)\\
&\leq \Lip\,\eta\sup_{s\in[0,1]} \big\vert\psi_\eta^{-1}(\psi_\alpha(s)) - s\big\vert + \met_\infty(\eta,\alpha).
\end{align*}
Uniform continuity of $\smash{\psi_\eta^{-1}}$ implies the existence of $\zeta > 0$ such that
\begin{align*}
\big\vert\psi_\eta^{-1}(r) - \psi_\eta^{-1}(r')\big\vert \leq (\Lip\,\eta)^{-1}\,\frac{\varepsilon}{2}
\end{align*}
provided $r,r'\in[0,1]$ satisfy $\vert r-r'\vert \leq \zeta$. In turn, there exists $\lambda > 0$ with
\begin{align*}
\sup_{s\in[0,1]} \big\vert\psi_\alpha(s) - \psi_\eta(s)\big\vert \leq \zeta
\end{align*}
provided $\alpha\in\TGeo(\mms)$ satisfies $\met_\infty(\eta,\alpha)\leq \lambda$. Hence, setting $\delta := \min\{1,\lambda,\varepsilon/2\}>0$, by the above considerations we obtain that if $\alpha\in\TGeo(\mms)$ obeys $\met_\infty(\eta,\alpha) \leq \delta$, then $
\met_\infty(\sfr\circ\eta,\sfr\circ\alpha)\leq \varepsilon$, which is the desired claim.
\end{proof}

Recall the definition $\TGeo^\tsep(\mms) := \sfr(\TGeo(\mms))$ from \autoref{Sub:Geodesics}.

\begin{corollary}\label{Cor:Cptness and equicty} For every $r>0$, the set
\begin{align*}
G_r := \{\gamma\in\TGeo^\tsep(\mms) : \tsep(\gamma_0,\gamma_1) \geq r\}
\end{align*}
is compact and uniformly equicontinuous.
\end{corollary}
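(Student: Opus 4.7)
The strategy is to realize $G_r$ as the image under the continuous map $\sfr$ of a compact subset of $\TGeo(\mms)\subset\Cont([0,1];\mms)$, and then to deduce uniform equicontinuity from compactness via the Arzelà--Ascoli theorem. Set $E_r := \{\eta\in\TGeo(\mms) : \tsep(\eta_0,\eta_1)\geq r\}$; since the reparametrization map $\sfr$ fixes the endpoints, $G_r = \sfr(E_r)$. The direct route of bounding $\met(\gamma_s,\gamma_t)$ for $\gamma\in G_r$ by inverting the parametrization is delicate, since elements of $\TGeo^\tsep(\mms)$ need not be $\met$-Lipschitz, and a uniform modulus for $\psi_\eta^{-1}$ is not evident; bypassing this via the Arzelà--Ascoli converse is the central simplification.

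First I would show that $E_r$ is compact in $\Cont([0,1];\mms)$ with respect to the uniform topology. The time separation $\tsep$ is continuous by \autoref{Re:TRRE}, and the dichotomy recalled in \autoref{Re:Conse} gives $\gamma\in\TGeo(\mms)$ if and only if $\gamma\in\Geo(\mms)$ and $\tsep(\gamma_0,\gamma_1)>0$. Consequently, $E_r = \Geo(\mms)\cap(\eval_0,\eval_1)^{-1}(\{\tsep\geq r\})$, which is closed by \autoref{Le:Borel} and continuity of $\tsep$. Since $\mms$ is compact under our standing reduction and $\scrX$ is non-totally imprisoning, there exists $c>0$ such that every causal curve in $\mms$ has $\met$-arclength at most $c$. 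Because every element of $\Geo(\mms)$ is parametrized by constant $\met$-speed, this yields the uniform bound $\Lip(\eta)\leq c$ for every $\eta\in\Geo(\mms)$. By the Arzelà--Ascoli theorem, this equi-Lipschitz family is relatively compact in $\Cont([0,1];\mms)$, and by closedness $\Geo(\mms)$ is compact. Hence so is $E_r$.

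Applying \autoref{Le:Cty reparametrization}, $G_r = \sfr(E_r)$ is the continuous image of a compact set, and therefore compact. For equicontinuity, one invokes the converse direction of Arzelà--Ascoli: any compact subset of $\Cont([0,1];\mms)$ with respect to the uniform topology is equicontinuous (a standard three-epsilon argument on a finite $\varepsilon/3$-net in the totally bounded set $G_r$), and equicontinuity on the compact interval $[0,1]$ is automatically uniform. This gives the claim.

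I do not foresee a genuine obstacle: all the required inputs -- continuity of $\sfr$, closedness of $\Geo(\mms)$, the uniform Lipschitz estimate from non-total imprisonment on the compact base, and the timelike/null dichotomy provided by regularity -- are already at our disposal, and they conspire to reduce what looks like a pointwise regularity statement about proper-time parametrizations to an abstract compactness/equicontinuity correspondence for the unreparametrized geodesics.
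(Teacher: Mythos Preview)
Your argument is correct and follows essentially the same route as the paper: both establish compactness of the preimage $E_r\subset\TGeo(\mms)$ via Arzelà--Ascoli (non-total imprisonment on the compact $\mms$ plus closedness of $\Geo(\mms)$ and continuity of $\tsep$), then push forward by the continuous map $\sfr$. For equicontinuity the paper writes out the pointwise bound $\met(\gamma_s,\gamma_t)\leq c\,\vert\psi_\eta^{-1}(s)-\psi_\eta^{-1}(t)\vert$ and then appeals to compactness of $G_r$ to make $\delta$ uniform, which is exactly the converse Arzelà--Ascoli step you invoke directly; your formulation is slightly cleaner but not substantively different.
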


\begin{proof} The compactness of the preimage of $G_r$ under $\sfr$ is a consequence of the compactness of $\mms$, non-total imprisonment, Arzelà--Ascoli's theorem, causal closedness, continuity of $\tsep$, and \cite[Prop.~3.17]{kunzinger2018}. In particular, $G_r$ is compact by continuity of $\sfr$.

To show uniform equicontinuity, let $\varepsilon > 0$, and let $c>0$ with $\Lip\,\eta \leq c$ for every $\eta\in\Geo(\mms)$. Write a given $\gamma\in G_r$ as $\gamma := \sfr\circ\eta$, where $\eta\in\TGeo(\mms)$. Since $\smash{\psi_\eta^{-1}}$ is uniformly continuous on $[0,1]$, there exists $\delta > 0$ such that 
\begin{align*}
\big\vert \psi_\eta^{-1}(t) - \psi_\eta^{-1}(s)\big\vert \leq c^{-1}\,\varepsilon
\end{align*}
provided $s,t\in [0,1]$ satisfy $\vert t-s\vert \leq \delta$. Therefore,
\begin{align*}
\met(\gamma_t,\gamma_s) = \met(\eta_{\psi_\eta^{-1}(t)},\eta_{\psi_\eta^{-1}(s)}) \leq c\,\big\vert\psi_\eta^{-1}(t) - \psi_{\eta}^{-1}(s)\big\vert \leq \varepsilon.
\end{align*}
By compactness of $G_r$, $\delta$ can be chosen independently of $\gamma$.
\end{proof}

\begin{corollary}\label{Cor:WEAK CONT r} The map $\sfr_\push\colon\scrP(\TGeo(\mms))\to\scrP(\TGeo^\tsep(\mms))$ is weakly continuous.
\end{corollary}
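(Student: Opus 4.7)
The plan is to deduce this immediately from the continuity of $\sfr\colon\TGeo(\mms)\to\TGeo^\tsep(\mms)$ established in \autoref{Le:Cty reparametrization}, together with the standard fact that push-forward under a continuous map between metric spaces is weakly continuous on spaces of Borel probability measures. Concretely, I would proceed as follows.

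First, I would fix a sequence $(\bdpi_n)_{n\in\N}$ in $\scrP(\TGeo(\mms))$ converging weakly to some $\bdpi\in\scrP(\TGeo(\mms))$, and aim to show that $(\sfr_\push\bdpi_n)_{n\in\N}$ converges weakly to $\sfr_\push\bdpi$. Given any $\varphi\in\Cont_\bounded(\TGeo^\tsep(\mms))$ (with respect to the uniform topology inherited from $\Cont([0,1];\mms)$), the composition $\varphi\circ\sfr$ is a bounded continuous function on $\TGeo(\mms)$ by \autoref{Le:Cty reparametrization}, so the change of variables formula together with weak convergence of $(\bdpi_n)_{n\in\N}$ yields
\begin{align*}
\lim_{n\to\infty}\int_{\TGeo^\tsep(\mms)}\varphi\d\sfr_\push\bdpi_n &= \lim_{n\to\infty}\int_{\TGeo(\mms)}\varphi\circ\sfr\d\bdpi_n\\
&= \int_{\TGeo(\mms)}\varphi\circ\sfr\d\bdpi = \int_{\TGeo^\tsep(\mms)}\varphi\d\sfr_\push\bdpi,
\end{align*}
which is precisely weak convergence of $(\sfr_\push\bdpi_n)_{n\in\N}$ to $\sfr_\push\bdpi$.

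A minor point to address is that $\varphi\circ\sfr$ should be tested against $\bdpi_n$ and $\bdpi$, which are defined on $\TGeo(\mms)$ rather than on the ambient $\Cont([0,1];\mms)$; since $\TGeo(\mms)$ is a Borel subset of the latter by \autoref{Le:Borel}, one can equivalently view these as Borel probability measures on $\Cont([0,1];\mms)$ concentrated on $\TGeo(\mms)$, and extend $\varphi\circ\sfr$ by any bounded continuous function off $\TGeo(\mms)$ (or simply note that weak convergence in the sense used throughout the paper is insensitive to this extension). There is no real obstacle here; the entire content is the continuity of $\sfr$ supplied by the preceding lemma, and the proof is a one-line application of the push-forward criterion for weak convergence.
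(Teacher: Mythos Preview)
Your argument is correct and matches the paper's intended reasoning: the corollary is stated without proof precisely because it is an immediate consequence of the continuity of $\sfr$ from \autoref{Le:Cty reparametrization} together with the standard fact that push-forward along a continuous map preserves weak convergence. Your handling of the minor measurability issue via \autoref{Le:Borel} is also appropriate.
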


For the convenience of the reader, we record parts of \autoref{Def:Lorentzian geodesic}, defining
\begin{align*}
\OptTGeo_{\ell_p}(\mu_0,\mu_1) &:= \{\bdpi\in\OptGeo_{\ell_p}(\mu_0,\mu_1) : \bdpi[\TGeo(\mms)]=1\}\\
&\phantom{:}= \{\bdpi\in\OptGeo_{\ell_p}(\mu_0,\mu_1) : (\eval_0,\eval_1)_\push\bdpi[\mms_\ll^2]=1\},\\
\OptTGeo_{\ell_p}^\tsep(\mu_0,\mu_1) &:= \sfr_\push\OptTGeo_{\ell_p}(\mu_0,\mu_1).
\end{align*}

\begin{definition}\label{Def:Lorentzian geodesic III} We term a weakly continuous path $(\mu_t)_{t\in[0,1]}$ in $\scrP(\mms)$
\begin{enumerate}[label=\textnormal{\alph*.}]
\item \emph{timelike $\ell_p$-geodesic} if it is represented by some $\smash{\bdpi\in\OptTGeo_{\ell_p}(\mu_0,\mu_1)}$, and
\item \emph{timelike proper-time parametrized $\smash{\ell_p}$-geodesic} if it is represented by some $\smash{\bdpi\in\OptTGeo_{\ell_p}^\tsep(\mu_0,\mu_1)}$.
\end{enumerate}
\end{definition}

\begin{remark}\label{Re:Compat} In the smooth case, our notion of timelike proper-time parametrized $\smash{\ell_p}$-geodesics is consistent with the notion of an $\smash{\ell_p}$-geodesic from \cite[Def.~1.1]{mccann2020} as follows. Every timelike proper-time parametrized $\smash{\ell_p}$-geodesic is an $\smash{\ell_p}$-geodesic by \autoref{Sub:Geodesics}, and if $\smash{(\mu_0,\mu_1)\in\scrP_\comp(\mms)^2}$ with $\mu_0\in\scrP^\ac(\mms,\vol)$ is $p$-separated \cite[Def.~4.1]{mccann2020}, these notions agree by uniqueness of $\smash{\ell_p}$-geodesics with chronologically coupled endpoints \cite[Cor.~5.9]{mccann2020}.
\end{remark}

Recall the definition of the restriction map from \eqref{Eq:Restr def}.

\begin{proposition}\label{Le:Villani lemma for geodesic} Let $p\in (0,1]$ and suppose that $(\mms,\met,\ll,\leq,\tau)$ is a compact, causally closed, $\scrK$-globally hyperbolic, regular  Lorentzian geodesic space. Suppose that the pair $(\mu_0,\mu_1)\in\scrP(\mms)^2$ is timelike $p$-dualizable. Then the following  hold.
\begin{enumerate}[label=\textnormal{\textcolor{black}{(}\roman*\textcolor{black}{)}}]
\item\label{111111111111111111111} For every $\smash{\ell_p}$-optimal $\pi\in\Pi_\ll(\mu_0,\mu_1)$, there exists $\bdpi\in\OptTGeo_{\ell_p}^\tsep(\mu_0,\mu_1)$ such that $\smash{\pi = (\eval_0,\eval_1)_\push\bdpi}$.
\item There exists at least one timelike proper-time parametrized $\smash{\ell_p}$-geodesic from $\mu_0$ to $\mu_1$.
\item For every $\bdpi\in\smash{\OptTGeo_{\ell_p}^\tsep(\mu_0,\mu_1)}$, we have 
\begin{align*}
(\Restr_s^t)_\push\bdpi\in\OptTGeo_{\ell_p}^\tsep((\eval_s)_\push\bdpi,(\eval_t)_\push\bdpi)
\end{align*}
for every $s,t\in[0,1]$ with $s<t$.
\item If $\smash{\bdpi\in\OptTGeo_{\ell_p}^\tsep(\mu_0,\mu_1)}$ and if $\smash{\bdsigma}$ is any nontrivial measure on  $\smash{\Cont([0,1];\mms)}$ with $\bdsigma\leq\bdpi$, then $\bdsigma[\Cont([0,1];\mms)]^{-1}\,\bdsigma$ is an element of $\smash{\OptTGeo_{\ell_p}^\tsep(\sigma_0,\sigma_1)}$, where $\sigma_i := \bdsigma[\Cont([0,1];\mms)]^{-1}\,(\eval_i)_\push\bdsigma\in\scrP(\mms)$, $i\in\{0,1\}$.
\end{enumerate}

Furthermore, let $\smash{(\mu_0^n)_{n\in\N}}$ and $\smash{(\mu_1^n)_{n\in\N}}$ be sequences in $\scrP(\mms)$ which converge weakly to $\mu_0\in\scrP(\mms)$ and $\mu_1\in\scrP(\mms)$, respectively. If all $\smash{\ell_p}$-optimal couplings of $\mu_0$ and $\mu_1$ are chronological \textnormal{(}in particular, if $(\mu_0,\mu_1)$ is strongly timelike $p$-dualizable\textnormal{)},  every  sequence $\smash{(\bdpi^n)_{n\in\N}}$ with $\smash{\bdpi^n\in\OptTGeo_{\ell_p}^\tsep(\mu_0^n,\mu_1^n)}$ has an accu\-mulation point. \if0 and any such point belongs to $\smash{\OptTGeo_{\ell_p}^\tsep(\mu_0,\mu_1)}$.\fi
Any such point belongs to $\smash{\OptTGeo_{\ell_p}^\tsep(\mu_0,\mu_1)}$ provided
\begin{align*}
\ell_p(\mu_0,\mu_1) \leq \liminf_{n\to\infty} \ell_p(\mu_0^n,\mu_1^n).
\end{align*}
\end{proposition}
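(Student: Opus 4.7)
The plan is to construct a Borel selector $S \colon \mms_\ll^2 \to \TGeo(\mms)$ with $S(x,y)_0 = x$ and $S(x,y)_1 = y$. Geodesy of $\scrX$ combined with the regularity-based dichotomy from \autoref{Re:Conse} makes each fiber $(\eval_0,\eval_1)^{-1}(x,y) \cap \TGeo(\mms)$ nonempty whenever $(x,y) \in \mms_\ll^2$, while non-total imprisonment and Arzel\`a--Ascoli make this fiber compact inside the closed set $\Geo(\mms)$ (\autoref{Le:Borel}); a Kuratowski--Ryll-Nardzewski measurable selection then yields such an $S$. Setting $\tilde\bdpi := S_\push \pi$ produces an element of $\OptTGeo_{\ell_p}(\mu_0,\mu_1)$ with endpoint coupling $\pi$, because $\Len_\tsep(S(x,y)) = \tsep(x,y)$ matches the cost of $\pi$. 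Then $\bdpi := \sfr_\push \tilde\bdpi \in \OptTGeo_{\ell_p}^\tsep(\mu_0,\mu_1)$ still satisfies $(\eval_0,\eval_1)_\push \bdpi = \pi$, since $\sfr$ fixes endpoints. Part (ii) is immediate from (i) upon applying it to any $\ell_p$-optimal coupling in $\Pi_\ll(\mu_0,\mu_1)$, whose existence is guaranteed by timelike $p$-dualizability.

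\textbf{Parts (iii) and (iv).} For (iii), the proper-time identity $\tsep(\gamma_r,\gamma_{r'}) = (r'-r)\,\tsep(\gamma_0,\gamma_1)$ valid for $\bdpi$-a.e.~$\gamma$ directly gives $\Restr_s^t \gamma \in \TGeo^\tsep(\mms)$, so $(\Restr_s^t)_\push \bdpi$ is concentrated there. To see $\ell_p$-optimality of its endpoint coupling, observe that its cost is $(t-s)^p\,\ell_p(\mu_0,\mu_1)^p$; any strictly cheaper causal coupling of $(\eval_s)_\push \bdpi$ and $(\eval_t)_\push \bdpi$ could be glued with $(\Restr_0^s)_\push \bdpi$ and $(\Restr_t^1)_\push \bdpi$ through the reverse triangle inequality \eqref{Eq:Reverse triangle lp} to contradict the $\ell_p$-optimality of $(\eval_0,\eval_1)_\push \bdpi$. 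Part (iv) follows from the standard restriction lemma for $\ell_p$-optimal couplings (cf.~\cite[Lem.~2.10]{cavalletti2020}), concentration of $\bdsigma$ on $\TGeo^\tsep(\mms)$ being inherited automatically from $\bdpi$.

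\textbf{Stability.} Write $\bdpi^n = \sfr_\push \bdalpha^n$ with $\bdalpha^n \in \OptTGeo_{\ell_p}(\mu_0^n,\mu_1^n)$. Non-total imprisonment and the compactness reduction of \autoref{Sub:Geodesics} force all $\bdalpha^n$ to be supported on a fixed compact set of uniformly $\met$-Lipschitz causal curves (Arzel\`a--Ascoli), so Prokhorov's theorem yields a weak subsequential limit $\bdalpha^{n_k} \to \bdalpha$. Since $\Geo(\mms)$ and $\mms_\leq^2$ are closed, Portmanteau gives $\bdalpha[\Geo(\mms)] = 1$ and $(\eval_0,\eval_1)_\push \bdalpha \in \Pi_\leq(\mu_0,\mu_1)$; stability of $\ell_p$-optimality under weak convergence \cite[Thm.~2.14]{cavalletti2020}, relying on continuity and boundedness of $\tsep^p$ on $\mms^2$, ensures this limit coupling is $\ell_p$-optimal. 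The hypothesis that every $\ell_p$-optimal coupling of $(\mu_0,\mu_1)$ is chronological then upgrades this to $\bdalpha[\TGeo(\mms)] = 1$ via the regularity-based dichotomy, and weak continuity of $\sfr_\push$ (\autoref{Cor:WEAK CONT r}) produces $\bdpi^{n_k} \to \sfr_\push \bdalpha \in \OptTGeo_{\ell_p}^\tsep(\mu_0,\mu_1)$.

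\textbf{Main obstacle.} The most delicate point is the chronology of the limit in the stability claim: because $\mms_\ll^2$ is open but not closed, a weak limit of chronological couplings is only causal a priori. The hypothesis that every $\ell_p$-optimal coupling of the limit marginals be chronological --- automatic under strong timelike $p$-dualizability by \autoref{Def:TL DUAL} --- is precisely what bridges this gap and cannot be dropped without losing the conclusion.
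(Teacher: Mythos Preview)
Your proof is correct and follows the paper's approach: both use a measurable selection for (i), push forward via $\sfr_\push$ throughout, and handle the stability claim by lifting to $\bdalpha^n\in\OptTGeo_{\ell_p}$, extracting a limit in $\OptGeo_{\ell_p}$ via \autoref{Pr:Max}, upgrading to chronology through the hypothesis, and invoking \autoref{Cor:WEAK CONT r}. The only cosmetic difference is that the paper packages (i)--(iv) as instances of Villani's coercive Lagrangian action theory \cite[Prop.~7.16, Thm.~7.30]{villani2009} through \autoref{Pr:Max}, whereas you spell out the underlying arguments (Kuratowski--Ryll-Nardzewski, reverse triangle inequality, restriction lemma) directly.
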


\begin{proof} All statements follow by pushing forward the corresponding statements from \autoref{Pr:Max} as well as  \cite[Prop.~7.16, Thm.~7.30]{villani2009} from $\smash{\OptTGeo_{\ell_p}(\mu_0,\mu_1)} \subset \smash{\OptGeo_{\ell_p}(\mu_0,\mu_1)}$ to $\smash{\OptTGeo_{\ell_p}^\tsep(\mu_0,\mu_1)}$ by $\smash{\sfr_\sharp}$. Note that here, regularity is needed in the inherent  measurable selection argument for \ref{111111111111111111111} --- both for the existence of timelike geodesics joining timelike related points, and to see that the set of elements of $\smash{\TGeo^\tsep(\mms)}$ joining fixed points $x,y\in\mms$ with $x\ll y$ is uniformly closed (cf.~the proof of \autoref{Le:Borel}).

The only result requiring an additional comment is the last paragraph. Consider a sequence $\smash{(\bdalpha^n)_{n\in\N}}$ with $\smash{\bdalpha^n\in\OptTGeo_{\ell_p}(\mu_0^n,\mu_1^n)}$ and $\smash{\bdpi^n = \sfr_\push\bdalpha^n}$. This sequence admits an accumulation point $\smash{\bdalpha\in \OptGeo_{\ell_p}(\mu_0,\mu_1)}$ by \autoref{Pr:Max}. We claim $\smash{\ell_p}$-optimality of $\smash{(\eval_0,\eval_1)_\push\bdalpha}$. Since $\mms$ is compact, $\tsep$ is bounded from above, and thus \cite[Lem.~4.3]{villani2009} implies the inequality
\begin{align*}
\int_{\mms^2}\tsep^p\d(\eval_0,\eval_1)_\push\bdalpha \geq \limsup_{n\to\infty}\int_{\mms^2} \tsep^p \d(\eval_0,\eval_1)_\push\bdalpha^n.
\end{align*}
On the other hand, our assumption yields
\begin{align*}
\limsup_{n\to\infty}\int_{\mms^2}\tsep^p\d(\eval_0,\eval_1)_\push\bdalpha^n &\geq \liminf_{n\to\infty} \ell_p^p(\mu_0^n,\mu_1^n) \geq \ell_p^p(\mu_0,\mu_1),
\end{align*}
which gives the desired optimality. Consequently, we have $\bdalpha[\TGeo(\mms)]=1$ by our assumption on $\mu_0$ and $\mu_1$. Thus $\smash{\bdpi := \sfr_\push\bdalpha}$ is an accumulation point of $\smash{(\bdpi^n)_{n\in\N}}$ by \autoref{Cor:WEAK CONT r}. An analogous argument shows that all accumulation points of $\smash{(\bdpi^n)_{n\in\N}}$ belong to $\smash{\OptTGeo_{\ell_p}^\tsep(\mu_0,\mu_1)}$.
\end{proof}

\subsection{Proof of \autoref{Le:Mutually singular}}\label{Postponed} To prove \autoref{Le:Mutually singular}, we first need the following elementary observation about concatenation of elements of $\smash{\TGeo^\tsep(\mms)}$.

\begin{lemma}\label{Le:Concat} Let $t\in (0,1)$ and $\smash{\gamma^1,\gamma^2\in\TGeo^\tsep(\mms)}$ satisfy $\smash{\gamma_1^1 = \gamma_0^2}$, 
\begin{align}\label{Eq:t equ}
\begin{split}
\tsep(\gamma_0^1,\gamma_1^1) &= t\,\tsep(\gamma_0^1,\gamma_1^2),\\
\tsep(\gamma_0^2,\gamma_1^2) &= (1-t)\,\tsep(\gamma_0^1,\gamma_1^2).
\end{split}
\end{align}
Then the concatenation $\gamma \colon [0,1]\to \mms$ of $\gamma^1$ and $\gamma^2$ given by
\begin{align}\label{Eq:Concat}
\gamma_r := \begin{cases} \gamma_{r/t}^1 & \textnormal{if } r\leq t,\\
\gamma_{(r-t)/(1-t)}^2 & \textnormal{if }r>t
\end{cases}
\end{align}
belongs to $\smash{\TGeo^\tsep(\mms)}$ as well.
\end{lemma}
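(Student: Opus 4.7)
Set $D := \tsep(\gamma_0^1,\gamma_1^2)$, and observe that $D>0$: since $\gamma^1\in\TGeo^\tsep(\mms)$, the hypothesis \eqref{Eq:t equ} gives $tD = \tsep(\gamma_0^1,\gamma_1^1)>0$, and $t>0$. My goal is to verify that the concatenation $\gamma$ from \eqref{Eq:Concat} satisfies $\tsep(\gamma_s,\gamma_r) = (r-s)\,D > 0$ for all $0\le s<r\le 1$; continuity of $\gamma$ is immediate from $\gamma_1^1 = \gamma_0^2$, and positivity will follow once the identity is established, so that the definition of a time separation function forces $\gamma_s\ll\gamma_r$.

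I plan to argue in three cases. If $0\le s<r\le t$, then $\gamma_s = \gamma^1_{s/t}$, $\gamma_r = \gamma^1_{r/t}$, and since $\gamma^1\in\TGeo^\tsep(\mms)$ I compute directly
\begin{align*}
\tsep(\gamma_s,\gamma_r) = \frac{r-s}{t}\,\tsep(\gamma_0^1,\gamma_1^1) = \frac{r-s}{t}\cdot tD = (r-s)\,D.
\end{align*}
The case $t\le s<r\le 1$ is treated identically using $\gamma^2\in\TGeo^\tsep(\mms)$ and the second identity of \eqref{Eq:t equ}.

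The only substantive case, which I anticipate as the main obstacle, is the cross-interval situation $0\le s\le t\le r\le 1$ (the boundary subcases $s=0$ or $r=1$ being handled by minor variations of the argument below). From the two previous cases I already know $\gamma_0\ll\gamma_s$, $\gamma_s\ll\gamma_t$, $\gamma_t\ll\gamma_r$, and $\gamma_r\ll\gamma_1$ (whenever the corresponding parameter is interior), hence all four points are causally ordered. The reverse triangle inequality \eqref{Eq:Reverse tau} applied to $\gamma_s\leq\gamma_t\leq\gamma_r$ yields the lower bound
\begin{align*}
\tsep(\gamma_s,\gamma_r) \geq \tsep(\gamma_s,\gamma_t) + \tsep(\gamma_t,\gamma_r) = (t-s)\,D + (r-t)\,D = (r-s)\,D,
\end{align*}
where the middle equality is Case 1 (applied with $r := t$) together with Case 2 (applied with $s := t$). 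For the reverse inequality I apply \eqref{Eq:Reverse tau} to the chain $\gamma_0\le\gamma_s\le\gamma_r\le\gamma_1$, using the two previously established identities $\tsep(\gamma_0,\gamma_s) = s\,D$ and $\tsep(\gamma_r,\gamma_1) = (1-r)\,D$:
\begin{align*}
D = \tsep(\gamma_0,\gamma_1) \geq \tsep(\gamma_0,\gamma_s) + \tsep(\gamma_s,\gamma_r) + \tsep(\gamma_r,\gamma_1) = s\,D + \tsep(\gamma_s,\gamma_r) + (1-r)\,D,
\end{align*}
which rearranges to $\tsep(\gamma_s,\gamma_r)\le(r-s)\,D$. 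Combining both bounds gives the required equality, concluding that $\gamma\in\TGeo^\tsep(\mms)$.
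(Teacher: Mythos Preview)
Your computation establishing $\tsep(\gamma_s,\gamma_r)=(r-s)D$ for all $0\le s<r\le 1$ is correct, and the two-sided use of the reverse triangle inequality in the cross-interval case is clean. The gap is definitional: in this paper $\TGeo^\tsep(\mms)$ is \emph{defined} as the image $\sfr(\TGeo(\mms))$ of the reparametrization map acting on the space of $\met$-Lipschitz timelike geodesics, not as the set of continuous curves satisfying \eqref{Eq:Proper time par}. The identity \eqref{Eq:Proper time par} is a \emph{consequence} of membership in $\TGeo^\tsep(\mms)$, but no converse is stated or proved anywhere in the paper, and one should not expect it to hold automatically in a general Lorentzian pre-length space: a continuous curve satisfying \eqref{Eq:Proper time par} need not a priori be $\met$-rectifiable, so there is no obvious way to reparametrize it to a Lipschitz curve and thereby land back in $\TGeo(\mms)$.

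The paper's proof addresses exactly this point. It writes $\gamma^i = \sfr\circ\eta^i$ with $\eta^i\in\TGeo(\mms)$, concatenates the Lipschitz curves $\eta^1,\eta^2$ into a Lipschitz curve $\eta$, checks $\eta\in\TGeo(\mms)$ via additivity of $\Len_\tsep$ together with \eqref{Eq:t equ}, and then verifies by an explicit computation of $\psi_\eta$ and its inverse that $\sfr\circ\eta$ coincides with the concatenation $\gamma$ from \eqref{Eq:Concat}. Your argument can be completed along the same lines: once you exhibit such an $\eta$ and check $\gamma=\sfr\circ\eta$, membership in $\TGeo^\tsep(\mms)$ follows by definition. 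At that point the $\tsep$-identity you proved becomes redundant (it is just \eqref{Eq:Proper time par} for $\sfr\circ\eta$), though it could alternatively be recycled to shortcut the verification that $\eta$ maximizes $\Len_\tsep$.
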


\begin{proof} Write $\smash{\gamma^1 = \sfr\circ \eta^1}$ and $\smash{\gamma^2 = \sfr\circ\eta^2}$ for certain $\smash{\eta^1,\eta^2\in\TGeo(\mms)}$. Define the concatenation $\smash{\eta\colon[0,1]\to\mms}$ of $\smash{\eta^1}$ and $\smash{\eta^2}$ analogously to \eqref{Eq:Concat}. By construction, $\smash{\eta}$ is $\met$-Lipschitz con\-tinuous and timelike. Moreover, \eqref{Eq:t equ} yields
\begin{align*}
\Len_\tsep(\eta) &= \Len_\tsep(\eta\big\vert_{[0,t]}) + \Len_\tsep(\eta\big\vert_{(t,1]})\\
&= \tsep(\eta_0^1,\eta_1^1) + \tsep(\eta_0^2,\eta_1^2)\\ 
&= \tsep(\eta_0^1,\eta_1^2)\\ 
&= \tsep(\eta_0,\eta_1),
\end{align*}
which yields $\eta\in\TGeo(\mms)$. Finally, we claim that $\gamma = \sfr\circ \eta$. To this aim, we first compute $\smash{\psi_\eta}$ and its inverse. Using \eqref{Eq:t equ}, for $r\leq t$ we get
\begin{align}\label{Eq:intm}
\psi_\eta(r) = \frac{\tsep(\eta_0,\eta_r)}{\tsep(\eta_0,\eta_1)} = \frac{\tsep(\eta_0^1,\eta_{r/t}^1)}{\tsep(\eta_0^1,\eta_1^2)} = t\,\psi_{\eta^1}(t^{-1}\,r),
\end{align}
while for $r > t$, in an analogous way we obtain
\begin{align*}
\psi_\eta(r) = \frac{\Len_\tsep(\eta\big\vert_{[0,t]})}{\tsep(\eta_0,\eta_1)} + \frac{\Len_\tsep(\eta\big\vert_{(t,r]})}{\tsep(\eta_0,\eta_1)} = t + (1-t)\,\psi_{\eta^2}((1-t)^{-1}\,(r-t)).
\end{align*}
Noting that $\psi_\eta([0,t]) = [0,t]$, substituting $r = \psi_\eta^{-1}(s)$, $s\leq t$, in \eqref{Eq:intm} yields
\begin{align*}
\psi_\eta^{-1}(s) = t\,\psi_{\eta^1}^{-1}(t^{-1}\,s),
\end{align*}
and consequently
\begin{align*}
\gamma_s = \gamma_{s/t}^1 = \eta^1_{\psi_{\eta^1}^{-1}(s/t)} = \eta^1_{\psi_\eta^{-1}(s)/t} = \eta_{\psi_\eta^{-1}(s)} = (\sfr\circ\eta)_s.
\end{align*}
An analogous argument gives $\gamma_s = (\sfr\circ\eta)_s$ for $s>t$; the claim follows.
\end{proof}

\begin{proof}[Proof of \autoref{Le:Mutually singular}] 
\textbf{Step 1.} \textit{Mixing construction.} By contradiction and up to relabeling the plans $\smash{\bdpi^1,\dots,\bdpi^n}$, we may and will assume that $\smash{(\eval_t)_\push\bdpi^1}\not\perp\smash{(\eval_t)_\push\bdpi^2}$. Define the measures $\smash{\bdpi^\Lef,\bdpi^\Rig \in\scrP(\TGeo^\tsep(\mms))}$ by
\begin{align*}
2\,\bdpi^\Lef &:= (\Restr_0^t)_\push\bdpi^1 + (\Restr_0^t)_\push\bdpi^2,\\
2\,\bdpi^\Rig &:= (\Restr_t^1)_\push\bdpi^1 + (\Restr_t^1)_\push\bdpi^2;
\end{align*}
by \autoref{Le:Villani lemma for geodesic}, these represent timelike proper-time parametrized $\smash{\ell_p}$-geodesics between their initial and final point marginals. Define $\smash{\mu\in\scrP_\comp^\ac(\mms,\meas)}$ by
\begin{align*}
2\,\mu := 2\,(\eval_1)_\push\bdpi^\Lef = 2\,(\eval_0)_\push\bdpi^\Rig = (\eval_t)_\push(\bdpi^1+\bdpi^2).
\end{align*}
Under a slight abuse of notation, let $\smash{\bdpi^\Lef_\cdot\colon\mms\to \scrP(\TGeo^\tsep(\mms))}$ and $\smash{\bdpi^\Rig_\cdot}\colon\mms \to \smash{\scrP(\TGeo^\tsep(\mms))}$ denote the disintegrations of $\smash{\bdpi^\Lef}$ and $\smash{\bdpi^\Rig}$ with respect to $\eval_1$ and $\eval_0$, respectively, given by the formulas
\begin{align*}
\rmd \bdpi^\Lef(\gamma) &= \rmd \bdpi^\Lef_x(\gamma)\d\mu(x),\\
\rmd \bdpi^\Rig(\gamma) &= \rmd\bdpi_x^\Rig(\gamma)\d\mu(x).
\end{align*}
Lastly, the map $\smash{\Sp\colon \Cont([0,1];\mms) \to \{(\gamma^1,\gamma^2)\in\Cont([0,1];\mms)^2 : \gamma_1^1 = \gamma_0^2\}}$ given by
\begin{align*}
\Sp(\gamma) := (\Restr_0^t(\gamma), \Restr_t^1(\gamma))
\end{align*}
is bi-Lipschitz continuous. Thus we define $\bdpi^\mix_\cdot\colon\mms\to \scrP(\Cont([0,1];\mms))$ as well as the ``mixed measure'' $\bdpi^\mix\in\scrP(\Cont([0,1];\mms))$ via
\begin{align*}
\bdpi^\mix_x &:= (\Sp^{-1})_\push(\bdpi_x^\Lef \otimes \bdpi_x^\Rig),\\
\d\bdpi^\mix(\gamma) &:= \bdpi_x^\mix(\gamma) \d\mu(x).
\end{align*}
The construction entails
\begin{align}\label{Eq:RESS}
\begin{split}
(\Restr_0^t)_\push\bdpi^\mix &= \bdpi^\Lef,\\
(\Restr_t^1)_\push\bdpi^\mix &= \bdpi^\Rig.
\end{split}
\end{align}

\textbf{Step 2.} \textit{Geodesy of $\smash{\bdpi^\mix}$.} We claim that $\smash{\bdpi^\mix\in \OptTGeo_{\ell_p}^\tsep(\mu_0,\mu_1)}$, where for $i\in\{0,1\}$ the corresponding marginals are $2\,\mu_i := (\eval_i)_\push(\bdpi^1 + \bdpi^2)\in\smash{\scrP_\comp^\ac(\mms,\meas)}$. By construction, $\smash{\bdpi^\mix}$ is concentrated on ``broken'' curves $\gamma\in\Cont([0,1];\mms)$ such that $\smash{\Restr_0^t(\gamma),\Restr_t^1(\gamma)}\in\TGeo^\tsep(\mms)$; in particular, every such $\gamma$ is timelike (but not necessarily Lipschitz, as usual). Part of the claim is thus to show that $\smash{\bdpi^\mix}$-a.e.~such $\gamma$ is in fact unbroken, i.e.~a genuine element of $\smash{\TGeo^\tsep(\mms)}$.

Since $\smash{\bdpi[\TGeo^\tsep(\mms)]=1}$ for the measure $\bdpi$ in \autoref{Le:Mutually singular} above, and since $\pi := (\eval_0,\eval_1)_\push\bdpi$ is a chronological optimal coupling of the compactly supported measures $(\eval_0)_\push\bdpi$ and $(\eval_1)_\push\bdpi$ by definition, by \cite[Prop. 2.8]{cavalletti2020} there exists $\Sigma\subset\TGeo^\tsep(\mms)$ with $\bdpi[\Sigma]=1$ such that $\smash{\Gamma := (\eval_0,\eval_1)(\Sigma)\subset\mms_\ll^2}$ is $l^p$-cyclically monotone. 

Define $\Sigma' := (\eval_0,\eval_1)^{-1}(\Gamma) \cap \TGeo^\tsep(\mms)\subset\smash{\TGeo^\tsep(\mms)}$. By assumption, $\smash{\bdpi^1}$ and $\smash{\bdpi^2}$ are absolutely continuous with respect to $\bdpi$, whence $\smash{\bdpi^1[\Sigma'] = \bdpi^2[\Sigma']=1}$. For every $\smash{\gamma^1,\gamma^2\in\Sigma'}$ with $\gamma_t^1 = \gamma_t^2$, we have $\smash{\gamma_0^1 \ll \gamma_1^2}$ and $\smash{\gamma_0^2 \ll \gamma_1^2}$ (by concatenation of appropriate preimages of $\smash{\gamma^1}$ and $\smash{\gamma^2}$ under $\sfr$). Hence, using $\smash{l^p}$-cyclical monotonicity and  \eqref{Eq:Reverse tau} we obtain
\begin{align*}
&\tsep(\gamma_0^1,\gamma_1^1)^p + \tsep(\gamma_0^2,\gamma_1^2)^p \geq l(\gamma_0^1,\gamma_1^2)^p + l(\gamma_0^2,\gamma_1^1)^p\\
&\qquad\qquad =\tsep(\gamma_0^1,\gamma_1^2)^p  + \tsep(\gamma_0^2,\gamma_1^1)^p\\
&\qquad\qquad\geq \big[\tsep(\gamma_0^1,\gamma_t^1) + \tsep(\gamma_t^2,\gamma_1^2)\big]^p + \big[\tsep(\gamma_0^2,\gamma_t^2) + \tsep(\gamma_t^1,\gamma_1^1)\big]^p\\
&\qquad\qquad = \big[t\,\tsep(\gamma_0^1,\gamma_1^1) + (1-t)\,\tsep(\gamma_0^2,\gamma_1^2)\big]^p\\
&\qquad\qquad\qquad\qquad + \big[t\,\tsep(\gamma_0^2,\gamma_1^2) + (1-t)\,\tsep(\gamma_0^1,\gamma_1^1)\big]^p\\
&\qquad\qquad \geq \tsep(\gamma_0^1,\gamma_1^1)^p + \tsep(\gamma_0^2,\gamma_1^2)^p.
\end{align*}
It follows that all inequalities are in fact equalities. In particular, by strict concavity of the function $\smash{r\mapsto r^p}$ on $[0,\infty)$ implicitly used in the last inequality we obtain $\smash{\tsep(\gamma_0^1,\gamma_1^1) = \tsep(\gamma_0^2,\gamma_1^2)}$. It is also easy to verify that $\mu[\eval_t(\Sigma')]=1$. These observations imply that for $\mu$-a.e.~$x\in\mms$ there is $L_x > 0$ such that $\smash{\bdpi_x^\Lef}$ and $\smash{\bdpi_x^\Rig}$ are concen\-trated on elements of $\smash{\TGeo^\tsep(\mms)}$ with $\tsep$-length $t\,L_x$ and $(1-t)\,L_x$, respectively. For such $x$ and $\smash{\bdpi_x^\mix}$-a.e.~$\gamma\in \Cont([0,1];\mms)$, we thus get
\begin{align*}
L_x &= \tsep(\gamma_0,\gamma_1)\\
&\geq \tsep(\gamma_0,\gamma_t) + \tsep(\gamma_t,\gamma_1) \\
&\geq \Len_\tsep(\gamma\big\vert_{[0,t]}) + \Len_\tsep(\gamma\big\vert_{(t,1]})\\
&= t\,L_x + (1-t)\, L_x \\
&= L_x.
\end{align*}
Again, all inequalities must be identities; in particular,
\begin{align*}
\tsep(\gamma_0,\gamma_t) &= t\,L_x = t\,\tsep(\gamma_0,\gamma_1),\\
\tsep(\gamma_t,\gamma_1) &= (1-t)\,L_x = (1-t)\,\tsep(\gamma_0,\gamma_1).
\end{align*}
By \autoref{Le:Concat}, $\smash{\bdpi^\mix}$ is thus concentrated on $\smash{\TGeo^\tsep(\mms)}$. By \eqref{Eq:RESS}, it thus follows that $\smash{\bdpi^\mix\in\OptTGeo_{\ell_p}^\tsep((\eval_0)_\push\bdpi^\mix,(\eval_1)_\push\bdpi^\mix)}$.

The $\smash{\ell_p}$-optimality of $\smash{(\eval_0,\eval_1)_\push\bdpi^\mix}$ follows from
\begin{align*}
&\ell_p((\eval_0)_\push\bdpi^\mix, (\eval_1)_\push\bdpi^\mix)^p  = \ell_p((\eval_0)_\push(\bdpi^1  + \bdpi^2)/2, (\eval_1)_\push(\bdpi^1 + \bdpi^2)/2)^p\\
&\qquad\qquad = \frac{1}{2}\int_{\TGeo^\tsep(\mms)} \tsep(\gamma_0,\gamma_1)^p \d(\bdpi^1+\bdpi^2)(\gamma)\\
&\qquad\qquad = \frac{1}{2}\int_\mms L_x^p \d(\eval_t)_\push[\bdpi^1+\bdpi^2](x) \\
&\qquad\qquad = \int_\mms L_x^p\d(\eval_t)_\push\bdpi^\mix(x)\\
&\qquad\qquad = \int_{\TGeo^\tsep(\mms)}\tsep^p(\gamma_0,\gamma_1)\d\bdpi^\mix(\gamma).
\end{align*}

\textbf{Step 3.} \textit{Conclusion.} Write $\smash{(\eval_t)_\push\bdpi^i = \rho_t^i\,\meas}$, $i\in\{1,2\}$. Since $\smash{\bdpi^1 \perp\bdpi^2}$, there exists a Borel set $E\subset\mms$ with $\meas[E]>0$ such that $\smash{\rho_t^1,\rho_t^2 > 0}$ on $E$, and for $\mu$-a.e.~$x\in E$, $\smash{\bdpi_x^\Lef}$ and $\smash{\bdpi_x^\Rig}$ is not a Dirac mass. It follows that $\smash{\bdpi^\mix}$ or its ``time reversal'' are not concentrated on a timelike nonbranching subset of $\smash{\TGeo^\tsep(\mms)}$, a contradiction to timelike $p$-essential nonbranching.
\end{proof}

\end{document}